\definecolor{light-gray}{gray}{0.7.2}
\definecolor{goodgreen}{rgb}{0.1, 0.5, 0.1}
\newcommand{\bigO}[1]{\mathcal{O}(#1)}
\newcommand{\softO}[1]{\widetilde{\mathcal{O}}(#1)}
\newcommand{\smallO}[1]{o(#1)}
\newcommand{\eps}{\epsilon}    
\newcommand{\ivme}{\text{IVM}$^{\eps}$\xspace}
\newcommand{\dbeps}{\mathbf{P}}
\newcommand{\Dom}{\mathsf{Dom}}
\newcommand{\inst}[1]{\mathbf{#1}}
\newcommand{\db}{\inst{D}}
\newcommand{\astate}{\mathcal{Z}}
\newcommand{\upd}{\mathit{u}}
\newcommand{\update}{\mathit{apply}}
\newcommand{\minor}{\mathit{minor}}
\newcommand{\major}{\mathit{major}}
\newcommand{\OuMv}{\textsf{OuMv}\xspace}
\newcommand{\OMv}{\textsf{OMv}\xspace}
\newcommand{\ztimes}{\cdot}
\newcommand{\textsmaller}[1]{\text{\scalebox{0.85}{#1}}}
\renewcommand{\H}{\protect\textsmaller{$H$}}
\renewcommand{\L}{\protect\textsmaller{$L$}}
\newcommand{\boxminusspaced}{\text{\scalebox{0.75}{\hspace{0.1mm}$\boxminus$\hspace{0.1mm}}}}
\newcommand{\F}{\protect\boxminusspaced}
\newcommand{\vecnormal}[1]{\textnormal{\bf #1}\xspace}
\newcommand{\floor}[1]{\left\lfloor #1 \right\rfloor}
\newcommand{\ceil}[1]{\left\lceil #1 \right\rceil}
\newcommand{\deltaA}{\alpha}
\newcommand{\deltaB}{\beta}
\newcommand{\p}{\mathit{m}}
\newcommand{\linenumber}{\makebox[2ex][r]{\rownumber\TAB}}
\newcommand{\calR}{\mathcal{R}}
\newcommand{\calT}{\mathcal{T}}
\newcommand{\calF}{\mathcal{F}}
\newcommand{\calS}{\mathcal{S}}
\newcommand{\TAB}{\makebox[2.5ex][r]{}}%
\newcommand{\STAB}{\makebox[1.5ex][r]{}}%
\newcommand{\SPACE}{\makebox[0.75ex][r]{}}%
\newcommand{\OUTPUT}{\textbf{output}\xspace}%
\newcommand{\LET}{\textbf{let}\xspace}%
\newcommand{\IF}{\textbf{if}\xspace}%
\newcommand{\ELSE}{\textbf{else}\xspace}%
\newcommand{\WHILE}{\textbf{while}\xspace}%
\newcommand{\FOREACH}{\textbf{foreach}\xspace}%
\newcommand{\DO}{\textbf{do}\xspace}%
\newcommand{\AND}{\textbf{and}\xspace}%
\newcommand{\OR}{\textbf{or}\xspace}%
\newcommand{\NOT}{\textbf{not}\xspace}%
\newcommand{\EOF}{\textbf{EOF}\xspace}%
\newcommand{\RETURN}{\textbf{return}\xspace}%
\newcommand{\skipp}{\mathit{skipTo}\xspace}%
\newcommand{\backskipp}{\mathit{skippedFrom}\xspace}%
\newcommand{\nop}[1]{}
\theoremstyle{plain}                  
\newtheorem{theorem}{Theorem}
\newtheorem{lemma}[theorem]{Lemma}
\newtheorem{proposition}[theorem]{Proposition}
\newtheorem{conjecture}[theorem]{Conjecture}
\newtheorem{corollary}[theorem]{Corollary}         
\newtheorem{definition}[theorem]{Definition}
\newtheorem{example}[theorem]{Example}
\newcounter{magicrownumbers}
\newcommand\rownumber{\footnotesize\stepcounter{magicrownumbers}\arabic{magicrownumbers}}
\title{Maintaining Triangle Queries under Updates}
\author{
Ahmet Kara$^1$, 
Milos Nikolic$^2$, 
Hung Q. Ngo$^3$,
Dan Olteanu$^1$, 
Haozhe Zhang$^1$ 
\\ \\
$^1$University of Oxford  
\enspace\enspace 
$^2$University of Edinburgh
\enspace\enspace 
$^3$RelationalAI, Inc.
}
\date{}
\begin{document}

\maketitle
\begin{abstract}
We consider the problem of incrementally maintaining the triangle queries with arbitrary free variables under single-tuple updates to the input relations. 

We introduce an approach called \ivme that exhibits a trade-off between the update time, the space, and the delay for the enumeration of the query result, such that the update time ranges from the square root to linear in the database size while the delay ranges from constant to linear time.

\ivme achieves Pareto worst-case  optimality in the update-delay space conditioned on the Online Matrix-Vector Multiplication conjecture. 
It is strongly Pareto optimal for the triangle queries with zero or three free variables and weakly Pareto optimal for the  triangle queries with one or two free variables.

\end{abstract}

\paragraph{Acknowledgements}
This project has received funding from the European Union's Horizon 2020 research and innovation programme under grant agreement No 682588.

\maketitle

\section{Introduction}
\label{sec:intro}
In this article we consider the problem of incrementally maintaining 
triangle queries \nop{and their Loomis-Whitney generalizations} under single-tuple updates to the input relations. 
We introduce an approach to this problem that 
expresses a trade-off between the update time, space, and enumeration delay.
The {\em update time} is the time needed to maintain the data structure encoding the 
query result upon a  single-tuple update. 
The {\em space} is the overall memory needed by the used data structure. 
The {\em enumeration delay} is the maximal time
needed 
from starting the enumeration or reporting one result tuple to reporting the next result tuple or ending the enumeration.

We consider the triangle queries written in FAQ notation~\cite{FAQ:PODS:2016}. 
Let $R$, $S$, and $T$ be relations that have schemas $(A,B)$, $(B,C)$, and 
$(C,A)$, respectively, and are given as functions mapping tuples over their schemas 
to tuple multiplicities.
The {\em ternary} triangle query
$$\triangle_3(a,b,c) = R(a,b)\ztimes S(b,c) \ztimes T(c,a)$$ 
returns each triangle and its multiplicity in the join of the three relations.
The {\em binary} triangle query 
$$\triangle_2(a,b) = \sum_{c\in\Dom(C)} R(a,b)\ztimes S(b,c) \ztimes T(c,a)$$ returns each $(A,B)$-pair that occurs in a triangle and its multiplicity.
The {\em unary} triangle query 
$$\triangle_1(a) = \sum_{b\in\Dom(B)} \sum_{c\in\Dom(C)} R(a,b)\ztimes S(b,c) \ztimes T(c,a)$$ returns each $A$-value that occurs in a triangle and its multiplicity.
Finally, the {\em nullary} triangle query
$$\triangle_0()  = \sum_{a\in\Dom(A)} \sum_{b\in\Dom(B)} \sum_{c\in\Dom(C)} R(a,b)\ztimes S(b,c) \ztimes T(c,a)$$ returns the number of triangles.
There are further unary and binary triangle queries, e.g., $\triangle_1(b)$ or $\triangle_2(b,c)$, 
yet they can be treated similarly since the join of the three relations is symmetric in $A$, $B$, and $C$.

\nop{
The Loomis-Whitney (LW) queries generalize triangle queries from cliques of degree three to cliques of degree $n\geq 3$; they encode the 
Loomis Whitney inequality~\cite{LW:1949}.
Let $A_1,\ldots,A_n$ be the query variables and $R_1,\ldots,R_n$ relations over schemas ${\bf X}_1,\ldots,{\bf X}_n$, 
where $\forall i\in[n]: {\bf X}_i = (A_{((i+j)\mod n) + 1})_{-1\leq j\leq n-3}$. 
That is, the schema of $R_1$ is $(A_1,\ldots,A_{n-1})$, whereas the schema of $R_n$ is $(A_n,A_1,\ldots,A_{n-2})$.
\nop{
if we go from 0 to n-1, then we can use for recurrence: $(i+j) \mod n$ for $0\leq j\leq n-2$.
}
The nullary LW query (of degree $n$)  has the form 
$$\Diamond_n() = \sum_{a_1\in\Dom(A_1)}\ldots\sum_{a_n\in\Dom(A_n)}  
R_1({\bf x}_1)\cdots R_n({\bf x}_n),$$ where $\forall i\in[n]: {\bf x}_i$ is a value from the domain of the tuple ${\bf X}_i$ of variables.
As for triangle queries, a LW query of arity $0\leq j\leq n$ has the same body as for arity $n$ but only keeps the first $j$ values in the result. For instance, for $n=4$ the $4$-ary LW query is 
$$\Diamond_4(a_1,a_2,a_3,a_4)=R_1(a_1,a_2,a_3)\cdot R_2(a_2,a_3,a_4)\cdot R_3(a_3,a_4,a_1)\cdot R_4(a_4,a_1,a_2).$$ 
In case $n=3$, each LW query $\Diamond_j$ becomes the triangle query $\triangle_j$, for $0\leq j\leq n$.
}

The ternary triangle query has served as a milestone for the worst-case optimality of join algorithms in the centralized and parallel settings. Likewise, 
the nullary triangle query is a working horse for randomized approximation schemes for data processing. They showcase the suboptimality of mainstream join algorithms used currently by virtually all commercial database systems. 
For a database $\db$ consisting of relations $R$, $S$, and $T$, standard binary join plans implementing these queries may take $O(|\db|^2)$ time, 
yet the ternary and nullary triangle queries can be solved in 
$\bigO{|\db|^{\frac{3}{2}}}$ \cite{NgoPRR18} and respectively
$\bigO{|\db|^{1.41}}$ 
time~\cite{AYZ:Counting:1997}.
This observation motivated a new line of work on worst-case optimal algorithms for arbitrary join queries~\cite{NgoPRR18}. Triangle queries have also served as a yardstick for understanding the optimal communication cost for parallel query evaluation in the Massively Parallel Communication model~\cite{Koutris:FTDB:2018}. They 
have witnessed the development of randomized approximation schemes with increasingly lower time and space requirements~\cite{Eden:approximately:FOCS:2015}.

In our prior work we introduced a worst-case optimal approach for incrementally maintaining 
the exact result of the nullary triangle query~\cite{KaraNNOZ19}. This article extends that work 
with an investigation of Pareto worst-case  optimality for the triangle queries in the update-delay space.

Incremental maintenance algorithms may benefit from a range of processing techniques whose combinations make it more challenging to reason about optimality.
Such techniques include algorithms for aggregate-join queries with low complexity developed for the non-incremental case~\cite{NgoPRR18}; pre-materialization of views to reduce the maintenance of a query to that of subqueries~\cite{KochAKNNLS14}; and delta processing that allows to only compute the change to the result instead of the entire result~\cite{Chirkova:Views:2012:FTD}.

\subsection{Existing Incremental View Maintenance (IVM) Approaches}

The problem of incrementally maintaining triangle queries 
has received a fair amount of attention. We next discuss the na\"ive approach, which
recomputes the query result from scratch, and several IVM approaches.

We consider the single-tuple update $\delta R = \{(\deltaA,\deltaB) \mapsto m\}$ to 
a binary relation $R$ that maps the tuple $(\deltaA,\deltaB)$ to a 
nonzero multiplicity $m$, which is positive for inserts and negative for deletes. 

The na\"ive approach incurs constant-time updates: Each update is executed on a relation of the input database $\inst{D}$. Whenever we need the query result, we recompute it  
in time $\bigO{|\db|^{\frac{3}{2}}}$~\cite{AYZ:Counting:1997,NgoPRR18}. 
The number of distinct tuples in the result is at most 
$|\db|^{\frac{3}{2}}$~\cite{LW:1949}. 

We next exemplify the classical first-order IVM~\cite{Chirkova:Views:2012:FTD} on the 
nullary triangle query $\triangle_0$ 
under the aforementioned single-tuple update $\delta R$;
all other triangle queries are treated similarly.
The classical IVM approach materializes the query result, computes on the 
fly a delta query $\delta \triangle_0$, and then
updates the query result:
\begin{align*}
\delta \triangle_0() = \delta R(\alpha,\beta)\ztimes \sum_{c\in\Dom(C)}   
 S(\beta,c) \ztimes T(c,\alpha),
\hspace*{6em} \triangle_0() = \triangle_0() + \delta\triangle_0().
\end{align*} 
The delta computation takes $\bigO{|\db|}$ time since it needs to intersect two lists of possibly linearly many $C$-values that are paired with $\deltaB$ in $S$ and with $\deltaA$ in $T$ (i.e., the multiplicity of such pairs in $S$ and $T$ is nonzero).  
Since the query result is materialized, it can be enumerated with constant delay.

The recursive IVM~\cite{KochAKNNLS14} speeds up the delta computation by precomputing three auxiliary views representing the update-independent parts of the delta queries:
\begin{align*}
V_{ST}(b,a) &= \sum\limits_{c \in \Dom(C)} S(b,c) \ztimes T(c,a) \\
V_{TR}(c,b) &= \sum\limits_{a \in \Dom(A)} T(c,a) \ztimes R(a,b) \\
V_{RS}(a,c) &= \sum\limits_{b \in \Dom(B)} R(a,b) \ztimes S(b,c).
\end{align*}
These three views take $\bigO{|\db|^2}$ space but allow to compute the delta query for single-tuple updates to the input relations in $\bigO{1}$ time. 
Computing the delta $\delta\triangle_0() = \delta R(\deltaA,\deltaB) \ztimes  V_{ST}(\deltaB,\deltaA)$ requires just a constant-time lookup in $V_{ST}$;
however, maintaining the views $V_{RS}$ and $V_{TR}$, which refer to $R$, still requires $\bigO{|\db|}$ time.
The factorized IVM~\cite{Nikolic:SIGMOD:18} materializes only one of the three views, for instance, $V_{ST}$. In this case, the maintenance under updates to $R$ takes $\bigO{1}$ time, but the maintenance under updates to $S$ and $T$ still takes $\bigO{|\db|}$ time.

Further exact IVM approaches focus on acyclic conjunctive queries. For free-connex acyclic conjunctive queries, the dynamic Yannakakis approach allows for enumeration of result tuples with constant delay after single-tuple updates in linear time~\cite{Idris:dynamic:SIGMOD:2017}. 
For databases with or without integrity constraints, it is known that a strict, small subset of the class of acyclic conjunctive queries admit constant-time update, while all other conjunctive queries have update times dependent on the size of the input database~\cite{BerkholzKS17,Berkholz:ICDT:2018}.

A line of work relevant to our result unveils structure in the PTIME complexity class by giving lower bounds on the complexity of problems under various conjectures~\cite{Henzinger:OMv:2015,Williams:2018:finegrained}.

\begin{definition}[Online Matrix-Vector Multiplication (\OMv)~\cite{Henzinger:OMv:2015}]\label{def:OMv}
We are given an $n \times n$ Boolean matrix $\vecnormal{M}$ and  receive $n$ column vectors of size $n$, denoted by $\vecnormal{v}_1, \ldots, \vecnormal{v}_n$, one by one; after seeing each vector $\vecnormal{v}_i$, we output the product $\vecnormal{M} \vecnormal{v}_i$ before we see the next vector.
\end{definition}

\begin{conjecture}[\OMv Conjecture, Theorem 2.4 in~\cite{Henzinger:OMv:2015}]\label{conj:omv}
For any $\gamma > 0$, there is no algorithm that solves \OMv in time $\bigO{n^{3-\gamma}}$.
\end{conjecture}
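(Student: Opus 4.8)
The plan is complicated at the outset by the fact that the statement is a \emph{conjecture}, i.e.\ an unproven hardness hypothesis, rather than a theorem with an available proof. No unconditional proof is known, and producing one would constitute a major breakthrough in fine-grained complexity: it would amount to a nearly cubic unconditional lower bound, $\Omega(n^{3-\smallO{1}})$, for an explicit problem against a general model of computation, which is well beyond current techniques. So rather than attempting a derivation, I would marshal the evidence that makes the conjecture plausible and make precise the barrier that forces it to remain a conjecture.

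First I would observe that the na\"ive online algorithm spends $\bigO{n^2}$ time on each query vector $\vecnormal{v}_i$ and hence $\bigO{n^3}$ time in total, which is exactly the bound the conjecture asserts to be essentially optimal. The natural route to a subcubic running time in the \emph{offline} setting is to batch the vectors into a single matrix $\vecnormal{V} = [\vecnormal{v}_1 \mid \cdots \mid \vecnormal{v}_n]$ and compute $\vecnormal{M}\vecnormal{V}$ by fast matrix multiplication in $\bigO{n^\omega}$ time with $\omega < 2.373$. The crux of Definition~\ref{def:OMv} is precisely that this is forbidden: each product $\vecnormal{M}\vecnormal{v}_i$ must be output before $\vecnormal{v}_{i+1}$ is revealed, so an adversary may choose each vector adaptively in response to prior answers, defeating any batching strategy and preventing a direct appeal to fast matrix multiplication.

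Second I would cite the state of the art on upper bounds as empirical support: despite substantial effort, the best known algorithm (Larsen--Williams) improves on cubic only by a sub-polynomial factor, running in time $n^3 / 2^{\Omega(\sqrt{\log n})}$, so no algorithm of the form $\bigO{n^{3-\gamma}}$ with $\gamma>0$ is known. I would also connect \OMv to neighbouring hardness assumptions to indicate that it does not sit in isolation: it is tied to the combinatorial Boolean matrix multiplication conjecture, and it is polynomially equivalent to the vector--matrix--vector reformulation \OuMv, which is usually the more convenient starting point for reductions into dynamic problems such as triangle maintenance.

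The main obstacle, and the reason the statement must stay a conjecture rather than become a theorem, is the complete absence of techniques for proving strong unconditional lower bounds against arbitrary word-RAM or cell-probe algorithms: we currently cannot rule out that some clever exploitation of algebraic structure or data-structural preprocessing breaks the cubic barrier even in the online regime. Consequently the honest ``proof plan'' is to treat the \OMv Conjecture as an axiom and to use it only as a hypothesis, proving the paper's optimality results \emph{conditionally} by reducing \OMv (equivalently \OuMv) to the incremental triangle-maintenance problem.
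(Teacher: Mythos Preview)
Your assessment is correct: the statement is a conjecture, not a theorem, and the paper offers no proof of it whatsoever---it simply states the conjecture with attribution to Henzinger et al.\ and then uses it (via the implied \OuMv conjecture) as a hypothesis for the conditional lower bound in Proposition~\ref{prop:lower_bound_triangle}. Your plan to treat it as an unproven hardness assumption and to explain the supporting evidence and barriers is exactly the right response, and matches how the paper itself handles the statement.
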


The \OMv conjecture has been used to exhibit conditional lower bounds 
for many dynamic problems, including those previously based on other popular problems and conjectures, such as 3SUM and combinatorial Boolean matrix multiplication~\cite{Henzinger:OMv:2015}. This also applies to the nullary triangle query:
For any $\gamma > 0$ and database of domain size $n$, 
there is no algorithm that incrementally maintains
the query result under single-tuple updates
with arbitrary preprocessing time, $\bigO{n^{1-\gamma}}$ update time,
and $\bigO{n^{2-\gamma}}$ answer time,
unless the \OMv conjecture fails~\cite{BerkholzKS17}. All aforementioned prior approaches to maintaining triangle queries do not meet this (conditional) lower bound and are thus not worst-case optimal.

\subsection{Contributions of This Article}

This article introduces \ivme, an IVM approach for triangle queries with arbitrary free variables 
that exhibits a trade-off between the update time, the space, and the enumeration delay.

\begin{theorem}
\label{theo:main_result_triangle}
Given a database $\inst{D}$ and $\eps \in [0,1]$, 
\ivme incrementally maintains the triangle queries
under single-tuple updates to $\inst{D}$ with
$\bigO{|\inst{D}|^{\frac{3}{2}}}$ preprocessing time and
$\bigO{|\inst{D}|^{\max\{\eps,1-\eps\}}}$ amortized 
update time.
The space complexity and enumeration delay
are given in Table \ref{table:complexities_triangle}:

\begin{table}[h!]
  \vspace{-0.5em}
\begin{center}
\begin{tabular}{@{\hskip 0.0in}l@{\hskip 0.2in}l@{\hskip 0.2in}l@{\hskip 0.2in}l@{\hskip 0.2in}l@{\hskip 0.0in}}
& $\triangle_0$ & $\triangle_1$ & $\triangle_2$ & $\triangle_3$  \\
\midrule
Space 
& $\bigO{|\inst{D}|^{1 + \min\{\eps , 1- \eps\}}}$ 
& $\bigO{|\inst{D}|^{1 + \min\{\eps , 1- \eps\}}}$ 
& $\bigO{|\inst{D}|^{1 + \min\{\eps , 1- \eps\}}}$ 
& $\bigO{|\inst{D}|^{\frac{3}{2}}}$\\
Enumeration delay  
&  $\bigO{1}$ 
& $\bigO{|\inst{D}|^{2\min\{\eps, 1-\eps\}}}$
& $\bigO{|\inst{D}|^{\min\{\eps, 1-\eps\}}}$
&  $\bigO{1}$
\end{tabular}
\end{center}
\vspace{-1em}
\caption{\ivme's space and enumeration delay 
for maintaining triangle queries.}
 \label{table:complexities_triangle}
\end{table}
\end{theorem}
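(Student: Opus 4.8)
The plan is to combine a degree-based heavy/light partition of the input with a family of incrementally maintained materialized views, and then tailor the read-out to each arity. Since both $\max\{\eps,1-\eps\}$ and $\min\{\eps,1-\eps\}$ are invariant under $\eps\mapsto 1-\eps$, I would first argue that it suffices to treat the regime $\eps\le\tfrac12$, obtaining $\eps\ge\tfrac12$ by the symmetric construction. I then fix the threshold $\tau=|\db|^{\max\{\eps,1-\eps\}}$ and call a value of an attribute \emph{heavy} if it occurs in at least $\tau$ tuples of the relations in which the attribute appears, and \emph{light} otherwise. Because the three relations together hold $|\db|$ tuples and $\min\{\eps,1-\eps\}=1-\max\{\eps,1-\eps\}$, there are at most $|\db|/\tau=|\db|^{\min\{\eps,1-\eps\}}$ heavy values per attribute, while every light value has degree below $\tau=|\db|^{\max\{\eps,1-\eps\}}$. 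As $|\db|$ drifts under updates, the threshold is recomputed and the partition rebuilt in epochs; the cost of a rebuild is charged to the $\Theta(|\db|)$ updates of the epoch, which is exactly what yields the \emph{amortized} bound. The $\bigO{|\db|^{\frac{3}{2}}}$ preprocessing is one worst-case-optimal triangle computation that initializes the views.

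\paragraph{Updates.}
For a single-tuple update $\delta R=\{(\deltaA,\deltaB)\mapsto m\}$ (and symmetrically for $S$ and $T$), maintenance is driven by $\delta\triangle_0()=m\ztimes\sum_{c}S(\deltaB,c)\ztimes T(c,\deltaA)$. I would case on the heaviness of $\deltaB$ in $S$ and of $\deltaA$ in $T$: if either is light, the corresponding list has fewer than $\tau$ entries and the sum is evaluated on the fly in $\bigO{|\db|^{\max\{\eps,1-\eps\}}}$ time; the only remaining case, $\deltaB$ and $\deltaA$ both heavy, is answered by an $\bigO{1}$ lookup into a view $V_{ST}$ restricted to heavy--heavy pairs. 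Dually, maintaining each such view under an update touches only the at most $|\db|^{\min\{\eps,1-\eps\}}$ heavy partners of the updated value, again within budget, so the update cost is $\bigO{|\db|^{\max\{\eps,1-\eps\}}}$. The dominant space is incurred by the views that keep one coordinate heavy and the other arbitrary, of size at most $|\db|^{\min\{\eps,1-\eps\}}\ztimes|\db|=|\db|^{1+\min\{\eps,1-\eps\}}$.

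\paragraph{Enumeration.}
The four arities are then served by different read-outs over the same state. For $\triangle_0$ I maintain the running aggregate itself, so the single result is returned with $\bigO{1}$ delay. For $\triangle_2(a,b)$ the candidates are the tuples with $R(a,b)\neq 0$, each of count $R(a,b)\ztimes V_{ST}(b,a)$; I split $V_{ST}(b,a)=\sum_c S(b,c)\ztimes T(c,a)$ into a pre-aggregated light-$c$ part read from a view and a heavy-$c$ part summed on the fly over the at most $|\db|^{\min\{\eps,1-\eps\}}$ heavy witnesses $c$, giving delay $\bigO{|\db|^{\min\{\eps,1-\eps\}}}$. For $\triangle_1(a)$ the count is $\sum_b R(a,b)\ztimes V_{ST}(b,a)$; here I pre-aggregate into a per-$a$ view every contribution in which $b$ or $c$ is light, and sum on the fly over the at most $|\db|^{2\min\{\eps,1-\eps\}}$ pairs of heavy witnesses $(b,c)$, giving delay $\bigO{|\db|^{2\min\{\eps,1-\eps\}}}$. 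For $\triangle_3$ I would \emph{not} store the (up to $|\db|^{\frac{3}{2}}$) triangles explicitly, since one insertion can create $\Theta(|\db|)$ of them; instead I maintain, in $\bigO{|\db|^{\frac{3}{2}}}$ space, an index that precomputes the heavy adjacency intersections so that the worst-case-optimal listing proceeds with $\bigO{1}$ delay.

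\paragraph{Main obstacle.}
The delicate point is the three-way reconciliation for $\triangle_1$ and $\triangle_2$: the enumeration may spend only $\bigO{|\db|^{2\min\{\eps,1-\eps\}}}$ resp.\ $\bigO{|\db|^{\min\{\eps,1-\eps\}}}$ per output tuple, which forces all \emph{light} contributions to be pre-aggregated into views, yet those views must simultaneously fit in $\bigO{|\db|^{1+\min\{\eps,1-\eps\}}}$ space and remain maintainable in $\bigO{|\db|^{\max\{\eps,1-\eps\}}}$ time under every kind of update. Guaranteeing that a fresh output tuple is produced at every probed candidate---rather than scanning empty intersections---so that the tabulated delays are genuine worst-case delays, together with the analogous stall-free index for $\triangle_3$, is the part I expect to require the most care.
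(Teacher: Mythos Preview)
Your high-level plan---heavy/light partition, skew-aware delta evaluation, and views for the ``both heavy'' case---is indeed how the paper handles $\triangle_0$, and your update argument for the count is essentially correct.  But the enumeration side of the proposal has genuine gaps that you yourself flag as ``the main obstacle'' and then leave open.

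\textbf{The stall problem is not a detail; it is the heart of the proof for $\triangle_1$ and $\triangle_2$.}  Your plan for $\triangle_2$ iterates over all $(a,b)\in R$ and computes the count in time $\bigO{|\db|^{\min\{\eps,1-\eps\}}}$ per pair.  But a pair in $R$ need not be in the output: if $\Theta(|\db|)$ consecutive candidates have count zero, the delay becomes $\Theta(|\db|^{1+\min\{\eps,1-\eps\}})$, not $\bigO{|\db|^{\min\{\eps,1-\eps\}}}$.  The paper does not solve this by a clever traversal of $R$; instead it decomposes $\triangle_2$ into several skew-aware views, \emph{materializes} the exact output set for all but one or two of them (so their enumeration is trivially stall-free), and for the remaining non-materialized view uses a union algorithm with ``hop'' iterators that physically skip over already-seen or empty buckets.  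Crucially, to make that remaining case have at most $\bigO{|\db|^{\min\{\eps,1-\eps\}}}$ candidate buckets per output tuple, the paper introduces \emph{double} partitioning of $S$ and $T$ (on both columns); your single global heavy/light split is not fine enough to obtain this bound.  The same machinery, with double partitioning of $R$ and $T$, is what gives the $\bigO{|\db|^{2\min\{\eps,1-\eps\}}}$ delay for $\triangle_1$.

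\textbf{The auxiliary views you sketch do not all fit in the stated space or update budget.}  Your ``pre-aggregated light-$c$ part'' for $\triangle_2$ is the view $V(b,a)=\sum_{c\text{ light}}S(b,c)\ztimes T(c,a)$.  Over all $(b,a)$ this has size $\bigO{|\db|\cdot\tau}=\bigO{|\db|^{1+\max\{\eps,1-\eps\}}}$, not $\bigO{|\db|^{1+\min\{\eps,1-\eps\}}}$; restricting it to $(a,b)\in R$ brings the size down but then an insert $\delta R(\alpha,\beta)$ with $\alpha,\beta$ both heavy forces you to compute $V(\beta,\alpha)$ from scratch, which is a sum over linearly many light $c$'s.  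The paper avoids this by an \emph{asymmetric} partition---$R$ on $A$, $S$ on $B$, $T$ on $C$---so each materialized two-relation view pairs a ``heavy'' side (bounding the number of distinct partition values by $N^{1-\eps}$) with a ``light'' side (bounding the degree by $N^{\eps}$), and the size is the minimum of the two products.  Your per-attribute global heavy/light definition does not yield this asymmetric bound.

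\textbf{For $\triangle_3$} the paper's mechanism is also more specific than ``an index for heavy adjacency intersections'': it materializes $\triangle_3^{\H\H\H}$ and $\triangle_3^{\L\L\L}$ in listing form (these are maintainable because the bound variable ranges over few values), and for the mixed-skew pieces keeps a \emph{view tree}---a factorized representation whose root stores exactly the projection onto two of the three variables and whose child supplies the matching third values---so that constant-delay enumeration follows without ever listing all triangles of a single piece.
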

 
The preprocessing time is the time to compute the query result 
on the initial database before the updates; if we start with the empty database, then this is $\bigO{1}$.
\ivme maintains triangle queries with repeating relation symbols
 with the same complexities from Theorem~\ref{theo:main_result_triangle}.

\ivme uses a data structure that partitions each input relation 
based on the degrees of data values. 
The degree of an $A$-value $a$ in relation $R$ is 
the number of $B$-values paired with $a$ in $R$. 
The degree of $B$- and $C$-values is defined analogously. 
Depending on whether a combination of relation parts
includes data values with high or low degrees,
\ivme uses a different maintenance strategy.     
Thanks to this degree-based adaptive processing, the overall update time 
of \ivme
is kept sublinear. 
As the database evolves under updates, \ivme needs to rebalance the 
relation partitions to account for updated degrees of data values. While this rebalancing may take superlinear time, it remains sublinear per single-tuple update. The overall update time is therefore amortized.

We distinguish two types of relation partitioning. In {\em single partitioning},
relations are partitioned based on the degrees of data values 
in one column. In {\em double partitioning},
relations are partitioned based on the degrees of data values 
in two columns.
Unary and binary triangle queries
require double partitioning to obtain the complexity results
in Theorem~\ref{theo:main_result_triangle}. 
For the nullary and ternary triangle queries,
single partitioning suffices to obtain 
these complexity results. Nevertheless, 
double partitioning can lower the space complexity in case of the nullary triangle query,
as stated next.

\begin{proposition}
\label{prop:tighter-upper-bound-space-nullary}
Given a database $\inst{D}$ and $\eps \in [0,1]$, 
\ivme incrementally maintains the nullary triangle query
under single-tuple updates to $\inst{D}$ with
$\bigO{|\inst{D}|^{\frac{3}{2}}}$ preprocessing time,
$\bigO{|\inst{D}|^{\max\{\eps,1-\eps\}}}$ amortized 
update time, $\bigO{|\inst{D}|^{\max\{1,\min\{1+\eps,2-2\eps\}\}}}$
space complexity, and $\bigO{1}$ enumeration delay.
\end{proposition}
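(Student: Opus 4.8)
The plan is to retain, for $\triangle_0$, the update time, enumeration delay, and preprocessing time of Theorem~\ref{theo:main_result_triangle}, and to only sharpen the space bound by running \emph{double} partitioning in place of single partitioning, keeping whichever of the two schemes materializes less for the given $\eps$. The enumeration delay is immediate: $\triangle_0()$ is a single count, so its result is reported in $\bigO{1}$ time irrespective of the chosen data structure. It therefore remains to exhibit a double-partitioning strategy with $\bigO{|\inst{D}|^{\max\{\eps,1-\eps\}}}$ amortized update time and $\bigO{|\inst{D}|^{\max\{1,2-2\eps\}}}$ space, and then to combine it with the single-partitioning strategy of Theorem~\ref{theo:main_result_triangle}.

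For the double-partitioning strategy I would fix the threshold $\theta=|\inst{D}|^{\eps}$, call a value \emph{heavy} in a given relation and column if its degree there is at least $\theta$ and \emph{light} otherwise, and note that each column has at most $|\inst{D}|^{1-\eps}$ heavy values. I would then materialize only the three heavy--heavy restrictions of the recursive-IVM views, namely $V_{ST}^{HH}(\beta,\alpha)=\sum_c S(\beta,c)\,T(c,\alpha)$ over heavy $\beta$ and heavy $\alpha$ together with its cyclic analogues $V_{TR}^{HH}$ and $V_{RS}^{HH}$; each ranges over at most $|\inst{D}|^{1-\eps}$ values in each coordinate, so the views occupy $\bigO{|\inst{D}|^{2-2\eps}}$ space, and with the $\bigO{|\inst{D}|}$ space for the partitioned relations this gives $\bigO{|\inst{D}|^{\max\{1,2-2\eps\}}}$ overall. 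For the update time, consider $\delta R=\{(\alpha,\beta)\mapsto m\}$, whose delta is $m\cdot\sum_c S(\beta,c)\,T(c,\alpha)$. If $\beta$ is heavy in $S$ and $\alpha$ is heavy in $T$, an $\bigO{1}$ lookup in $V_{ST}^{HH}$ returns the sum; otherwise at least one of them is light and I iterate over its fewer than $\theta$ matching $c$-values at cost $\bigO{\theta}=\bigO{|\inst{D}|^{\eps}}$. Maintaining the views is dual: an update to a constituent relation of a heavy--heavy view changes it only along at most $|\inst{D}|^{1-\eps}$ heavy index values, so view maintenance costs $\bigO{|\inst{D}|^{1-\eps}}$. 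The two contributions sum to $\bigO{|\inst{D}|^{\max\{\eps,1-\eps\}}}$, and $\delta S$ and $\delta T$ are symmetric.

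Running the strategy with the smaller view then yields space $\bigO{|\inst{D}|^{e}}$ with $e=\min\{\,1+\min\{\eps,1-\eps\},\ \max\{1,2-2\eps\}\,\}$. A short case analysis on the three intervals $[0,\tfrac13]$, $[\tfrac13,\tfrac12]$, $[\tfrac12,1]$ — on which single partitioning wins, double partitioning wins with $2-2\eps\ge1$, and double partitioning wins with $2-2\eps\le1$, respectively — shows that $e=\max\{1,\min\{1+\eps,2-2\eps\}\}$, which is the exponent claimed.

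The two delicate points are amortization and preprocessing, and I expect the latter to be the real obstacle. As $\inst{D}$ grows the threshold $\theta=|\inst{D}|^{\eps}$ drifts and individual degrees cross it, so tuples migrate among the four parts of each relation and the heavy--heavy views must be repaired; I would reuse the rebalancing device of Theorem~\ref{theo:main_result_triangle} — maintaining a gap between the insert and delete thresholds and rebuilding periodically — so that the superlinear cost of each rebuild is amortized over sufficiently many updates and the per-update time stays within $\bigO{|\inst{D}|^{\max\{\eps,1-\eps\}}}$; the only new element to verify is the interaction of the four-way migration with the two-sided views. Preprocessing must additionally build the initial heavy--heavy views, and here the naive wedge enumeration costs $\bigO{|\inst{D}|^{2-\eps}}$, which exceeds $\bigO{|\inst{D}|^{3/2}}$ exactly on the range $\eps\in(\tfrac13,\tfrac12)$ where double partitioning is the winning scheme; closing this gap — by a more careful construction or by folding it into the amortized update stream — is the step I expect to require the most work.
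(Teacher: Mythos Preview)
Your update-time and space analysis for the heavy--heavy scheme is correct, and combining it with single partitioning does give the claimed exponent $\max\{1,\min\{1+\eps,2-2\eps\}\}$. But the route differs from the paper's, and the preprocessing gap you flag is real and is not closed by either of your two suggested fixes.

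Your views $V_{ST}^{HH}(\beta,\alpha)=\sum_c S(\beta,c)\,T(c,\alpha)$ sum over \emph{all} $c$; the only restriction is on the index variables $\beta,\alpha$. Building this genuinely costs $\Theta(|\inst{D}|^{2-\eps})$ in the worst case (concentrate all of $S$ on a single heavy $\beta$ and all of $T$ on a single heavy $\alpha$; the wedge count $\sum_c|\sigma_{C=c}S|\cdot|\sigma_{C=c}T|$ is then $\Theta(|\inst{D}|^{2-\eps})$, and no combinatorial shortcut beats it). Folding into the update stream does not help either: $|\inst{D}|$ inserts at amortized cost $|\inst{D}|^{\max\{\eps,1-\eps\}}=|\inst{D}|^{1-\eps}$ for $\eps<\tfrac12$ again totals $|\inst{D}|^{2-\eps}$. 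So on $\eps\in(\tfrac13,\tfrac12)$ your scheme as stated does not meet the $\bigO{|\inst{D}|^{3/2}}$ preprocessing bound.

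The paper's resolution is to push the refinement \emph{inside} the view rather than only onto its index: it materializes $V_{ST}(b,a)=\sum_c S^{\H\L}(b,c)\,T^{\L\H}(c,a)$, i.e., it restricts the summation variable $c$ to be light in both $S$ and $T$. This single choice does three things at once: the size is still bounded by $|\inst{D}|^{2-2\eps}$ (since $b,a$ are heavy) \emph{and} by $|\inst{D}|^{1+\min\{\eps,1-\eps\}}$ (the single-partition bound), so one scheme already realizes the target exponent without running two in parallel; preprocessing is $\bigO{|\inst{D}|^{1+\min\{\eps,1-\eps\}}}\le\bigO{|\inst{D}|^{3/2}}$ because light $c$ bounds the wedge enumeration; and the cases your view would have covered but this one does not --- namely $c$ heavy in $S$ or $c$ heavy in $T$ --- are handled at update time by iterating over the at most $|\inst{D}|^{1-\eps}$ heavy $c$-values, which fits inside the $\bigO{|\inst{D}|^{\max\{\eps,1-\eps\}}}$ budget. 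If you want to rescue your construction, the fix is exactly this: split $V_{ST}^{HH}$ along heavy/light $c$ and drop the heavy-$c$ part from the view, recovering it by iteration at query time; once you do that you have essentially rederived the paper's scheme.
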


For $\eps=0$ and $\eps\geq\frac{1}{2}$, the space complexity needed by \ivme to maintain the nullary triangle query becomes linear; its maximum is $\bigO{|\inst{D}|^{4/3}}$ for $\eps=\frac{1}{3}$.


\begin{figure}[t]
\begin{center}
\begin{minipage}{\textwidth}
\begin{center}
\begin{tikzpicture}
\begin{axis}[
grid=major, 
    grid style={dotted,line width=1pt},
xmin=0, xmax=1, ymin=0, ymax=1.5,
every axis plot post/.append style={mark=none},
  xtick ={0,0.5, 1},
  ytick ={0, 0.5, 1,1.5},
  xticklabels={\footnotesize{$0$},
  \footnotesize{$\frac{1}{2}$},\footnotesize{$1$}},
   yticklabels={$ $,\footnotesize{$\frac{1}{2}$},
   \footnotesize{$1$},\footnotesize{$\frac{3}{2}$}},  
y=2.5cm,
    x=3.7cm,
axis lines=middle,
    axis line style={->},
    x label style={at={(axis description cs:1.15,-0.06)}},
    xlabel={\footnotesize{$\eps$}},
    y label style={at={(axis description cs:-0.7,1.1)},align=center},
     ylabel= \footnotesize{Complexity} \\ $\bigO{|\inst{D}|^y}$,
  axis x line*=bottom,
  axis y line*=left,
 ]

  \addplot[color = blue, mark=none,domain=0:1,thick,dashed] coordinates{ 
 (0, 1.5) 
  (1, 1.5)
};

  \addplot[color = blue, mark=none,domain=0:1,thick,dashed] coordinates{ 
 (0, 1) 
  (0.5, 1.5)
    (1, 1)
};

  \addplot[color=red,mark=none,domain=0:1,thick] coordinates{ 
  (0, 1) 
  (1/2, 1/2)
  (1,1) 
}; 

  \addplot[color=goodgreen,mark=none,domain=0:1,very thick,dotted] coordinates{ 
  (0, 0.02) 
  (1,0.02) 
}; 

  \addplot[color=goodgreen,mark=none,domain=0:1,very thick,dotted] coordinates{ 
  (0, 0)
  (1/2, 1/2) 
  (1,0) 
}; 

  \addplot[color=goodgreen,mark=none,domain=0:1,very thick,dotted] coordinates{ 
  (0, 0) 
  (0.5,1)
  (1,0) 
}; 

\end{axis}
\node at(0,4) {\footnotesize{$y$}};

\node at(1.9,0.25) {\color{goodgreen}\footnotesize{$\triangle_0$,  
$\triangle_3$}};

\node at(1.55,0.8) {\color{goodgreen}\footnotesize{$\triangle_2$}};

\node at(1.35,2.2) {\color{goodgreen}\footnotesize{$\triangle_1$}};

\node at(1,3.9) {\color{blue}\footnotesize{$\triangle_3$}};

\node at(3.6,3.2) {\color{blue}\footnotesize{$\triangle_0, \triangle_1, \triangle_2$}};

\node at(3.9,1.8) {\color{red}\footnotesize{$\triangle_0, \triangle_1, \triangle_2, \triangle_3$}};

\begin{scope}[xshift = 5.5cm,yshift = 4cm]
\draw [line width= 0.8pt, dashed, blue](0,0) -- (0.5,0);
\node[left] at(1.6,0) {\footnotesize{Space}};
\node[left,blue] at(5.6,-0.7) {\footnotesize 
$
y = 
\begin{cases}
1 + \min\{\eps, 1-\eps\}, & \text{ for $\triangle_0, \triangle_1, \triangle_2$} \\ 
\frac{3}{2}, & \text{ for $\triangle_3$}
\end{cases}
$
};
\end{scope}

\begin{scope}[xshift = 0.5cm,yshift = 2.2cm]
\draw [line width= 0.8pt,red](5,0) -- (5.5,0);
\node[left] at(8.85,0) {\footnotesize{Amortized update time}};
\node[red] at(7.45,-0.4) {\footnotesize $y =\max\{\eps,1-\eps\}$ for $\triangle_0, \triangle_1, \triangle_2, \triangle_3$};
\end{scope}

\begin{scope}[xshift = -3.8cm,yshift = 1cm]
\draw [line width= 1.4pt,dotted, goodgreen](9.3,0) -- (9.8,0);
\node[left] at(12.6,0) {\footnotesize{Enumeration delay}};
\node[left,goodgreen] at(14,-0.9){\footnotesize
$
y = 
\begin{cases}
0,                               & \text{for $\triangle_0, \triangle_3$} \\ 
2\min\{ \eps,\, 1 - \eps\}, & \text{for $\triangle_1$}\\
\min\{ \eps,\, 1 - \eps\},                        & \text{for $\triangle_2$}
\end{cases}
$
};
\end{scope}

\node at(1.85,-1.3) {\footnotesize{optimal static for $\triangle_3$}};
\draw [->,>=stealth,dotted, line width=0.3mm] (1.85,-1.15) -- (1.85,-0.65);

\node at(5.75,-1.3) {\footnotesize{classical IVM for $\triangle_0, \triangle_1, \triangle_2, \triangle_3$}};
\draw [->,>=stealth, dotted, line width=0.3mm] (5.25,-0.9) -- (3.9,-0.2);

\node at(-2.1,-1.3) {\footnotesize{classical IVM for $\triangle_0, \triangle_1, \triangle_2, \triangle_3$}};
\draw [->,>=stealth,dotted, line width=0.3mm] (-1.2,-0.85) -- (-0.3,-0.15);

\end{tikzpicture}
\end{center}
\end{minipage}
\end{center}
\caption{
\ivme's amortized update time, space,  and enumeration delay 
for maintaining triangle queries. 
$|\inst{D}|$ is the database size. The complexities are  
parameterized by $\eps$. The space and enumeration delay depend on 
the arity of the query result.
By setting $\eps$ to $0$ or $1$, \ivme recovers classical first-order  IVM.
For $\eps = \frac{1}{2}$, \ivme computes the ternary triangle query worst-case optimally.
}

\label{fig:complexity_plots_triangle}
\end{figure}
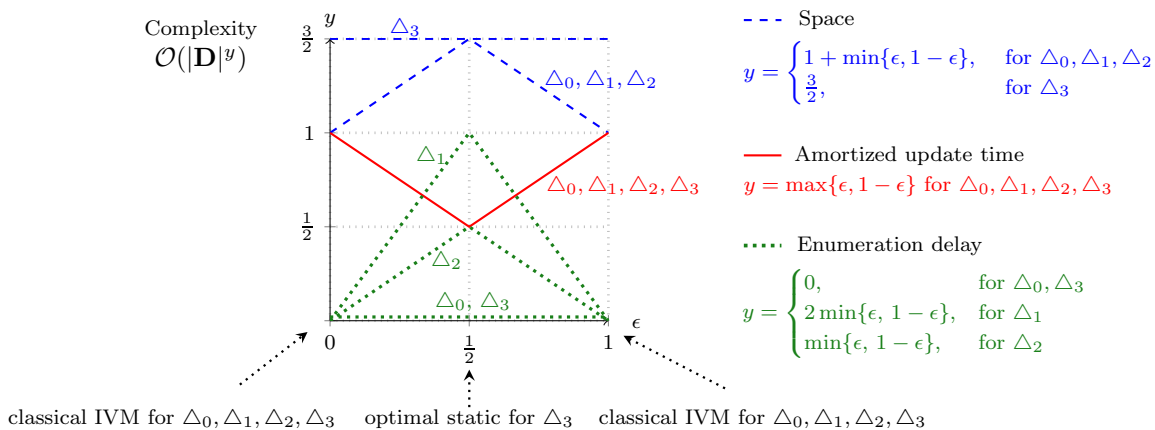 


As depicted in Figure~\ref{fig:complexity_plots_triangle}, \ivme defines a continuum of maintenance approaches that exhibit a trade-off between amortized  update time, enumeration delay, and space based on the parameter $\epsilon$, which ranges from 0 to 1.
We can recover the classical first-order IVM for all triangle queries by setting 
$\eps$ to $0$  or $1$.
For $\eps=\frac{1}{2}$, \ivme recovers the worst-case optimal time $\bigO{|\db|^{\frac{3}{2}}}$ of non-incremental algorithms 
for computing all tuples in the result of the ternary triangle query~\cite{NgoPRR18}.
Whereas these static algorithms are monolithic and require processing the input data in bulk and joining all relations at once, 
\ivme achieves the same complexity by inserting $|\db|$ tuples one at a time in initially empty relations by using its update mechanism 
and binary join plans. 
Using binary join plans in the static case is suboptimal,
since they can lead to intermediate results that 
are larger than the final result~\cite{NgoPRR18}.

The following proposition shows that some combinations of update time and delay in the update-delay space are not possible, conditioned on the \OMv Conjecture~\ref{conj:omv}.

\begin{proposition}
\label{prop:lower_bound_triangle}
For any $\gamma > 0$ and database $\db$,
there is no algorithm that incrementally maintains the result of
any triangle query under single-tuple updates to $\db$ with arbitrary preprocessing 
time, $\bigO{|\db|^{\frac{1}{2} - \gamma}}$ amortized update time, and $\bigO{|\db|^{1 - \gamma}}$ enumeration delay, unless the \OMv conjecture fails.
\end{proposition}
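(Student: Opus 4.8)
The plan is to derive the lower bound by a reduction from the Online Vector-Matrix-Vector Multiplication problem (\OuMv), whose hardness is implied by the \OMv Conjecture~\ref{conj:omv}. Recall that \OuMv first receives an $n \times n$ Boolean matrix $\vecnormal{M}$ and then $n$ pairs of Boolean column vectors $(\mathbf{u}_1,\mathbf{v}_1),\ldots,(\mathbf{u}_n,\mathbf{v}_n)$ arriving one at a time; after each pair it must output the bit $\mathbf{u}_i^\top \vecnormal{M}\,\mathbf{v}_i = \bigvee_{b,c} \mathbf{u}_i[b] \wedge \vecnormal{M}[b,c] \wedge \mathbf{v}_i[c]$ before seeing the next pair. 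The \OMv conjecture implies that no algorithm solves \OuMv in total time $\bigO{n^{3-\gamma}}$ for any $\gamma>0$, even with polynomial-time preprocessing of $\vecnormal{M}$~\cite{Henzinger:OMv:2015}. I would reduce this to maintaining any single triangle query.

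The encoding fixes one fresh domain value $\star$ for variable $A$ and identifies the $B$- and $C$-domains with $[n]$. The matrix is stored in $S$, by setting $S(b,c)=1$ exactly when $\vecnormal{M}[b,c]=1$, and the current vectors are threaded through the single $A$-value $\star$: at round $i$ we set $R(\star,b)=\mathbf{u}_i[b]$ and $T(c,\star)=\mathbf{v}_i[c]$. Under this encoding a tuple survives the join iff $a=\star$ and there exist $b,c$ with $\mathbf{u}_i[b]=\vecnormal{M}[b,c]=\mathbf{v}_i[c]=1$, so the result of \emph{every} triangle query is nonempty (equivalently, the count $\triangle_0()$ is positive) iff $\mathbf{u}_i^\top \vecnormal{M}\,\mathbf{v}_i=1$. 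Hence we can read off the \OuMv answer at each round by deciding emptiness of the maintained result; for any arity this is settled by attempting to enumerate a single output tuple, which costs at most one enumeration-delay step $D$ (either the first tuple appears, or the enumeration ends).

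The reduction then runs as follows. First load $\vecnormal{M}$ into $S$ using at most $n^2$ single-tuple inserts; for each of the $n$ rounds, refresh $R$ and $T$ with $\bigO{n}$ single-tuple updates (deleting the $\bigO{n}$ nonzero entries of the previous vectors and inserting the new ones) and then test emptiness in $\bigO{D}$ time. Throughout, $|\db| = \Theta(n^2)$, since $S$ already has $\Theta(n^2)$ tuples while $R$ and $T$ each carry only $\bigO{n}$. Assuming for contradiction a maintenance algorithm with amortized update time $U = \bigO{|\db|^{\frac{1}{2}-\gamma}} = \bigO{n^{1-2\gamma}}$ and delay $D = \bigO{|\db|^{1-\gamma}} = \bigO{n^{2-2\gamma}}$, loading $\vecnormal{M}$ costs $\bigO{n^2\,U} = \bigO{n^{3-2\gamma}}$, and the $n$ rounds contribute $\bigO{n\,(n\,U + D)} = \bigO{n^{3-2\gamma}}$. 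The total time is thus $\bigO{n^{3-2\gamma}}$, which refutes \OuMv hardness and hence the \OMv conjecture with parameter $2\gamma>0$.

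The step to handle with most care is the interaction between the \emph{arbitrary} preprocessing granted to the maintenance algorithm and the polynomial-preprocessing regime of the \OuMv lower bound. If $\vecnormal{M}$ were installed during the preprocessing phase, the derived \OuMv algorithm would inherit arbitrary (possibly exponential) preprocessing on $\vecnormal{M}$, which does not contradict the conjecture, since exponential precomputation can break \OMv outright. The fix is precisely to load $\vecnormal{M}$ through the update mechanism, so that the arbitrary preprocessing acts only on the empty database and cannot encode any information about $\vecnormal{M}$. A secondary point is the amortization: as $\vecnormal{M}$ is loaded the database size varies, so one bounds the total update cost by the number of updates times the per-update cost at the maximal size $\Theta(n^2)$, which is exactly what the amortized guarantee delivers.
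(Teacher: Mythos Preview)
Your proposal is correct and follows essentially the same reduction as the paper: encode $\vecnormal{M}$ into $S$ via updates, thread each vector pair through $R$ and $T$ on a single fixed $A$-value, and answer \OuMv by testing emptiness with one enumeration step, yielding total time $\bigO{n^{3-2\gamma}}$. Your explicit remark that $\vecnormal{M}$ must be loaded through updates (not preprocessing) so that the arbitrary preprocessing acts only on the empty database is a nice clarification; one minor imprecision is that $|\db|=\Theta(n^2)$ should really be $|\db|=\bigO{n^2}$ (a sparse $\vecnormal{M}$ yields fewer tuples in $S$), but only the upper bound is used, so this does not affect the argument.
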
 

\nop{Proposition~\ref{prop:lower_bound_triangle} is an adaptation of a prior lower bound result~\cite{BerkholzKS17}.}

\begin{figure}[t!]
  \begin{center} 
 \begin{minipage}{0.43\textwidth}
        \tdplotsetmaincoords{78}{175}
  \begin{tikzpicture}[xscale=2.5, yscale=1.0,tdplot_main_coords]
      \coordinate (O) at (0,0,0);


      \draw[thick,dotted,color=gray] (0,0,0) -- (1,0,0);
      \draw[thick,->] (1,0,0) -- (1.25,0,0) node[anchor=north]{\footnotesize $\log_{|\db|}${delay}};
      \draw[thick,dotted,color=gray] (0,0,0) -- (0,4,0);
      \draw[thick,->] (0,4,0) -- (0,5,0) node[anchor=north]{\footnotesize $\log_{|\db|}${space}};
      \draw[thick,dotted,color=gray] (0,0,0) -- (0,0,1);
      \draw[thick,->] (0,0,1) -- (0,0,2.25) node[anchor=south]{\footnotesize $\log_{|\db|}${update time}};

      \node[color=gray] at (-0.1,0,0) () {\footnotesize $0$};
      \coordinate (P1) at (1,1,0);
      \coordinate (P1x) at (1,0,0);
      
      \coordinate (P2) at (0,2.00,2);
      \coordinate (P2z) at (0,0,2);
      \node[] at (-0.1,0,2) () {\footnotesize $1$};
      \node[] at (-0.1,0,1) () {\footnotesize $0.5$};

      \coordinate (P3) at (0,4,2);
      \coordinate (P3z) at (0,0,2);

      \draw[dashed, color=black] (P2) -- (0,2.00,0);
 
      
      \draw[dashed, color=gray] (0,2.00,1) -- (1,2.00,1);
      \draw[dashed, color=gray] (0.5,0,1) -- (0.5,3.00,1);

      \draw[dashed, color=gray] (1,2.00,0) -- (1,2.00,1);
      \draw[dashed, color=gray] (1,2.00,0) -- (0,2.00,0);
      \draw[dashed, color=gray] (0.5,0,0) -- (0.5,0,1);
      \draw[dashed, color=gray] (0.5,3.00,0) -- (0.5,3.00,1);
      \draw[dashed, color=gray] (0.5,0,0) -- (0.5,3.00,0);

      \draw[dashed, color=gray] (1,3.00,0) -- (0,3.00,0);
      \draw[dashed, color=gray] (1,3.00,0) -- (1,3.00,1);
      \draw[dashed, color=gray] (0,3.00,0) -- (0,3.00,1);
      \draw[dashed, color=gray] (1,3.00,0) -- (0,3.00,0);
      \draw[dashed, color=gray] (1,3.00,1) -- (0,3.00,1);

      \node[] at (1.05,0,0.2) () {\color{black} \footnotesize $1$};
      \node[] at (0.5,0,-0.2) () {\color{black} \footnotesize $0.5$};
      \node[] at (-0.05,2.00,0.1) () {\color{black} \footnotesize $1$};
      \node[] at (-0.09,3.00,0.1) () {\color{black} \footnotesize $1.5$};

      \coordinate (X_half) at (1,0,0);
      \coordinate (Z_half) at (0,0,1);
      \coordinate (XZ_half) at (1,0,1);
      \fill[gray!30, opacity=0.4] (XZ_half) -- (Z_half) -- (0,4,1) -- (1,4,1) -- cycle;
      \fill[gray!30, opacity=0.4] (XZ_half) -- (X_half) -- (1,4,0) -- (1,4,1) -- cycle;
      \fill[gray!30, opacity=0.4] (1,4,1) -- (1,4,0) -- (0,4,0) -- (0,4,1) -- cycle;
      \draw[color=gray] (1,0,0) -- (1,4,0);
      \draw[color=gray] (1,0,0) -- (1,0,1);
      \draw[color=gray] (1,0,1) -- (0,0,1);
      \draw[color=gray] (1,0,1) -- (1,4,1);
      \draw[color=gray] (1,4,1) -- (1,4,0);
      \draw[color=gray] (1,4,1) -- (0,4,1);
      \draw[color=gray] (1,4,0) -- (0,4,0);
      \draw[color=gray] (0,4,1) -- (0,0,1);
      \draw[color=gray] (0,4,1) -- (0,4,0);

      \draw[dashed, color=gray] (1,2.00,1) -- (1,2.00,2);
      \draw[dashed, color=gray] (0,2.00,2) -- (1,2.00,2);
      \draw[dashed, color=gray] (1,2.00,2) -- (1,0,2);
      \draw[dashed, color=gray] (1,0,1) -- (1,0,2);
      \draw[dashed, color=gray] (1,0,2) -- (0,0,2);
      \draw[dashed, color=gray] (0,3.00,2) -- (0,0,2);

      \draw[thick, color=orange] (0,3.00,1) -- (0,3.00,2);
      \node[] at(-0.1,3.00,1.5) () {\footnotesize {\color{orange} $\triangle_3$}};

      \draw[thick, color=red] (0,3.00,1) -- (0,2.00,2);
      \node[] at(0.09,3.00,1.25) () {\footnotesize {\color{red} $\triangle_0$}};

      \filldraw[red] (0,3.00,1) ellipse(1pt and 2.5pt);
      \node[] at(-0.0,3.00,0.75) () {\footnotesize $A$};

      \draw[thick, color=goodgreen] (0.5,3.00,1) -- (0,2.00,2);
    \node[] at(0.3,3.00,1.2) () {\footnotesize {\color{goodgreen} $\triangle_2$}};

      \filldraw[goodgreen] (0.5,3.00,1) ellipse(1pt and 2.5pt);
      \node[] at(0.5,3.00,0.75) () {\footnotesize $B$};
      \node[] at(1.1,3.00,1.5) () {\footnotesize {\color{blue} $\triangle_1$}};

      \draw[thick, color=blue] (1,3.00,1) -- (0,2.00,2);

      \filldraw[blue] (1,3.00,1) ellipse(1pt and 2.5pt);
      \node[] at(1,3.00,0.75) () {\footnotesize $C$};
\nop{
\begin{pgfonlayer}{background}
    \filldraw [line width=4mm,join=round,blue!7]
      (1.5,3.5) rectangle (-1.3,-9.2);
  \end{pgfonlayer}      
}   
    \end{tikzpicture}
 \end{minipage}
 \hspace{-0.8cm}
   \begin{minipage}{0.6\textwidth}
      \centering
      \begin{small}

\begin{tabular}{c@{\hskip 0.13in}c@{\hskip 0.13in}c@{\hskip 0.13in}c@{\hskip 0.13in}c}
  \toprule
\multirow{2}{*}{$\eps$}  & \multirow{2}{*}{Query}  & Pareto & Amortized  & Enumeration   \\
& & optimality &  update time & delay  \\
  \midrule
 $\frac{1}{2}$ &  {\color{red} $\triangle_0$} and {\color{orange}$\triangle_3$}  & strong ($A$) & $\bigO{|\inst{D}|^{\frac{1}{2}}}$ & $\bigO{1}$   \\ [0.15cm]

 $\frac{1}{2}$ & {\color{goodgreen} $\triangle_2$ }& weak ($B$) & $\bigO{|\inst{D}|^{\frac{1}{2}}}$ & $\bigO{|\inst{D}|^{\frac{1}{2}}}$   \\ [0.15cm]

$\frac{1}{2}$ & {\color{blue} $\triangle_1$ } & weak ($C$) & $\bigO{|\inst{D}|^{\frac{1}{2}}}$ & $\bigO{|\inst{D}|}$   \\ [0.05cm]
    \bottomrule
      \end{tabular}
    \end{small}
  \end{minipage}
  \vspace{-1em}
  \caption{
 (left) \ivme's 
trade-offs between 
space complexity, 
amortized update time,
and enumeration delay
 for the maintenance of triangle queries.
 The preprocessing time is $\bigO{|\inst{D}|^{\frac{3}{2}}}$
for all triangle queries.
There is no algorithm that can maintain a triangle query with update time and enumeration delay representing
 a point in the gray cuboid, 
  unless the \OMv conjecture fails (Proposition \ref{prop:lower_bound_triangle}). 
  The  surface of the gray 
  cuboid
   corresponds to Pareto worst-case optimal  combinations
  of amortized update time and enumeration delay. 
(right)  
\ivme is strongly Pareto optimal at point $A$
for $\triangle_0$ and $\triangle_3$
and weakly Pareto optimal at point $B$
and $C$ for   $\triangle_2$ and respectively $\triangle_1$.
$\eps=\frac{1}{2}$  for points $A$, $B$, and $C$.
  }

  \label{fig:optimality_plot_3d}
  \end{center}
\end{figure}


Figure~\ref{fig:optimality_plot_3d} visualizes  
\ivme's 
trade-offs between 
space complexity, 
amortized update time,
and enumeration delay
 for the maintenance of triangle queries.
The preprocessing time is $\bigO{|\inst{D}|^{\frac{3}{2}}}$
for all triangle queries.
The gray cuboid is infinite in the dimension of space complexity. 
Each point strictly included in the  gray 
cuboid corresponds 
to a combination of some space complexity,  
$\bigO{|\inst{D}|^{\frac{1}{2}-\gamma}}$
amortized update time, and
$\bigO{|\inst{D}|^{1-\gamma}}$ enumeration delay
for $\gamma  > 0$ (note that $\gamma$ may be different for 
update 
and delay). 
Due to Proposition~\ref{prop:lower_bound_triangle}, there is no
maintenance  algorithm for triangle queries that admits a trade-off
corresponding to a point in the gray  cuboid, 
unless the \OMv
conjecture fails. Each point on the  surface of the 
gray  cuboid
corresponds 
to a Pareto worst-case  optimal trade-off between the 
amortized update time
and enumeration delay. 
For $\eps=\frac{1}{2}$, \ivme needs $\bigO{|\inst{D}|^{\frac{1}{2}}}$
amortized update time and, depending on the query, an enumeration delay
such that the trade-off between these two measures is 
Pareto optimal.
For the nullary and ternary triangle queries, the delay 
is $\bigO{1}$ (Point A in Figure~\ref{fig:optimality_plot_3d}).
\ivme is  strongly Pareto worst-case optimal for these queries: There can be
no tighter upper bound for any of the update time or delay measures without loosening 
the upper bound for the other measure.
For the unary and binary triangle queries, 
the delay is  $\bigO{|\inst{D}|}$ (Point C in Figure~\ref{fig:optimality_plot_3d})
and respectively $\bigO{|\inst{D}|^{\frac{1}{2}}}$
(Point B in Figure~\ref{fig:optimality_plot_3d}). 
\ivme is only weakly Pareto 
worst-case  optimal for the unary and binary triangle queries:
There are no tighter upper bounds for both the update time and delay measures. 
Nevertheless, either the update time or the delay may still 
be lowered for the unary query without contradicting the \OMv conjecture.
As for the binary query, only the update time may be lowered,
since the delay is already below the $\bigO{|\inst{D}|}$ 
threshold from Proposition~\ref{prop:lower_bound_triangle}.

Corollary \ref{cor:ivme_optimal_triangle} summarizes the above discussion on the worst-case optimality of \ivme.

\begin{corollary}[Theorem~\ref{theo:main_result_triangle} and Proposition~\ref{prop:lower_bound_triangle}]\label{cor:ivme_optimal_triangle}
Under a single-tuple update to the database $\inst{D}$, 
\ivme with $\eps = \frac{1}{2}$ is strongly Pareto worst-case   optimal for 
the nullary and ternary triangle queries and 
weakly Pareto worst-case 
 optimal for the 
unary and binary triangle queries in the update-delay space,
unless the \OMv conjecture fails. 
\end{corollary}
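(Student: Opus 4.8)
The plan is to obtain the corollary mechanically by pairing the upper bounds of Theorem~\ref{theo:main_result_triangle} with the conditional lower bound of Proposition~\ref{prop:lower_bound_triangle}, both instantiated at $\eps=\frac12$. \textbf{Step 1 (read off the achieved points).} Setting $\eps=\frac12$ gives $\max\{\eps,1-\eps\}=\frac12$, so every triangle query is maintained with amortized update time $\bigO{|\db|^{1/2}}$; Table~\ref{table:complexities_triangle} then yields delay $\bigO{1}$ for $\triangle_0$ and $\triangle_3$, delay $\bigO{|\db|^{2\min\{\eps,1-\eps\}}}=\bigO{|\db|}$ for $\triangle_1$, and delay $\bigO{|\db|^{\min\{\eps,1-\eps\}}}=\bigO{|\db|^{1/2}}$ for $\triangle_2$. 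These are exactly the points $A$, $C$, and $B$ of Figure~\ref{fig:optimality_plot_3d}. \textbf{Step 2 (describe the forbidden region).} Writing a candidate algorithm's guarantees as update time $\bigO{|\db|^{u}}$ and delay $\bigO{|\db|^{d}}$, I would restate Proposition~\ref{prop:lower_bound_triangle} as: under the \OMv conjecture, no triangle query admits a pair with $u<\frac12$ and $d<1$ simultaneously. The only bookkeeping point is that the proposition couples the two exponents through a single $\gamma$, so to forbid a pair with $u<\frac12$ and $d<1$ I would take $\gamma=\min\{\tfrac12-u,\,1-d\}>0$ and note $\bigO{|\db|^{u}}\subseteq\bigO{|\db|^{1/2-\gamma}}$ and $\bigO{|\db|^{d}}\subseteq\bigO{|\db|^{1-\gamma}}$.

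\textbf{Strong optimality for $\triangle_0,\triangle_3$ (point $A$).} Here the delay is already $\bigO{1}$, which is the unconditional (information-theoretic) minimum, so it cannot be tightened at all. And tightening the update exponent below $\frac12$ while keeping delay $\bigO{1}$ would produce a pair with $u<\frac12$ and $d=0<1$, which the Step-2 argument forbids under \OMv. Since neither measure can be improved while the other is held fixed, point $A$ is strongly Pareto worst-case optimal.

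\textbf{Weak optimality for $\triangle_1,\triangle_2$ (points $C$ and $B$).} It suffices to rule out improving both measures at once. Suppose an algorithm achieved update $\bigO{|\db|^{1/2-\gamma_1}}$ and delay $\bigO{|\db|^{d_0-\gamma_2}}$ with $\gamma_1,\gamma_2>0$, where $d_0=1$ for $\triangle_1$ and $d_0=\frac12$ for $\triangle_2$. Then the new update exponent is $<\frac12$ and the new delay exponent is $<d_0\le1$, so by Step 2 (take $\gamma$ the minimum of the two slacks) the pair lies in the forbidden region, contradicting the \OMv conjecture. Hence both measures cannot be simultaneously lowered, which is exactly weak Pareto worst-case optimality. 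To confirm that these points are not strong, I would observe that a single measure can be lowered without entering the forbidden region: for $\triangle_1$ at $(\frac12,1)$, lowering the update alone leaves $d=1\not<1$ and lowering the delay alone leaves $u=\frac12\not<\frac12$, so either improvement escapes the region; for $\triangle_2$ at $(\frac12,\frac12)$, lowering the delay alone keeps $u=\frac12\not<\frac12$ and is likewise not precluded by Proposition~\ref{prop:lower_bound_triangle}.

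\textbf{Main obstacle.} Since the statement is an assembly of two results already in hand, there is no deep difficulty; the care lies entirely in (i) fixing the precise definitions of strong and weak Pareto worst-case optimality so that the logical claims above certify them, and (ii) the quantifier manipulation that turns the single-$\gamma$ proposition into a statement about the open forbidden region $\{u<\frac12,\ d<1\}$. In particular, the whole strong-versus-weak distinction hinges on correctly treating the boundary: a point with $u=\frac12$ exactly (or $d=1$ exactly) is not in the forbidden region, which is why $A$ is strong while $B$ and $C$ are only weak.
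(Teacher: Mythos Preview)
Your proposal is correct and mirrors the paper's own reasoning. The paper does not give a separate proof of the corollary; it states that the corollary ``summarizes the above discussion,'' and that discussion is precisely what you have spelled out: instantiate Theorem~\ref{theo:main_result_triangle} at $\eps=\tfrac12$ to locate the points $A,B,C$, then use Proposition~\ref{prop:lower_bound_triangle} to carve out the forbidden open region $\{u<\tfrac12,\ d<1\}$ and check which points sit on which part of its boundary. Your handling of the single-$\gamma$ quantifier and of the boundary cases is exactly the care the paper leaves implicit.
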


\nop{
\subsubsection{Maintaining Loomis-Whitney Queries}
We now turn our attention to Loomis-Whitney (LW) queries. 
Triangle queries are LW queries of degree three.
\ivme extends naturally to LW queries 
of arbitrary degree. Whereas the amortized update time remains the same as for triangle 
queries, the preprocessing time, space, and enumeration delay
depend on the query degree and arity.

\begin{theorem}
\label{theo:main_result_lw}
Given a database $\inst{D}$ and $\eps \in [0,1]$, 
\ivme incrementally maintains any LW query
under single-tuple updates to $\inst{D}$ with 
$\bigO{|\inst{D}|^{1 + \max\{\frac{1}{n-1}, \min\{\frac{1}{n-2}, \eps, 1-\eps\}\}}}$ preprocessing time and
$\bigO{|\inst{D}|^{\max\{\eps,1-\eps\}}}$ amortized 
update time.
The space complexity and enumeration delay
are given in Table \ref{table:complexities_lw}:

\begin{table}[h!]
\begin{center}
\renewcommand{\arraystretch}{1.2}
\begin{tabular}{@{\hskip 0.0in}l@{\hskip 0.1in}l@{\hskip 0.1in}
l@{\hskip 0.1in}l@{\hskip 0in}}

& $\Diamond_0$ & $\Diamond_{i}, i\in[n-1]$ & $\Diamond_n$  \\
\midrule
Space & $\bigO{|\inst{D}|^{1 + \min\{\eps, 1-\eps, \frac{1}{n-2}\}}}$ & $\bigO{|\inst{D}|^{1 + \min\{\eps , 1- \eps, \frac{1}{n-2}\}}}$ & $\bigO{|\inst{D}|^{1 + \max\{\frac{1}{n-1}, \min\{\frac{1}{n-2}, \eps, 1-\eps\}}}$\\

Enumeration delay  
&  $\bigO{1}$ 
& $\bigO{|\inst{D}|^{\min\{1, (n - i) \cdot (1-\eps)\}}}$
&  $\bigO{1}$ \\

\end{tabular}
\end{center}
\caption{\ivme' space and enumeration delay 
for maintaining Loomis-Whitney queries of degree $n$.}
  \label{table:complexities_lw}
\end{table}
\end{theorem}


\begin{figure}[t]
\begin{center}
\begin{minipage}{14cm}
\begin{center}
\begin{tikzpicture}
\begin{axis}[
grid=major, 
    grid style={dotted,line width=1pt},
xmin=0, xmax=1, ymin=0, ymax=1.5,
every axis plot post/.append style={mark=none},
  xtick ={0,0.19, 0.36,0.5, 0.64, 0.81, 1},
  ytick ={0, 0.5, 1,1.19,1.36, 1.5},
  xticklabels={\footnotesize{$0$},\footnotesize{$\frac{1}{n-1}$}, \footnotesize{$\frac{1}{n-2}$},
  \footnotesize{$\frac{1}{2}$},\footnotesize{$\frac{n-3}{n-2}$}, \footnotesize{$\frac{n-2}{n-1}$},\footnotesize{$1$}},
    yticklabels={$ $,\footnotesize{$\frac{1}{2}$},\footnotesize{$1$},
    \footnotesize{$\frac{n}{n-1}$}, \footnotesize{$\frac{n-1}{n-2}$}, \footnotesize{$\frac{3}{2}$}},  
y=3.5cm,
    x=3.7cm,
axis lines=middle,
    axis line style={->},
    x label style={at={(axis description cs:1.15,-0.04)}},
    xlabel={\footnotesize{$\eps$}},
    y label style={at={(axis description cs:-0.2,1.25)},align=center},
      ylabel=\footnotesize{complexity} \\ $\bigO{|\inst{D}|^y}$,
  axis x line*=bottom,
  axis y line*=left,
  ]

  \addplot[color = blue, mark=none,domain=0:1,thick,dashed] coordinates{ 
  (0, 1) 
  (0.36, 1.36)
    (0.64, 1.36)
  (1, 1) 
};

  \addplot[color = blue, mark=none,domain=0:1,thick,dashed] coordinates{ 
  (0, 1.19) 
  (0.19, 1.19)
  (0.36, 1.36)
    (0.64, 1.36)
    (0.81, 1.19)
      (1, 1.19)   
};

  \addplot[color=red,mark=none,domain=0:1,thick] coordinates{ 
  (0, 1) 
  (1/2, 1/2)
  (1,1) 
}; 

  \addplot[color=goodgreen,mark=none,domain=0:1,very thick,dotted] coordinates{ 
  (0, 0.02) 
  (1,0.02) 
}; 

  \addplot[color=goodgreen,mark=none,domain=0:1,very thick,dotted] coordinates{ 
  (0, 1) 
  (1,0) 
}; 

  \addplot[color=goodgreen,mark=none,domain=0:1,very thick,dotted] coordinates{ 
  (0, 1) 
  (0.5,1)
  (1,0) 
};

  \addplot[color=goodgreen,mark=none,domain=0:1,very thick,dotted] coordinates{ 
  (0, 1) 
  (0.81,1)
  (1,0) 
}; 

\end{axis}
\node at(0,5.5) {\footnotesize{$y$}};

\node at(1.2,0.25) {\color{goodgreen} \footnotesize{$\Diamond_0$, $\Diamond_n$}};

\node at(2.2,0.9) {\color{goodgreen}\footnotesize{$\Diamond_{n-1}$}};

\node at(1.6,3.0) {\color{goodgreen}\footnotesize{$\Diamond_{n-2}$}};

\node at(3.5,2.8) {\color{goodgreen}\footnotesize{$\Diamond_{1}$}};


\node at(4.3,3.8) {\color{blue}\footnotesize{$\Diamond_0, \dots, \Diamond_{n-1}$}};

\node at(3.5,4.4) {\color{blue}\footnotesize{$\Diamond_n$}};

\begin{scope}[xshift = -0.2cm,yshift = 0.85cm]
\node at(2.5,2.2) {\huge{\color{goodgreen}$\cdot$}};
\node at(2.7,2.2) {\huge{\color{goodgreen}$\cdot$}};
\node at(2.9,2.2) {\huge{\color{goodgreen}$\cdot$}};
\end{scope}

\nop{
\begin{scope}[xshift = 4.9cm,yshift = 6.3cm]
\draw [line width= 0.8pt, dashed](0.2,0) -- (0.7,0);
\node at(2,0) {\footnotesize{Preprocessing Time}};
\node at(2.4,-0.45) {\footnotesize 
$y =1 + \max\{\frac{1}{n-1}, \min\{\frac{1}{n-2}, \eps, 1-\eps\}\}$};
\end{scope}
}

\begin{scope}[xshift = 5.35cm,yshift = 5.1cm]
\draw [line width= 0.8pt, dashed, blue](0,0) -- (0.5,0);
\node at(1,0) {\footnotesize{Space}};
\node at(3.6,-0.7) {\footnotesize \color{blue}
$
y = 
\begin{cases}
1 + \min\{\eps, 1-\eps, \frac{1}{n-2}\}, & \text{for } \Diamond_0, \dots, \Diamond_{n-1} \\ 
1 + \max\{\frac{1}{n-1}, \min\{\frac{1}{n-2}, \eps, 1-\eps\}\}, & \text{for } \Diamond_n
\end{cases}
$
};

\end{scope}

\begin{scope}[xshift = 0.3cm,yshift = 3.0cm]
\draw [line width= 0.8pt,red](5,0) -- (5.5,0);
\node at(7,0) {\footnotesize{Amortized Update Time}};
\node at(7.1,-0.4) {\footnotesize \color{red} $y =\max\{\eps,1-\eps\}, \text{for } \Diamond_0, \dots, \Diamond_n$};
\end{scope}

\begin{scope}[xshift = -4cm,yshift = 1.6cm]
\draw [line width= 1.4pt,dotted, goodgreen](9.3,0) -- (9.8,0);
\node at(11.1,0) {\footnotesize{Enumeration Delay}};
\node at(12.65,-0.75){\footnotesize \color{goodgreen}
$
y = 
\begin{cases}
0, & \text{for } \Diamond_0 \\ 
\min\{ 1,\, (n-j)\cdot (1 - \eps)\}, & \text{for } \Diamond_j, \text{where } 1 \leq j \leq n
\end{cases}
$
};
\end{scope}

\begin{scope}[xshift = 0.7cm, yshift = 1.3cm]
\node at(5.25,-1.3) {\footnotesize{classical IVM}};
\draw [->,>=stealth, dotted, line width=0.3mm] (4.4,-1.3) -- (3.6,-1.3);
\end{scope}

\end{tikzpicture}
\end{center}
\end{minipage}
\end{center}
\caption{
\ivme's space, amortized update time, and enumeration delay 
for maintaining LW queries of degree $n$. 
$\inst{D}$ is the database. The complexities are  
parameterized by $\eps$. 
The space and enumeration delay depend on the degree $n$ 
and the arity of the query result. 
For $\eps = 1$, \ivme recovers classical IVM for maintaining  
LW count and full  
LW queries.}
\label{fig:complexity_plots_lw}
\end{figure} 

\begin{proposition}\label{prop:lower_bound_lw}
For any $\gamma > 0$ and database $\db$,
there is no algorithm that incrementally maintains the result of
any Loomis-Whitney query under single-tuple updates to $\db$ with arbitrary preprocessing 
time, $\bigO{|\db|^{\frac{1}{2} - \gamma}}$ amortized update time, and $\bigO{|\db|^{1 - \gamma}}$ enumeration delay, unless the \OMv conjecture fails.
\end{proposition} 

\begin{corollary}[Theorem~\ref{theo:main_result_lw} and Proposition~\ref{prop:lower_bound_lw}]\label{cor:ivme_optimal_lw}
Given a database $\inst{D}$ and $\eps = \frac{1}{2}$, 
\ivme incrementally maintains any Loomis-Whitney query  
under single-tuple updates to the database $\inst{D}$ with worst-case optimal amortized update time 
and enumeration delay, unless the \OMv conjecture fails. 
\end{corollary}

\begin{figure}[t]
  \begin{center} 
 \begin{minipage}{4cm}
  \begin{tikzpicture}
  \begin{axis}[
  grid=major,
      grid style={dotted},
  xmin=0, xmax=1.1, ymin=0, ymax=0.6,
  every axis plot post/.append style={mark=none},
    xtick ={0, 0.5, 1},
    ytick ={0, 0.5, 3/4, 5/6, 1},
    xticklabels={\footnotesize{$0$},\footnotesize{$\frac{1}{2}$},\footnotesize{$1$}},
     yticklabels={$ $,\footnotesize{$\frac{1}{2}$}, \footnotesize{$\frac{3}{4}$},\footnotesize{$\frac{n-1}{n}$}, \footnotesize{$1$}},  
  y=3cm,
      x=3.0cm,
  axis lines=middle,
      axis line style={->},
      x label style={at={(axis description cs:1.8,-0.10)},align=center},
      y label style={at={(axis description cs:-0.75,0)},align=center},
    axis x line*=bottom,
    axis y line*=left,
   ]

  \addplot [thin,dashed,color=black,mark=o,fill=gray, 
                    fill opacity=0.25]coordinates {
            (0, 1/2) 
            (1, 1/2)
            (1, 0)
            (0, 0)  };

  \end{axis}
  \node at(3.5, 0) {\footnotesize{$x$}};
  \node at(0,2) {\footnotesize{$y$}};

    \filldraw[black] (0.0,1.5) circle (2pt) node[anchor=south west] {\footnotesize $A$};
        \filldraw[black] (1.5,1.5) circle (2pt) node[anchor=south west] {\footnotesize $B$};
        \filldraw[black] (3,1.5) circle (2pt) node[anchor=south west] {\footnotesize $C$};
  
     \begin{scope}[xshift=-3.2cm,yshift=-0.2cm]      
       \node at(4.9,-0.5) {\footnotesize Enumeration delay $\bigO{|\inst{D}|^{x}}$};   
     \end{scope}
     
     \begin{scope}[xshift=-5.4cm,yshift=1cm]
       \node[rotate=90] at(4.9,0) {\footnotesize Update time $\bigO{|\inst{D}|^{y}}$};   
     \end{scope}   
     
     \begin{scope}[xshift=-3.4cm,yshift=2.5cm]
          \node at(4.9,-1.4) {\color{red} \footnotesize not achievable under};  
          \node at(4.9,-1.8) {\color{red}\footnotesize \OMv conjecture};   
     \end{scope}      
  \end{tikzpicture}
 \end{minipage}
 \hspace{2cm}
   \begin{minipage}{6.5cm}
      \centering
      \begin{small}

\begin{tabular}{@{\hskip 0in}c@{\hskip 0.13in}c@{\hskip 0.13in}c@{\hskip 0.13in}c@{\hskip 0.13in}c@{\hskip 0.0in}}
  \toprule
  &  & optimal & amortized  & enumeration   \\
$\eps$ &  Query & trade-off &  update time & delay  \\
  \midrule
  $\frac{1}{2}$ &  $\Diamond_0$ and $\Diamond_n$  & $A$ & $\bigO{|\inst{D}|^{\frac{1}{2}}}$ & $\bigO{1}$   \\ [0.15cm]

  $\frac{1}{2}$ &  $\Diamond_{n-1}$  & $B$ & $\bigO{|\inst{D}|^{\frac{1}{2}}}$ & $\bigO{|\inst{D}|^{\frac{1}{2}}}$   \\ [0.15cm]

$\frac{1}{2}$ &  $\Diamond_1, \dots, \Diamond_{n-2}$ & $C$  & $\bigO{|\inst{D}|^{\frac{1}{2}}}$ & $\bigO{|\inst{D}|}$   \\ [0.05cm]
    \bottomrule
      \end{tabular}
    \end{small}
  \end{minipage}
  \caption{Worst-case optimal trade-offs between 
  amortized update time and enumeration delay admitted by \ivme. 
  There is no algorithm maintaining a LW query
  with update time and enumeration delay matching a point in the gray area, 
  unless the \OMv conjecture fails (Proposition \ref{prop:lower_bound_lw}). 
  The border of the gray area corresponds to possible worst-case optimal trade-offs. 
  For $\Diamond_0$ and $\Diamond_n$, \ivme's trade-off is point $A$. 
  For $\Diamond_{n-1}$, \ivme's trade-off is point $B$. 
  For $\Diamond_{1}, \dots, \Diamond_{n-2}$, \ivme's trade-off is point $C$. 
  For all points $A$, $B$, and $C$, $\eps=\frac{1}{2}$.
  }
  \label{fig:optimality_plot_lw}
  \end{center}
\end{figure}

}

\subsection{Structure of This Article}
Section~\ref{sec:preliminaries} introduces the preliminaries. Sections~\ref{sec:count} to \ref{sec:unary} introduce \ivme 
 for the nullary, ternary, binary, and unary triangle queries. 
\ivme for the nullary triangle query  needs three techniques to achieve the complexities in Theorem~\ref{theo:main_result_triangle}: delta processing, materialization of auxiliary views, and adaptive maintenance strategy depending on the degree of values in one of the columns of the input relations. 
For the ternary triangle query \ivme  additionally uses the concept of view trees. 
 \ivme for unary and  binary triangle queries exploits the degree of values
  in both columns of relations. It also uses two union  
  algorithms: one for 
  enumerating the distinct tuples in projections of views
  and one for enumerating the distinct tuples in unions of views.
   The lower bound in Proposition~\ref{prop:lower_bound_triangle} is proven in 
 Section~\ref{sec:lowerbound}.
 Section~\ref{sec:recovery} details how \ivme recovers  
existing dynamic and static approaches for triangle queries.  
Section~\ref{sec:related} relates the results of this article 
to existing work.  
Section~\ref{sec:extensions} discusses several extensions 
of \ivme. 
Conclusion and future work are given in 
Section~\ref{sec:conclusion}.

\section{Preliminaries}
\label{sec:preliminaries}
\subsection{Data Model and Query Language}
\label{sec:data_model}
A schema $\inst{X} = (X_1, \ldots , X_n)$ is a tuple of distinct variables. 
Each variable $X_i$ has a discrete domain 
$\Dom(X_i)$. 
By $\inst{F} \subseteq \inst{X}$, we mean that 
$\inst{F}$ is a schema  that consists of a subset of the variables 
in $\inst{X}$.
A tuple 
$\inst{x}$ over schema $\inst{X}$ is an element from 
$\Dom(\inst{X}) = \Dom(X_1) \times \ldots  \times \Dom(X_n)$.
We use uppercase letters for variables and lowercase letters for data values.
Likewise, we use bold uppercase letters for schemas and bold lowercase letters for tuples of data values.

A relation $K$ over schema $\inst{X}$ is a function 
$K: \Dom(\inst{X}) \to \mathbb{Z}$ mapping tuples over
$\inst{X}$ to integers such that   
$K(\inst{x}) \neq 0$ for finitely many tuples $\inst{x}$. 
A tuple $\inst{x}$ is in 
$K$, denoted by $\inst{x} \in K$, if $K(\inst{x}) \neq 0$. 
The value $K(\inst{x})$ represents the multiplicity of $\inst{x}$ in $K$.
The size $|K|$ of $K$ is the size of the set $\{ \inst{x} \mid \inst{x} \in K \}$. 
A database $\inst{D}$ is a set of relations, and its size $|\inst{D}|$ is the sum of the sizes of the relations in $\inst{D}$.

Given a tuple $\inst{x}$ over 
schema  $\inst{X}$ and $\inst{F} \subseteq \inst{X}$, 
we write $\inst{x}[\inst{F}]$ to denote the restriction of $\inst{x}$ 
onto the variables in $\inst{F}$ such that the values in 
$\inst{x}[\inst{F}]$ follow the ordering 
in $\inst{F}$. For instance, if the 
tuple $(a,b,c)$ is over the schema $(A,B,C)$, then it holds 
$(a,b,c)[(C,A)] = (c,a)$.
For a relation $K$ over $\inst{X}$, and a tuple 
$\inst{t} \in \Dom(\inst{F})$, 
$\sigma_{\hspace{0.25mm}\inst{F} = \inst{t}} K$ denotes the set of tuples in  
$K$ that agree with $\inst{t}$ on the variables in $\inst{F}$, that is,
$\sigma_{\hspace{0.25mm}\inst{F} = \inst{t}} K = 
\{\, \inst{x} \,\mid\, \inst{x} \in K \land \inst{x}[\inst{F}] = \inst{t} \,\}$.
We write $\pi_{\hspace{0.25mm}\inst{F}}K$ to denote the set of 
restrictions of the tuples in $K$ onto $\inst{F}$, that is,
$\pi_{\hspace{0.25mm}\inst{F}}K = \{\, \inst{x}[\inst{F}] \,\mid\, \inst{x} \in K \,\}$.

\paragraph*{Query Language}
We express queries and view definitions in the language of 
functional aggregate queries (FAQ)~\cite{FAQ:PODS:2016}. 
 Compared to the original FAQ definition that uses several 
commutative semirings, we define queries over the single 
commutative ring $(\mathbb{Z},+,\ztimes,0,1)$ of integers   
with the usual addition and 
multiplication\footnote{Previous work shows how the data-intensive 
computation of different applications can be captured by application-specific rings~\cite{Nikolic:SIGMOD:18}.}. 
A query Q has one of the two forms:

\begin{enumerate}
    \item Given a set $\{X_i\}_{i \in [n]}$ of variables and an index set $S \subseteq [n]$, 
let $\inst{X}_{S}$ denote a tuple $(X_i)_{i \in S}$ of variables and 
$\inst{x}_{S}$ denote a tuple of data values over the schema $\inst{X}_{S}$.
Then,

\begin{equation*}
Q(\inst{x}_{[f]}) = \sum\limits_{x_{f+1} \in \Dom(X_{f+1})} \cdots 
\sum\limits_{x_{n} \in \Dom(X_{n})}\  \ \prod_{S\in\mathcal{M}} K_S(\inst{x}_S),\text{ where:}
\end{equation*}

\begin{itemize}
\item $\mathcal{M}$ is a multiset of index sets.

\item For every index set $S \in \mathcal{M}$, 
$K_S : \Dom(\inst{X}_S) \rightarrow \mathbb{Z}$ 
is a relation over the schema $\inst{X}_S$.

\item $\inst{X}_{[f]}$ is the tuple of free variables of $Q$.
The variables $X_{f+1}, \ldots ,X_n$ are called bound.
\end{itemize}

\item $Q(\inst{x}) = Q_1(\inst{x}) + Q_2(\inst{x})$, where $Q_1$ and $Q_2$ are queries over the same tuple of free variables.
\end{enumerate}

In the following, we use $\sum_{x_i}$ as a shorthand for $\textstyle\sum_{x_i\in \Dom(X_i)}$.

\paragraph*{Updates and Delta Queries.}
An update $\delta K$ to a relation $K$ is a relation over the schema of $K$.
A single-tuple update, written as $\delta K = \{\inst{x} \mapsto \p\}$, maps the tuple $\inst{x}$ to the nonzero multiplicity $\p\in\mathbb{Z}$ and any other tuple to $0$; that is, $|\delta K| = 1$.
The data model and query language make no distinction between inserts and deletes -- these are updates represented as relations in which tuples have positive and negative multiplicities\footnote{We restrict the multiplicities of tuples in the input relations and views to be strictly positive. Multiplicity 0 means the tuple is not present. Deletes are expressed using negative multiplicities. 
A delete request for tuple $t$ with multiplicity $-m$ is rejected if $t$'s multiplicity in the relation is less than $m$.}.

Given a query $Q$ and an update $\delta K$, the delta query $\delta Q$ defines the change in the query result after applying $\delta K$ to the database. The rules for deriving delta queries follow from the associativity, commutativity, and distributivity of the ring operations. Recall that relations and queries 
are functions mapping tuples of data values to multiplicities. 

\begin{center}
\begin{tabular}{l@{\hskip 0.4in}l}
Query $Q(\inst{x})$ & Delta query $\delta Q(\inst{x})$ \\
\midrule
$Q_1(\inst{x}_1) \ztimes Q_2(\inst{x}_2)$ & 
$\delta Q_1(\inst{x}_1) \ztimes Q_2(\inst{x}_2) +
  Q_1(\inst{x}_1)\, \ztimes \delta Q_2(\inst{x}_2) + 
  \delta Q_1(\inst{x}_1) \ztimes \delta Q_2(\inst{x}_2)$ \\[0.25ex]

$\textstyle\sum_{x} Q_1(\inst{x}_1)$ & 
$\textstyle\sum_{x} \delta Q_1(\inst{x}_1)$ \\[0.25ex]

$Q_1(\inst{x}) + Q_2(\inst{x})$ & 
$\delta Q_1(\inst{x}) + \delta Q_2(\inst{x})$ \\[0.25ex]

$K'(\inst{x})$ & 
$\delta K(\inst{x})$ when $K = K'$ and $0$ otherwise \\[0.25ex]
\end{tabular}
\end{center}

\subsection{Data Partitioning}
Our maintenance approach partitions each input relation based on the degrees of its values and uses
different maintenance strategies for values of high and low frequency.

\begin{definition}[Single Relation Partition]
\label{def:loose_relation_partition}
Given a relation $K$ over schema $\inst{X}$, a variable  $X$ from the schema $\inst{X}$, 
and a threshold $\theta$, the pair $(K^\H, K^\L)$ of relations is 
a \emph{single partition} of $K$ on $X$ with threshold $\theta$ if it satisfies the following conditions: \\[6pt]
\begin{tabular}{@{\hskip 0.5in}rl}
{(union)} & 
$K(\inst{x}) = K^\H(\inst{x}) + K^\L(\inst{x})$ for $\inst{x} \in \Dom(\inst{X})$\\[4pt]
{(domain partition)} & $\pi_{X}K^\H \cap \pi_{X}K^\L = \emptyset$ \\[4pt]
{(heavy part)} & for all $x \in \pi_{X}K^\H:\; |\sigma_{X=x} K^\H| \geq 
\frac{1}{2}\,\theta$ \\[4pt]
{(light part)} & for all $x \in \pi_{X}K^\L:\; |\sigma_{X=x} K^\L| < \frac{3}{2}\,\theta$
\end{tabular}\\[6pt]
The pair $(K^\H, K^\L)$ is called a strict partition of $K$ on $X$ with threshold 
$\theta$ if it satisfies the union and 
domain partition conditions and the following strict versions
of the heavy and light part conditions: 
\\[6pt]
\begin{tabular}{@{\hskip 0.5in}rl}
{(strict heavy part)} & for all $x \in \pi_{X}K^\H:\; |\sigma_{X=x} K^\H| \geq 
\theta$ \\[4pt]
{(strict light part)} & for all $x \in \pi_{X}K^\L:\; |\sigma_{X=x} K^\L| < \theta$
\end{tabular}\\[6pt]
The relations $K^\H$ and $K^\L$ are called the heavy and light parts of $K$.
\end{definition}

Definition~\ref{def:loose_relation_partition} admits multiple ways to (non-strictly) partition a relation
$K$ with threshold $\theta$. 
For instance, assume that $|\sigma_{X=x} K| = \theta$ for some $X$-value $x$ in $K$. 
Then, all tuples in $K$ with $X$-value $x$ can be in either the heavy or light part of $K$; but they cannot be in both parts because of the domain partition condition. If the partition is strict, then all such tuples are in the heavy part of $K$. 
The strict partition of a relation $K$  is unique for a given threshold and can be computed in time linear in the size of $K$.
\nop{
Assuming $|K|=N$ and the strict partition $(K^\H,K^\L)$ of $K$ on $X$ with threshold $\theta=N^\eps$ for $\eps\in[0,1]$, we have that:
$\forall x \in \pi_{X} K^\L: |\sigma_{X=x} K^\L| < \theta = N^\eps$;
 and $|\pi_{X} K^\H| \leq \frac{|K|}{\theta} = N^{1-\eps}$.
}

To improve the time and space complexity of our maintenance approach, we may partition input relations based on the degrees of values of two variables. 

\nop{
\begin{definition}[Double Relation Partition]
\label{def:loose_double_relation_partition}
Given a relation $K$ over schema $\inst{X}$, distinct variables $X$ and $Y$ from the schema $\inst{X}$, and a threshold $\theta$,
the tuple $(K^{hh}, K^{hl}, K^{lh}, K^{ll})$ of relations is a \emph{double partition} of $K$ on $X$ and $Y$ with threshold $\theta$ if 
$(K^{hh}, K^{hl})$ and $(K^{lh}, K^{ll})$ are partitions of $K^\H$ and respectively $K^\L$ on $Y$ with threshold $\theta$, where
$(K^{h}, K^{l})$ is a partition of $K$ on $X$ with threshold $\theta$.
\end{definition}
}

\begin{definition}[Double Relation Partition]
\label{def:loose_double_relation_partition}
Given a relation $K$ over schema $\inst{X}$, distinct variables $X$ and $Y$ from the schema $\inst{X}$, and a threshold $\theta$,
let $(K^\H_{\textsmaller{$X$}},K^\L_{\textsmaller{$X$}})$ and $(K^\H_\textsmaller{$Y$},K^\L_\textsmaller{$Y$})$ be partitions of $K$ on $X$ and respectively on $Y$ with threshold $\theta$, and let 
$K^{\H\H} = K^\H_\textsmaller{$X$}\cap K^\H_\textsmaller{$Y$}$, $K^{\H\L} = K^\H_\textsmaller{$X$}\cap K^\L_\textsmaller{$Y$}$, $K^{\L\H} = K^\L_\textsmaller{$X$}\cap K^\H_\textsmaller{$Y$}$, and $K^{\L\L} \!=\! K^\L_\textsmaller{$X$}\cap K^\L_\textsmaller{$Y$}$.
The tuple $(K^{\H\H}, K^{\H\L}, K^{\L\H}, K^{\L\L})$ is a \emph{double partition} of $K$ on $(X, Y)$ with threshold $\theta$.
\end{definition}

Let  $(K^{\H}, K^{\L})$ be a single partition 
of a relation $K$ on variable $X$ and 
$(K^{\H\H}, K^{\H\L}, K^{\L\H}, K^{\L\L})$ a double partition  
of $K$ on the pair $(X,Y)$ with some threshold $\theta$.
We say that $X$ is heavy in $K^{\H}$, $K^{\H\H}$ and 
$K^{\H\L}$ and light in 
 $K^{\L}$, $K^{\L\H}$, and $K^{\L\L}$. Similarly,
$Y$ is heavy in $K^{\H\H}$ and $K^{\L\H}$ and light in 
 $K^{\H\L}$ and $K^{\L\L}$. 
 Observe the following implications of Definitions 
 \ref{def:loose_relation_partition} and 
 \ref{def:loose_double_relation_partition} 
 to the heavy variables in relation parts.  
It holds $|\sigma_{X=x} K^\H| \geq 
\frac{1}{2}\,\theta$ for any $X$-value $x$ in $K^\H$.
However, if $K' \in \{K^{\H\H}, K^{\H\L}\}$ and $x$ is an $X$-value in $K'$, 
 this means that $|\sigma_{X = x}K| \geq \frac{1}{2}\theta$, but not necessarily 
 $|\sigma_{X = x}K'| \geq \frac{1}{2}\theta$. 
 The same holds for the degrees of $Y$-values 
 in $K^{\H\H}$ and $K^{\L\H}$.

\paragraph*{Notation.}
Our maintenance approach focuses on triangle queries and constructs auxiliary views over parts of relations $R$, $S$, and $T$.
We use an indexing scheme for such views to capture which parts of $R$, $S$, and $T$ are used in their definition.
We write $V^{rst}$ to denote a view $V$ over the parts of $R$, $S$, and $T$ specified by components $r$, $s$, and $t$, respectively.
For component $r$, 
$\H$ means $R^{\H}$;
$\L$ means $R^{\L}$;
$(\H\H)$ means $R^{\H\H}$; similarly for $(\H\L)$, $(\L\H)$, and $(\L\L)$; and
symbol $\F$ means the entire relation $R$ (i.e., the union of all parts of $R$). 
A similar convention holds for $s$ and $t$    .

For example, 
$V^{\H\H\H}$ denotes a view defined over the heavy parts of $R$, $S$, and $T$; 
$V^{\F\H\L}$ denotes a view defined over $R$, $S^H$, and $T^L$; 
$V^{(\L\H)\F\H}$ denotes a view defined over $R^{\L\H}$, $S$, and $T^H$.

\subsection{Computational Model}
\label{sec:computational_model}
We consider the RAM model of computation. 
Each relation (or materialized view) $K$ over schema $\inst{X}$ is implemented by a data structure that stores key-value entries $(\inst{x},K(\inst{x}))$ for each tuple $\inst{x}$ over $\inst{X}$ with $K(\inst{x}) \neq 0$ and 
needs space linear in the number of such tuples. 
We assume that this data structure supports
(1) looking up, inserting, and deleting entries in constant time,
(2) enumerating all stored entries in $K$ with constant delay, and
(3) returning $|K|$ in constant time.
For instance, a hash table with chaining, where entries are doubly linked for efficient enumeration, 
can support these operations in constant time on average, under the assumption of simple uniform hashing. 

Given a relation $K$ over schema $\inst{X}$ and a non-empty 
schema $\inst{F}\subset\inst{X}$, 
we assume there is an index structure on $\inst{F}$ that allows:
for any $\inst{t} \in \Dom(\inst{F})$,
(4) enumerating all entries in $K$ matching $\sigma_{\hspace{0.25mm}\inst{F}=\inst{t}}K$ with constant delay,
(5) checking $\inst{t} \in \pi_{\hspace{0.25mm}\inst{F}}K$ in constant time, and 
(6) returning $|\sigma_{\hspace{0.25mm}\inst{F}=\inst{t}}K|$ in constant time, and
(7) inserting and deleting index entries in constant time.
Such an index structure can be realized, for instance, as a hash table with chaining
where each key-value entry stores a tuple $\inst{t}$ over 
$\inst{F}$ and a doubly-linked list of pointers to the entries in $K$ having the $\inst{F}$-value $\inst{t}$.
Looking up an index entry given a tuple $\inst{t}$ over schema $\inst{F}$ takes constant time on average, and its doubly-linked list enables enumeration of the matching entries in $K$ with constant delay. 
Inserting an index entry into the hash table additionally prepends a new pointer to the doubly-linked list for a given $\inst{t}$; overall, this operation takes constant time on average.
For efficient deletion of index entries, each entry in $K$ also stores back-pointers to its index entries (as many back-pointers as there are index structures for $K$). When an entry is deleted from $K$, locating and deleting its index entries takes constant time per index.

\paragraph*{Computation Time}
Our maintenance approach first constructs a data structure that represents the result of a given triangle query on a database $\inst{D}$ and then maintains the data structure under a sequence of single-tuple updates.
In our analysis, we consider the following computation times: 
(1) the \emph{preprocessing time} is the time spent on initializing the data structure using $\inst{D}$ before any update is received,
(2) the \emph{update time} is the time spent on updating the data structure after one single-tuple update, and
(3) the \emph{enumeration delay} is the time spent until reporting the first tuple, the time between reporting two consecutive tuples, and the time between reporting the last tuple and the end of enumeration. 
For the nullary triangle query, the enumeration delay is the time spent on reporting the triangle count.
We consider two types of bounds on the update time: 
\emph{worst-case bounds}, which limit the time each individual update takes in the worst case, and
\emph{amortized worst-case bounds}, which limit the average worst-case time taken by a sequence of updates. 
When referring to sublinear time, we mean $\bigO{|\inst{D}|^{1-\gamma}}$ for some $\gamma > 0$, where $|\inst{D}|$ is the database size. 

\subsection{Enumeration Algorithms}
\subsubsection{Iterators over Materialized Views}
\label{sec:view_iterators}
Each materialized view provides the iterator interface to allow the enumeration of its tuples.
Each iterator maintains a pointer to the last reported tuple and supports two functions:
$\textsc{Next}()$ returns the next tuple in the view with a non-zero multiplicity if it exists or $\EOF$ otherwise;
$\textsc{Contains}\hspace{0.1mm}(\inst{x})$ checks if a tuple $\inst{x}$ exists in the view without altering the iterator's pointer. 
The functions $\textsc{Next}()$ and $\textsc{Contains}\hspace{0.1mm}(\inst{x})$
take constant time. 
Enumerating all tuples in a view amounts to repeatedly invoking 
the function $\textsc{Next}()$ on its iterator until reaching $\EOF$.

\subsubsection{Enumerating Unions of Sets}   
\label{sec:union_algorithm}
Given
possibly non-disjoint sets $S_1, \ldots , S_n$
the union algorithm enumerates the distinct elements in 
$\bigcup_{i \in [n]}S_i$~\cite{Durand:CSL:11}.
Figure~\ref{fig:enum_union} shows the function \textsc{UnionNext} that takes as input the iterators over $S_1, \ldots , S_n$ and based on the current iterator states (i.e., iterator pointers), returns the next element in $\bigcup_{i \in [n]}S_i$ or $\EOF$ if none.
The case $n=1$ simply returns the next element in $S_n$.
For $n=2$, the algorithm returns elements from $S_1$ only if they do not exist in $S_2$ (Line 6);
otherwise, it returns the next element from $S_2$ (Line 4). The $\textsc{Next}$ call in Line 4 always succeeds as it is made $|S_1 \cap S_2|$ times before exhausting $S_1$. 
After $S_1$ is exhausted, the algorithm returns the remaining elements from $S_2$.
The case $n>2$ is reduced to the binary case by treating $\bigcup_{i \in [n-1]}S_i$ as the first set and $S_{n}$ as the second set. 

\begin{figure}[t]
\begin{center}
\renewcommand{\arraystretch}{1.1}
\setcounter{magicrownumbers}{0}
\begin{tabular}{l}
\toprule
\textsc{UnionNext}(iterators $I_1, \ldots, I_n$) : tuple \\
\midrule
\linenumber \IF($n=1$) \RETURN $I_n.\textsc{Next}\hspace{0.1mm}()$\\
\linenumber \IF $(\,(t = \textsc{UnionNext}(I_1,\ldots, I_{n-1})) \neq \EOF\,)$ \\
\linenumber \TAB \IF $(\,I_n.\textsc{Contains}\hspace{0.1mm}(t)\,)$ \\
\linenumber \TAB\TAB \RETURN $I_n.\textsc{Next}\hspace{0.1mm}()$ \\
\linenumber \TAB \ELSE \\
\linenumber \TAB\TAB \RETURN $t$\\
\linenumber \RETURN $I_n.\textsc{Next}\hspace{0.1mm}()$\\
\bottomrule
\end{tabular}
\end{center}\vspace{-1em}
\caption{
Given iterators $I_1, \ldots, I_n$ over (possibly non-disjoint) sets $S_1, \ldots , S_n$, \textsc{UnionNext} enumerates the distinct elements in $\bigcup_{i \in [n]}S_i$. 
Each iterator $I_i$ supports two functions:
$I_i.\textsc{Next}()$ returns the next element in $S_i$ if it exists and $\EOF$ otherwise; and
$I_i.\textsc{Contains}\hspace{0.1mm}(t)$ checks whether element $t$ exists in the set $S_i$.
}

\label{fig:enum_union}
\end{figure}

\begin{lemma}
\label{lem:enum_delay}
Let $I_1, \ldots, I_n$ be iterators over sets $S_1, \ldots, S_n$, respectively, 
such that each iterator $I_i$ allows lookups in $S_i$ in time $\bigO{l}$ and enumeration of the elements in $S_i$ with delay $\bigO{d}$. 
The function $\textsc{UnionNext}(I_1, \ldots, I_n)$ enumerates $\bigcup_{i \in [n]}S_i$ with $\bigO{n l +  n d}$ delay.
\end{lemma}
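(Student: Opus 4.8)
The plan is to separate the claim into two parts: a worst-case bound on the running time of a single invocation of \textsc{UnionNext}, and the observation that successive invocations each emit exactly one fresh element of the union until it is exhausted. Together these give that the time between two reported elements, as well as before the first and after the last, is bounded by the cost of one invocation, which is exactly the delay I want to bound. So the timing analysis and a correctness argument are the two ingredients, and I would present them in that order.

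For the time of a single invocation I would induct on $n$, following the recursive structure in Figure~\ref{fig:enum_union}. In the base case $n=1$ the call performs a single $I_n.\textsc{Next}()$, costing $\bigO{d}$. For $n>1$ the call makes one recursive call $\textsc{UnionNext}(I_1,\ldots,I_{n-1})$ and, on top of it, executes at most one $I_n.\textsc{Contains}(\cdot)$ (cost $\bigO{l}$) and at most one $I_n.\textsc{Next}()$ (cost $\bigO{d}$), regardless of which of Lines 3--7 is taken. Writing $T(n)$ for the worst-case cost of one invocation, this yields the recurrence $T(n)\le T(n-1)+\bigO{l+d}$ with $T(1)=\bigO{d}$, hence $T(n)=\bigO{n(l+d)}=\bigO{nl+nd}$. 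The recursion depth is exactly $n$ and each level contributes $\bigO{l+d}$, so the bound is genuinely worst-case and no amortization is needed.

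It then remains to argue that each invocation returns one new distinct element of $\bigcup_{i\in[n]}S_i$ and that \EOF is returned only once the union has been fully enumerated; this correctness part I would prove by induction on $n$ with the binary step as the crux. In the binary case the call reads the next element $t$ of the first set (via the recursion); if $t\notin S_n$ it outputs $t$, and if $t\in S_n$ it discards $t$ and outputs $I_n.\textsc{Next}()$ instead, relying on the fact that such a $t$ is reported anyway while sweeping $S_n$. The key invariant is that the number of times Line 4 fires before the first set is exhausted equals the number of common elements, which is at most $|S_n|$; hence the $I_n.\textsc{Next}()$ in Line 4 never hits \EOF prematurely, every invocation emits exactly one element, and \EOF appears only after all of $S_n$ and of the sub-union have been consumed. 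The general case reduces to the binary one by treating $\bigcup_{i\in[n-1]}S_i$ as the first set. Since each invocation thus produces one fresh element in worst-case time $\bigO{nl+nd}$, this is precisely the enumeration delay.

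The main obstacle is the correctness argument rather than the timing: one must maintain a precise invariant relating the states of the iterators $I_1,\ldots,I_n$ across invocations --- in particular that the elements skipped at Line 4 are exactly those that the higher-indexed iterator will still (or did already) report --- so that no element is emitted twice and none is lost. Once this invariant is in place, the delay bound follows immediately from the per-invocation time recurrence established above.
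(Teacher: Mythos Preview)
Your proposal is correct and follows essentially the same approach as the paper: induction on $n$ with the binary case as the crux, per-level cost $\bigO{l+d}$, and the observation that Line~4 fires at most $|S_1\cap S_n|\le|S_n|$ times so $I_n.\textsc{Next}()$ never returns \EOF prematurely. Your presentation is somewhat more explicit than the paper's (you separate the timing recurrence from the correctness invariant, whereas the paper handles both informally in the $n=2$ case and then invokes ``simple induction''), but the underlying argument is the same.
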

\begin{proof}
The case $n=1$ follows trivially from the algorithm. 
We consider the case $n=2$.
Each element in $S_1 - S_2$ is reported from $S_1$ and all remaining elements from $S_2$; hence, each element from $S_1 \cup S_2$ is reported exactly once. 
In the worst case, we need one $\textsc{Contains}()$ call in $S_2$ and two $\textsc{Next}()$ calls before reporting the next element. 
Thus, the enumeration delay is $\bigO{l + d}$. 
The general case $n >2$ follows by simple induction.
\end{proof}



An alternative method for enumerating the distinct elements in a union of sets uses skip pointers~\cite{Berkholz:ICDT:2018}.
This method allows ``jumping'' over already reported values when iterating over these sets.
To capture this idea, we first introduce the abstraction of a hop iterator, an extension of the classical iterator capable of invalidating values and omitting them during iteration.  
We then show how to enumerate the distinct elements in a union of sets using hop iterators.

\begin{figure}[t]
\begin{center}
\small
\begin{tikzpicture}


\node at(11,-3) [anchor=north west]
{
\renewcommand{\arraystretch}{1.1}
\setcounter{magicrownumbers}{0}
\begin{tabular}{@{\hskip 0.01in}l@{\hskip 0.01in}}
\toprule
\textsc{OpenHop}(\,) \\
\midrule
\linenumber $\mathit{curr} = \inst{BOF}$ \\
\bottomrule
\end{tabular}
};

\node at(0,-3) [anchor=north west]
{
\renewcommand{\arraystretch}{1.2}
\setcounter{magicrownumbers}{0}
\begin{tabular}{@{\hskip 0.01in}l@{\hskip 0.01in}}
\toprule
\textsc{NextHop}(\,): value \\
\midrule
\linenumber $\mathit{curr} = \textsc{Hop}(\inst{C}.\textsc{Next}(curr))$ \\
\linenumber \RETURN $\mathit{curr}$ \\
\bottomrule
\end{tabular}
};

\node at(5.5,-3) [anchor=north west]
{
\renewcommand{\arraystretch}{1.2}
\setcounter{magicrownumbers}{0}
\begin{tabular}{@{\hskip 0.01in}l@{\hskip 0.01in}}
\toprule
\textsc{IsEmpty}(\,): bool \\
\midrule
\linenumber $\mathit{first} = \inst{C}.\textsc{Next}(\inst{BOF})$ \\
\linenumber \RETURN $\textsc{Hop}(\mathit{first}) = \EOF$ \\
\bottomrule
\end{tabular}
};

\node at(0,-5.3) [anchor=north west]
{
\renewcommand{\arraystretch}{1.2}
\setcounter{magicrownumbers}{0}
\begin{tabular}{@{\hskip 0.01in}l@{\hskip 0.01in}}
\toprule
\textsc{Hop}(value $\mathit{x}$)$:$ value \\
\midrule
\linenumber \IF ($\mathit{x} \in \mathit{skipTo}$) \\
\linenumber \TAB \RETURN $\mathit{skipTo}[\mathit{x}]$ \\
\linenumber \RETURN $\mathit{x}$ \\
\bottomrule
\end{tabular}
};

\node at(5.5,-5.3) [anchor=north west]
{
\renewcommand{\arraystretch}{1.2}
\setcounter{magicrownumbers}{0}
\begin{tabular}{@{\hskip 0.01in}l@{\hskip 0.01in}}
\toprule
\textsc{HopBack}(value $\mathit{x}$)$:$ value \\
\midrule
\linenumber \IF ($\mathit{x} \in \mathit{skippedFrom}$) \\
\linenumber \TAB \RETURN $\mathit{skippedFrom}[\mathit{x}]$ \\
\linenumber \RETURN $\mathit{x}$ \\
\bottomrule
\end{tabular}
};

\node at(11,-4.5) [anchor=north west] 
{
\renewcommand{\arraystretch}{1.15}
\setcounter{magicrownumbers}{0}
\begin{tabular}{@{\hskip 0.01in}l@{\hskip 0.01in}}
\toprule
\textsc{Exclude}(value $\mathit{x}$) \\
\midrule
\linenumber \IF (\NOT $\inst{C}.\textsc{Contains}(\mathit{x})$) \RETURN \\
\linenumber $\mathit{to} = \textsc{Hop}(\inst{C}.\textsc{Next}(x))$ \\
\linenumber $\mathit{from} = \textsc{HopBack}(\mathit{x})$ \\
\linenumber $\mathit{skipTo}[\mathit{from}] = \mathit{to}$ \\
\linenumber $\mathit{skippedFrom}[\mathit{to}] = \mathit{from}$ \\
\bottomrule
\end{tabular}
};

\end{tikzpicture}
\end{center}
\vspace{-1em}
\caption{
Hop iterator over a collection $\inst{C}$ of values with no duplicates.
The iterator maintains a pointer $\mathit{curr}$ to the current value 
and two initially-empty dictionaries $\mathit{skipTo}$ and $\mathit{skippedFrom}$ mapping values to values. $\inst{BOF}$ and $\inst{EOF}$ represent special values before the first and after the last value in $\inst{C}$. The collection $\inst{C}$ supports $\inst{C}.\textsc{Contains}(\inst{x})$ for checking the existence of $x$ in $\inst{C}$ and $\inst{C}.\textsc{Next}(x)$ for finding the successor of $x$ in $\inst{C}$.
}
\label{fig:hop_iterator}
\end{figure}

\subsubsection{Hop Iterators over Collections}
\label{sec:hop_iterators}
Consider a collection $\inst{C}$ of values with no duplicates. The collection supports $\inst{C}.\textsc{Contains}(x)$ for checking the existence of $x$ in $\inst{C}$ and $\inst{C}.\textsc{Next}(x)$ for finding the successor of $x$ in $\inst{C}$.
An iterator over $\inst{C}$ allows enumerating the values in $\inst{C}$ using the standard Volcano-style $\textsc{Open}(\,)$ and $\textsc{Next}(\,)$ functions. In addition to that, a {\em hop iterator} can invalidate an arbitrary value $x$ in $\inst{C}$ using the $\textsc{Exclude}(x)$ function. Such invalidated values are omitted during iteration. The hop iterator also ensures a constant amount of work per reported value. 

Figure~\ref{fig:hop_iterator} defines the operations of a hop iterator over collection $\inst{C}$. The hop iterator maintains a pointer $\mathit{curr}$ to the current value in $\inst{C}$. Upon opening the iterator via $\textsc{OpenHop}(\,)$, $\mathit{curr}$ points to before the first element in $\inst{C}$, denoted by $\inst{BOF}$. The $\textsc{Next}(\,)$ function returns the next valid value from $\inst{C}$ if it exists or $\inst{EOF}$ otherwise. 
The $\textsc{Exclude}(x)$ procedure invalidates $x \in \inst{C}$ and records this information using dictionaries $\mathit{skipTo}$ and $\mathit{skippedFrom}$.
The former consists of $(x,y)$ pairs encoding that $x$ is invalid and its next value is $y$, while the latter is the inverse dictionary of the former.
$\textsc{Exclude}(x)$ computes a range of skipped values that includes $x$ but potentially also values before and after $x$, ensuring there are no consecutive ranges of skipped values. 
This property guarantees that reporting the next valid value or $\inst{EOF}$ during iteration takes constant time.

\begin{lemma}
\label{lem:enumerate_hop_iterator}
Let $\inst{C}$ be a collection of values with no duplicates that allows lookups in time $\bigO{l}$ and returns the successor of a value in time $\bigO{d}$. 
Constructing a hop iterator over $\inst{C}$ takes constant time, 
and the hop iterator
can exclude an arbitrary value from $\inst{C}$ in $\bigO{l+d}$ time and enumerate the non-excluded values from $\inst{C}$ with $\bigO{d}$ delay, using $\bigO{|\inst{C}|}$ space.
\end{lemma}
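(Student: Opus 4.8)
The plan is to verify the four quantitative claims and, underlying them, the correctness of the iteration, by maintaining a structural invariant on the two dictionaries. First I would dispatch the easy parts. Constructing the iterator via $\textsc{OpenHop}$ only sets $\mathit{curr}=\inst{BOF}$ and initializes two empty dictionaries, so it is $\bigO{1}$. A single $\textsc{Exclude}(x)$ performs one $\inst{C}.\textsc{Contains}$ call ($\bigO{l}$), one $\inst{C}.\textsc{Next}$ call ($\bigO{d}$), and a constant number of dictionary membership tests, lookups, and insertions inside $\textsc{Hop}$, $\textsc{HopBack}$, and the two final assignments (each $\bigO{1}$ in the RAM model), so it runs in $\bigO{l+d}$. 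Each $\textsc{NextHop}$ performs one $\inst{C}.\textsc{Next}$ ($\bigO{d}$) followed by one $\textsc{Hop}$ ($\bigO{1}$), so it costs $\bigO{d}$; the content of the claim is that each such call advances to a genuinely new reported value, which is where the invariant enters.

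The core of the argument is the following invariant, which I would prove by induction on the sequence of $\textsc{Exclude}$ calls. Call a value \emph{skipped} if it is omitted during iteration, and partition the skipped values into maximal runs that are consecutive in the successor order of $\inst{C}$. The invariant states that for every such maximal run, its first element $f$ satisfies $\mathit{skipTo}[f]=t$ and $\mathit{skippedFrom}[t]=f$, where $t$ is the first non-skipped value after the run (or $\inst{EOF}$); moreover every skipped value other than $f$ has a skipped predecessor. The key consequence is that the successor in $\inst{C}$ of any non-skipped value $v$ is either non-skipped or equals the start $f$ of some maximal run, so that a single application of $\textsc{Hop}$ to $\inst{C}.\textsc{Next}(v)$ returns the next non-skipped value (or $\inst{EOF}$) in one dictionary lookup. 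This is exactly what makes $\textsc{NextHop}$ report the next valid value in $\bigO{d}$ and what makes the enumeration correct and duplicate-free, starting from $\mathit{curr}=\inst{BOF}$.

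For the inductive step I would analyze $\textsc{Exclude}(x)$ (assuming, as in the intended usage, that each value is excluded at most once, so $x$ is non-skipped when excluded; if $x\notin\inst{C}$ the call is a no-op). The assignment $\mathit{to}=\textsc{Hop}(\inst{C}.\textsc{Next}(x))$ computes the correct right endpoint: since $x$ is non-skipped, $\inst{C}.\textsc{Next}(x)$ is either non-skipped or the start of a maximal run, and in both cases $\textsc{Hop}$ returns a non-skipped value by the induction hypothesis, thereby merging with any run immediately following $x$. Symmetrically, $\mathit{from}=\textsc{HopBack}(x)$ returns the start of the run ending just before $x$ when $x$ was the right endpoint of a run (i.e.\ $x\in\mathit{skippedFrom}$), and $x$ itself otherwise, thereby merging with any run immediately preceding $x$. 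Setting $\mathit{skipTo}[\mathit{from}]=\mathit{to}$ and $\mathit{skippedFrom}[\mathit{to}]=\mathit{from}$ then records the single merged maximal run $[\mathit{from},\mathit{to})$ and re-establishes the invariant. I would note that the merge may leave stale dictionary entries on values that have become interior to the merged run, but these are never the successor of a non-skipped value and so are never consulted during iteration; I expect this merging-and-staleness bookkeeping to be the main obstacle, since the subtlety is precisely that $\textsc{Hop}$ must be shown to reach a valid value in one hop despite superseded entries not being deleted. Finally, the space bound follows by counting: each value of $\inst{C}$ can serve as the key of at most one entry of $\mathit{skipTo}$ (a run start) and of $\mathit{skippedFrom}$ (a run end), so the number of distinct keys ever inserted into either dictionary is at most $|\inst{C}|+1$, giving $\bigO{|\inst{C}|}$ space overall.
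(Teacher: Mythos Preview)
Your proposal is correct and covers the same time and space arguments as the paper's proof, which proceeds by direct inspection of the operations in Figure~\ref{fig:hop_iterator}: $\textsc{OpenHop}$, $\textsc{Hop}$, $\textsc{HopBack}$ are $\bigO{1}$; $\textsc{NextHop}$ is $\bigO{d}$; $\textsc{Exclude}$ is $\bigO{l+d}$; and the dictionaries have size at most $|\inst{C}|$. Your treatment goes further than the paper's by making the correctness of enumeration explicit through the maximal-run invariant and the observation that keys of $\mathit{skipTo}$ are always already-skipped values (so stale entries are never consulted from a non-skipped position); the paper leaves this implicit, so your argument is a strict elaboration rather than a different route.
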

\begin{proof}
Figure~\ref{fig:hop_iterator} defines the operations of a hop iterator. 
$\textsc{OpenHop}(\,)$, $\textsc{Hop}(x)$, and $\textsc{HopBack}(x)$ run in constant time, assuming constant-time dictionary operations over $\mathit{skipTo}$ and $\mathit{skippedFrom}$. 
$\textsc{Next}(\,)$ looks for the valid successor of the current value in $\bigO{d}$ time. 
$\textsc{Exclude}(x)$ checks if $x$ exists in $\inst{C}$, finds the valid successor of $x$ in $\inst{C}$, and stores the range of skipped elements in $\bigO{l+d}$ total time. 
The iterator state includes the pointer $\mathit{curr}$ of constant size and two dictionaries, $\mathit{skipTo}$ and $\mathit{skippedFrom}$, of size at most the size of $\inst{C}$.
The pointer $\mathit{curr}$ is initialized to $\inst{BOF}$, and the two dictionaries are initially empty. 
Thus, constructing the iterator state takes constant time. 
\end{proof}

\subsubsection{Enumerating Unions of Sets using Hop Iterators}
\label{sec:skip_pointers}
We now design an iterator that uses hop iterators to enumerate the distinct elements in the union $\bigcup_{i\in[n]}S_i$ of possibly non-disjoint sets $S_1, \ldots S_n$. 
This union iterator first enumerates the elements from $S_1$, then those from $S_2 - S_1$, then those from $S_3 - S_2 - S_1$, and so on. 
Using classical iterators, this strategy would incur an enumeration delay linear in the size of these sets. 
Using hop iterators, however, this strategy can skip over already reported elements, for example, omit the elements from $S_2$ that also exist in $S_1$ when enumerating $S_2-S_1$.
The enumeration delay in this case would depend on the time needed to exclude a just reported element from those sets containing that element. 

Figure~\ref{fig:enum_union_with_pointer_new} defines the iterator for enumerating the distinct elements in the union of sets $S_1, \ldots S_n$. 
The iterator state includes a collection of hop iterators, one for each set $S_i$, called buckets, an iterator $I_{\mathit{buckets}}$ over this collection, and an iterator $I_{current}$ denoting the current hop iterator in this collection.
The $\textsc{Open}(\,)$ procedure allocates the buckets and initializes $I_{\mathit{current}}$ with the hop iterator for $S_1$. The hop iterators are lazily initialized on their first access to allow $\textsc{Open}(\,)$ to run in constant time.
The $\textsc{Next}(\,)$ function reports the next valid element using $I_{\mathit{current}}$. 
On exhausting the current iterator, $I_{\mathit{current}}$ moves on to the next bucket if it exists or returns $\inst{EOF}$ otherwise (Lines 2-6). 

For each returned element $t$, $\textsc{Next}(\,)$ also excludes $t$ from all the buckets containing $t$ (Lines 7-10).  
The $\textsc{CandidateBuckets}(t)$ function identifies the set of buckets to be examined when excluding $t$. 
This function is a parameter of the union iterator. Its default implementation returns the set $[n]$ for any element $t$, as in prior work~\cite{Berkholz:ICDT:2018}.
However, providing a context-specific implementation of this function may restrict the number of buckets that need to be examined to exclude $t$, further improving the enumeration delay,  as demonstrated in Sections~\ref{sec:enumeration_binary} and \ref{sec:enumeration_unary}.
Excluding $t$ may leave a hop iterator with no valid elements. In this case, the hop iterator itself is also excluded from the collection of hop iterators (Lines 9-10).

\begin{figure}[t]
\begin{center}
\small
\begin{tikzpicture}

\node at(0,0) [anchor=north west]
{
\renewcommand{\arraystretch}{1.2}
\setcounter{magicrownumbers}{0}
\begin{tabular}{@{\hskip 0.01in}l@{\hskip 0.01in}}
\toprule
Iterator state \\
\midrule
$\mathit{buckets}[i] = \text{iterator over elements of set } S_i, i\in[n]$ \\
$I_{\mathit{buckets}} = \text{iterator over } \mathit{buckets}$, \\
$I_{\mathit{current}} = \text{iterator over elements of current bucket}$\\
\bottomrule
\end{tabular}
};

\node at(0,-2.79) [anchor=north west]
{
\renewcommand{\arraystretch}{1.2}
\setcounter{magicrownumbers}{0}
\begin{tabular}{@{\hskip 0.01in}l@{\hskip 0.01in}}
\toprule
\textsc{Open}(\,) \\
\midrule
\linenumber $\mathit{buckets} = \text{allocate iterators for sets } \{S_i\}_{i\in[n]}$ \\ 
\linenumber $I_{\mathit{buckets}} = \text{create iterator over } \mathit{buckets}$ \\
\linenumber $I_{\mathit{buckets}}.\textsc{OpenHop}(\,)$ \\
\linenumber $I_{\mathit{current}} = I_{\mathit{buckets}}.\textsc{NextHop}(\,)$ \\
\linenumber $I_{\mathit{current}}$.\textsc{OpenHop}(\,) \\
\bottomrule
\end{tabular}
};

\node at(8.5,0) [anchor=north west] 
{
\renewcommand{\arraystretch}{1.2}
\setcounter{magicrownumbers}{0}
\begin{tabular}{@{\hskip 0.01in}l@{\hskip 0.01in}}
\toprule
\textsc{Next}(\,): tuple \\
\midrule
\linenumber  $t = I_{\mathit{current}}.\textsc{NextHop}(\,)$ \\
\linenumber  \IF ($t = \EOF$) \\
\linenumber \TAB $I_{\mathit{current}} = I_{\mathit{buckets}}.\textsc{NextHop}(\,)$ \\
\linenumber \TAB \IF ($I_{\mathit{current}} = \EOF$) \RETURN \EOF \\
\linenumber \TAB $I_\mathit{currrent}.\textsc{OpenHop}(\,)$ \\
\linenumber \TAB $t = I_\mathit{currrent}.\textsc{NextHop}(\,)$ \\
\linenumber \FOREACH $i \in \textsc{CandidateBuckets}(t)$ \\
\linenumber \TAB $\mathit{buckets}[i].\textsc{Exclude}(t)$ \\
\linenumber \TAB \IF ($\mathit{buckets}[i].\textsc{IsEmpty}(\,)$) \\
\linenumber \TAB\TAB $I_{\mathit{buckets}}.\textsc{Exclude}(\mathit{buckets}[i])$ \\
\linenumber \RETURN $t$ \\ 
\bottomrule
\end{tabular}
};
\end{tikzpicture}
\end{center}
\vspace{-1em}
\caption{
Iterator for enumerating the distinct elements in the union $\bigcup_{i \in [n]}S_i$ of (possibly non-disjoint) sets $S_1, \ldots, S_n$ using hop iterators. Each set $S_i$ is an iterable collection (bucket) of values.
The function $\textsc{CandidateBuckets}$ parameterizes the iterator and serves to restrict the set of buckets that may contain a given element $t$; the default implementation of this function returns the set $[n]$ for any element $t$.
}

\label{fig:enum_union_with_pointer_new}
\end{figure}



\begin{lemma}
\label{lem:enumerate_pointer_new}
Let $S_1, \ldots, S_n$ be collections of elements with no duplicates such that each collection $S_i$ allows lookups in time $\bigO{l}$ and returns the successor of a value in time $\bigO{d}$. 
Let $\textsc{CandidateBuckets}(t)$ be a function that returns a set $B \subseteq [n]$ in time $\bigO{b}$, for any value $t$.
Constructing an iterator as per Figure~\ref{fig:enum_union_with_pointer_new} takes constant time, 
and the iterator can enumerate the elements from $\bigcup_{i\in[n]}S_i$ with $\bigO{|B|l +|B|d + b}$ delay, using $\bigO{\sum_{i\in[n]}|S_i|}$ space. 
\end{lemma}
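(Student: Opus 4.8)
The plan is to establish the four assertions of the lemma separately — constant construction time, the space bound, correctness of the enumeration, and the delay bound — reusing the per-operation guarantees for a single hop iterator from Lemma~\ref{lem:enumerate_hop_iterator}. Throughout, I write $d'$ and $l'$ for the successor and lookup times of the collection of buckets underlying $I_{\mathit{buckets}}$; since the buckets are held in an array-like collection, both are $\bigO{1}$.

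First I would dispatch the two easy claims. For construction time, $\textsc{Open}(\,)$ only allocates the bucket iterators (lazily, so no per-set work is charged here, as stipulated by the design), creates $I_{\mathit{buckets}}$, and performs a constant number of $\textsc{OpenHop}$ and $\textsc{NextHop}$ calls, each $\bigO{1}$ by Lemma~\ref{lem:enumerate_hop_iterator}; hence $\textsc{Open}(\,)$ runs in $\bigO{1}$. For space, each hop iterator over $S_i$ stores its two dictionaries in $\bigO{|S_i|}$ space by Lemma~\ref{lem:enumerate_hop_iterator}, and $I_{\mathit{buckets}}$ adds $\bigO{n}$, which is dominated by $\bigO{\sum_{i\in[n]}|S_i|}$ once empty sets are discarded up front; together with the sets themselves this gives the stated $\bigO{\sum_{i\in[n]}|S_i|}$.

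Next I would argue correctness under the contract that $\textsc{CandidateBuckets}(t) \supseteq \{\, i : t \in S_i \,\}$, which the default value $[n]$ trivially satisfies. Each element is reported \emph{at most once}: when $t$ is first returned, the loop in Lines~7--10 calls $\textsc{Exclude}(t)$ on every bucket that can contain $t$, so by the hop-iterator semantics no later $\textsc{NextHop}$ on any bucket will surface $t$ again. Each element is reported \emph{at least once}: exclusion of $t$ happens only at the moment $t$ is reported, so when the iteration first reaches an occurrence of $t$ in bucket order, that occurrence is still valid and is returned. It remains to check that $I_{\mathit{buckets}}$ advances soundly: whenever excluding $t$ empties a bucket, Line~10 removes it from $I_{\mathit{buckets}}$, so a bucket still present in $I_{\mathit{buckets}}$ retains a valid element; combined with the fact that $I_{\mathit{buckets}}$ moves monotonically and never revisits an exhausted bucket, this guarantees that the single transition in Lines~3--6 always lands on a bucket whose $\textsc{NextHop}$ in Line~6 returns a genuine element (never $\EOF$ unless the whole union is exhausted).

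The main work, and the step I expect to be the crux, is the delay bound. I would show that a single reporting $\textsc{Next}(\,)$ call performs at most one bucket transition. Line~1's $\textsc{NextHop}$ costs $\bigO{d}$ by Lemma~\ref{lem:enumerate_hop_iterator}, and the key point is that this bound is independent of how many excluded elements are hopped over, because $\textsc{Exclude}$ maintains the invariant of no two consecutive skipped ranges; the same holds for the optional transition $I_{\mathit{buckets}}.\textsc{NextHop}$ in Line~3, which costs $\bigO{d'}=\bigO{1}$ even when it skips many emptied buckets, followed by $\bigO{1}$ for $\textsc{OpenHop}$ and $\bigO{d}$ for the new bucket's $\textsc{NextHop}$. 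Computing $\textsc{CandidateBuckets}(t)$ costs $\bigO{b}$, and the loop in Lines~7--10 iterates $|B|$ times, each iteration paying $\bigO{l+d}$ for $\textsc{Exclude}$, $\bigO{d}$ for $\textsc{IsEmpty}$, and $\bigO{l'+d'}=\bigO{1}$ for the possible $I_{\mathit{buckets}}.\textsc{Exclude}$, again by Lemma~\ref{lem:enumerate_hop_iterator}. Summing, and noting that the reported $t$ lies in some $S_j$ with $j\in B$ so that $|B|\ge 1$ absorbs the stray $\bigO{d}$ terms, the delay between two consecutive reported tuples is $\bigO{|B|l + |B|d + b}$, as claimed. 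The delicate point throughout is justifying that skipping over arbitrarily long runs of already-excluded elements (and of emptied buckets) contributes only $\bigO{d}$ (resp.\ $\bigO{1}$) per call rather than cost proportional to the run length; this is exactly what the hop-iterator guarantee of Lemma~\ref{lem:enumerate_hop_iterator} provides.
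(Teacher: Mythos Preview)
Your proposal is correct and follows essentially the same approach as the paper's proof: both analyze \textsc{Open} line-by-line for constant construction time, invoke Lemma~\ref{lem:enumerate_hop_iterator} for the per-bucket space and per-operation costs, and bound the delay of \textsc{Next} by charging $\bigO{d}$ to Lines~1 and~6, $\bigO{1}$ to the bucket transition in Lines~3--5, and $\bigO{l+d}$ per iteration of the Lines~7--10 loop. Your write-up is in fact more thorough than the paper's own proof, which omits the correctness argument entirely and does not make explicit the containment contract on \textsc{CandidateBuckets}, the ``at most one bucket transition'' invariant, or the $|B|\ge 1$ observation you use to absorb the stray $\bigO{d}$ terms.
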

\begin{proof}
$\textsc{Open}(\,)$ creates a hop iterator $\mathit{buckets}[i]$ with a unique index $i$ for each collection $S_i$. The hop iterators form an array with index-based constant-time lookup and successor operations. Each hop iterator is initialized on its first access.
Opening the iterator $I_{\mathit{buckets}}$ and getting the first hop iterator from the array take constant time. Overall, $\textsc{Open}(\,)$ runs in constant time. 

The $\textsc{Next}(\,)$ function gets the next tuple from $I_{\mathit{current}}$ in $\bigO{d}$ time, per Lemma~\ref{lem:enumerate_hop_iterator} (Lines 1 and 6). Moving on to the next bucket if it exists or returning $\EOF$ otherwise take constant time (Lines 3-5). The loop (Lines 7-10) runs $|B|$ times, and each loop iteration takes $\bigO{l+d}$ time to exclude $t$ from a bucket (Line 8), $\bigO{d}$ time to check if the bucket is empty (Line 9), and constant time to exclude that bucket (Line 10), per Lemma~\ref{lem:enumerate_hop_iterator}. Given that $\textsc{CandidateBuckets}$ runs in $\bigO{b}$ time, $\textsc{Next}(\,)$ takes $\bigO{|B|l + |B|d + b}$ total time. The overall space complexity directly follows from Lemma~\ref{lem:enumerate_hop_iterator}.
\end{proof}


\begin{example}
\rm
We illustrate the iterators for enumerating unions of sets using hop iterators described in Figures~\ref{fig:hop_iterator} and~\ref{fig:enum_union_with_pointer_new}.
Given the non-materialized view $V$ with schema $(A,B)$ presented in Figure~\ref{fig:example_pointer_enum}, 
we show how a hop-based iterator can enumerate the distinct 
$B$-values in $\pi_B V$. 
We assume that the set 
$\{ \pi_{B}\sigma_{A=a_i} V \mid a_i \in \pi_AV \}$
and each set $V(a_i,B) = \pi_{B}\sigma_{A=a_i} V$ of $B$-values for $i \in [4]$
support the operators $\textsc{Next}(x)$ for returning the successor of $x$ 
and $\textsc{Contains}(x)$ for checking the existence of $x$.

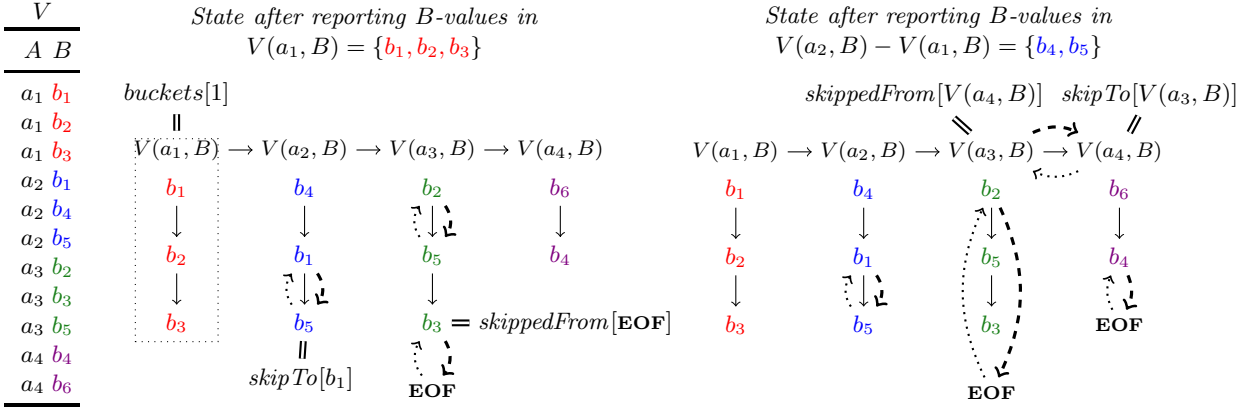
\begin{figure}[t]
\begin{center}
\begin{small}
\begin{minipage}{1cm}
\begin{center}
  \hspace{-3.8cm}
\begin{tabular}{c@{\;}c@{\;}}
  
  \multicolumn{2}{c}{$V$} \\
  \toprule
  $A$ & $B$  \\\midrule
  $a_1$ & $\color{red}b_1$\\
  $a_1$ & $\color{red}b_2$\\
  $a_1$ & $\color{red}b_3$\\
  $a_2$ & $\color{blue}b_1$\\
  $a_2$ & $\color{blue}b_4$\\
  $a_2$ & $\color{blue}b_5$\\    
  $a_3$ & $\color{goodgreen}b_2$\\
  $a_3$ & $\color{goodgreen}b_3$\\
  $a_3$ & $\color{goodgreen}b_5$\\    
  $a_4$ & $\color{violet}b_4$\\
  $a_4$ & $\color{violet}b_6$\\    
  \bottomrule
\end{tabular}
\end{center}
\end{minipage}
\hspace{0.5cm}
\begin{minipage}{11cm}
\begin{minipage}{6cm}
  \hspace{-2.2cm}
\begin{tikzpicture}
\node  at(1.3, 0.8) {};
   
      \node (start)  at(-2, 0.75) {$buckets[1]$};
            
      \draw[dotted] (-2.55,0.15) rectangle (-1.45,-2.55);

         \node   at(0.5, 1.75) {\textit{State after reporting $B$-values in}};
         \node   at(0.5, 1.35) {\textit{$V(a_1,B) = \{{\color{red} b_1,b_2,b_3}\}$}};
    \node (backskip3)  at(3.3, -2.3) {$\backskipp[\scriptsize{\EOF}]$};
    \node (skip3)  at(-0.35, -3.05) {$\skipp[b_1]$};
        
      \node (a1)  at(-2, 0.0) {\footnotesize $V(a_1,B)$};
      \node (a2) [right of = a1, node distance=1.7cm] {\footnotesize $V(a_2,B)$};
      \node (a3) [right of = a2, node distance=1.7cm] {\footnotesize $V(a_3,B)$};
      \node (a4) [right of = a3, node distance=1.7cm] {\footnotesize $V(a_4,B)$};       
      
      \node (b1) [below of = a1, node distance=0.5cm] {$\color{red}b_1$};
      \node (b2) [below of = b1, node distance=0.9cm] {$\color{red}b_2$};
      \node (b3) [below of = b2, node distance=0.9cm] {$\color{red}b_3$};     
       
      \node (b4) [below of = a2, node distance=0.5cm] {$\color{blue}b_4$};
      \node (b1') [below of = b4, node distance=0.9cm] {$\color{blue}b_1$};
      \node (b5) [below of = b1', node distance=0.9cm] {$\color{blue}b_5$};
      
      \node (b2') [below of = a3, node distance=0.5cm] {$\color{goodgreen}b_2$};
      \node (b5') [below of = b2', node distance=0.9cm] {$\color{goodgreen}b_5$};
      \node (b3') [below of = b5', node distance=0.9cm] {$\color{goodgreen}b_3$};                  
      \node (EOF3) [below of = b3', node distance=0.9cm] {$\scriptsize{\EOF}$};  
                
     \node (b6) [below of = a4, node distance=0.5cm] {$\color{violet}b_6$};
     \node (b4') [below of = b6, node distance=0.9cm] {$\color{violet}b_4$};

          
       \draw[thick,double, double distance=1pt] (start)--(a1);
 
   \draw[thick,double, double distance=1pt] (backskip3)--(b3');
   \draw[thick,double, double distance=1pt] (skip3)--(b5);
        
   \draw[->] (a1)--(a2);
   \draw[->] (a2)--(a3);
   \draw[->] (a3)--(a4);
     
        \draw[->] (b1)--(b2);
        \draw[->] (b2)--(b3);        
            
        \draw[->] (b4)--(b1');
        \draw[->] (b1')--(b5);

\draw[->] (b2')--(b5');
        \draw[->] (b5')--(b3');
        
        \draw[->] (b6)--(b4');
        
       \draw[->,very thick,dashed,bend angle=35, bend left] (b1') to (b5);
         \draw[->,thick, dotted, bend angle=35, bend left] (b5) to (b1');
        \draw[->,very thick,dashed,bend angle=35, bend left] (b2') to (b5');
         \draw[->,thick, dotted, bend angle=35, bend left] (b5') to (b2');
          
          \draw[->,very thick,dashed,bend angle=35, bend left] (b3') to (EOF3);
          \draw[->,thick,dotted,bend angle=35, bend left] (EOF3) to (b3');             
                 
\end{tikzpicture}
\end{minipage}
\hspace{-1.1cm}
\begin{minipage}{6cm}
\begin{tikzpicture}
  
    \node   at(-2.7, 1) {}; 
             \node   at(1, 2.3) {\textit{State after reporting $B$-values in}};
     \node   at(1, 1.9) {\textit{$V(a_2,B) - V(a_1,B)= \{{\color{blue} b_4,b_5}\}$}};          
    \node (backskip)  at(0.8, 1.25) {$\backskipp[V(a_4,B)]$};
    \node (skip)  at(3.8, 1.25) {$\skipp[V(a_3,B)]$};

      \node (a1)  at(-1.7, 0.5) {\footnotesize $V(a_1,B)$};
      \node (a2) [right of = a1, node distance=1.7cm] {\footnotesize $V(a_2,B)$};
      \node (a3) [right of = a2, node distance=1.7cm] {\footnotesize $V(a_3,B)$};
      \node (a4) [right of = a3, node distance=1.7cm] {\footnotesize $V(a_4,B)$};

      \node (b1) [below of = a1, node distance=0.5cm] {$\color{red}b_1$};
      \node (b2) [below of = b1, node distance=0.9cm] {$\color{red}b_2$};
      \node (b3) [below of = b2, node distance=0.9cm] {$\color{red}b_3$};

      \node (b4) [below of = a2, node distance=0.5cm] {$\color{blue}b_4$};
      \node (b1') [below of = b4, node distance=0.9cm] {$\color{blue}b_1$};
      \node (b5) [below of = b1', node distance=0.9cm] {$\color{blue}b_5$};

      \node (b2') [below of = a3, node distance=0.5cm] {$\color{goodgreen}b_2$};
      \node (b5') [below of = b2', node distance=0.9cm] {$\color{goodgreen}b_5$};
      \node (b3') [below of = b5', node distance=0.9cm] {$\color{goodgreen}b_3$};                  
      \node (EOF3) [below of = b3', node distance=0.9cm] {$\scriptsize{\EOF}$};

     \node (b6) [below of = a4, node distance=0.5cm] {$\color{violet}b_6$};
     \node (b4') [below of = b6, node distance=0.9cm] {$\color{violet}b_4$};
     \node (EOF4) [below of = b4', node distance=0.9cm] {$\scriptsize{\EOF}$};


                          
     \draw[thick, double, double distance=1pt] (1.2, 0.95)--(a3);      
         \draw[thick, double, double distance=1pt] (skip)--(a4);

   \draw[->] (a1)--(a2);
        \draw[->] (a2)--(a3);
        \draw[->] (a3)--(a4);
        
                \draw[->] (b1)--(b2);
        \draw[->] (b2)--(b3);
        
        \draw[->] (b4)--(b1');
        \draw[->] (b1')--(b5);
        
        \draw[->] (b2')--(b5');
        \draw[->] (b5')--(b3');
                       
        \draw[->] (b6)--(b4');
        
             \draw[->,very thick,dashed,bend angle=35, bend left] (b1') to (b5);
         \draw[->,thick, dotted, bend angle=35, bend left] (b5) to (b1');

        \draw[->,very thick,dashed,bend angle=25, bend left] (b2') to (EOF3);
         \draw[->,thick, dotted, bend angle=25, bend left] (EOF3) to (b2');
         
             \draw[->,very thick,dashed,bend angle=25, bend left] (a3) to (a4);
         \draw[->,thick, dotted, bend angle=25, bend left] (a4) to (a3);
         
        \draw[->,very thick,dashed,bend angle=25, bend left] (b4') to (EOF4);
         \draw[->,thick, dotted, bend angle=25, bend left] (EOF4) to (b4');   
\end{tikzpicture}
\end{minipage}

\nop{
\vspace{0.2cm}
\hspace{-0.4cm}
\begin{minipage}{11cm}
\begin{center}
\begin{tikzpicture}
      \node (2)  at(0, 0) {\textit{after exhausting $\pi_B\sigma_{A=a_1} V$}};
      \node (3)  at(0, -0.4) {$\color{red}b_1$, $\color{red}b_2$, $\color{red}b_3$};                          
      \node (2)  at(3.8, 0) {\textit{after exhausting $\pi_B\sigma_{A=a_2} V$}};
      \node (3)  at(3.8, -0.4) {$\color{blue}b_4$, $\color{blue}b_5$};                                  
      \node (2)  at(7.6, 0) {\textit{after exhausting $\pi_B\sigma_{A=a_3} V$}};
      \node (3)  at(7.6, -0.4) {$\color{violet}b_6$};                                  
    \node (3)  at(-1.5, -0.4) {\textit{reported:}};
    
\end{tikzpicture}
\end{center}
\end{minipage}
}
\end{minipage}
\end{small}
\end{center}
\vspace{-1em}
\caption{
Using a hop-based iterator to enumerate the distinct $B$-values from the non-materialized view $V$ over schema $(A,B)$. 
Solid arrows represent the successor relationship among the values of $V$.
Dotted and bold dashed arrows are hops and back hops added by the iterator during the enumeration 
of the distinct $B$-values in $\pi_{B}V$.
}

\label{fig:example_pointer_enum}
\end{figure}

Figure~\ref{fig:example_pointer_enum} visualizes two states of the hop-based iterator during the enumeration of the distinct $B$-values from the given view $V$.
A vertical or horizontal solid arrow from $x$ to $y$ means $\textsc{Next}(x) = y$. 
Dotted and bold dashed arrows visualize hops:
a dotted arrow from $x$ to $y$ represents $\mathit{skipTo}[x] = y$, 
while a bold dashed arrow from $y$ to $x$ represents $\mathit{skippedFrom}[y] = x$.

The $B$-values are reported in three stages. 
In Stage 1, the iterator 
for $\pi_BV$ reports all $B$-values paired with $a_1$;
in Stage 2, it reports all $B$-values paired with $a_2$
but not with $a_1$; in Stage 3, it reports  
all $B$-values paired with $a_4$ but not with
$a_1$, $a_2$, or $a_3$.  
Since all $B$-values paired with $a_3$
are also paired with $a_1$ or $a_2$, there is  
no stage for reporting $B$-values paired with $a_3$.
The first state in Figure~\ref{fig:example_pointer_enum}
visualizes the hop iterators at the end of Stage 1,
and the second state shows the hop iterators at the end of Stages 2 and 3.  
We explain the three stages in more detail. 

\smallskip 
\textit{Stage 1:} 
The \textsc{Open} procedure from Figure~\ref{fig:hop_iterator} initializes the iterator state by 
allocating an iterator $buckets[i]$ for each set in $\{V(a_i,B)\}_{i\in[4]}$ and
positioning $I_{\mathit{buckets}}$ at $\mathit{buckets}[1]$ and $I_{\mathit{current}}$ before $b_1$ in the bucket for $V(a_1,B)$. 
The iterator then reports $b_1$, $b_2$, and $b_3$ from $V(a_1,B)$ 
and excludes $b_1$ from $\mathit{buckets}[2]$, and $b_2$ and $b_3$ from $\mathit{buckets}[3]$ by 
adding hops to their candidate buckets. 
At the end of Stage 1, 
$\mathit{buckets}[2]$ contains
$\mathit{skipTo}[b_1] = b_5$ and
$\mathit{skippedFrom}[b_5] = b_1$,
and $\mathit{buckets}[3]$ contains
$\mathit{skipTo}[b_2] = b_5$,
$\mathit{skippedFrom}[b_5] = b_2$,
$\mathit{skipTo}[b_3] = \EOF$, and
$\mathit{skippedFrom}[\EOF] = b_3$.

\smallskip

\textit{Stage 2:} 
The iterator moves $I_{buckets}$ to $buckets[2]$ and $I_{current}$ to $b_4$ in $V(a_2, B)$.
Then, it reports the values $b_4$ and $b_5$ in $V(a_2, B)$ but skips $b_1$ using the hop at this value. 
It excludes $b_4$ from $buckets[4]$ and $b_5$ from $buckets[3]$;
for the latter, since $b_5$ has a hop back to $b_2$, and its successor $b_3$ has a hop to $\EOF$, 
the iterator connects $b_2$ and $\EOF$.
Since all the $B$-values in $buckets[3]$ are now excluded, the iterator excludes $V(a_3, B)$ from $I_{buckets}$.

\smallskip
 
\textit{Stage 3:} 
The iterator $I_{buckets}$ skips $V(a_3,B)$ and reaches $V(a_4,B)$. 
The iterator then reports $b_6$ while skipping $b_4$. 
The value $b_6$ does not appear under other $A$-value, hence, 
no hop has to be added. 
Since the set of $A$-values is exhausted, the 
iterator returns $\EOF$ and terminates.
\end{example}

\section{Maintaining the Nullary Triangle Query}\label{sec:count}
In this section, we present our strategy for maintaining the nullary triangle query 
$$\triangle_0() = \sum_{a,b,c} R(a,b)\ztimes S(b,c) \ztimes T(c,a)$$
under a single-tuple update. 
We start with a high-level overview.
Consider a database $\db$ consisting of three relations 
$R$, $S$, and $T$ with schemas $(A,B)$, $(B,C)$, and $(C,A)$, respectively.
We partition $R$, $S$, and $T$ on variables $A$, $B$, and $C$, respectively, for 
a given threshold.
We then decompose the nullary triangle query 
into eight skew-aware views expressed over these relation parts:

\begin{align*}
\triangle_0^{rst}() = \sum\limits_{a,b,c} R^r(a,b)\ztimes S^s(b,c) \ztimes T^t(c,a), \quad\text{ for } 
r,s,t \in \{\H,\L\}.
\end{align*}
The nullary triangle query is then the sum of these skew-aware views: 
$\triangle_0() = 
\textstyle\sum_{r,s,t \in \{\H,\L\}} \triangle_0^{rst}()$.

\ivme adapts its maintenance strategy to each skew-aware view $\triangle_0^{rst}$ to allow for amortized update time that is sublinear in the database size. While most of these views may admit sublinear delta computation over the relation parts, few exceptions require linear-time maintenance in worst case. For these exceptions, \ivme precomputes the update-independent parts of the delta queries as \emph{auxiliary materialized views} and then exploits these views to speed up the delta computation. 

One such exception is the view $\triangle_0^{\H\H\L}$. Consider a single-tuple update 
$\delta R^\H = \{(\deltaA,\deltaB) \mapsto \p\}$ to the heavy part $R^\H$ of relation $R$, 
where $\deltaA$ and $\deltaB$ are fixed data values.
Computing the delta view 
$\delta \triangle_0^{\H\H\L}() = \delta R^\H(\deltaA,\deltaB) \ztimes \textstyle\sum_c  S^\H(\deltaB,c) \ztimes T^\L(c,\deltaA)$ requires iterating over all the $C$-values $c$ paired with $\deltaB$ in $S^\H$ and with $\deltaA$ in $T^\L$; the number of such $C$-values can be linear in the size of the database. To avoid this iteration, \ivme precomputes the view 
$V_{ST}(b,a) = \sum_c S^\H(b,c) \ztimes T^\L(c,a)$  and uses this view to evaluate 
$\delta \triangle_0^{\H\H\L}() = \delta R^\H(\deltaA,\deltaB) \ztimes V_{ST}(\deltaB,\deltaA)$ in constant time.

Such auxiliary views, however, also require maintenance. All such views 
created by \ivme can be maintained in sublinear time under single-tuple updates to the input relations. Figure~\ref{fig:view_definitions} summarizes these views used by \ivme to maintain the nullary triangle query: $V_{RS}$, $V_{ST}$ and $V_{TR}$. They serve to avoid linear-time delta computation for updates to $T$, $R$, and $S$, respectively. \ivme also materializes the result of the nullary triangle query, which ensures constant 
enumeration delay.

\begin{figure}[t]
  \begin{center}
    \renewcommand{\arraystretch}{1.2}  
    \begin{tabular}{@{\hskip 0.05in}l@{\hskip 0.4in}l@{\hskip 0.05in}}
      \toprule
      Materialized View Definition & Space Complexity \\    
      \midrule
      $\triangle_0() = \sum\limits_{r,s,t \in \{\H,\L\}} \,
      \sum\limits_{a,b,c} R^r(a,b) \ztimes S^s(b,c) \ztimes T^t(c,a)$ & 
      $\bigO{1}$ \\
      $V_{RS}(a,c) = \sum_{b} R^\H(a,b) \ztimes S^\L(b,c)$ & 
      $\bigO{|\inst{D}|^{1+\min{\{\,\eps, 1-\eps \,\}}}}$ \\
      $V_{ST}(b,a) = \sum_{c} S^\H(b,c) \ztimes T^\L(c,a)$ & 
      $\bigO{|\inst{D}|^{1+\min{\{\,\eps, 1-\eps \,\}}}}$ \\
      $V_{TR}(c,b) = \sum_{a} T^\H(c,a) \ztimes R^\L(a,b)$ & 
      $\bigO{|\inst{D}|^{1+\min{\{\,\eps, 1-\eps \,\}}}}$ \\
      \bottomrule    
    \end{tabular}
  \end{center}
  \caption{The definition and space complexity of the materialized views 
  $\inst{V} = \{ \triangle_0, V_{RS}, V_{ST}, V_{TR} \}$ for the nullary triangle query. 
  The set $\inst{V}$ is part of an \ivme state of a database $\inst{D}$
  partitioned for $\eps \in [0,1]$.} 
  \label{fig:view_definitions}
\end{figure}

We now describe our strategy in detail. We start by defining the state that \ivme initially creates and maintains upon each update. Then, we specify the procedure for processing a single-tuple update to any input relation, followed by the space complexity analysis of \ivme. Section~\ref{sec:rebalancing} gives the procedure for rebalancing the partitions after a sequence of such updates.

\begin{definition}[\ivme State]\label{def:ivme_state}
Let $\db=\{R,S,T\}$ be a database, $\triangle$ a triangle query
and $\eps \in [0,1]$. 
An \ivme  state of $\inst{D}$ supporting the maintenance of $\triangle$ 
is a tuple $\astate = (\eps,N, \dbeps, \inst{V})$, where: 
\begin{itemize}
\item $N$ is a natural number such that the size 
invariant $\floor{\frac{1}{4}N} \leq |\db| < N$ holds.
$N$ is called the threshold base.  
\item $\dbeps = \calR \cup \calS \cup \calT$ where $\calR$, $\calS$, and 
$\calT$ are partitions 
of the database relations $R$, $S$, and $T$, respectively,  with  threshold $\theta = N^{\eps}$.
\item $\inst{V}$ is a set of materialized views. 
\end{itemize} 
The initial state $\astate$ of $\db$ has $N = 2\ztimes|\db| + 1$ and the three partitions $\calR$, $\calS$, and $\calT$ are strict.
\end{definition}

By construction, $|\dbeps|=|\db|$. The size invariant implies $|\db| = \Theta(N)$ and, together with the heavy and light part conditions, 
it facilitates the amortized analysis of \ivme in Section~\ref{sec:amortization}.

For the nullary triangle query, the \ivme state has: the partitions $\dbeps = \{R^\H, R^\L, S^\H, S^\L, T^\H, T^\L\}$ of $R$, $S$, and $T$ on variables $A$, $B$, and $C$; and the set of materialized views $\inst{V}=\{ \triangle_0, V_{RS}, V_{ST}, V_{TR} \}$ as defined in Figure~\ref{fig:view_definitions}.
Definition~\ref{def:loose_relation_partition} provides two essential upper bounds for each relation partition in an \ivme state:
The number of distinct $A$-values in $R^\H$ is at most $\frac{N}{\frac{1}{2}N^{\eps}} = 2N^{1-\eps}$, that is, $|\pi_A R^\H| \leq 2N^{1-\eps}$, and the number of tuples in $R^\L$ with an $A$-value $a$ is less than 
$\frac{3}{2}N^{\eps}$, that is, $|\sigma_{A = a}R^\L| < \frac{3}{2}N^{\eps}$, for any $a \in \Dom(A)$. The same bounds hold for $B$-values in $\{S^\H,S^\L\}$ and $C$-values in $\{T^\H,T^\L\}$.

\subsection{Preprocessing Stage}\label{sec:preprocessing-nullary}

The preprocessing stage for the nullary triangle query constructs the initial \ivme state given a database $\db$ and $\eps\in[0,1]$.

\begin{proposition}\label{prop:preprocessing_step}
Given a database $\db$ and $\eps\in[0,1]$, constructing the initial \ivme state of $\inst{D}$ supporting the maintenance of the nullary triangle query takes $\bigO{|\db|^{\frac{3}{2}}}$ time.
\end{proposition}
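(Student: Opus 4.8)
The plan is to bound separately the three ingredients of the initial state from Definition~\ref{def:ivme_state}: the threshold base $N$, the strict partitions $\calR,\calS,\calT$, and the materialized views $\inst{V}=\{\triangle_0,V_{RS},V_{ST},V_{TR}\}$ of Figure~\ref{fig:view_definitions}. Computing $N=2\ztimes|\db|+1$ is constant time. Each strict partition is computable in time linear in the relation size (as noted after Definition~\ref{def:loose_relation_partition}), so building $\calR,\calS,\calT$ together with the index structures of the computational model takes $\bigO{|\db|}$ time. The only nontrivial cost is materializing the three auxiliary views and $\triangle_0$, and the goal is to show each stays within $\bigO{|\db|^{\frac{3}{2}}}$.

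The heart of the argument is a witness-count bound for the auxiliary views. I would compute $V_{ST}(b,a)=\sum_c S^\H(b,c)\ztimes T^\L(c,a)$ by iterating over its witnesses, i.e., triples $(b,c,a)$ with $(b,c)\in S^\H$ and $(c,a)\in T^\L$, and accumulating each product into the entry $V_{ST}(b,a)$ using the constant-time hash operations of the computational model. The number of such witnesses is $W_{ST}=\sum_c |\sigma_{C=c}S^\H|\ztimes|\sigma_{C=c}T^\L|$. The key observation is that each factor admits a separate degree bound: since every tuple of $\sigma_{C=c}S^\H$ carries a distinct heavy $B$-value, $|\sigma_{C=c}S^\H|\le|\pi_B S^\H|\le 2N^{1-\eps}$, while the light-part condition gives $|\sigma_{C=c}T^\L|<\frac{3}{2} N^\eps$. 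Using the first bound yields $W_{ST}\le 2N^{1-\eps}|T^\L|=\bigO{|\db|^{2-\eps}}$, and using the second yields $W_{ST}<\frac{3}{2} N^\eps|S^\H|=\bigO{|\db|^{1+\eps}}$; taking the minimum and recalling $|\db|=\Theta(N)$ gives $W_{ST}=\bigO{|\db|^{1+\min\{\eps,1-\eps\}}}$. Since $\min\{\eps,1-\eps\}\le\frac{1}{2}$, this is $\bigO{|\db|^{\frac{3}{2}}}$. The views $V_{RS}$ and $V_{TR}$ are symmetric (summing over $B$ respectively $A$, with the heavy part pinning the degree of the summed-over variable via $|\sigma_{B=b}R^\H|\le|\pi_A R^\H|$ respectively $|\sigma_{A=a}T^\H|\le|\pi_C T^\H|$, and the light part bounding the other), so the same bound holds for them.

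For the output relation $\triangle_0()$ I would simply compute the triangle count directly with a worst-case optimal join algorithm in $\bigO{|\db|^{\frac{3}{2}}}$ time~\cite{AYZ:Counting:1997,NgoPRR18} and store it as the single value of the materialized view. Summing the four contributions, the preprocessing cost is dominated by the $\bigO{|\db|^{\frac{3}{2}}}$ terms, which proves the claim.

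I expect the main obstacle to be the witness-count bound for the auxiliary views: the naive one-sided estimate $W_{ST}<\frac{3}{2} N^\eps|\db|=\bigO{|\db|^{1+\eps}}$ already exceeds $|\db|^{\frac{3}{2}}$ for $\eps>\frac{1}{2}$, so the proof must exploit that a heavy part constrains not only the degree of its own partitioning variable but also, through $|\sigma_{C=c}S^\H|\le|\pi_B S^\H|$, the degree of the summed-over variable. Combining the heavy-side and light-side bounds via the minimum is exactly what brings every value of $\eps$ down to $\bigO{|\db|^{\frac{3}{2}}}$; everything else is routine linear-time bookkeeping over the relation parts and their indexes.
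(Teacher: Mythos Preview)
Your proof is correct and follows essentially the same approach as the paper: constant time for $N$, linear time for strict partitioning, a worst-case optimal join algorithm for $\triangle_0$, and a two-sided degree bound on the auxiliary views that yields $\bigO{|\db|^{1+\min\{\eps,1-\eps\}}}\le\bigO{|\db|^{3/2}}$. The only cosmetic difference is that the paper phrases the view computation as choosing between two join orders (outer-loop on the heavy part versus on the light part), whereas you phrase it as a single witness enumeration whose count is bounded two ways; both use the same key inequality $|\sigma_{C=c}S^\H|\le|\pi_B S^\H|\le 2N^{1-\eps}$ on the heavy side together with the light-part degree bound, and arrive at the same $\min$.
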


\begin{proof}
We analyze the time to construct the initial state $\astate = (\eps, N, \inst{P}, \inst{V})$ of $\inst{D}$.
Retrieving the size $|\inst{D}|$ and computing $N = 2\ztimes|\db| + 1$ take constant time. 
Strictly partitioning the input relations from $\db$ using the threshold $N^{\eps}$, as described in Definition~\ref{def:loose_relation_partition}, takes $\bigO{|\db|}$ time.
Computing the result of the nullary triangle query  on $\db$ (or $\dbeps$) 
using the algorithms Leapfrog TrieJoin or Recursive-Join  takes $\bigO{|\db|^{\frac{3}{2}}}$ time~\cite{NgoPRR18}. Computing the auxiliary views $V_{RS}$, $V_{ST}$, and $V_{TR}$ takes $\bigO{|\db|^{1+\min\{\eps, 1-\eps\}}}$ time, as shown next.  
Consider the view 
$V_{RS}(a,c) = \textstyle\sum_b R^\H(a,b) \ztimes S^\L(b,c)$. 
To compute $V_{RS}$, one can iterate over all $(a,b)$ pairs in $R^\H$ and then find the $C$-values in $S^\L$ for each $b$. 
The relation part $S^\L$ contains at most $N^{\eps}$ distinct $C$-values for any $B$-value, which gives an upper bound of $|R^\H| \ztimes N^{\eps}$ on the size of $V_{RS}$. Alternatively, one can iterate over all $(b,c)$ pairs in $S^\L$ and then find the $A$-values in $R^\H$ for each $b$. The relation part $R^\H$ contains at most $N^{1-\eps}$ distinct $A$-values, which gives an upper bound of $|S^\L| \ztimes N^{1-\eps}$ on the size of $V_{RS}$. 
The number of steps needed to compute this result is upper-bounded by $\min\{\,|R^\H| \ztimes N^{\eps},\, |S^\L| \ztimes N^{1-\eps}\,\} < \min\{\, N \ztimes N^{\eps},\, N \ztimes N^{1-\eps}\,\} = N^{1+\min\{\eps,1-\eps\}}$. From $|\db| = \Theta(N)$ follows that computing $V_{RS}$ on the database partition $\inst{P}$ takes $\bigO{|\db|^{1+\min\{\eps,1-\eps\}}}$ time; the analysis for $V_{ST}$ and $V_{TR}$ is analogous.  
Note that $\max_{\eps\in[0,1]}\{1+\min\{\eps, 1-\eps\}\} = \frac{3}{2}$.
Overall, the initial state $\astate$ of $\db$ can be constructed in $\bigO{|\db|^{\frac{3}{2}}}$ time.
\end{proof}

The preprocessing stage of \ivme happens \emph{before} any update is received. In case we start from an empty database, the preprocessing cost of \ivme is $\bigO{1}$.

\subsection{Space Complexity}\label{sec:space-nullary}
We analyze the space complexity of the \ivme maintenance strategy for the nullary triangle query.
  
\begin{proposition}\label{prop:space_complexity}
Given a database $\inst{D}$ and $\eps\in[0,1]$, an \ivme state 
of $\inst{D}$ supporting the maintenance of 
the nullary triangle query takes $\bigO{|\db|^{1 +\min\{\eps,1-\eps\}}}$ space.
\end{proposition}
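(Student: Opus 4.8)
The plan is to bound the space taken by each component of the state $\astate = (\eps, N, \dbeps, \inst{V})$ separately and then sum. By the computational model, every stored relation or materialized view occupies space linear in the number of its tuples, including the constant number of index structures attached to it; so the task reduces to bounding the number of stored tuples in the partition $\dbeps$ and in the view set $\inst{V} = \{\triangle_0, V_{RS}, V_{ST}, V_{TR}\}$. First I would dispatch the easy components: the partition $\dbeps$ only splits the input tuples across parts, so $|\dbeps| = |\db|$ and it needs $\bigO{|\db|}$ space, while the materialized count $\triangle_0()$ is a single integer contributing $\bigO{1}$. Thus the whole bound hinges on the three auxiliary views.

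The main step is bounding the size of each auxiliary view; I would argue it for $V_{RS}(a,c) = \sum_b R^\H(a,b)\ztimes S^\L(b,c)$ and invoke symmetry for $V_{ST}$ and $V_{TR}$. The number of distinct $(a,c)$ pairs in $V_{RS}$ admits two incomparable bounds, each from a different way of double-counting the witnessing $b$. Bounding through $R^\H$: each $(a,c)\in V_{RS}$ is witnessed by a pair $(a,b)\in R^\H$ together with a $(b,c)\in S^\L$, and fixing the witnessing $(a,b)$ the number of admissible $c$ is $|\sigma_{B=b}S^\L| < \frac{3}{2}N^\eps$ by Definition~\ref{def:loose_relation_partition}; hence $|V_{RS}| < |R^\H|\cdot\frac{3}{2}N^\eps \le \frac{3}{2}N^{1+\eps}$, using $|R^\H| \le |\db| < N$. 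Bounding through $S^\L$: fixing a witnessing $(b,c)\in S^\L$, the admissible $a$ lie in $\pi_A\sigma_{B=b}R^\H \subseteq \pi_A R^\H$, and $|\pi_A R^\H| \le 2N^{1-\eps}$, so $|V_{RS}| < |S^\L|\cdot 2N^{1-\eps} \le 2N^{2-\eps}$. Taking the minimum and using $\min\{1+\eps,\, 2-\eps\} = 1 + \min\{\eps, 1-\eps\}$ gives $|V_{RS}| = \bigO{N^{1+\min\{\eps,1-\eps\}}}$.

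I expect no real obstacle here: this is exactly the size argument already carried out for the preprocessing cost in Proposition~\ref{prop:preprocessing_step}, and the only ingredients are the degree bounds $|\pi_A R^\H| \le 2N^{1-\eps}$ and $|\sigma_{B=b}S^\L| < \frac{3}{2}N^\eps$ supplied by the loose partition. The one point to state carefully is that the index structures accompanying each view—needed later for maintenance and enumeration—are themselves linear in the view size by the computational-model assumptions, so they leave the asymptotics unchanged.

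Finally I would convert $N$ to $|\db|$ via the size invariant $\floor{\frac{1}{4}N} \le |\db| < N$, i.e. $|\db| = \Theta(N)$, so each auxiliary view takes $\bigO{|\db|^{1+\min\{\eps,1-\eps\}}}$ space. Summing the three contributions ($\dbeps$, $\triangle_0$, and the views), the auxiliary views dominate because $1+\min\{\eps,1-\eps\} \ge 1$, and the total space of the \ivme state is $\bigO{|\db|^{1+\min\{\eps,1-\eps\}}}$.
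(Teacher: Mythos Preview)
Your proposal is correct and follows essentially the same approach as the paper: bound $\dbeps$ linearly, $\triangle_0$ constantly, and each auxiliary view $V_{RS}$, $V_{ST}$, $V_{TR}$ by the minimum of two counting arguments (outer-looping on $R^\H$ versus on $S^\L$), then invoke $|\db|=\Theta(N)$. The paper's proof is more terse but uses the identical two-sided bound $|V_{RS}| < \min\{\,N\cdot\frac{3}{2}N^{\eps},\,N\cdot 2N^{1-\eps}\,\}$ that you derive.
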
  
\begin{proof}
We consider a state $\astate = (\eps, N, \dbeps, \inst{V})$ of database $\db$. 
$N$ and $\eps$ take constant space and $|\dbeps| = |\inst{D}|$. 
Figure~\ref{fig:view_definitions} summarizes the space complexity of the materialized views $\triangle_0$, $V_{RS}$, $V_{ST}$, and $V_{TR}$ from $\inst{V}$. 
The result of $\triangle_0$ takes constant space.
 As discussed in the proof of Proposition~\ref{prop:preprocessing_step}, to compute 
 the view $V_{RS}(a,c) = \textstyle\sum_b R^\H(a,b) \ztimes 
S^\L(b,c)$, we can use either $R^\H$ or $S^\L$ as the outer relation:
\begin{align*}
\!\!\!\!\!|V_{RS}|
\,\leq\, \min\{\, |R^\H| \ztimes\!\! \max_{b\in\pi_B S^\L}\!|\sigma_{B=b}S^\L|,\, |S^\L| \ztimes\!\! \max_{b\in\pi_B R^\H}\!|\sigma_{B=b}R^\H| \,\} 
\,<\, \min\{\, N \ztimes \frac{3}{2}N^{\eps}, N \ztimes 2N^{1-\eps} \,\}
\end{align*} 
The size of $V_{RS}$ is thus $\bigO{N^{1+\min\{\eps, 1-\eps\}}}$. From $|\db|=\Theta(N)$ follows that $V_{RS}$ takes $\bigO{|\db|^{1+\min\{\eps,1-\eps\}}}$ space; the space analysis for $V_{ST}$ and $V_{TR}$ is analogous. Overall, the state $\astate$ of $\db$ supporting the maintenance of the nullary triangle query takes $\bigO{|\db|^{1+\min\{\eps,1-\eps\}}}$ space.
\end{proof}

\subsection{Processing a Single-Tuple Update}\label{sec:single-update-nullary}
We describe the \ivme strategy for maintaining the nullary triangle query under a single-tuple update to the relation $R$. This update can affect either the heavy or light part of $R$ partitioned on $A$, 
hence we write $\delta R^r$, where $r$ stands for $\H$ or $\L$. 
We can check in constant time 
whether the update affects $R^\H$ or $R^\L$
 (cf.\@ computational model in Section~\ref{sec:computational_model}). 
The update is represented as a relation $\delta R^r=\{\, (\deltaA,\deltaB) \mapsto \p \,\}$, where $\deltaA$ and $\deltaB$ are data values and $\p\in\mathbb{Z}$. 
Due to the symmetry of the nullary triangle query and auxiliary views, updates to $S$ and $T$ are handled similarly.

\begin{figure}[t]
\begin{center}
\renewcommand{\arraystretch}{1.2}
\setcounter{magicrownumbers}{0}
\begin{tabular}{ll@{\hskip 0.25in}l@{\hspace{0.6cm}}l}
\toprule
\multicolumn{2}{l}{$\textsc{ApplyUpdate}\hspace{0.1mm}(\hspace{0.25mm} \text{update } \delta R^r,\hspace{0.25mm} \text{state } \astate \hspace{0.25mm})$}& & Time \\
\cmidrule{1-2} \cmidrule{4-4}
\rownumber & \LET $\delta R^r = \{(\deltaA,\deltaB) \mapsto \p\}$ \\
\rownumber & \LET $\astate = (\eps, N, 
\{R^\H, R^\L, S^\H, S^\L, T^\H, T^\L\},
\{\triangle_0,V_{RS}, V_{ST}, V_{TR}\})$ \\

\rownumber & $\delta \triangle_0^{r\H\H}() = \delta{R^r(\deltaA,\deltaB)} \ztimes \textstyle\sum_c 
S^\H(\deltaB,c) \ztimes T^\H(c,\deltaA)$
&& $\bigO{|\inst{D}|^{1-\eps}}$ \\

\rownumber & 
$\delta \triangle_0^{r\H\L}() = \delta{R^r(\deltaA,\deltaB)} \ztimes V_{ST}(\deltaB,\deltaA)$ & &
$\bigO{1}$ \\

\rownumber & $\delta \triangle_0^{r\L\H}() = \delta R^r(\deltaA,\deltaB) \ztimes \textstyle\sum_c 
S^\L(\deltaB,c) \ztimes T^\H(c,\deltaA)$  & &
$\bigO{|\inst{D}|^{\min{\{\eps, 1-\eps\}}}}$ \\ 

\rownumber & $\delta \triangle_0^{r\L\L}() = \delta R^r(\deltaA,\deltaB) \ztimes \textstyle\sum_c 
S^\L(\deltaB,c) \ztimes T^\L(c,\deltaA)$  & &
$\bigO{|\inst{D}|^{\eps}}$ \\

\rownumber & 
$\triangle_0() = \triangle_0() + \delta \triangle_0^{r\H\H}() + \delta \triangle_0^{r\H\L}() + \delta \triangle_0^{r\L\H}() + 
\delta \triangle_0^{r\L\L}()$ & &
$\bigO{1}$ \\

\rownumber & \IF ($r$ is $\H$) &\\  
\rownumber & \TAB  
$V_{RS}(\deltaA,c) = V_{RS}(\deltaA,c) + \delta R^\H(\deltaA,\deltaB) \ztimes 
S^\L(\deltaB,c)$  & &
$\bigO{|\inst{D}|^{\eps}}$ \\

\rownumber & \ELSE &\\
\rownumber & \TAB 
 $V_{TR}(c,\deltaB) = V_{TR}(c,\deltaB) + T^\H(c,\deltaA) \ztimes \delta R^\L(\deltaA,\deltaB)$ & &
$\bigO{|\inst{D}|^{1-\eps}}$ \\

\rownumber & $R^r(\deltaA,\deltaB) = R^r(\deltaA,\deltaB) + \delta{R}^r(\deltaA,\deltaB)$ & &
$\bigO{1}$ \\

\rownumber & \RETURN 
$\astate$ & &
 \\
\midrule
\multicolumn{2}{r}{Total update time:} & &
$\bigO{|\inst{D}|^{\max\{\eps, 1-\eps\}}}$ \\
\bottomrule
\end{tabular}
\end{center}\vspace{-1em}
\caption{
 (left) Maintaining the nullary triangle query 
 under a single-tuple update.
 \textsc{ApplyUpdate} takes as input an update $\delta R^r$ to one of the 
 parts $R^{\H}$ and $R^{\L}$ of relation $R$, 
hence $r \in \{\H,\L\}$, and 
  the current \ivme state $\astate$ of a database $\inst{D}$ partitioned using $\eps\in[0,1]$.
It returns a new state that results from applying $\delta R^r$ to $\astate$. 
Lines 3-6 compute the deltas of the affected skew-aware views, and Line 7 
maintains $\triangle_0$.
Lines 9 and 11 maintain the auxiliary views $V_{RS}$ and $V_{TR}$, respectively. Line 12 maintains the affected part $R^r$.
(right) The time complexity of computing and applying deltas.  
The evaluation strategy for computing $\delta \triangle_0^{r\L\H}$ in Line 5 may choose either 
$S^\L$ or $T^\H$ to bound $C$-values, depending on $\eps$. The total time is the maximum of all individual times. The maintenance procedures for $S$ and $T$ are similar.
}
\label{fig:applyUpdate}
\vspace{-5pt}
\end{figure}

Figure~\ref{fig:applyUpdate} gives the procedure {\textsc{ApplyUpdate}} that 
takes as input a current \ivme state $\astate$ and the update 
$\delta R^r$, and returns a new state that results from applying $\delta R^r$ to $\astate$.
The procedure computes the deltas of the skew-aware views referencing $R^r$, which are   
$\delta \triangle_0^{r\H\H}$ (Line 3), $\delta \triangle_0^{r\H\L}$ (Line 4), 
$\delta \triangle_0^{r\L\H}$ (Line 5), and $\delta \triangle_0^{r\L\L}$ (Line 6), 
and uses these deltas to maintain the 
 nullary triangle query (Line 7).
These skew-aware views are not materialized, but their 
deltas facilitate the maintenance of the nullary triangle query.  
If the update affects the heavy part $R^\H$ of $R$, the procedure maintains 
$V_{RS}$ (Line 9) and $R^\H$ (Line 12); otherwise, it maintains $V_{TR}$ (Line 11) and
$R^\L$ (Line 12).
The view $V_{ST}$ remains unchanged as it has no reference to $R^\H$ or $R^\L$.

Figure~\ref{fig:applyUpdate} also gives the time complexity of computing these deltas and applying them to $\astate$. This complexity is either constant or dependent on the number of $C$-values for which matching tuples in the parts of $S$ and $T$ have nonzero multiplicities.  

\begin{proposition}\label{prop:single_step_time}
Given a database $\db$, $\eps\in[0,1]$, and an \ivme state 
$\astate$ of $\inst{D}$ supporting the maintenance of
the nullary triangle query, \ivme maintains $\astate$ under a single-tuple update to any input relation in $\bigO{|\db|^{\max\{\eps,1-\eps\}}}$ time.
\end{proposition}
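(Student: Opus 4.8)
The plan is to bound the running time of $\textsc{ApplyUpdate}$ (Figure~\ref{fig:applyUpdate}) by justifying each of the per-line costs claimed in the right column and then taking their maximum; since the procedure performs only a constant number of lines, this maximum is the total update time. Throughout I would rely on two facts established earlier. First, the size invariant gives $|\db| = \Theta(N)$, so $N^{\eps} = \Theta(|\db|^{\eps})$ and $N^{1-\eps} = \Theta(|\db|^{1-\eps})$. Second, the partition bounds state that a heavy part has few distinct heavy values, e.g.\ $|\pi_A R^\H| \leq 2N^{1-\eps}$ and $|\pi_C T^\H| \leq 2N^{1-\eps}$, while a light part has low degree on its partitioning variable, e.g.\ $|\sigma_{A=a} R^\L| < \frac{3}{2}N^{\eps}$ and $|\sigma_{B=b} S^\L| < \frac{3}{2}N^{\eps}$. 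I would carry out the analysis for an update to $R$ in detail and then invoke the symmetry of the query and of the auxiliary views $V_{RS}, V_{ST}, V_{TR}$ (under the rotation $R \to S \to T \to R$) to cover updates to $S$ and $T$.

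The crux is the four skew-aware delta computations of Lines 3--6, where for each I pick the cheaper of the two relations to drive the iteration over the bound variable $C$ (the leading factor $\delta R^r(\deltaA,\deltaB)$ is a single value and contributes only a constant). For $\delta\triangle_0^{r\H\H}$ (Line 3) I iterate over the $C$-values paired with $\deltaA$ in $T^\H$ and look each up in $S^\H$; these $C$-values form a subset of $\pi_C T^\H$, so there are at most $2N^{1-\eps}$ of them, giving $\bigO{|\db|^{1-\eps}}$. For $\delta\triangle_0^{r\L\L}$ (Line 6) I instead iterate over the $C$-values paired with $\deltaB$ in $S^\L$, of which there are fewer than $\frac{3}{2}N^{\eps}$ since $\deltaB$ is light in $S^\L$, and look each up in $T^\L$, giving $\bigO{|\db|^{\eps}}$. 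For $\delta\triangle_0^{r\L\H}$ (Line 5) both drivers are available: ranging over $S^\L$ costs $\bigO{N^{\eps}}$ and ranging over $T^\H$ costs $\bigO{N^{1-\eps}}$, so choosing the smaller yields $\bigO{|\db|^{\min\{\eps,1-\eps\}}}$. Line 4 is a single lookup into the materialized view $V_{ST}$, hence $\bigO{1}$.

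The remaining lines are maintenance steps. Updating $\triangle_0$ (Line 7) and the affected part $R^r$ (Line 12) are constant-time additions/insertions under the data structures of Section~\ref{sec:computational_model}. The auxiliary-view maintenance mirrors the delta analysis: updating $V_{RS}$ (Line 9) ranges over the $C$-values of $\deltaB$ in $S^\L$ (at most $\frac{3}{2}N^{\eps}$, so $\bigO{|\db|^{\eps}}$), and updating $V_{TR}$ (Line 11) ranges over the $C$-values of $\deltaA$ in $T^\H$ (at most $2N^{1-\eps}$, so $\bigO{|\db|^{1-\eps}}$); exactly one branch runs, according to whether $r=\H$ or $r=\L$. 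Taking the maximum over all lines, the dominant costs are $\bigO{|\db|^{1-\eps}}$ (Lines 3 and 11) and $\bigO{|\db|^{\eps}}$ (Lines 6 and 9), so the total is $\bigO{|\db|^{\max\{\eps,1-\eps\}}}$.

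I expect the main obstacle to be not any single calculation but making the iteration-direction choices airtight: for each delta one must verify that the invoked degree bound genuinely applies to the variable being enumerated. In particular, fixing $\deltaA$ in $T^\H$ bounds the enumerated $C$-values by $|\pi_C T^\H|$ rather than by a per-$A$ degree bound, which does not exist because $T$ is partitioned on $C$ and not on $A$; the symmetric caution applies to $S^\L$ in Line 6. I would also confirm that every constant-time lookup and index operation assumed in the argument is supported by the computational model, and check that the symmetry argument for $S$ and $T$ rotates the roles of $V_{RS}, V_{ST}, V_{TR}$ consistently.
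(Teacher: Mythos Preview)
Your proposal is correct and follows essentially the same approach as the paper: a line-by-line cost analysis of \textsc{ApplyUpdate}, bounding each delta by choosing the cheaper iteration direction over $C$ (using $|\pi_C T^\H|\leq 2N^{1-\eps}$ for heavy $T$ and $|\sigma_{B=\deltaB}S^\L|<\tfrac{3}{2}N^{\eps}$ for light $S$), then taking the maximum and invoking $|\db|=\Theta(N)$ and symmetry. Your explicit caution that the Line~3 bound comes from $|\pi_C T^\H|$ rather than a per-$A$ degree bound is exactly the right observation and matches the paper's reasoning.
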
 
\begin{proof}
We analyze the running time of the procedure from 
Figure~\ref{fig:applyUpdate} given a single-tuple update $\delta R^r = \{(\deltaA,\deltaB) \mapsto \p\}$ and a state 
$\astate=(\eps, N, \dbeps, \inst{V})$ of $\db$. Since the query and auxiliary views are symmetric, the analysis for updates to $S$ and $T$ is similar.

We first analyze the evaluation strategies for the deltas of the skew-aware views $\triangle_0^{rst}$:
\begin{itemize}
\item (Line 3) Computing $\delta \triangle_0^{r\H\H}$ requires summing over $C$-values ($\deltaA$ and $\deltaB$ are fixed). The minimum degree of each $C$-value in $T^\H$ is $\frac{1}{2}N^{\eps}$, which means the number of distinct $C$-values in $T^\H$ is at most $\frac{N}{\frac{1}{2}N^{\eps}} = 2N^{1-\eps}$. Thus, this delta evaluation takes $\bigO{N^{1-\eps}}$ time.

\item (Line 4) Computing $\delta \triangle_0^{r\H\L}$ requires constant-time lookups in $\delta R^r$ and $V_{ST}$.

\item (Line 5) Computing $\delta \triangle_0^{r\L\H}$ can be done in two ways, depending on $\eps$: either sum over at most $2N^{1-\eps}$ $C$-values in $T^\H$ for the given $\deltaA$ or sum over at most $\frac{3}{2}N^{\eps}$ $C$-values in $S^\L$ for the given $\deltaB$. This delta computation takes at most $\min\{2N^{1-\eps}, \frac{3}{2}N^{\eps}\}$ constant-time operations, thus $\bigO{N^{\min{\{\eps, 1-\eps\}}}}$ time.

\item (Line 6) Computing $\delta \triangle_0^{r\L\L}$ requires summing over at most $\frac{3}{2}N^{\eps}$ $C$-values in $S^\L$ for the given $\deltaB$. This delta computation takes $\bigO{N^{\eps}}$ time. 
\end{itemize}
Maintaining the nullary triangle query using these deltas takes constant time (Line 7).
The views $V_{RS}$ and $V_{TR}$ are maintained for updates to distinct parts of R.
Maintaining $V_{RS}$ requires iterating over at most $\frac{3}{2}N^{\eps}$ $C$-values in $S^\L$ for the given $\deltaB$ (Line 9);
similarly, maintaining $V_{TR}$ requires iterating over at most $2N^{1-\eps}$ $C$-values in $T^\H$ for the given $\deltaA$ (Line 11).
Finally, maintaining the part of $R$ affected by $\delta R^r$ takes constant time (Line 12).  
The total update time is 
$\bigO{\max\{1,N^{\eps}, N^{1-\eps}, N^{\min\{\eps,1-\eps\}}\}} = \bigO{N^{\max\{\eps,1-\eps\}}}$. 
From the invariant $|\db| = \Theta(N)$ follows the claimed time complexity $\bigO{|\db|^{\max\{\eps,1-\eps\}}}$.
\end{proof}

\subsection{Improving Space by Double Partitioning}
\label{sec:nullary_triangle_double_partitioning}
We show how 
the space complexity of maintaining $\triangle_0$
can be improved  
to $\bigO{|\inst{D}|^{\max\{1,\min\{1+\eps,2-2\eps\}\}}}$
by double partitioning each input relation 
(cf.\@ Proposition~\ref{prop:tighter-upper-bound-space-nullary}).
This partitioning strategy allows us to obtain tighter bounds on the sizes of the materialized views.
For $\eps=0$ and $\eps\geq\frac{1}{2}$, the space complexity becomes linear;
for $\eps=\frac{1}{3}$ it reaches its maximum $\bigO{|\inst{D}|^{4/3}}$. 
Recall that the maximum space complexity under single partitioning is $\bigO{|\inst{D}|^{3/2}}$ (Proposition~\ref{prop:space_complexity}).

We double partition the input relations $R$, $S$, and $T$ on $(A,B)$, $(B,C)$, and $(C,A)$, respectively, 
with the threshold $N^{\eps}$. 
We decompose the nullary triangle query into a union of skew-aware views:
\begin{align*}
\triangle_0^{rst}() = \sum_{a,b,c} R^r(a,b) \cdot S^s(b,c) \cdot T^t(c,a), \quad\text{ for } 
r,s,t \in \{\H,\L\}^2.
\end{align*}

Figure \ref{fig:view_definitions_double_partitioning} gives the definitions of the materialized views under double partitioning. 
Under this refined partitioning strategy, each of the auxiliary views $V_{RS}$, $V_{ST}$, and 
$V_{TR}$ has both of its free variables heavy in one of the relation parts defining the view. 
For instance, 
the view $V_{RS}$ has the free variable $A$ heavy in $R^{\H\L}$ and the free variable $C$ heavy in $S^{\L\H}$.

\label{sec:space_nullary_double_partition}
\begin{figure}[t]
  \begin{center}
    \renewcommand{\arraystretch}{1.2}  
    \begin{tabular}{@{\hskip 0.05in}l@{\hskip 0.4in}l@{\hskip 0.05in}}
      \toprule
      Materialized View Definition & Space Complexity \\    
      \midrule
      $\triangle_0() = \sum\limits_{r,s,t \in \{\H,\L\}^2} \,
      \sum\limits_{a,b,c} R^r(a,b) \ztimes S^s(b,c) \ztimes T^t(c,a)$ & 
      $\bigO{1}$ \\
      $V_{RS}(a,c) = \sum_{b} R^{\H\L}(a,b) \ztimes S^{\L\H}(b,c)$ & 
      $\bigO{|\inst{D}|^{\min\{1+\eps,2-2\eps\}}}$ \\
      $V_{ST}(b,a) = \sum_{c} S^{\H\L}(b,c) \ztimes T^{\L\H}(c,a)$ & 
      $\bigO{|\inst{D}|^{\min\{1+\eps,2-2\eps\}}}$ \\
      $V_{TR}(c,b) = \sum_{a} T^{\H\L}(c,a)  \ztimes R^{\L\H}(a,b)$ & 
      $\bigO{|\inst{D}|^{\min\{1+\eps,2-2\eps\}}}$ \\
      \bottomrule    
    \end{tabular}
  \end{center}
  \caption{The definition and space complexity of the materialized views 
  for the nullary triangle query under double partitioning. 
  The set of views are part of an \ivme state of database $\inst{D}$
  partitioned for $\eps \in [0,1]$.} 
  \label{fig:view_definitions_double_partitioning}
\end{figure}

The \ivme state supporting the maintenance 
of the nullary triangle query under double partitioning 
has the partitions $\inst{P} = \{R^r, S^s, T^t\}_{r,s,t\in\{\H,\L\}^2}$ of $R$, $S$, and $T$ on 
$(A,B)$, $(B,C)$, and $(C,A)$, respectively;
and the materialized views 
$\inst{V} = \{\triangle_0, V_{RS}, V_{ST}, V_{TR}\}$
defined in Figure~\ref{fig:view_definitions_double_partitioning}.

The complexity analysis of maintaining the nullary triangle query under double partitioning is similar to that from the proofs of Propositions~\ref{prop:preprocessing_step}, \ref{prop:space_complexity}, and \ref{prop:single_step_time}. 
The preprocessing time and the maintenance time under a single-tuple update are the same as in the case of single partitioning. But the space complexity under double partitioning is improved.

\begin{proposition}\label{prop:preprocessing_step_space_double_partition}
Let $\inst{D}$ be a database and $\eps\in[0,1]$. 
\begin{itemize}
\item The initial \ivme state with double partitioning 
for the maintenance of 
the nullary triangle query 
can be constructed in $\bigO{|\db|^{\frac{3}{2}}}$ time.

\item Any \ivme state with double partitioning for the maintenance 
of the nullary triangle query
takes 
$\bigO{|\inst{D}|^{\max\{1,\min\{1+\eps,2-2\eps\}\}}}$ space.
\end{itemize}
\end{proposition}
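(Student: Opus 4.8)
The plan is to follow the template of the proofs of Propositions~\ref{prop:preprocessing_step} and~\ref{prop:space_complexity}, changing only the parts that concern the auxiliary views, since double partitioning affects neither the cost of partitioning the relations nor the cost of computing $\triangle_0$. First I would note that a double partition of a relation on two variables is, by Definition~\ref{def:loose_double_relation_partition}, obtained by intersecting two single partitions, each computable in linear time; hence building the partitions still takes $\bigO{|\db|}$ time and $\bigO{|\db|}$ space. The result of $\triangle_0$ is still computed in $\bigO{|\db|^{\frac{3}{2}}}$ time with a worst-case optimal join algorithm and occupies $\bigO{1}$ space. Thus everything reduces to analysing the three auxiliary views $V_{RS}$, $V_{ST}$, $V_{TR}$ of Figure~\ref{fig:view_definitions_double_partitioning}; by symmetry it suffices to treat $V_{RS}(a,c)=\sum_b R^{\H\L}(a,b)\ztimes S^{\L\H}(b,c)$.

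For the \emph{space} bound I would establish two upper bounds on $|V_{RS}|$ and take their minimum. Writing $R^{\H\L}=R^\H_A\cap R^\L_B$ and $S^{\L\H}=S^\L_B\cap S^\H_C$, the first bound exploits that $B$ is light in $S^{\L\H}$: since $|\sigma_{B=b}S^{\L\H}|\leq|\sigma_{B=b}S^\L_B|<\frac{3}{2}N^{\eps}$, iterating over the $(a,b)$ pairs of $R^{\H\L}$ produces at most $|R^{\H\L}|\ztimes\frac{3}{2}N^{\eps}<\frac{3}{2}N^{1+\eps}$ distinct output pairs. The second, new bound exploits that \emph{both} free variables are heavy in their defining parts: $A$ is heavy in $R^{\H\L}$ so $|\pi_AR^{\H\L}|\leq|\pi_AR^\H_A|\leq 2N^{1-\eps}$, and $C$ is heavy in $S^{\L\H}$ so $|\pi_CS^{\L\H}|\leq 2N^{1-\eps}$; since the result is a subset of $\pi_AR^{\H\L}\times\pi_CS^{\L\H}$, we get $|V_{RS}|\leq 4N^{2-2\eps}$. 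Hence $|V_{RS}|=\bigO{N^{\min\{1+\eps,\,2-2\eps\}}}$, and analogously for $V_{ST}$ and $V_{TR}$. Adding the $\bigO{N}$ space for the partitions and using $|\db|=\Theta(N)$ gives total space $\bigO{|\db|^{\max\{1,\,\min\{1+\eps,2-2\eps\}\}}}$, where the maximum with $1$ absorbs the linear storage of the partitions and is the dominant term when $\eps\geq\frac{1}{2}$.

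For the \emph{preprocessing time} the subtle point, and the main obstacle, is that the $N^{2-2\eps}$ argument above bounds only the \emph{size} of $V_{RS}$, not the cost of computing it. The naive evaluation that iterates over $R^{\H\L}$ and scans the matching $C$-values still costs $\bigO{N^{1+\eps}}$, which exceeds $N^{\frac{3}{2}}$ once $\eps>\frac{1}{2}$, so a second evaluation strategy is needed for large $\eps$. The plan is to compute $V_{RS}$ by ranging over the few heavy values: iterate over $a\in\pi_AR^{\H\L}$ (at most $2N^{1-\eps}$ of them) and over $c\in\pi_CS^{\L\H}$, and for each pair accumulate $\sum_bR^{\H\L}(a,b)\ztimes S^{\L\H}(b,c)$ by scanning the $B$-values paired with $a$ in $R^{\H\L}$ and probing $S^{\L\H}$; this costs $|\pi_CS^{\L\H}|\ztimes|R^{\H\L}|\leq 2N^{2-\eps}$. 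Choosing the cheaper of the two strategies yields view-construction time $\bigO{\min\{N^{1+\eps},\,N^{2-\eps}\}}$, which is at most $N^{\frac{3}{2}}$ for every $\eps\in[0,1]$ (the two expressions cross at $\eps=\frac{1}{2}$). Since this is dominated by the $\bigO{|\db|^{\frac{3}{2}}}$ cost of computing $\triangle_0$, the whole initial state is built in $\bigO{|\db|^{\frac{3}{2}}}$ time, completing both claims.
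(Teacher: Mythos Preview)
Your proposal is correct and follows essentially the same approach as the paper. The space argument is identical: combine the single-partitioning bound $N^{1+\eps}$ with the new $N^{2-2\eps}$ bound from the heavy–heavy product $\pi_A R^{\H\L}\times\pi_C S^{\L\H}$. For the preprocessing time, the paper takes a slightly shorter route: it simply observes that since $R^{\H\L}\subseteq R^\H$ and $S^{\L\H}\subseteq S^\L$, the two strategies already used in Proposition~\ref{prop:preprocessing_step} (iterate over $R^\H$ then scan $S^\L$, or iterate over $S^\L$ then scan $R^\H$) apply verbatim and give $\bigO{N^{1+\min\{\eps,1-\eps\}}}\leq\bigO{N^{3/2}}$; your alternative second strategy (loop over heavy $A$- and $C$-values and scan $B$) reaches the same $N^{2-\eps}$ bound by a different path but is not needed once you reuse Proposition~\ref{prop:preprocessing_step} wholesale.
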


\begin{proof}
Consider an \ivme state $\astate = (\eps, N, \inst{P}, \inst{V})$ of $\inst{D}$
with double partitioning.
Assume first that $\astate$ is the initial \ivme state.
We analyze the time to construct $\astate$.
Retrieving the database size $|\db|$ and computing $N = 2\ztimes|\db| + 1$ take constant time. 
For each input relation, strictly partitioning on both variables and then intersecting the relation parts to form the double partition (see Definition~\ref{def:loose_double_relation_partition}) take linear time. Thus, computing the partitions from $\inst{P}$ takes linear time.
The materialized views in $\inst{V}$ can be computed in time 
 $\bigO{N^{\frac{3}{2}}}$ using the same strategies as in the proof of 
Proposition~\ref{prop:preprocessing_step} and treating $R$, $S$, and $T$ as partitioned only on $A$, $B$, and $C$, respectively. 

Now, assume that $\astate$ is {\em any} \ivme state of $\db$.
We investigate its space complexity. 
The components $\eps$ and $N$ need constant space, and $|\dbeps| = |\inst{D}|$. 
Figure~\ref{fig:view_definitions_double_partitioning} gives the definition and space complexity of each  materialized view from $\inst{V}$. The size of $\triangle_0$ is constant. 

We analyze the space complexity of the view 
$V_{RS}(a,c) = \textstyle\sum_b R^{\H\L}(a,b) \ztimes S^{\L\H}(b,c)$. 
From the proof of Proposition~\ref{prop:space_complexity} follows that the size of $V_{RS}$ under single partitioning is bounded by $\bigO{N^{1 + \min\{\eps, 1-\eps\}}}$.
The double partitioning of $R$ and $S$ tightens this upper bound.
Since $A$ is heavy in $R^{\H\L}$ and $C$ is heavy in $S^{\L\H}$, 
the number of $(A,C)$-values in the result of $V_{RS}$ is bounded by 
$2N^{1- \eps}\cdot 2N^{1- \eps} = 4N^{2-2\eps}$. 
Then, the size of $V_{RS}$ is 
$\bigO{\min\{N^{1+ \min \{\eps, 1-\eps\}}, N^{2- 2\eps}\}}$, 
which simplifies to $\bigO{N^{\min\{1+\eps,2-2\eps\}}}$ since $2-2\eps \leq 2-\eps$ for $\eps\in[0,1]$.
The analyses for $V_{ST}$ and $V_{TR}$ are similar.

Considering all the components of state $\astate$,
the size of $\astate$ is $\bigO{\max\{1, N,  N^{\min\{1+\eps,2-2\eps\}}\}}$, 
which simplifies to $\bigO{N^{\max\{1,\min\{1+\eps,2-2\eps\}\}}}$.

From $|\inst{D}| = \Theta(N)$ follows the claimed preprocessing time 
and space complexity.
\end{proof}

\begin{proposition}
\label{prop:single_step_time_double_partition}
Given a database $\db$, $\eps\in[0,1]$, and an \ivme state 
$\astate$ of $\inst{D}$ supporting the maintenance of
the nullary triangle query with double partitioning, \ivme maintains $\astate$ under a single-tuple update to any input relation in $\bigO{|\db|^{\max\{\eps,1-\eps\}}}$ time.
\end{proposition}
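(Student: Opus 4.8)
The plan is to mirror the single-partition analysis of Proposition~\ref{prop:single_step_time}, adapting both the delta decomposition and the degree-based time bounds to the finer double partition. Consider a single-tuple update $\delta R^r = \{(\deltaA,\deltaB)\mapsto\p\}$ to a double-partitioned part $R^r$ with $r\in\{\H,\L\}^2$; by the cyclic symmetry of $\triangle_0$ and of the view set $\inst{V}$ in Figure~\ref{fig:view_definitions_double_partitioning}, updates to $S$ and $T$ are handled identically. We can determine in constant time which of the four parts of $R$ the update hits. Since $\triangle_0() = \sum_{r,s,t\in\{\H,\L\}^2}\triangle_0^{rst}()$ and $\delta R^r$ touches only the part $R^r$, only the $16$ skew-aware views carrying this fixed superscript $r$ have a nonzero delta, namely $\delta\triangle_0^{rst}() = \delta R^r(\deltaA,\deltaB)\ztimes\sum_c S^s(\deltaB,c)\ztimes T^t(c,\deltaA)$ for $s,t\in\{\H,\L\}^2$. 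Each delta is evaluated by fixing $\deltaA,\deltaB$ and iterating over the matching $C$-values on whichever of $S^s,T^t$ admits the smaller bound, looking up the other side in constant time per value; adding the $16$ deltas to maintain $\triangle_0$ is then $\bigO{1}$.

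The core of the proof is bounding each of these $16$ delta computations. Writing $s=(s_B,s_C)$ and $t=(t_C,t_A)$, I would use only the upper bounds from Definitions~\ref{def:loose_relation_partition} and~\ref{def:loose_double_relation_partition} that survive the refinement: (i) if $B$ is light in $S^s$ (i.e.\ $s_B=\L$) then $|\sigma_{B=\deltaB}S^s|<\tfrac{3}{2}N^{\eps}$, and symmetrically $|\sigma_{A=\deltaA}T^t|<\tfrac{3}{2}N^{\eps}$ when $t_A=\L$; and (ii) if $C$ is heavy in $S^s$ (resp.\ $T^t$) then the number of distinct $C$-values there is at most $2N^{1-\eps}$, since $S^s$ (resp.\ $T^t$) is contained in the heavy-$C$ part of a single partition. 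Bound (i) settles every combination with $s_B=\L$ or $t_A=\L$ in $\bigO{N^{\eps}}$ time. For the remaining combinations, where both $B$ and $A$ are heavy, bound (ii) applies whenever $s_C=\H$ or $t_C=\H$, giving $\bigO{N^{1-\eps}}$ time by iterating over the heavy-$C$ side. The single combination escaping both bounds is $(s,t)=(\H\L,\L\H)$, where $B,A$ are heavy and $C$ is light on both sides; this is precisely the combination precomputed as $V_{ST}(b,a)=\sum_c S^{\H\L}(b,c)\ztimes T^{\L\H}(c,a)$, so $\delta\triangle_0^{r(\H\L)(\L\H)}()=\delta R^r(\deltaA,\deltaB)\ztimes V_{ST}(\deltaB,\deltaA)$ is a constant-time lookup. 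Thus every delta costs $\bigO{N^{\max\{\eps,1-\eps\}}}$, and their constant number keeps the sum within the same bound.

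Finally, I would bound the maintenance of the auxiliary views touching $R$. By Figure~\ref{fig:view_definitions_double_partitioning}, $V_{RS}$ refers to $R$ only through $R^{\H\L}$ and $V_{TR}$ only through $R^{\L\H}$, while $V_{ST}$ does not refer to $R$; hence at most one auxiliary view is affected, depending on $r$. For $r=\H\L$, $\delta V_{RS}(\deltaA,c)=\delta R^{\H\L}(\deltaA,\deltaB)\ztimes S^{\L\H}(\deltaB,c)$ iterates over $\sigma_{B=\deltaB}S^{\L\H}$, which has fewer than $\tfrac{3}{2}N^{\eps}$ tuples because $B$ is light there, costing $\bigO{N^{\eps}}$; for $r=\L\H$, $\delta V_{TR}(c,\deltaB)=T^{\H\L}(c,\deltaA)\ztimes\delta R^{\L\H}(\deltaA,\deltaB)$ iterates over $\sigma_{A=\deltaA}T^{\H\L}$, which again has fewer than $\tfrac{3}{2}N^{\eps}$ tuples because $A$ is light there. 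Updating $R^r$ itself is $\bigO{1}$. Summing all contributions yields $\bigO{N^{\max\{\eps,1-\eps\}}}$, and the size invariant $|\db|=\Theta(N)$ converts this into the claimed $\bigO{|\db|^{\max\{\eps,1-\eps\}}}$.

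I expect the main obstacle to be exactly the subtlety flagged after Definition~\ref{def:loose_double_relation_partition}: heaviness of a variable in a double part only lower-bounds its degree in the \emph{full} relation, not within the part, so the heavy-side lower bounds cannot be reused. The argument must therefore rest solely on the light-degree upper bounds of type (i) and on the global cap of type (ii) on the number of distinct heavy values, and it must verify that the one combination $(\H\L,\L\H)$ where neither is available is covered by the materialized view $V_{ST}$. The remaining work is then the bookkeeping of checking that the four ``both heavy'' combinations are each served by a heavy-$C$ side, which I would present as a short case table rather than spell out term by term.
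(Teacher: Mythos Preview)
Your proposal is correct and follows essentially the same approach as the paper: both arguments rely on light-degree upper bounds and the global cap on heavy-$C$ values, and both isolate $(s,t)=(\H\L,\L\H)$ as the unique combination requiring the precomputed view $V_{ST}$. The paper organizes the argument more tersely by deferring the non-problematic combinations to the single-partition analysis of Proposition~\ref{prop:single_step_time} and only spelling out the four refinements of the old $S^\H,T^\L$ case, whereas you treat all sixteen $(s,t)$ combinations uniformly via your bounds (i) and (ii); your presentation is more self-contained, and your explicit handling of the auxiliary-view maintenance (which the paper leaves implicit) is a welcome addition.
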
 

\begin{proof}
Consider an \ivme state $\astate = (\eps, N, \inst{P}, \inst{V})$
and an update $\delta R^r = \{(\alpha, \beta) \mapsto \p\}$,
for $r \in \{\H,\L\}^2$.
Most deltas of the skew-aware views can be computed in time $\bigO{N^{\max\{\eps, 1-\eps\}}}$
using the same strategies as in the proof of Proposition~\ref{prop:single_step_time} and
treating the relations as single partitioned.
The refined partitioning strategy splits the problematic case involving $S^\H$ and $T^\L$ into new cases involving $S^{\H\H}$ and $S^{\H\L}$ on one side and $T^{\L\H}$ and $T^{\L\L}$ on the other side. 
We next analyze the complexity of computing the deltas in these four cases:

\begin{itemize}
\item 
Computing $\delta \triangle_0^{r(\H\H)(\L\H)}$ and $\delta \triangle_0^{r(HH)(LL)}$ 
requires summing over at most $2N^{1-\eps}$ $C$-values 
paired with $\beta$ in $S^{\H\H}$; thus, computing these deltas takes $\bigO{N^{1- \eps}}$ time.

\item 
Computing $\delta \triangle_0^{r(\H\L)(\L\L)}$ requires summing over less than $\frac{3}{2} N^{\eps}$ $C$-values paired with $\deltaA$ in $T^{\L\L}$; thus, computing this delta takes $\bigO{N^{\eps}}$ time.

\item 
Computing $\delta \triangle_0^{r(HL)(LH)}$ requires a constant-time lookup in the view $V_{ST}$ from 
Figure~\ref{fig:view_definitions_double_partitioning}.
\end{itemize}

From $|\inst{}D| = \Theta(N)$ follows that $\astate$ can be maintained in time $\bigO{|\inst{D}|^{\max\{\eps, 1-\eps\}}}$ under the single-tuple update $\delta R^r$. 
The analyses for updates to $S$ and $T$ are analogous. 
\end{proof}

\subsection{Summing Up}
Materializing the query result in the \ivme state ensures constant-delay enumeration
of the result.
Then, our main result in Theorem~\ref{theo:main_result_triangle} for the nullary triangle query follows from Propositions~\ref{prop:preprocessing_step}, \ref{prop:space_complexity}, and \ref{prop:single_step_time} shown in the previous subsections, complemented by 
Proposition~\ref{prop:amortized_update_time}, which shows that the amortized
rebalancing time is $\bigO{|\inst{D}|^{\max\{\eps, 1-\eps\}}}$.

Proposition~\ref{prop:tighter-upper-bound-space-nullary}, which 
gives
an improved space complexity for the maintenance of the 
nullary triangle query using double partitioning,     
follows from Propositions 
\ref{prop:preprocessing_step_space_double_partition},
\ref{prop:single_step_time_double_partition}, and
\ref{prop:amortized_update_time}.

\section{Maintaining the Ternary Triangle Query}
\label{sec:full}
We now focus on the maintenance of the ternary triangle query 
$$\triangle_3(a,b,c) = R(a,b)\ztimes S(b,c) \ztimes T(c,a)$$
under a single-tuple update.
We employ a similar adaptive maintenance strategy as with the nullary triangle query. 
We first partition the relations $R$, $S$, and $T$ on variables $A$, $B$, and $C$, respectively, with the threshold $N^\eps$. 
We then decompose $\triangle_3$ into skew-aware views defined over the relation parts:
\begin{align*}
\triangle_3^{\H\H\H}(a,b,c) &= R^\H(a,b) \cdot S^\H(b,c) \cdot T^\H(c,a),\\[2pt]
\triangle_3^{\L\L\L}(a,b,c) &= R^\L(a,b) \cdot S^\L(b,c) \cdot T^\L(c,a),\\[2pt]
\triangle_3^{\F\H\L}(a,b,c) &= \sum\limits_{r\in\{\H,\L\}}R^r(a,b) \cdot S^\H(b,c) \cdot T^\L(c,a),\\
\triangle_3^{\L\F\H}(a,b,c) &= \sum\limits_{s\in\{\H,\L\}}R^\L(a,b) \cdot S^s(b,c) \cdot T^\H(c,a),\\
\triangle_3^{\H\L\F}(a,b,c) &= \sum_{t\in\{\H,\L\}}R^\H(a,b) \cdot S^\L(b,c) \cdot T^t(c,a).
\end{align*}
The result of $\triangle_3$ is the union of the disjoint results of these skew-aware views.
To enumerate the result of $\triangle_3$, we can thus enumerate the results of these views one after the other.

\begin{figure}[t!]
  \begin{minipage}[b]{\textwidth}
  \begin{center}
    \renewcommand{\arraystretch}{1.3}  
    \begin{tabular}{@{\hskip 0.05in}l@{\hskip 0.4in}l@{\hskip 0.05in}}
      \toprule
      Materialized View Definition & Space Complexity \\    
      \midrule 
      $\triangle_3^{\H\H\H}(a,b,c) = 
      R^\H(a,b) \cdot S^\H(b,c) \cdot T^\H(c,a)$ & 
      $\bigO{|\inst{D}|^{\frac{3}{2}}}$ \\
       $\triangle_3^{\L\L\L}(a,b,c) = 
      R^\L(a,b) \cdot S^\L(b,c) \cdot T^\L(c,a)$  & 
      $\bigO{|\inst{D}|^{\frac{3}{2}}}$ \\[3pt]
      
      View tree for $\triangle_3^{\H\L\F}(a,b,c) = 
      \sum_{t\in\{\H,\L\}}R^\H(a,b) \cdot S^\L(b,c) \cdot T^t(c,a)$\\

      $\TAB\TAB V_{RS}(a,b,c) = R^\H(a,b) \cdot S^\L(b,c)$ & 
      $\bigO{|\inst{D}|^{1+\min{\{\,\eps, 1-\eps \,\}}}}$ \\

      $\TAB\TAB \hat{V}_{RS}(a,c) = \sum_b V_{RS}(a,b,c)$ & 
      $\bigO{|\inst{D}|^{1+\min{\{\,\eps, 1-\eps \,\}}}}$ \\

      $\TAB\TAB V^{\H\L\F}(a,c) =
      \sum_{t\in\{\H,\L\}} \hat{V}_{RS}(a,c) \cdot T^t(c,a)$ & 
      $\bigO{|\inst{D}|}$\\[3pt]

      View tree for $\triangle_3^{\F\H\L}(a,b,c) =  
      \sum_{r\in\{\H,\L\}}R^r(a,b) \cdot S^\H(b,c) \cdot T^\L(c,a)$ & \\

      $\TAB\TAB V_{ST}(b,c,a) = S^\H(b,c) \cdot T^\L(c,a)$ & 
      $\bigO{|\inst{D}|^{1+\min{\{\,\eps, 1-\eps \,\}}}}$ \\

      $\TAB\TAB \hat{V}_{ST}(b,a) = \sum_{c} V_{ST}(b,c,a)$ & 
      $\bigO{|\inst{D}|^{1+\min{\{\,\eps, 1-\eps \,\}}}}$ \\

      $\TAB\TAB V^{\F\H\L}(a,b) =
      \sum_{r\in\{\H,\L\}}R^r(a,b) \cdot \hat{V}_{ST}(b,a)$ & 
      $\bigO{|\inst{D}|}$\\[3pt]

      View tree for $\triangle_3^{\L\F\H}(a,b,c) =  
      \sum_{s \in \{\H,\L\}} R^\L(a,b) \cdot S^s(b,c) \cdot T^\H(c,a)$ & \\

      $\TAB\TAB V_{TR}(c,a,b) = T^\H(c,a) \cdot R^\L(a,b)$ & 
      $\bigO{|\inst{D}|^{1+\min{\{\,\eps, 1-\eps \,\}}}}$ \\

      $\TAB\TAB \hat{V}_{TR}(c,b) = \sum_a V_{TR}(c,a,b)$ & 
      $\bigO{|\inst{D}|^{1+\min{\{\,\eps, 1-\eps \,\}}}}$ \\

      $\TAB\TAB V^{\L\F\H}(b,c) =  
      \sum_{s \in \{\H,\L\}} S^s(b,c) \cdot \hat{V}_{TR}(c,b)$ & 
      $\bigO{|\inst{D}|}$\\

      \bottomrule    
    \end{tabular}\vspace*{-0.5em}
  \end{center}
  \end{minipage}  
  
  \vspace{0.3cm}
  \begin{minipage}{\textwidth}
\begin{center}
    \begin{tikzpicture}
    \begin{scope}
     \node at (0,0.1) (G) {View tree for $\triangle_3^{\H\L\F}$};
      \node at (0, -0.5) (A) {$V^{\H\L\F}(a,c)$};
      \node at (-1, -1.5) (C) {$\hat{V}_{RS}(a,c)$} edge[-] (A);
      \node at (1, -1.65) (B) {$\sum\limits_{t\in\{\H,\L\}}\!\!\!\!\!T^t(c,a)$} edge[-] (A);
      \node at (-1, -2.5) (D) {$V_{RS}(a,b,c)$} edge[-] (C);
      \node at (-2, -3.5) (E) {$R^\H(a,b)$} edge[-] (D);
      \node at (0, -3.5) (F) {$S^\L(b,c)$} edge[-] (D);            
    \end{scope}

    \begin{scope}[xshift=4.5cm]
      \node at (0,0.1) (G) {View tree for $\triangle_3^{\F\H\L}$};
      \node at (0, -0.5) (A) {$V^{\F\H\L}(a,b)$};
      \node at (-1, -1.5) (C) {$\hat{V}_{ST}(b,a)$} edge[-] (A);
      \node at (1, -1.65) (B) {$\sum\limits_{r\in\{\H,\L\}}\!\!\!\!\!R^r(a,b)$} edge[-] (A);      
      \node at (-1, -2.5) (D) {$V_{ST}(b,c,a)$} edge[-] (C);
      \node at (-2, -3.5) (E) {$S^\H(b,c)$} edge[-] (D);
      \node at (0, -3.5) (F) {$T^\L(c,a)$} edge[-] (D);      
    \end{scope}

    \begin{scope}[xshift=9cm]
      \node at (0,0.1) (G) {View tree for $\triangle_3^{\L\F\H}$};
      \node at (0, -0.5) (A) {$V^{\L\F\H}(b,c)$};
      \node at (-1, -1.5) (C) {$\hat{V}_{TR}(c,b)$} edge[-] (A);
      \node at (1, -1.65) (B) {$\sum\limits_{s\in\{\H,\L\}}\!\!\!\!\!S^s(b,c)$} edge[-] (A);
      \node at (-1, -2.5) (D) {$V_{TR}(c,a,b)$} edge[-] (C);
      \node at (-2, -3.5) (E) {$T^\H(c,a)$} edge[-] (D);
      \node at (0, -3.5) (F) {$R^\L(a,b)$} edge[-] (D);     
    \end{scope}
    \end{tikzpicture}
    \end{center}
  \end{minipage}

  \caption{
  (top) The materialized views
  $\inst{V} = \{ 
  \triangle_3^{\H\H\H}, \triangle_3^{\L\L\L}, 
  V_{RS}, \hat{V}_{RS}, V^{\H\L\F},
  V_{ST}, \hat{V}_{ST}, V^{\F\H\L},  
  V_{TR}, \hat{V}_{TR}, V^{\L\F\H} \}$ 
  supporting the maintenance of the ternary triangle query.
  The set $\inst{V}$ is part of an \ivme state of database $\inst{D}$. 
  The views $\triangle_3^{\H\H\H}$ and $\triangle_3^{\L\L\L}$ are materialized, while the views
  $\triangle_3^{\H\L\F}$, $\triangle_3^{\F\H\L}$, and $\triangle_3^{\L\F\H}$ 
  allow for enumeration with constant delay using their auxiliary views denoted by indentation. 
  (bottom) The view trees supporting the maintenance and enumeration of the results of 
  $\triangle_3^{\H\L\F}$, $\triangle_3^{\F\H\L}$, and $\triangle_3^{\L\F\H}$. 
  } 
  \label{fig:view_definitions_full}
\end{figure}
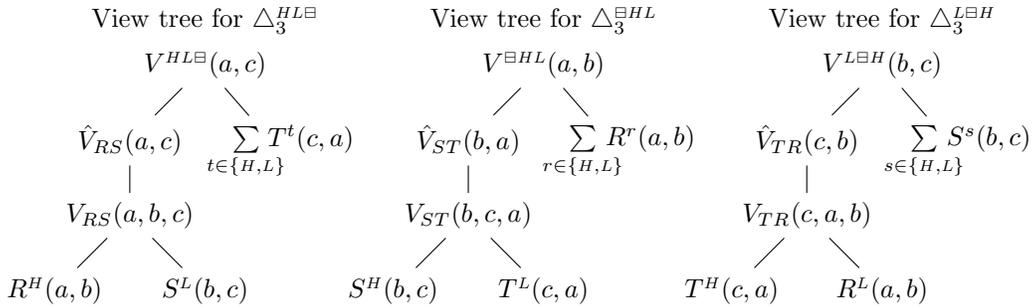

As with the nullary triangle query, \ivme  customizes the maintenance strategy 
for each of these skew-aware views and relies on auxiliary views to speed up the view maintenance.

The \ivme strategy for the nullary triangle query, however, fails to achieve sublinear maintenance time for most of these skew-aware views. 
Consider for instance the view $\triangle_3^{\F\H\L}$ and a single-tuple update 
$\delta{R^\H} = \{(\deltaA,\deltaB) \mapsto \p\}$ to the heavy part $R^\H$ of relation $R$. 
The delta $\delta \triangle_3^{\F\H\L}(\deltaA,\deltaB,c) = \delta R^\H(\deltaA,\deltaB) \cdot 
S^\H(\deltaB,c) \cdot T^\L(c,\deltaA)$ iterates over linearly many $C$-values in the worst case. 
Precomputing the view $V_{ST}(b,c,a) = S^\H(b,c) \cdot T^\L(c,a)$ and 
rewriting the delta as $\delta \triangle_3^{\F\H\L}(\deltaA,\deltaB,c) = \delta R^\H(\deltaA,\deltaB) \cdot V_{ST}(\deltaB,c,\deltaA)$ makes no improvement in the worst-case running time. In contrast, for the nullary triangle query, the view $V_{ST}(b,a) = S^\H(b,c) \cdot T^\L(c,a)$ enables computing $\delta \triangle_0^{\H\H\L}$ in constant time.

The skew-aware views of the ternary triangle query can be maintained in sublinear time by avoiding the listing (tabular) form of the view results. For that purpose, 
the result of a skew-aware view can be maintained in {\em factorized form}:
 Instead of using one materialized view, a hierarchy of materialized views
 is created  such that each of them admits sublinear maintenance time and all of them together guarantee constant-delay enumeration of the result of the skew-aware view. 
Factorized evaluation has been previously used in the context of incremental view maintenance~\cite{BerkholzKS17,Idris:dynamic:SIGMOD:2017,Nikolic:SIGMOD:18}.

Figure~\ref{fig:view_definitions_full} (top) presents the views used by \ivme to maintain the ternary triangle query under updates to the base relations. 
The results of the skew-aware views $\triangle_3^{\H\H\H}$ and $\triangle_3^{\L\L\L}$ are materialized 
in listing form. The remaining skew-aware views 
$\triangle_3^{\H\L\F}$, $\triangle_3^{\F\H\L}$, and $\triangle_3^{\L\F\H}$ avoid materialization altogether but ensure constant-delay enumeration of their results using other auxiliary materialized views (denoted by indentation).

Figure~\ref{fig:view_definitions_full} (bottom) shows for each of the skew-aware views 
$\triangle_3^{\H\L\F}$, $\triangle_3^{\F\H\L}$, and $\triangle_3^{\L\F\H}$, 
the materialized auxiliary views needed to maintain the results of the skew-aware view in factorized form. 
These auxiliary views make a view tree with input relations as leaves and updates propagating in a bottom-up manner. 
The result of $\triangle_3^{\H\L\F}$ is distributed among two auxiliary materialized views, $V^{\H\L\F}$ and $V_{RS}$. The former stores all $(a,c)$ pairs that would appear in the result of 
$\triangle_3^{\H\L\F}$, while the latter provides the matching $B$-values for each $(a,c)$ pair. 
The two views together provide constant-delay enumeration of the result of 
$\triangle_3^{\H\L\F}$. In addition to them, the view $\hat{V}_{RS}$ serves to support constant-time updates to $T^t$. The view trees for $\triangle_3^{\F\H\L}$ and $\triangle_3^{\L\F\H}$ are analogous.

The \ivme state supporting the maintenance of the ternary triangle query has the partitions
$\dbeps = \{R^\H, R^\L, S^\H, S^\L, T^\H, T^\L\}$ of $R$, $S$, and $T$ on variables $A$, $B$, and $C$; and the materialized views
$\inst{V} = \{ 
\triangle_3^{\H\H\H}, \triangle_3^{\L\L\L}, 
V_{RS}, \hat{V}_{RS}, V^{\H\L\F}, 
V_{ST}, \hat{V}_{ST}, V^{\F\H\L}, 
V_{TR}, \hat{V}_{TR}, V^{\L\F\H}\}$.

\subsection{Preprocessing Stage} 
The preprocessing stage builds the initial \ivme state $\astate = (\eps, \dbeps, \inst{V}, N)$ of database $\db$. This step partitions the input relations and computes the materialized views in $\inst{V}$ from Figure~\ref{fig:view_definitions_full} before processing any update.

\begin{proposition}\label{prop:preprocessing_step_full}
Given a database $\db$ and $\eps\in[0,1]$, constructing the initial \ivme state of $\inst{D}$ supporting the maintenance of the ternary triangle query takes $\bigO{|\db|^{\frac{3}{2}}}$ time.
\end{proposition}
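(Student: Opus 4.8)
The plan is to mirror the argument of Proposition~\ref{prop:preprocessing_step}, now accounting for the additional view-tree views that the ternary query requires. First I would dispose of the cheap components: retrieving $|\db|$ and setting $N = 2\ztimes|\db|+1$ take constant time, and computing the strict single partitions of $R$, $S$, $T$ on $A$, $B$, $C$ with threshold $N^\eps$ takes $\bigO{|\db|}$ time by Definition~\ref{def:loose_relation_partition}. It then remains to bound the cost of materializing each view in $\inst{V}$ (Figure~\ref{fig:view_definitions_full}) and to show that all these costs are $\bigO{|\db|^{\frac{3}{2}}}$.

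For the two fully materialized skew-aware views $\triangle_3^{\H\H\H}$ and $\triangle_3^{\L\L\L}$, I would observe that each is a restriction of the full ternary triangle join to specific relation parts, hence a subset of $\triangle_3$ evaluated on $\dbeps$. Running a worst-case optimal join algorithm (Leapfrog TrieJoin or Recursive-Join~\cite{NgoPRR18}) over the relevant parts computes each in $\bigO{|\db|^{\frac{3}{2}}}$ time, matching the space bound stated in the figure.

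The new ingredient is the cost of the three view trees; by the symmetry of $\triangle_3$ it suffices to analyze the tree for $\triangle_3^{\H\L\F}$ and invoke symmetry for the other two. Here I would bound, in order, the leaf-level join $V_{RS}(a,b,c) = R^\H(a,b)\ztimes S^\L(b,c)$, its marginal $\hat{V}_{RS}(a,c)$, and the top view $V^{\H\L\F}(a,c)$. The key estimate is the size of $V_{RS}$: iterating over $R^\H$ and matching $C$-values in $S^\L$ gives the bound $|R^\H|\ztimes\frac{3}{2}N^\eps = \bigO{N^{1+\eps}}$, while iterating over $S^\L$ and matching $A$-values in $R^\H$ gives $|S^\L|\ztimes 2N^{1-\eps} = \bigO{N^{2-\eps}}$ (using $|\sigma_{B=b}R^\H| \le |\pi_A R^\H| \le 2N^{1-\eps}$, as each such tuple carries a distinct $A$-value); taking the cheaper iteration order yields $\bigO{N^{1+\min\{\eps,1-\eps\}}}$ time, exactly as in the nullary analysis. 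Computing $\hat{V}_{RS}$ is then a linear scan of $V_{RS}$, and computing $V^{\H\L\F}$ performs one constant-time lookup in $T$ per entry of $\hat{V}_{RS}$, so both stay within $\bigO{N^{1+\min\{\eps,1-\eps\}}}$.

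Finally I would collect the bounds: every auxiliary view costs at most $\bigO{N^{1+\min\{\eps,1-\eps\}}}$, while the two materialized triangle views cost $\bigO{N^{\frac{3}{2}}}$, and since $\max_{\eps\in[0,1]}\{1+\min\{\eps,1-\eps\}\} = \frac{3}{2}$, the total is $\bigO{N^{\frac{3}{2}}}$; the size invariant $|\db| = \Theta(N)$ then delivers $\bigO{|\db|^{\frac{3}{2}}}$. I do not expect a genuine obstacle here, since the structure follows the nullary case. The one subtlety worth care is the size/time bound for the intermediate view $V_{RS}(a,b,c)$, which carries the extra free variable $B$ compared to its nullary counterpart $\sum_b R^\H(a,b)\ztimes S^\L(b,c)$; I must verify that retaining $B$ does not inflate the bound, and it does not, because $B$ remains controlled either through the light degree in $S^\L$ or through the heavy-part count in $R^\H$, yielding the same $\bigO{N^{1+\min\{\eps,1-\eps\}}}$ estimate.
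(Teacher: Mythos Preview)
Your proposal is correct and follows essentially the same approach as the paper: partition in linear time, compute $\triangle_3^{\H\H\H}$ and $\triangle_3^{\L\L\L}$ via a worst-case optimal join in $\bigO{|\db|^{3/2}}$, and bound each view-tree view by $\bigO{N^{1+\min\{\eps,1-\eps\}}}$ exactly as in the nullary case, invoking symmetry for the other two trees. The only minor difference is that the paper observes $V^{\H\L\F}$ can in fact be computed in \emph{linear} time by intersecting $\hat{V}_{RS}$ with $T$ (iterate over $T$ and look up in $\hat{V}_{RS}$), rather than your $\bigO{N^{1+\min\{\eps,1-\eps\}}}$ bound from iterating over $\hat{V}_{RS}$; this sharpening is immaterial for the overall $\bigO{|\db|^{3/2}}$ conclusion.
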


\begin{proof}
Partitioning the input relations takes $\bigO{|\db|}$ time. 
The queries  $\triangle_3^{\H\H\H}$
and $\triangle_3^{\L\L\L}$ can be computed 
using a worst-case optimal join algorithm
like Leapfrog TrieJoin or Recursive-Join 
in $\bigO{|\db|^{\frac{3}{2}}}$ time~\cite{NgoPRR18}. The remaining 
skew-aware views $\triangle_3^{\H\L\F}$, $\triangle_3^{\F\H\L}$, and $\triangle_3^{\L\F\H}$ are not materialized but represented using auxiliary views. 
Consider the views in the view tree for $\triangle_3^{\H\L\F}$.
Computing $V_{RS}$ and $\hat{V}_{RS}$ takes $\bigO{|\db|^{1+\min\{\eps,1-\eps\}}}$ time, as explained in the proof of Proposition~\ref{prop:preprocessing_step}. 
The view $V^{\H\L\F}$ 
is computed by intersecting $\hat{V}_{RS}$ and $T$
in linear time. 
The same holds for the views in the view trees of $\triangle_3^{\F\H\L}$ and  $\triangle_3^{\L\F\H}$. 
Overall, the preprocessing time is $\bigO{|\db|^{\frac{3}{2}}}$.
\end{proof}

\subsection{Space Complexity}
We analyze the space complexity of the \ivme maintenance strategy for the ternary triangle query.

\begin{proposition}\label{prop:space_complexity_full}
Given a database $\inst{D}$, an \ivme state 
of $\inst{D}$ supporting the maintenance of 
the ternary triangle query takes 
$\bigO{|\db|^{\frac{3}{2}}}$ space.
\end{proposition}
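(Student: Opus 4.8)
The plan is to bound the space of each component of the \ivme state separately and then sum the bounds, which suffices because the state comprises only a constant number of components. Recall that an \ivme state supporting $\triangle_3$ is a tuple $(\eps, N, \dbeps, \inst{V})$, where $\dbeps = \{R^\H, R^\L, S^\H, S^\L, T^\H, T^\L\}$ and $\inst{V}$ is the eleven-view set listed in Figure~\ref{fig:view_definitions_full}. The scalars $\eps$ and $N$ take constant space, and since the parts of each relation form a partition (the union and domain-partition conditions of Definition~\ref{def:loose_relation_partition}), we have $|\dbeps| = |\db|$, contributing $\bigO{|\db|}$ space.

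For the two materialized skew-aware views $\triangle_3^{\H\H\H}$ and $\triangle_3^{\L\L\L}$ I would observe that each is a triangle join over subparts of $R$, $S$, and $T$, so its support is a subset of the support of the full query $\triangle_3$. Since the number of distinct tuples in $\triangle_3$ is at most $|\db|^{\frac{3}{2}}$~\cite{LW:1949}, storing each of these two views in listing form takes $\bigO{|\db|^{\frac{3}{2}}}$ space.

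For the auxiliary views $V_{RS}, V_{ST}, V_{TR}$ (now three-variable) and their projections $\hat{V}_{RS}, \hat{V}_{ST}, \hat{V}_{TR}$, the key step is the same two-sided counting argument as in the proof of Proposition~\ref{prop:space_complexity}, adapted to the extra variable. Take $V_{RS}(a,b,c) = R^\H(a,b)\cdot S^\L(b,c)$: iterating over $R^\H$ and using that $S^\L$ is light on $B$ (each $b$ has fewer than $\frac{3}{2}N^{\eps}$ matching $c$-values) bounds $|V_{RS}| < \frac{3}{2}N^{1+\eps}$, whereas iterating over $S^\L$ and using that $R^\H$ has at most $2N^{1-\eps}$ distinct $A$-values bounds $|V_{RS}| < 2N^{2-\eps}$. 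Taking the smaller of the two gives $\bigO{N^{1+\min\{\eps,1-\eps\}}}$, and since $\hat{V}_{RS}$ is a projection of $V_{RS}$ its size is no larger. As $\min\{\eps,1-\eps\}\le \frac{1}{2}$ and $N = \Theta(|\db|)$, all six views are $\bigO{|\db|^{\frac{3}{2}}}$; the arguments for $V_{ST}, V_{TR}$ and their projections are symmetric.

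Finally, the top views $V^{\H\L\F}, V^{\F\H\L}, V^{\L\F\H}$ are each obtained by intersecting a $\hat{V}$ view with a base relation on the two shared variables (e.g., $V^{\H\L\F}(a,c) = \sum_t \hat{V}_{RS}(a,c)\cdot T^t(c,a)$), so every output tuple projects onto a tuple of that base relation; hence each has size $\bigO{|\db|}$. Summing the $\bigO{|\db|^{\frac{3}{2}}}$, $\bigO{|\db|^{1+\min\{\eps,1-\eps\}}}$, and $\bigO{|\db|}$ contributions over the constant-size state yields the claimed $\bigO{|\db|^{\frac{3}{2}}}$ bound, with the dominant term coming from $\triangle_3^{\H\H\H}$ and $\triangle_3^{\L\L\L}$. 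The main obstacle is precisely the output-size bound for these two full-triangle subviews, which requires the worst-case triangle bound rather than a simple degree argument; everything else reduces to the partition degree bounds and routine projection/intersection reasoning.
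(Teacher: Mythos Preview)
Your proposal is correct and mirrors the paper's own proof essentially step for step: constant-size scalars, $|\dbeps|=|\db|$, the Loomis--Whitney bound for $\triangle_3^{\H\H\H}$ and $\triangle_3^{\L\L\L}$, the two-sided degree-counting argument (as in Proposition~\ref{prop:space_complexity}) for the $V_{RS}$-type and $\hat{V}_{RS}$-type views, and the intersection-with-a-base-relation bound for $V^{\H\L\F}$, $V^{\F\H\L}$, $V^{\L\F\H}$. The only cosmetic difference is that you spell out the three-variable $V_{RS}$ bound explicitly, whereas the paper just cites Proposition~\ref{prop:space_complexity}.
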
  

\begin{proof}
Let $\astate=(\eps, \dbeps, \inst{V}, N)$ be a state of $\db$.
The size of $\eps$ and $N$ is constant while the size of 
$\dbeps$ is $\bigO{|\inst{D}|}$.
Figure~\ref{fig:view_definitions_full} summarizes the space complexities of the materialized views in $\inst{V}$.
The size of each of the skew-aware views $\triangle_3^{\H\H\H}$ and $\triangle_3^{\L\L\L}$ is upper-bounded by $N^{\frac{3}{2}}$, the maximum number of  triangles in a database of size $N$~\cite{LW:1949}. 
The space complexity of the auxiliary views 
$V_{RS}$, $\hat{V}_{RS}$, $V_{ST}$, $\hat{V}_{ST}$, $V_{TR}$, and $\hat{V}_{TR}$ is $\bigO{N^{1+\min\{\eps,1-\eps\}}}$, as discussed in the proof of Proposition~\ref{prop:space_complexity}.
The sizes of the auxiliary views $V^{\H\L\F}$, $V^{\F\H\L}$, and $V^{\L\F\H}$ are upper-bounded by the sizes of $T$, $R$, and $S$, respectively; 
hence, these auxiliary views take $\bigO{|\inst{D}|}$ space.  
From the invariant $|\db|=\Theta(N)$ follows the claimed space complexity $\bigO{|\db|^{\frac{3}{2}}}$.
\end{proof}

\begin{figure}[t]
\begin{center}
\renewcommand{\arraystretch}{1.3}
\setcounter{magicrownumbers}{0}
\begin{tabular}{ll@{\hskip 0.05in}l@{\hspace{0.3cm}}c}
\toprule
\multicolumn{2}{l}{$\textsc{ApplyUpdate}\hspace{0.1mm}(\hspace{0.25mm} \text{update } \delta R^r,\hspace{0.25mm} \text{state } \astate \hspace{0.25mm})$}& & Time \\
\cmidrule{1-2} \cmidrule{4-4}

\rownumber & \LET $\delta R^r = \{(\deltaA,\deltaB) \mapsto \p\}$ \\
\rownumber & \LET $\astate = (\eps, N, 
\{R^\H, R^\L, S^\H, S^\L, T^\H, T^\L\},$ \\
& \TAB\TAB\TAB\SPACE $\{ 
\triangle_3^{\H\H\H}, \triangle_3^{\L\L\L}, 
V_{RS}, \hat{V}_{RS}, V^{\H\L\F}, 
V_{ST}, \hat{V}_{ST}, V^{\F\H\L}, 
V_{TR}, \hat{V}_{TR}, V^{\L\F\H}\})$ \\

\rownumber & \IF ($r$ is $\H$) &\\  

\rownumber & \TAB $\triangle_3^{\H\H\H}(\deltaA,\deltaB,c) = \triangle_3^{\H\H\H}(\deltaA,\deltaB,c) + \delta{R^\H(\deltaA,\deltaB)} \cdot S^\H(\deltaB,c) \cdot T^\H(c,\deltaA)$ & & $\bigO{|\inst{D}|^{1-\eps}}$ \\

\rownumber & \TAB $V_{RS}(\deltaA,\deltaB,c) = V_{RS}(\deltaA,\deltaB,c) + 
\delta{R^\H(\deltaA,\deltaB)} \cdot S^\L(\deltaB,c)$ & & $\bigO{|\inst{D}|^\eps}$ \\

\rownumber & \TAB $\hat{V}_{RS}(\deltaA,c) = \hat{V}_{RS}(\deltaA,c) + 
\delta{R^\H(\deltaA,\deltaB)} \cdot S^\L(\deltaB,c)$ & & $\bigO{|\inst{D}|^\eps}$ \\

\rownumber & \TAB $V^{\H\L\F}(\deltaA,c) = V^{\H\L\F}(\deltaA,c) +
\sum_{t\in\{\H,\L\}}\delta{R^\H(\deltaA,\deltaB)} \cdot S^\L(\deltaB,c) \cdot T^t(c,\deltaA)$ & & $\bigO{|\inst{D}|^\eps}$ \\


\rownumber & \ELSE &\\
\rownumber & \TAB $\triangle_3^{\L\L\L}(\deltaA,\deltaB,c) = \triangle_3^{\L\L\L}(\deltaA,\deltaB,c) + \delta{R^\L(\deltaA,\deltaB)} \cdot S^\L(\deltaB,c) \cdot T^\L(c,\deltaA)$ & & $\bigO{|\inst{D}|^{\eps}}$ \\

\rownumber & \TAB $V_{TR}(c,\deltaA,\deltaB) = V_{TR}(c,\deltaA,\deltaB) + T^\H(c,\deltaA) \cdot \delta{R^\L(\deltaA,\deltaB)}$ & & $\bigO{|\inst{D}|^{1-\eps}}$ \\

\rownumber & \TAB $\hat{V}_{TR}(c,\deltaB) = \hat{V}_{TR}(c,\deltaB) + T^\H(c,\deltaA) \cdot \delta{R^\L(\deltaA,\deltaB)}$ & & $\bigO{|\inst{D}|^{1-\eps}}$ \\

\rownumber & \TAB $V^{\L\F\H}(\deltaB,c) = V^{\L\F\H}(\deltaB,c) +  \sum_{s\in\{\H,\L\}}T^\H(c,\deltaA) \cdot \delta{R^\L(\deltaA,\deltaB)} \cdot S^s(\deltaB,c)$ & & $\bigO{|\inst{D}|^{1-\eps}}$ \\

\rownumber & $V^{\F\H\L}(\deltaA,\deltaB) = V^{\F\H\L}(\deltaA,\deltaB) + \hat{V}_{ST}(\deltaB, \deltaA) \cdot \delta{R^r(\deltaA,\deltaB)}$ & & $\bigO{1}$ \\

\rownumber & $R^r(\deltaA,\deltaB) = R^r(\deltaA,\deltaB) + 
\delta{R}^r(\deltaA,\deltaB)$ & &
$\bigO{1}$ \\

\rownumber & \RETURN $\astate$ \\

\midrule
\multicolumn{2}{r}{Total update time:} & &
$\bigO{|\inst{D}|^{\max\{\eps, 1-\eps\}}}$ \\
\bottomrule
\end{tabular}
\end{center}\vspace{-1em}
\caption{
  (left) Maintaining an \ivme state under a single-tuple update to 
  support constant-delay enumeration of the result 
  of the ternary triangle query. 
\textsc{ApplyUpdate} takes as input an update
$\delta R^r$ to the heavy or light part of $R$, hence $r\in\{\H,\L\}$, and the current 
\ivme state $\astate$ of database $\inst{D}$. 
It returns a new state that results from applying $\delta R^r$ to $\astate$.
(right) The time complexity of computing and applying deltas. 
The procedures for updates to $S$ and $T$ are similar.
}
\label{fig:applyUpdate_full}
\end{figure}

\subsection{Processing a Single-Tuple Update}
\label{sec:single_update_full}
Figure~\ref{fig:applyUpdate_full} shows the procedure for maintaining a current state $\astate$ of the ternary triangle query under an update $\delta R^r(a,b)$. 
If the update affects the heavy part $R^\H$ of $R$, the procedure maintains $\triangle_3^{\H\H\H}$ (Line~4) and propagates $\delta{R^\H}$ through the view tree for $\triangle_3^{\H\L\F}$ (Lines~5-7). If the update affects the light part $R^\L$ of $R$, the procedure maintains $\triangle_3^{\L\L\L}$ (Line~9) and propagates 
$\delta{R^\L}$ through the view tree for $\triangle_3^{\L\F\H}$ (Lines~10-12). Finally, it updates 
$V^{\F\H\L}$ (Line~13) and the part of $R$ affected by $\delta{R^r}$ (Line~14).
The views $V_{ST}$ and $\hat{V}_{ST}$ remain unchanged as they have no reference to $R^\H$ or $R^\L$.

\begin{proposition}\label{prop:single_step_time_full}
Given a database $\db$, $\eps\in[0,1]$, and an \ivme state 
$\astate$ of $\inst{D}$ supporting the maintenance of 
the ternary triangle query, \ivme maintains $\astate$ under a single-tuple update to any input relation in $\bigO{|\db|^{\max\{\eps,1-\eps\}}}$ time.
\end{proposition}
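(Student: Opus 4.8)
The plan is to verify, line by line in Figure~\ref{fig:applyUpdate_full}, that each delta computation and view maintenance step runs in time $\bigO{|\db|^{\max\{\eps,1-\eps\}}}$, then take the maximum over all lines. Since the query and auxiliary views are symmetric in the roles of $R$, $S$, $T$, it suffices to analyze an update $\delta R^r=\{(\deltaA,\deltaB)\mapsto\p\}$; the cases for $S$ and $T$ follow by symmetry. The two key upper bounds I would invoke throughout are those stated right after Definition~\ref{def:ivme_state}: for the heavy part, $|\pi_A R^\H|\leq 2N^{1-\eps}$ (and analogously for the heavy parts of $S$, $T$), and for the light part, $|\sigma_{A=a}R^\L|<\tfrac32 N^\eps$ for every $A$-value $a$ (and analogously for $S$, $T$). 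Together with the size invariant $|\db|=\Theta(N)$, these convert every per-value degree bound into a bound in $|\db|$.

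First I would handle the case $r=\H$ (Lines 4--7). Maintaining $\triangle_3^{\H\H\H}$ (Line~4) fixes $\deltaA,\deltaB$ and iterates over the $C$-values $c$ with $S^\H(\deltaB,c)\neq 0$ and $T^\H(c,\deltaA)\neq 0$; bounding by the number of distinct $C$-values in $T^\H$, which is at most $2N^{1-\eps}$, gives $\bigO{|\db|^{1-\eps}}$. Lines 5 and 6 maintain $V_{RS}$ and $\hat V_{RS}$ by iterating over the $C$-values paired with $\deltaB$ in $S^\L$, of which there are fewer than $\tfrac32 N^\eps$, giving $\bigO{|\db|^\eps}$; each such $c$ triggers a constant-time insertion into the respective view. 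Line~7 maintains $V^{\H\L\F}(\deltaA,c)$ by again ranging over the same light $C$-values in $S^\L$ and performing, for each, a constant-time lookup into $T^\H(c,\deltaA)$ and $T^\L(c,\deltaA)$; this is again $\bigO{|\db|^\eps}$. The symmetric case $r=\L$ (Lines 9--12) is dual: $\triangle_3^{\L\L\L}$ and $V^{\L\F\H}$ range over the light $C$-values paired with $\deltaB$ in $S^\L$, giving $\bigO{|\db|^\eps}$ for Line~9, whereas $V_{TR}$, $\hat V_{TR}$, and the $T^\H$-driven part of $V^{\L\F\H}$ range over the heavy $C$-values in $T^\H$ (at most $2N^{1-\eps}$), giving $\bigO{|\db|^{1-\eps}}$ for Lines 10--12. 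Lines 13 and 14 are pure constant-time lookups and writes: $V^{\F\H\L}$ is updated via a single lookup in $\hat V_{ST}(\deltaB,\deltaA)$, and $R^r$ is updated in place.

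The delicate point — and the one I would flag as the main obstacle — is Line~13 together with the fact that the views $V_{ST}$ and $\hat V_{ST}$ are \emph{not} touched under an update to $R$. I would argue that this is exactly where factorization pays off relative to the failed naive approach discussed before the proposition: the view tree for $\triangle_3^{\F\H\L}$ is rooted at $V^{\F\H\L}(a,b)=\sum_{r}R^r(a,b)\cdot\hat V_{ST}(b,a)$, so an update to $R^r$ propagates only into the \emph{root} $V^{\F\H\L}$ and needs merely the precomputed value $\hat V_{ST}(\deltaB,\deltaA)$, a constant-time lookup, rather than recomputing the inner product over $C$-values. The inner views $V_{ST},\hat V_{ST}$ depend only on $S$ and $T$, so they are correctly left unchanged. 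I would make sure to note that no line requires materializing a listing of $C$-values of size larger than the degree bounds above, and in particular that Line~7 avoids enumerating the full $T$-neighborhood by exploiting that $S^\L$ contributes at most $\tfrac32 N^\eps$ relevant $C$-values.

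Finally I would collect the per-line bounds: every line is either $\bigO{1}$, $\bigO{|\db|^\eps}$, or $\bigO{|\db|^{1-\eps}}$. The total update time is the sum, hence the maximum, of these, namely $\bigO{|\db|^{\max\{\eps,1-\eps\}}}$, as claimed. Appealing to symmetry for updates to $S$ and $T$ completes the argument.
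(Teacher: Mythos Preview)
Your proposal is correct and follows essentially the same line-by-line analysis as the paper's proof: each maintenance step is bounded by iterating over either the at most $2N^{1-\eps}$ heavy $C$-values in $T^\H$ or the fewer than $\tfrac32 N^\eps$ light $C$-values in $S^\L$, with Lines~13--14 constant, yielding $\bigO{|\db|^{\max\{\eps,1-\eps\}}}$ overall via $|\db|=\Theta(N)$. The only minor wobble is your sentence that lumps $V^{\L\F\H}$ with the light-$C$ iteration before correcting to the $T^\H$-driven bound; since you ultimately assign $\bigO{|\db|^{1-\eps}}$ to Line~12 (which is right, as $S^s$ there includes $S^\H$ and need not be light), this does not affect correctness.
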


\begin{proof}
Figure~\ref{fig:applyUpdate_full} shows the time complexity of each maintenance statement in the \textsc{ApplyUpdate} procedure, for a given single-tuple update $\delta R^r = \{(\deltaA, \deltaB) \mapsto \p\}$ with $r \in \{\H,\L\}$ and a state 
$\astate=(\eps, \dbeps, \inst{V}, N)$ of $\db$. 
This complexity is determined by the number of $C$-values that need to be iterated over during computing and applying the deltas of skew-aware views. 

We first analyze the case when $\delta{R^r}$ affects the heavy part $R^\H$ of $R$.
The skew-aware view $\triangle_3^{\H\H\H}$ (Line~4) is maintained by 
iterating over $C$-values
paired with $\deltaA$ in $T^\H$ and for each such $C$-value, 
 doing constant-time lookups in the other relations and views in 
 the maintenance statement. Since $T^\H$ is heavy 
on $C$, the number of distinct $C$-values iterated over in 
$T^\H$ is at most $2N^{1-\eps}$. Hence, the maintenance 
requires $\bigO{N^{1-\eps}}$ time. 
Each of the auxiliary views $V_{RS}$, $\hat{V}_{RS}$, and $V^{\H\L\F}$ (Lines~5-7)  
is maintained by iterating over the $C$-values paired with $\deltaB$ in
$S^\L$ and doing constant-time lookups in the remaining relations 
and views in the corresponding maintenance statement. 
Since $S^\L$ is light on $B$,  the $B$-value $\deltaB$
is paired with less than $\frac{3}{2}N^{\eps}$ $C$-values in $S^\L$.
Thus, the auxiliary views $V_{RS}$, $\hat{V}_{RS}$, and $V^{\H\L\F}$
are maintained in $\bigO{N^{\eps}}$ time. 

We now consider the case when $\delta{R^r}$ affects the light part $R^\L$ of $R$. Maintaining $\triangle_3^{\L\L\L}$ (Line~9) requires iterating over less than $\frac{3}{2}N^{\eps}$ 
distinct $C$-values paired with $\deltaB$ in $S^\L$, which means that the maintenance 
requires $\bigO{N^{\eps}}$ time. Maintaining each of the auxiliary views 
 $V_{TR}$, $\hat{V}_{TR}$, and $V^{\L\F\H}$ (Line~10) requires 
 iterating over at most $2N^{1-\eps}$ distinct $C$-values paired with $\deltaA$ in $T^\H$. 
 Thus, these views can be maintained in time $\bigO{N^{1-\eps}}$.

Maintaining $V^{\F\H\L}$ and the part of $R$ affected by $\delta R^r$ takes constant time. 
Then, the total execution time of the procedure \textsc{ApplyUpdate}
in Figure~\ref{fig:applyUpdate_full} is $\bigO{N^{\max\{\eps,1-\eps\}}}$. 
From the invariant $|\db| = \Theta(N)$ follows the claimed time complexity $\bigO{|\db|^{\max\{\eps,1-\eps\}}}$. Due to the symmetry of the triangle query, the analysis for updates to parts of relations $S$ and $T$ is similar.
\end{proof}

\subsection{Enumeration Delay}

The materialized views stored in an \ivme state allow us to enumerate the tuples in the
result of the ternary triangle query with constant delay.

\begin{proposition}\label{prop:delay_full}
Given an \ivme state
$\astate$ supporting the maintenance of 
the ternary triangle query, \ivme enumerates
 the result of the query from $\astate$
with  $\bigO{1}$ delay.
\end{proposition}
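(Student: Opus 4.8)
The plan is to exploit the structural fact noted just before the proposition: the result of $\triangle_3$ is the \emph{disjoint} union of the results of the five skew-aware views $\triangle_3^{\H\H\H}$, $\triangle_3^{\L\L\L}$, $\triangle_3^{\F\H\L}$, $\triangle_3^{\L\F\H}$, and $\triangle_3^{\H\L\F}$. Disjointness holds because, by the domain-partition condition of Definition~\ref{def:loose_relation_partition}, a triangle $(a,b,c)$ places $(a,b)$ into exactly one of $R^\H,R^\L$, places $(b,c)$ into exactly one of $S^\H,S^\L$, and places $(c,a)$ into exactly one of $T^\H,T^\L$; hence $(a,b,c)$ contributes to exactly one of the five views. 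Consequently, enumerating the five view results one after another reports every output tuple exactly once, so \emph{no cross-view duplicate elimination is needed}. It therefore suffices to show that (i) each of the five views enumerates with $\bigO{1}$ delay and (ii) the transition from one view to the next takes $\bigO{1}$ time; (ii) is immediate, since starting a fresh enumeration is a constant-time operation, and a bounded number ($5$) of such transitions occurs.

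The two views $\triangle_3^{\H\H\H}$ and $\triangle_3^{\L\L\L}$ are materialized in listing form, so by the computational model their tuples enumerate with constant delay. For each of the remaining three views I would enumerate directly from its view tree by a doubly nested loop; I describe $\triangle_3^{\H\L\F}$, the other two being symmetric. The outer loop ranges over the pairs $(a,c)\in V^{\H\L\F}$, and for each such pair the inner loop ranges over the $B$-values $b$ with $V_{RS}(a,b,c)\neq 0$, retrieved through the index of $V_{RS}$ on $(A,C)$ guaranteed by the computational model. Each produced triangle $(a,b,c)$ is reported with multiplicity $V_{RS}(a,b,c)\cdot T(c,a)$, where $T(c,a)$ is fetched by a constant-time lookup; this value equals $R(a,b)\cdot S(b,c)\cdot T(c,a)$ because $R^\H(a,b)=R(a,b)$ and $S^\L(b,c)=S(b,c)$ on the support of $V_{RS}$.

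The crux is to argue that this nested loop achieves constant delay. Within one inner loop, consecutive $B$-values appear with constant delay, and advancing the outer loop costs only fetching the next $(a,c)$ from $V^{\H\L\F}$, repositioning the $V_{RS}$-index to the new key, and fetching the first matching $b$ — all constant time. The one situation that would break the bound is an outer pair $(a,c)$ whose inner loop is empty, since a run of such pairs would separate two consecutive output tuples by a non-constant gap. I would rule this out as follows: by definition $V^{\H\L\F}(a,c)=\hat{V}_{RS}(a,c)\cdot T(c,a)$ and $\hat{V}_{RS}(a,c)=\sum_b V_{RS}(a,b,c)$; because all view multiplicities are strictly positive (the footnote restriction rules out cancellation), membership $(a,c)\in V^{\H\L\F}$ forces $\hat{V}_{RS}(a,c)\neq 0$ and hence the existence of at least one $b$ with $V_{RS}(a,b,c)\neq 0$. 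Thus every outer iteration yields at least one output tuple, and consecutive tuples are separated by only constant time.

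I expect the main obstacle to be precisely this non-emptiness argument: certifying that the set of outer keys stored in $V^{\H\L\F}$ is exactly the projection onto $(A,C)$ of the triangles in $\triangle_3^{\H\L\F}$, so that no outer iteration is wasted, and confirming that the maintained indices on the view-tree relations indeed support constant-delay access to the matching inner values. Once these are in place, the $\bigO{1}$ delay for $\triangle_3$ follows by composing the per-view constant-delay enumerations across the disjoint union with constant-time view transitions. The symmetric treatments of $\triangle_3^{\F\H\L}$ (outer over $(a,b)\in V^{\F\H\L}$, inner over $c$ via $V_{ST}$) and $\triangle_3^{\L\F\H}$ (outer over $(b,c)\in V^{\L\F\H}$, inner over $a$ via $V_{TR}$) are then routine.
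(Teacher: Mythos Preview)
Your proposal is correct and follows essentially the same approach as the paper: enumerate the five disjoint skew-aware views in sequence, with the two materialized ones handled directly and the three view-tree views handled by an outer loop over the root relation and an inner loop over the matching values in $V_{RS}$ (respectively $V_{ST}$, $V_{TR}$). The paper computes the multiplicity by looking up the three leaf relations $R^\H$, $S^\L$, $T$ separately rather than via $V_{RS}(a,b,c)\cdot T(c,a)$, but this is equivalent; your explicit non-emptiness argument for the inner loop (via positivity of multiplicities and the definition $V^{\H\L\F}(a,c)=\hat{V}_{RS}(a,c)\cdot T(c,a)$) is in fact more careful than the paper's proof, which leaves this step implicit.
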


\begin{proof}
The results of skew-aware views are disjoint, so the result of the ternary triangle query can be enumerated by enumerating the result 
of each skew-aware view, one after the other. 
Since the number of such skew-aware views is independent of the data size, 
it suffices to show that the result of each skew-aware view can be enumerated 
with constant delay to achieve an overall constant delay enumeration for the ternary triangle query.

The results of the skew-aware views $\triangle_3^{\H\H\H}$ and $\triangle_3^{\L\L\L}$ are materialized using the listing representation, so they admit constant-delay enumeration.
 
We next focus on the enumeration of the result of the skew-aware view $\triangle_3^{\H\L\F}$.
The remaining skew-aware views, $\triangle_3^{\F\H\L}$ and $\triangle_3^{\L\F\H}$, are treated similarly.  
The enumeration of the result of $\triangle_3^{\H\L\F}$ is supported 
by the materialized views in its view tree from Figure~\ref{fig:view_definitions_full} (left).  
The root $V^{\H\L\F}$  materializes the set of all tuples $(a,c)$
in the projection of the result of $\triangle_3^{\H\L\F}$ onto $(A,C)$. 
The view $V_{RS}$ serves to retrieve all $B$-values in the result that are paired with a given tuple $(a,c)$.
Thus, enumerating the result of $\triangle_3^{\H\L\F}$ requires iterating over the $(A,C)$-values in $V^{\H\L\F}$, and for each such tuple $(a,c)$, iterating over the $B$-values paired with $(a,c)$ in $V_{RS}$.
Based on our computational model (see Section~\ref{sec:computational_model}), 
the $B$-values paired with $(a,c)$ in $V_{RS}$ are enumerable with constant delay.  
For each obtained triple $(a,b,c)$, \ivme retrieves the correct multiplicity by looking up 
the multiplicities of the tuples $(a,b)$, $(b,c)$, and $(c,a)$ in the leaf relations
$R^\H$, $S^\L$, and $T$ (i.e., the sum of the multiplicities of $(c,a)$ in $T^\H$ and $T^\L$), respectively, and multiplying them. 
These lookups are constant-time operations. Hence, the overall enumeration delay is constant.  
\end{proof}

\subsection{Summing Up}

Our main result in Theorem~\ref{theo:main_result_triangle} for the ternary triangle query
follows from Propositions~\ref{prop:preprocessing_step_full}, \ref{prop:space_complexity_full}, \ref{prop:single_step_time_full}, and \ref{prop:delay_full} shown in the previous subsections, complemented by 
Proposition~\ref{prop:amortized_update_time}, which shows that the amortized
rebalancing time is $\bigO{|\inst{D}|^{\max\{\eps, 1-\eps\}}}$.

\nop{
By Proposition \ref{prop:preprocessing_step_full}, the time 
to construct the initial \ivme state of a database $\inst{D}$ supporting the maintenance of the 
ternary 
triangle query is $\bigO{|\inst{D}|^{\frac{3}{2}}}$.  
Likewise, the space complexity of \ivme states is $\bigO{|\inst{D}|^{\frac{3}{2}}}$,
as shown in Proposition~\ref{prop:space_complexity_full}.
The result of the ternary triangle query can be enumerated from 
an \ivme state  with constant
delay, as stated in Proposition~\ref{prop:delay_full}. 
By Proposition~\ref{prop:single_step_time_full}, an \ivme state 
can be maintained in time $\bigO{|\inst{D}|^{\max\{\eps, 1-\eps\}}}$
under a single-tuple update.
}

\nop{
}

\nop{
}


\section{Maintaining the Binary Triangle Query}
\label{sec:binary}

\begin{figure}[t!]
  \begin{minipage}[b]{\textwidth}
  \begin{center}
    \renewcommand{\arraystretch}{1.3}  
    \begin{tabular}{@{\hskip 0.0in}l@{\hskip 0.3in}l@{\hskip 0.0in}}
      \toprule
      Materialized View Definition & Space Complexity \\    
      \midrule 
      $\triangle_2^{\H\H\H}(a,b) = 
      \sum_{s,t\in\{\H,\L\}}\sum_{c} R^\H(a,b) \cdot S^{\H{s}}(b,c) \cdot T^{\H{t}}(c,a)$ & 
      $\bigO{|\inst{D}|^{\min\{1,2-2\eps\}}}$ \\[2pt]

      $\triangle_2^{\L\L\L}(a,b) = 
      \sum_{s,t\in\{\H,\L\}}\sum_{c} R^\L(a,b) \cdot S^{\L{s}}(b,c) \cdot T^{\L{t}}(c,a)$  & 
      $\bigO{|\inst{D}|}$ \\[2pt]

      $\triangle_2^{\H(\L\L)\F}(a,b) = 
      \sum_{t\in\{\H,\L\}^2}\sum_{c} R^\H(a,b) \cdot S^{\L\L}(b,c) \cdot T^t(c,a)$  & 
      $\bigO{|\inst{D}|}$ \\[3pt]
      
            $\triangle_2^{\L\F(\H\H)(a,b) = 
      \sum_{s\in\{\H,\L\}^2}\sum_{c} R^\L(a,b) \cdot S^{s}(b,c) \cdot T^{\H\H}(c,a)}$  & 
      $\bigO{|\inst{D}|}$ \\[3pt]
      
      View tree for $\triangle_2^{\H(\L\H)\F}(a,b) = 
      \sum_{t\in\{\H,\L\}^2}\sum_{c} R^\H(a,b) \cdot S^{\L\H}(b,c) \cdot T^t(c,a)$\\[2pt]

      $\TAB\TAB V_{RS}(a,b,c) = R^\H(a,b) \cdot S^{\L\H}(b,c)$ & 
      $\bigO{|\inst{D}|^{1+\min{\{\,\eps, 1-\eps \,\}}}}$ \\[2pt]

      $\TAB\TAB \hat{V}_{RS}(a,c) = \sum_b V_{RS}(a,b,c)$ & 
      $\bigO{|\inst{D}|^{1+\min{\{\,\eps, 1-\eps \,\}}}}$ \\[2pt]

      $\TAB\TAB V^{\H(\L\H)\F}(a,c) =
      \sum_{t\in\{\H,\L\}^2}\hat{V}_{RS}(a,c) \cdot T^t(c,a)$ & 
      $\bigO{|\inst{D}|}$\\[2pt]

      $\TAB\TAB \hat{V}^{\H(\L\H)\F}(c) =
      \sum_{a} V^{\H(\L\H)\F}(a,c)$ & 
      $\bigO{|\inst{D}|^{1-\eps}}$\\[3pt]

      View tree for $\triangle_2^{\F\H\L}(a,b) =  
      \sum_{r,s\in\{\H,\L\}}\sum_{c} R^r(a,b) \cdot S^{\H{s}}(b,c) \cdot T^\L(c,a)$ & \\[2pt]

      $\TAB\TAB V_{ST}(b,a) = \sum_{s,t\in\{\H,\L\}} \sum_{c} S^{\H{s}}(b,c) \cdot T^{\L{t}}(c,a)$ & 
      $\bigO{|\inst{D}|^{1+\min{\{\,\eps, 1-\eps \,\}}}}$ \\[2pt]

      $\TAB\TAB V^{\F\H\L}(a,b) =
      \sum_{r\in\{\H,\L\}} R^r(a,b) \cdot V_{ST}(b,a)$ & 
      $\bigO{|\inst{D}|}$\\[3pt]

      View tree for $\triangle_2^{\L\F(\H\L)}(a,b) =  
       \sum_{s\in\{\H,\L\}^2} \sum_{c} R^\L(a,b) \cdot S^s(b,c) \cdot T^{\H\L}(c,a)$ & \\[2pt]

      $\TAB\TAB  V_{TR}(c,a,b) = T^{\H\L}(c,a) \cdot R^\L(a,b)$ & 
      $\bigO{|\inst{D}|^{1+\min{\{\,\eps, 1-\eps \,\}}}}$ \\[2pt]

      $\TAB\TAB \hat{V}_{TR}(c,b) = \sum_a V_{TR}(c,a,b)$ & 
      $\bigO{|\inst{D}|^{1+\min{\{\,\eps, 1-\eps \,\}}}}$ \\[2pt]

      $\TAB\TAB V^{\L\F(\H\L})(b,c) =  
      \sum_{s\in\{\H,\L\}^2} S^s(b,c) \cdot \hat{V}_{TR}(c,b)$ & 
      $\bigO{|\inst{D}|}$\\[2pt]

      $\TAB\TAB \hat{V}^{\L\F(\H\L})(c) =  
      \sum_{b} V^{\L\F\H}(b,c)$ & 
      $\bigO{|\inst{D}|^{1-\eps}}$\\[3pt]

      \bottomrule    
    \end{tabular}
  \end{center}
  \end{minipage}
      
   \vspace{0.2cm}
  \begin{minipage}{\textwidth}
  \begin{center}
    \begin{tikzpicture}
    \begin{scope}[xshift=-1cm]
      \node at (0,1.1) (G) {View tree for $\triangle_2^{\H(\L\H)\F}$};
      \node at (0, 0.5) (root) {$\hat{V}^{\H(\L\H)\F}(c)$};
      \node at (0, -0.5) (A) {$V^{\H(\L\H)\F}(a,c)$} edge[-] (root);
      \node at (-1, -1.5) (C) {$\hat{V}_{RS}(a,c)$} edge[-] (A);
      \node at (1, -1.65) (B) {$\sum\limits_{t\in\{\H,\L\}^2}\!\!\!\!\!\!T^t(c,a)$} edge[-] (A);
      \node at (-1, -2.5) (D) {$V_{RS}(a,b,c)$} edge[-] (C);
      \node at (-2, -3.5) (E) {$R^\H(a,b)$} edge[-] (D);
      \node at (0, -3.5) (F) {$S^{\L\H}(b,c)$} edge[-] (D);            
    \end{scope}

    \begin{scope}[xshift=4cm,yshift=1cm]
      \node at (0,0.1) (G) {View tree for $\triangle_2^{\F\H\L}$};
      \node at (0, -0.5) (A) {$V^{\F\H\L}(a,b)$};
      \node at (1, -1.65) (B) {$\sum\limits_{r\in\{\H,\L\}}\!\!\!\!\!R^r(a,b)$} edge[-] (A);      
      \node at (-1, -1.5) (D) {$V_{ST}(b,a)$} edge[-] (A);
      \node at (-2, -2.65) (E) {$\sum\limits_{s\in\{\H,\L\}}\!\!\!\!\!S^{\H{s}}(b,c)$} edge[-] (D);
      \node at (0, -2.65) (F) {$\sum\limits_{t\in\{\H,\L\}}\!\!\!\!\!T^{\L{t}}(c,a)$} edge[-] (D);      
    \end{scope}

    \begin{scope}[xshift=8.3cm]
      \node at (0,1.1) (G) {View tree for $\triangle_2^{\L\F(\H\L)}$};
      \node at (0, 0.5) (root) {$\hat{V}^{\L\F(\H\L)}(c)$};
      \node at (0, -0.5) (A) {$V^{\L\F(\H\L)}(b,c)$} edge[-] (root);
      \node at (-1, -1.5) (C) {$\hat{V}_{TR}(c,b)$} edge[-] (A);
      \node at (1, -1.65) (B) {$\sum\limits_{s\in\{\H,\L\}^2}\!\!\!\!\!S^s(b,c)$} edge[-] (A);
      \node at (-1, -2.5) (D) {$V_{TR}(c,a,b)$} edge[-] (C);
      \node at (-2, -3.5) (E) {$T^{\H\L}(c,a)$} edge[-] (D);
      \node at (0, -3.5) (F) {$R^\L(a,b)$} edge[-] (D);     
    \end{scope}
    \end{tikzpicture}
    \end{center}
\end{minipage}

\vspace{-0.35em}
  \caption{
    (top) 
    The materialized views
    $\inst{V} = \{ 
    \triangle_2^{\H\H\H},$ $\triangle_2^{\L\L\L},$ $\triangle_2^{\H(\L\L)\F},$
      $\triangle_2^{\L\F(\H\H)},$
    $V_{RS}, \hat{V}_{RS},$ $V^{\H(\L\H)\F},$ $\hat{V}^{\H(\L\H)\F},$ 
    $V_{ST},$ $V^{\F\H\L},$
    $V_{TR},$ $\hat{V}_{TR},$ $V^{\L\F(\H\L)},$ $\hat{V}^{\L\F(\H\L)} \}$ 
    supporting the maintenance of the binary triangle query.
    The set $\inst{V}$ is part of an \ivme state of database $\inst{D}$. 
    (bottom) The view trees supporting the maintenance and enumeration of the results of
    $\triangle_2^{\H(\L\H)\F}$, $\triangle_2^{\F\H\L}$, and $\triangle_2^{\L\F(\H\L)}$.  }

  \label{fig:view_definitions_binary}
\end{figure}
We now consider the maintenance of the binary triangle query 
$$\triangle_2(a,b) = \sum_{c} R(a,b)\ztimes S(b,c) \ztimes T(c,a)$$
under a single-tuple update. 
Compared to the strategy for the ternary triangle query, the maintenance 
of the binary query faces two new challenges.
First, the results of the skew-aware views are not disjoint anymore, which causes difficulties in the enumeration of distinct $(A,B)$-values with correct multiplicities.
Second, among the view trees created for the ternary triangle query from Figure~\ref{fig:view_definitions_full}, only the view tree for $\triangle_3^{\F\H\L}$ allows constant-delay enumeration of $(A,B)$-values, while the view trees for 
$\triangle_3^{\H\L\F}$ and $\triangle_3^{\L\F\H}$ allow constant-delay enumeration of $(A,C)$- and respectively $(B,C)$-values but not $(A,B)$-values. 

To overcome the first difficulty, we use the union algorithm~\cite{Durand:CSL:11}
presented in Section~\ref{sec:union_algorithm}.
We modify this algorithm to report distinct tuples in the union of the skew-aware views together with their multiplicity. 
Since the number of skew-aware views is independent of the data size, the overall enumeration delay is the maximum delay of the individual skew-aware views. 

To overcome the second difficulty, we observe that the view trees for $\triangle_3^{\H\L\F}$ and $\triangle_3^{\L\F\H}$ from Figure~\ref{fig:view_definitions_full} both support constant-time lookups and constant-delay enumeration of $(A,B)$-values for a fixed $C$-value.
Based on this observation, we can decompose each of the two view trees into a union of view trees instantiated for the distinct $C$-values appearing at its root view. 
For each union of instantiated view trees,
we can use the union algorithm to enumerate the distinct $(A, B)$ pairs with the delay that is linear in the number of these view trees, that is, the number of distinct $C$-values at the root view.
In the view tree for $\triangle_3^{\H\L\F}$, the number of distinct $C$-values at the root can be linear in the database size; thus, the enumeration delay for $\triangle_3^{\H\L\F}$ is $\bigO{N}$. 
In the view tree for $\triangle_3^{\L\F\H}$, the number of distinct $C$-values is at most $2N^{1-\eps}$ due to the heavy part condition on $C$ in $T^\H$; thus, the enumeration
delay for $\triangle_3^{\L\F\H}$ is $\bigO{N^{1-\eps}}$. Overall, the enumeration delay in this case is linear.

We can improve this enumeration delay using the enumeration algorithm with hop iterators described in Section~\ref{sec:skip_pointers}. 
In this case, this algorithm can enumerate the distinct $(A, B)$ pairs with the delay determined by the $\textsc{CandidateBuckets}$ function, see Lemma~\ref{lem:enumerate_pointer_new}. 
The $\textsc{CandidateBuckets}$ function takes any $(A,B)$-value and returns a set of indices that identify the instantiated view trees that may contain the given $(A,B)$-value. The default implementation of this function considers all such view trees, but exploiting the skew information can asymptotically reduce their number.
For the view tree for $\triangle_3^{\H\L\F}$ and a fixed $(A,B)$-value, 
$\textsc{CandidateBuckets}$ can compute the matching $C$-values in the materialized view $V_{RS}$ joining $R^\H$ and $S^\L$ and retain only those $C$-values that exist in the root $V^{\H\L\F}$.
For a fixed $(A,B)$-value, the number of such $C$-values is less than $\frac{3}{2}N^{\eps}$ due to the light part condition on $B$ in $S^\L$, which gives the $\bigO{N^{\eps}}$ enumeration delay for the view $\triangle_3^{\H\L\F}$.
Similarly, for the view tree for $\triangle_3^{\L\F\H}$ and a fixed $(A,B)$-value,
$\textsc{CandidateBuckets}$ can compute the matching $C$-values in the materialized view $V_{TR}$ joining $T^\H$ and $R^\L$ and retain only those $C$-values that exist in the root $V^{\L\F\H}$.
The number of such $C$-values is at most $2N^{1-\eps}$ due to the heavy part condition on $C$ in $T^{\H}$, 
which gives the $\bigO{N^{1-\eps}}$ enumeration delay for the view $\triangle_3^{\L\F\H}$.  
Overall, the enumeration algorithm with hop pointers in this case gives $\bigO{N^{\max\{\eps,1-\eps\}}}$ delay.

To further improve the enumeration delay to $\bigO{N^{\min\{\eps, 1-\eps\}}}$
in both cases, we refine our partitioning strategy to use 
double partitioning for $S$ on $(B,C)$ and for $T$ on $(C,A)$.
This refinement allows us to further decompose 
the skew-aware view $\triangle_3^{\H\L\F}$ into two parts: 
one part that involves $S^{\L\H}$ and ensures the number of distinct $C$-values paired with any $(A,B)$-value, thus also the enumeration delay, 
is $\bigO{N^{\min\{\eps, 1-\eps\}}}$; and
another part that involves $S^{\L\L}$ and ensures the number 
of $B$-values paired with any $C$-value in $S^{\L\L}$ is $\bigO{N^\eps}$, which enables the materialization of this refined skew-aware view and enumeration with constant delay.
Similarly, 
we decompose the skew-aware view $\triangle_3^{\L\F\H}$
into one part that involves $T^{\H\L}$ and guarantees   
$\bigO{N^{\min\{\eps, 1-\eps\}}}$ enumeration delay,
and another part that involves $T^{\H\H}$ and enables 
its materialization and constant-delay enumeration.
Overall, our maintenance strategy for the binary triangle query 
that uses double partitioning for $S$ and $T$ 
achieves $\bigO{N^{\min\{\eps, 1-\eps\}}}$ enumeration delay.

We explain the \ivme strategy for the binary triangle query in more detail. 
The strategy uses single partitioning for relation  $R$
and double partitioning for relations $S$ and $T$.
The partition threshold is the same as for the nullary triangle query.
Figure~\ref{fig:view_definitions_binary} shows the definition and space complexity of the views supporting the maintenance of the binary triangle query. 
The skew-aware views $\triangle_2^{\H\H\H}$, $\triangle_2^{\L\L\L}$, 
$\triangle_2^{\H(\L\L)\F}$, $\triangle_2^{\L\F(\H\H)}$, and 
$\triangle_2^{\F\H\L}$ are materialized and enumerable with constant delay. 
The views $\triangle_2^{\H(\L\H)\F}$ and $\triangle_2^{\L\F(\H\L)}$ are represented as view trees consisting of auxiliary views that support the maintenance and enumeration of the results of 
$\triangle_2^{\H(\L\H)\F}$ and 
$\triangle_2^{\L\F(\H\L)}$.

The \ivme state supporting the maintenance of the binary triangle query has the partitions
$\dbeps = \{R^\H, R^\L, 
\\
S^{\H\H}, S^{\H\L}, S^{\L\H}, S^{\L\L}, 
T^{\H\H}, T^{\H\L}, T^{\L\H}, T^{\L\L}\}$ of $R$ on  $A$,
 of $S$ on $(B,C)$, and of $T$ on $(C,A)$; 
$\inst{V} = \{ \triangle_2^{\H\H\H}, \triangle_2^{\L\L\L}, 
\\
\triangle_2^{\H(\L\L)\F}, \triangle_2^{\L\F(\H\H)}, 
V_{RS}, \hat{V}_{RS}, V^{\H(\L\H)\F}, \hat{V}^{\H(\L\H)\F},
V_{ST}, V^{\F\H\L},
V_{TR}, \hat{V}_{TR}, V^{\L\F(\H\L)}, \hat{V}^{\L\F(\H\L)} \}$.

The following complexity results follow mainly from the analysis of the \ivme algorithm for the ternary triangle query in the proofs of Propositions~\ref{prop:preprocessing_step_full}, \ref{prop:space_complexity_full}, and \ref{prop:single_step_time_full}.

\subsection{Preprocessing Stage} 
\label{sec:preproc_binary}
The preprocessing stage builds the initial \ivme state $\astate = (\eps, \dbeps, \inst{V}, N)$ of database $\db$ supporting the maintenance of the binary triangle query. This step first partitions $R$ on $A$, $S$ on $(B,C)$, and $T$ 
on $(C,A)$ and then computes the materialized views in $\inst{V}$ from Figure~\ref{fig:view_definitions_binary} before processing any update.

\begin{proposition}\label{prop:preprocessing_step_binary}
Given a database $\db$ and $\eps\in[0,1]$, constructing the initial \ivme state of $\inst{D}$ supporting the maintenance of the binary triangle query takes $\bigO{|\db|^{\frac{3}{2}}}$ time.
\end{proposition}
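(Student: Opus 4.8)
The plan is to follow the same template as the preprocessing analyses for the nullary and ternary triangle queries (Propositions~\ref{prop:preprocessing_step} and~\ref{prop:preprocessing_step_full}), now accounting for the additional materialized skew-aware views and the double partitioning of $S$ and $T$. First I would observe that retrieving $|\db|$ and computing $N = 2\ztimes|\db|+1$ take constant time. Single partitioning $R$ on $A$ and double partitioning $S$ on $(B,C)$ and $T$ on $(C,A)$ take linear time: each relation is strictly partitioned on a single variable in linear time (Definition~\ref{def:loose_relation_partition}), and forming each double partition amounts to intersecting the two single partitions (Definition~\ref{def:loose_double_relation_partition}), again in linear time.

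Next I would bound the cost of the materialized skew-aware views $\triangle_2^{\H\H\H}$, $\triangle_2^{\L\L\L}$, $\triangle_2^{\H(\L\L)\F}$, and $\triangle_2^{\L\F(\H\H)}$. Each of these is a binary triangle query over fixed relation parts, i.e., the projection onto $(A,B)$ of a ternary triangle join summed over $C$. Using a worst-case optimal join algorithm such as Leapfrog TrieJoin or Recursive-Join~\cite{NgoPRR18}, I can enumerate all contributing triangles in $\bigO{|\db|^{\frac{3}{2}}}$ time and then aggregate the multiplicities over $C$ per $(A,B)$ pair; since the number of triangles is at most $|\db|^{\frac{3}{2}}$ by the Loomis--Whitney bound~\cite{LW:1949}, the aggregation adds at most $\bigO{|\db|^{\frac{3}{2}}}$ time.

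Then I would handle the three view trees in Figure~\ref{fig:view_definitions_binary}. The auxiliary views $V_{RS}$, $\hat{V}_{RS}$, $V_{ST}$, $V_{TR}$, and $\hat{V}_{TR}$ are exactly of the form analyzed in the proof of Proposition~\ref{prop:preprocessing_step}; each joins a heavy part with a light part on the shared variable and is therefore computable in $\bigO{|\db|^{1+\min\{\eps,1-\eps\}}}$ time, which is at most $\bigO{|\db|^{\frac{3}{2}}}$ since $\max_{\eps\in[0,1]}\{1+\min\{\eps,1-\eps\}\} = \frac{3}{2}$. The remaining ``upper'' views in each tree, namely $V^{\H(\L\H)\F}$, $\hat{V}^{\H(\L\H)\F}$, $V^{\F\H\L}$, $V^{\L\F(\H\L)}$, and $\hat{V}^{\L\F(\H\L)}$, are each obtained from an already-computed view by a single join with a relation part or by a projection; their output sizes are bounded by $\bigO{|\db|}$ or $\bigO{|\db|^{1-\eps}}$ per Figure~\ref{fig:view_definitions_binary}, so each such step runs in time linear in the sizes of its inputs and outputs, hence within $\bigO{|\db|^{\frac{3}{2}}}$.

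Finally, I would sum the costs: every component is constructed in $\bigO{|\db|^{\frac{3}{2}}}$ time, and since there are constantly many views, the total preprocessing time is $\bigO{|\db|^{\frac{3}{2}}}$. The step I expect to require the most care is the materialization of $\triangle_2^{\H\H\H}$: unlike the ternary case, computing it requires summing over the bound variable $C$, so I must argue that enumerating and then aggregating the triangles does not exceed the worst-case optimal join budget. This follows from the $|\db|^{\frac{3}{2}}$ bound on the number of triangles, but it is precisely the place where the binary query departs from the ternary one and therefore deserves an explicit justification.
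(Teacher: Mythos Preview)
Your proposal is correct and follows essentially the same approach as the paper's proof: linear-time partitioning, worst-case optimal join algorithms for the materialized skew-aware views, and reusing the analysis of Proposition~\ref{prop:preprocessing_step_full} for the auxiliary views in the view trees. The paper is terser and simply defers $\triangle_2^{\H\H\H}$ and $\triangle_2^{\L\L\L}$ to the ternary analysis, whereas you make the aggregation-over-$C$ argument explicit; this extra justification is sound and does not diverge from the paper's method.
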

\begin{proof}
Partitioning the input relations takes $\bigO{N}$ time.
The materialized skew-aware views 
$\triangle_2^{\H(\L\L)\F}$ and 
$\triangle_2^{\L\F(\H\H)}$
 can be computed in time $\bigO{N^{3/2}}$ using 
Leapfrog TrieJoin or Recursive-Join~\cite{NgoPRR18}.
All other materialized views can be computed using the same strategies as in the proof of Proposition~\ref{prop:preprocessing_step_full} and ignoring that 
$S$ and $T$ are double partitioned.
Overall, the initial \ivme state can be computed in time $\bigO{N^{\frac{3}{2}}}$
and the result follows from $N = \Theta(|\inst{D}|)$.  
\end{proof}

\subsection{Space Complexity}
\label{sec:space_binary}
We analyze the space complexity of the \ivme maintenance strategy for the binary triangle query.
  
\begin{proposition}
\label{prop:space_complexity_binary}
Given a database $\inst{D}$ and $\eps\in[0,1]$, an \ivme state of $\inst{D}$ supporting the maintenance of the binary triangle query takes $\bigO{|\db|^{1 +\min\{\eps,1-\eps\}}}$ space.
\end{proposition}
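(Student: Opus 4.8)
The plan is to bound the space taken by each component of an \ivme state $\astate = (\eps, \dbeps, \inst{V}, N)$ separately and then take the maximum. The components $\eps$ and $N$ occupy constant space, and the partition satisfies $|\dbeps| = |\inst{D}|$, so the whole argument reduces to bounding the size of each materialized view in $\inst{V}$, as summarized in Figure~\ref{fig:view_definitions_binary}. At the end I invoke the size invariant $|\inst{D}| = \Theta(N)$ to translate the bounds from $N$ to $|\inst{D}|$.

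For the views kept in listing form over $(A,B)$, I observe that their $(A,B)$-pairs are drawn from a single part of $R$, hence their size is at most $|R| \leq N$; this gives the $\bigO{N}$ bound for $\triangle_2^{\L\L\L}$, $\triangle_2^{\H(\L\L)\F}$, and $\triangle_2^{\L\F(\H\H)}$. For $\triangle_2^{\H\H\H}$, I additionally use that $A$ is heavy in $R^\H$ and $B$ is heavy in the parts $S^{\H\H}$ and $S^{\H\L}$, so the number of output pairs is also at most $(2N^{1-\eps})^2 = 4N^{2-2\eps}$; taking the minimum with the $N$ bound yields $\bigO{N^{\min\{1,2-2\eps\}}}$.

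For the auxiliary views in the three view trees, I reuse the min-of-two-bounds counting argument from the proof of Proposition~\ref{prop:space_complexity}. Each of $V_{RS}$, $\hat{V}_{RS}$, $V_{ST}$, $V_{TR}$, and $\hat{V}_{TR}$ is a two-way join in which one part has at most $2N^{1-\eps}$ distinct values of the relevant variable (a heavy-part bound) while the other has degree below $\frac{3}{2}N^{\eps}$ in the join variable (a light-part bound); evaluating the join in the two possible orders and taking the smaller count gives $\bigO{N^{1+\min\{\eps,1-\eps\}}}$, exactly as for the nullary query, and the extra refinement from double partitioning $S$ and $T$ can only shrink these parts. The root views $V^{\H(\L\H)\F}$, $V^{\F\H\L}$, and $V^{\L\F(\H\L)}$ store pairs drawn from a full input relation and hence take $\bigO{N}$ space, while the projections $\hat{V}^{\H(\L\H)\F}$ and $\hat{V}^{\L\F(\H\L)}$ range only over $C$-values that are heavy in $S^{\L\H}$ and $T^{\H\L}$ respectively, of which there are at most $2N^{1-\eps}$, giving $\bigO{N^{1-\eps}}$.

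Finally, I combine the bounds: since $\min\{1,2-2\eps\}\leq 1$, $1-\eps \leq 1$, and $1 \leq 1+\min\{\eps,1-\eps\}$ for every $\eps\in[0,1]$, the dominant exponent is $1+\min\{\eps,1-\eps\}$, contributed by $V_{RS}$, $\hat{V}_{RS}$, $V_{ST}$, $V_{TR}$, and $\hat{V}_{TR}$; together with $|\inst{D}| = \Theta(N)$ this yields the claimed $\bigO{|\inst{D}|^{1+\min\{\eps,1-\eps\}}}$ bound. The one step requiring genuine care is the ternary auxiliary view $V_{TR}(c,a,b)$ (and symmetrically $V_{RS}(a,b,c)$): here neither the light-degree bound over the join variable alone (which yields $N^{1+\eps}$) nor the bound using the at most $2N^{1-\eps}$ heavy $C$-values times $|R^\L|$ alone (which yields $N^{2-\eps}$) suffices, and one must take their minimum to obtain the exponent $1+\min\{\eps,1-\eps\}$.
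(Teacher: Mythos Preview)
Your proof is correct and follows the same approach as the paper's own proof: bound each materialized view in Figure~\ref{fig:view_definitions_binary} individually using the heavy/light part conditions, observe that the auxiliary views $V_{RS}$, $\hat V_{RS}$, $V_{ST}$, $V_{TR}$, $\hat V_{TR}$ dominate with exponent $1+\min\{\eps,1-\eps\}$, and translate via $|\inst{D}|=\Theta(N)$. Your treatment is more explicit than the paper's (which simply cites Proposition~\ref{prop:space_complexity_full} for the auxiliary views and notes that the root views are bounded by input-relation sizes and the $\hat V$ projections by the number of heavy $C$-values), and your closing remark that the ternary views $V_{RS}$ and $V_{TR}$ genuinely require taking the minimum of the two evaluation orders is a useful clarification.
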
  
\begin{proof}
Figure~\ref{fig:view_definitions_binary} gives the space complexity of the materialized views. 
The space complexities of the auxiliary views follow from the proof of Proposition~\ref{prop:space_complexity_full}.
The sizes of $V^{\H(\L\H)\F}$, $V^{\F\H\L}$, and $V^{\L\F(\H\L)}$ are 
upper bounded by the sizes of $T$, $R$, and $S$, respectively, while 
the sizes of $\hat{V}^{\H(\L\H)\F}$ and $\hat{V}^{\L\F(\H\L)}$ are upper bounded by the number of distinct $C$-values in $S^{\L\H}$ and respectively $T^{\H\L}$.

\end{proof}

\subsection{Processing a Single-Tuple Update}\label{sec:single_update_binary}
\label{sec:update_binary}
We analyze the time complexity of maintaining an \ivme state for the binary triangle query under a single-tuple update. 

\begin{proposition}\label{prop:single_step_time_binary}
Given a database $\db$, $\eps\in[0,1]$, and an \ivme state $\astate$ of $\inst{D}$ supporting the maintenance of the binary triangle query, \ivme maintains $\astate$ under a single-tuple update to any input relation in $\bigO{|\db|^{\max\{\eps,1-\eps\}}}$ time.
\end{proposition}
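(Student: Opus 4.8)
The plan is to mirror the proof of Proposition~\ref{prop:single_step_time_full}: I would exhibit an \textsc{ApplyUpdate} procedure analogous to Figure~\ref{fig:applyUpdate_full}, one for each input relation, and bound the cost of each maintenance statement in turn. Consider first a single-tuple update $\delta R^r = \{(\deltaA,\deltaB)\mapsto\p\}$ with $r\in\{\H,\L\}$ to the single-partitioned relation $R$. The views I must maintain are exactly those that reference $R^r$; reading off Figure~\ref{fig:view_definitions_binary}, for $r=\H$ these are $\triangle_2^{\H\H\H}$, $\triangle_2^{\H(\L\L)\F}$, the view tree of $\triangle_2^{\H(\L\H)\F}$ (through the leaf $R^\H$ of $V_{RS}$), and $V^{\F\H\L}$; for $r=\L$ they are $\triangle_2^{\L\L\L}$, $\triangle_2^{\L\F(\H\H)}$, the view tree of $\triangle_2^{\L\F(\H\L)}$, and again $V^{\F\H\L}$. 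Crucially, $V_{ST}$ does not reference $R$, so the delta of $\triangle_2^{\F\H\L}$ collapses to the constant-time assignment $V^{\F\H\L}(\deltaA,\deltaB) = V^{\F\H\L}(\deltaA,\deltaB) + \delta R^r(\deltaA,\deltaB)\cdot V_{ST}(\deltaB,\deltaA)$, exactly as in the ternary case.

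The crux of the argument, which I would isolate first, is that every remaining delta is a summation over the bound variable $C$, and for each affected view I can pick an iteration order that ranges over a provably small set of $C$-values: either the distinct $C$-values of a relation part heavy in $C$ (at most $2N^{1-\eps}$ of them, by the heavy-part condition in Definition~\ref{def:loose_relation_partition}), or the $C$-values matching the fixed coordinate $\deltaB$ in a part of $S$ that is light in $B$ (fewer than $\frac{3}{2}N^{\eps}$ of them, by the light-part condition). Concretely, for $r=\H$ the view $\triangle_2^{\H\H\H}$ ranges over the $C$-heavy parts $T^{\H\H},T^{\H\L}$ and is maintained in $\bigO{N^{1-\eps}}$ time, while $\triangle_2^{\H(\L\L)\F}$ and the entire view tree of $\triangle_2^{\H(\L\H)\F}$ (the views $V_{RS},\hat{V}_{RS},V^{\H(\L\H)\F},\hat{V}^{\H(\L\H)\F}$) range over the $B$-light parts $S^{\L\L},S^{\L\H}$ and are maintained in $\bigO{N^{\eps}}$ time; the case $r=\L$ is the mirror image, with $\triangle_2^{\L\L\L}$ over $S^{\L\H},S^{\L\L}$ costing $\bigO{N^{\eps}}$ and $\triangle_2^{\L\F(\H\H)}$ together with the view tree of $\triangle_2^{\L\F(\H\L)}$ over the $C$-heavy parts $T^{\H\H},T^{\H\L}$ costing $\bigO{N^{1-\eps}}$. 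The projection views $\hat{V}^{\H(\L\H)\F}(c)$ and $\hat{V}^{\L\F(\H\L)}(c)$ introduced by double partitioning need a separate check, but their deltas are supported on the same sets of $C$-values as the views they aggregate, so they stay within the same budget.

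For updates $\delta S^s$ and $\delta T^t$ (now $s,t\in\{\H,\L\}^2$, since $S$ and $T$ are double-partitioned) I would run the analogous argument, the only change being that the summation variable becomes $A$ for updates to $S$ and $B$ for updates to $T$. The relevant bounds then come from the other two relations: for an update to $S$, the number of $A$-values to iterate over is at most $2N^{1-\eps}$ when ranging over the $A$-heavy parts $R^\H$ or $T^{\H\H},T^{\L\H}$, and fewer than $\frac{3}{2}N^{\eps}$ when ranging over the $A$-light parts $R^\L$ or $T^{\H\L},T^{\L\L}$ for the fixed coordinate; updates to $T$ are handled symmetrically with $B$ and the parts of $S$ and $R$. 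Because $R$ is only single-partitioned while $S$ and $T$ are double-partitioned, the three update cases are not literally symmetric, so I would verify each case split explicitly; the skew-aware decomposition of Figure~\ref{fig:view_definitions_binary} is precisely engineered so that in every piece the connecting variable can be bounded by one of the two thresholds.

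The main obstacle is this last piece of bookkeeping: one must confirm, part by part, that the double partitioning of $S$ and $T$ never forces an iteration over an unbounded set (for instance over the $C$-values of $\deltaB$ in a part of $S$ that is heavy rather than light in $B$, which is the analogue of the approach that failed in the ternary case), and that the extra projection views do not exceed the budget. Once all statements are bounded by $N^{\max\{\eps,1-\eps\}}$, and since their number is independent of $|\db|$, the total update time is $\bigO{N^{\max\{\eps,1-\eps\}}}$; the invariant $|\db|=\Theta(N)$ then yields the claimed $\bigO{|\db|^{\max\{\eps,1-\eps\}}}$. As in the ternary case, this is a worst-case bound for one update against a fixed state, with the amortized cost of rebalancing the partitions accounted for separately in Proposition~\ref{prop:amortized_update_time}.
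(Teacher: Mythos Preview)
Your proposal is correct and follows essentially the same approach as the paper: reduce to the analysis in Proposition~\ref{prop:single_step_time_full} and verify that every delta iterates over either a $C$-heavy part (at most $2N^{1-\eps}$ values) or the matches of the fixed coordinate in a light part (fewer than $\frac{3}{2}N^{\eps}$ values). The paper's own proof is terser---it simply declares that all views except the two new materialized ones, $\triangle_2^{\H(\L\L)\F}$ and $\triangle_2^{\L\F(\H\H)}$, are handled by the ternary argument, and then checks those two explicitly for updates to each of $R$, $S$, and $T$---whereas you spell out the $r=\H$ and $r=\L$ cases for $R$ and sketch the $S,T$ cases; but the content is the same.
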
 
\begin{proof}
Almost all the materialized views from Figure~\ref{fig:view_definitions_binary} can be maintained in time $\bigO{N^{\max\{\eps, 1-\eps\}}}$ under single-tuple updates by following the maintenance strategies described in the proof of Proposition~\ref{prop:single_step_time_full}.
The only new challenge is to maintain the refined views
 $\triangle_2^{\H(\L\L)\F}$ and 
 $\triangle_2^{\L\F(\H\H)}$.

We analyze the maintenance time for $\triangle_2^{\H(\L\L)\F}$.
For updates to $R^\H$, we need to iterate over less than $\frac{3}{2}N^\eps$ $C$-values in $S^{\L\L}$ for a fixed $B$-value from $\delta{R^\H}$
and do lookups in $T$. For updates to $T$, we need 
 to iterate over less than $\frac{3}{2}N^\eps$ $B$-values in $S^{\L\L}$ for a fixed $C$-value from $\delta{T}$ and do lookups in $R^{\H}$.
For updates to $S^{\L\L}$, we need to iterate over at most $2N^{1-\eps}$ distinct $A$-values in $R^\H$ and do lookups in $T$. 
Thus, $\triangle_2^{\H(\L\L)\F}$ can be maintained in $\bigO{N^{\max\{\eps, 1-\eps\}}}$ time.

We now consider the maintenance time for $\triangle_2^{\L\F(\H\H)}$.
For updates to $R^\L$, we need to iterate over at most $2N^{1-\eps}$ $C$-values in $T^{\H\H}$ and do lookups in $S$.
For updates to $S$, we need to iterate over at most $2N^{1-\eps}$ $A$-values in $T^{\H\H}$ and do lookups in $R^{\L}$.
For updates to $T^{\H\H}$, we need to iterate over less than 
$\frac{3}{2}N^{\eps}$  $B$-values in $R^\L$ for a fixed $A$-value from 
$\delta{T^{\H\H}}$ and do lookups in $S$.
Thus, $\triangle_2^{\L\F(\H\H)}$ can be maintained in $\bigO{N^{\max\{\eps, 1-\eps\}}}$ time.

The proposition follows from the above analysis and the invariant $N = \Theta(|\inst{D}|)$. 
\end{proof}

\begin{figure}[t]
\begin{center}
\renewcommand{\arraystretch}{1.3}
\setcounter{magicrownumbers}{0}

\begin{tabular}{l}
\toprule
\textsc{EnumerateBinary}(state $\astate$) \\
\midrule
\linenumber \LET $\astate = (\,\eps, N, 
\{R^\H, R^\L, S^{\H\H}, S^{\H\L}, S^{\L\H}, S^{\L\L}, 
T^{\H\H}, T^{\H\L}, T^{\L\H}, T^{\L\L}\},$ \\
\TAB\TAB\TAB\TAB\TAB\TAB \TAB\TAB\TAB \TAB\TAB\TAB $\{\, \triangle_2^{\H\H\H},\, \triangle_2^{\L\L\L},\, \triangle_2^{\H(\L\L)\F}, \, 
\triangle_2^{\L\F(\H\H)}, \, V^{\F\H\L}\,\} \cup \inst{V}\,)$ \\

\linenumber $\inst{I}_1 = \{\, \triangle_2^{\H\H\H}\!.\mathit{iter()},\, \triangle_2^{\L\L\L}\!.\mathit{iter()},\, \triangle_2^{\H(\L\L)\F}\!.\mathit{iter()},\, \triangle_2^{\L\F(\H\H)}\!.\mathit{iter()}, \, V^{\F\H\L}\!.\mathit{iter()} \,\}$ \\

\linenumber $\inst{I}_2 = \{\hspace{0.3mm} \triangle_2^{\H(\L\H)\F}.\mathit{iter}\left(\,\textsc{CandidateBuckets}^{\H(\L\H)\F}\hspace{0.3mm} \right), \,
\triangle_2^{\L\F(\H\L)}.\mathit{iter}\left(\hspace{0.3mm}\textsc{CandidateBuckets}^{\L\F(\H\L)}\hspace{0.3mm}\right) \,\}$ \\

\linenumber \WHILE $(\,((\deltaA, \deltaB) = \textsc{UnionNext}(\, \inst{I}_1 \cup \inst{I}_2 \,)) \neq$ \EOF\,) \\

\linenumber \TAB $\p_1 = 
\triangle_2^{\H\H\H}(\deltaA,\deltaB) + 
\triangle_2^{\L\L\L}(\deltaA,\deltaB) +
\triangle_2^{\H(\L\L)\F}(\deltaA,\deltaB) +
\triangle_2^{\L\F(\H\H)}(\deltaA,\deltaB) +
V^{\F\H\L}(\deltaA,\deltaB)$ \\

\linenumber \TAB $\p_2 = 
\sum_{t\in\{\H,\L\}^2} \sum_c R^\H(\deltaA,\deltaB) \cdot S^{\L\H}(\deltaB, c) \cdot T^t(c,\deltaA)$\\

\linenumber \TAB $\p_3 = 
\sum_{s\in\{\H,\L\}^2} \sum_c R^\L(\deltaA,\deltaB) \cdot S^{s}(\deltaB, c) \cdot T^{\H\L}(c,\deltaA)$\\

\linenumber \TAB \OUTPUT $(\deltaA,\deltaB)\mapsto (\p_1+\p_2+\p_3)$ \\
\bottomrule
\end{tabular}

\end{center}\vspace{-1em}
\caption{
Enumerating the result of the binary triangle query given an \ivme state of database $\db$.
Line~2 creates iterators over materialized skew-aware views. 
Line~3 creates hop-based iterators over the non-materialized skew-aware views, 
parameterized by the $\textsc{CandidateBuckets}^{\H(\L\H)\F}$ and $\textsc{CandidateBuckets}^{\L\F(\H\L)}$ functions.
Lines~5-7 compute the multiplicity of pair $(\deltaA,\deltaB)$ reported by the union algorithm.}

\label{fig:enum_binary}
\end{figure}

\subsection{Enumeration Delay}
\label{sec:enumeration_binary}
We construct an iterator for each skew-aware view 
of the binary triangle query and use 
the union algorithm from Section~\ref{sec:union_algorithm}
to enumerate the distinct tuples in the union  
of these views. 
For the materialized skew-aware views
$\triangle_2^{\H\H\H}$, $\triangle_2^{\L\L\L}$,
$\triangle_2^{\H(\L\L)\F}$, $\triangle_2^{\L\F(\H\H)}$, and
$\triangle_2^{\F\H\L}$ (materialized by $V^{\F\H\L}$), 
we construct iterators with constant lookup time and enumeration delay (see Section~\ref{sec:view_iterators}). 
For each of the non-materialized views $\triangle_2^{\H(\L\H)\F}$ and $\triangle_2^{\L\F(\H\L)}$, 
we first instantiate its view tree for the distinct $C$-values appearing at its root 
and then construct a hop-based iterator (see Section~\ref{sec:skip_pointers}) to enumerate the distinct $(A,B)$-values in the union of these instantiated view trees.

\nop{

Let $\calT$ be the view tree for the 
 non-materialized 
 view
  $\triangle_2^{\H(\L\H)\F}$ 
  as shown in 
Figure~\ref{fig:view_definitions_binary}.
We denote by 
$\textsc{Join}(\calT)$
 the view that represents the natural join of the views in 
$\calT$. The tuples in the projection
$\pi_{(A,B)}\, \textsc{Join}(\calT)$
are precisely 
the tuples in the result of 
$\triangle_2^{\H(\L\H)\F}$
with non-zero multiplicity. 
We show that we can construct an iterator 
with skip pointers
  (cf.\@ Section~\ref{sec:skip_pointers})
for 
$\pi_{(A,B)}\, \textsc{Join}(\calT)$ 
  that needs  $\bigO{N^{\min\{\eps, 1-\eps\}}}$ lookup time and enumeration delay.
  An iterator with the same time guarantees 
  can be constructed in case $\calT$ is  
the view tree for the non-materialized view 
  $\triangle_2^{\L\F(\H\L)}$.

\begin{proposition}
\label{prop:view_tree_is_skip_data_structure}
Let $\db$ be a database, $\eps\in[0,1]$, $\astate$ an \ivme state 
of \, $\db$ supporting the maintenance of the binary triangle query, 
and  $\calT$ 
the view tree 
for 
$\triangle_2^{\H(\L\H)\F}$
or $\triangle_2^{\L\F(\H\L)}$.
We can construct in constant 
time an iterator for $\pi_{(A,B)}\, \textsc{Join}(\calT)$
with $\bigO{|\inst{D}|^{\min\{\eps, 1-\eps\}}}$ 
lookup time and enumeration delay that uses 
$\bigO{|\inst{D}|^{1+ \min\{\eps, 1-\eps\}}}$ additional space.
\end{proposition}

}

Given a materialized view $V$, we write $V.\mathit{iter}(\,)$ to denote the iterator for $V$.
We also call the function $\triangle_2^{\H(\L\H)\F}.\mathit{iter}(\textsc{CandidateBuckets}^{\H(\L\H)\F})$  
to get the hop-based iterator for $\triangle_2^{\H(\L\H)\F}$ parameterized by the $\textsc{CandidateBuckets}^{\H(\L\H)\F}$ function. This function intersects the $C$-values from the root $\hat{V}^{\H(\L\H)\F}$ and the $C$-values paired with a given $(A,B)$-value in the view $V_{RS}$. 
Similarly, the hop-based iterator for $\triangle_2^{\L\F(\H\L)}$ uses the $\textsc{CandidateBuckets}^{\L\F(\H\L)}$ function that intersects the $C$-values from the root $\hat{V}^{\L\F(\H\L)}$ and the $C$-values paired with a given $(A,B)$-value in the view $V_{TR}$.
Both functions return a set of indices that identify the view trees instantiated for the computed $C$-values.

The procedure \textsc{EnumerateBinary} from Figure~\ref{fig:enum_binary} enumerates the result of the binary triangle query given an \ivme state $\astate$. 
The procedure first creates the iterators over 
the (possibly non-disjoint) results of the skew-aware views.
The union algorithm from Figure~\ref{fig:enum_union} takes these iterators as input and reports distinct $(A,B)$-values as output. 
For each reported $(a,b)$, \textsc{EnumerateBinary} computes the 
multiplicity of $(a,b)$ by summing up the multiplicities in each skew-aware view.

\begin{proposition}
\label{prop:delay_binary}
Given a database $\db$, $\eps\in[0,1]$, an \ivme state $\astate$ of \hspace{0.5mm}$\db$ supporting the maintenance of the binary triangle query, \ivme enumerates the result of the query with 
$\bigO{|\db|^{\min\{\eps, 1-\eps\}}}$ delay and 
$\bigO{|\db|^{1+\min\{\eps, 1-\eps\}}}$ additional space.
\end{proposition}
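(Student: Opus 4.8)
The plan is to establish correctness and the two complexity bounds for the \textsc{EnumerateBinary} procedure of Figure~\ref{fig:enum_binary}, leaning on the union algorithm (Lemma~\ref{lem:enum_delay}), the hop-based union iterator (Lemma~\ref{lem:enumerate_pointer_new}), and the size invariant $|\db| = \Theta(N)$. \emph{Correctness.} First I would argue that the seven skew-aware views form a disjoint cover of the triangles: every triangle $(a,b,c)$ contributes to exactly one view because the parts of $R$, $S$, and $T$ partition the respective relations. Summing out $C$, the multiplicity of a pair $(a,b)$ in $\triangle_2$ equals the sum of its multiplicities across the seven views, and $(a,b)$ occurs in $\triangle_2$ iff it occurs in at least one view. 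Hence reporting each \emph{distinct} $(a,b)$ in the union of the view projections, paired with the sum $\p_1 + \p_2 + \p_3$ of the per-view multiplicities, yields exactly the result of $\triangle_2$; \textsc{UnionNext} guarantees each distinct pair is reported once.

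\emph{Delay.} The materialized iterators in $\inst{I}_1$ have constant lookup time and constant delay. The crux is the two hop-based iterators in $\inst{I}_2$. For $\triangle_2^{\H(\L\H)\F}$, I would instantiate its view tree over the distinct $C$-values at the root $\hat{V}^{\H(\L\H)\F}$, taking the buckets to be the $(A,B)$-pair sets of these instantiations. By Lemma~\ref{lem:enumerate_pointer_new} the delay is $\bigO{|\calB|\,l + |\calB|\,d + b}$, where $\calB$ is the set returned by $\textsc{CandidateBuckets}^{\H(\L\H)\F}$. The heart of the proof is the estimate $|\calB| = \bigO{N^{\min\{\eps,1-\eps\}}}$ with $\textsc{CandidateBuckets}$ running in time $b = \bigO{N^{\min\{\eps,1-\eps\}}}$: double partitioning makes $\deltaB$ light in $S^{\L\H}$ (so the $C$-values paired with $(a,b)$ in $V_{RS}$ number fewer than $\frac{3}{2}N^{\eps}$) and $C$ heavy in $S^{\L\H}$ (so the root holds at most $2N^{1-\eps}$ distinct $C$-values); intersecting by scanning the smaller set and probing the larger gives the minimum. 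With constant-time bucket lookups and successors ($l,d = \bigO{1}$), the per-view delay is $\bigO{N^{\min\{\eps,1-\eps\}}}$, and $\triangle_2^{\L\F(\H\L)}$ is symmetric via $V_{TR}$ and $T^{\H\L}$. Feeding these into \textsc{UnionNext} with a constant number $n$ of iterators, Lemma~\ref{lem:enum_delay} yields overall delay $\bigO{n(l' + d')} = \bigO{N^{\min\{\eps,1-\eps\}}}$, where $l',d'$ are the maximal lookup time and delay among the iterators. I would then verify that computing $\p_2$ and $\p_3$ for each reported pair stays within budget: every $C$-value contributing to $\p_2 = R^\H(\deltaA,\deltaB)\cdot\sum_c S^{\L\H}(\deltaB,c)\cdot\sum_t T^t(c,\deltaA)$ satisfies $S^{\L\H}(\deltaB,c)\neq 0$ and lies in the root, hence in $\calB$, so summing the products over $\calB$ (at most $\bigO{N^{\min\{\eps,1-\eps\}}}$ terms) is both correct and within the bound. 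Converting via $N = \Theta(|\db|)$ gives delay $\bigO{|\db|^{\min\{\eps,1-\eps\}}}$.

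\emph{Space.} By Lemma~\ref{lem:enumerate_pointer_new}, each hop-based iterator uses space linear in the total size of its buckets. These buckets are the $(A,B)$-pairs of the instantiated view trees, whose aggregate size is bounded by the materialized auxiliary views $V_{RS}$ (resp.\ $V_{TR}$), each of size $\bigO{N^{1+\min\{\eps,1-\eps\}}}$ by the analysis in the proof of Proposition~\ref{prop:space_complexity_full}. Hence the additional space is $\bigO{|\db|^{1+\min\{\eps,1-\eps\}}}$.

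\emph{Main obstacle.} The hard part is the delay argument for the non-materialized views: both the cardinality of the $\textsc{CandidateBuckets}$ set and the time to compute it must be pinned to $\bigO{N^{\min\{\eps,1-\eps\}}}$, which is exactly what the refined double partitioning of $S$ and $T$ buys through the simultaneous light/heavy conditions on $\deltaB$ and $C$ in $S^{\L\H}$. One must also confirm that restricting the multiplicity sums to these candidate $C$-values discards no contributing term, so that the reported multiplicities remain exact.
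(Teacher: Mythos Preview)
Your proposal is correct and follows essentially the same route as the paper: use the materialized iterators for the five easy views, instantiate the view trees for $\triangle_2^{\H(\L\H)\F}$ and $\triangle_2^{\L\F(\H\L)}$ per root $C$-value, bound the work of the hop-based iterator via Lemma~\ref{lem:enumerate_pointer_new} together with the light-$B$/heavy-$C$ constraints on $S^{\L\H}$ (and symmetrically on $T^{\H\L}$), and bound the additional space by $|V_{RS}|$ and $|V_{TR}|$. The only cosmetic differences are that the paper computes $\p_2$ and $\p_3$ by iterating directly over the $C$-values of $S^{\L\H}$ (resp.\ $T^{\H\L}$) for the fixed $(\deltaA,\deltaB)$ rather than over your set $\calB$, and it does not spell out the outer-union delay via Lemma~\ref{lem:enum_delay} or the no-cancellation argument that justifies restricting to $\calB$; your version is slightly more explicit on those points but otherwise matches.
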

\begin{proof}
We analyze the procedure 
\textsc{EnumerateBinary} in 
Figure~\ref{fig:enum_binary}.
Creating the iterators over materialized views takes constant time (Line~2); 
the same holds for the hop-based iterators in $\inst{I}_2$, per Lemma~\ref{lem:enumerate_hop_iterator} (Line~3). 
The iterators in $\inst{I}_1$ allow constant-time lookups and constant-delay enumeration of $(A,B)$-values. 
The hop-based iterator for $\triangle_2^{\H(\L\H)\F}$ is over at most $2N^{1-\eps}$ view trees instantiated for the distinct $C$-values appearing at the root $\hat{V}^{\H(\L\H)\F}$.
Each view tree supports constant-time lookups and constant-delay enumeration of $(A,B)$-values.
$\textsc{CandidateBuckets}^{\H(\L\H)\F}$ intersects at most $\min\{\frac{3}{2}N^\eps,2N^{1-\eps}\}$ $C$-values from $V_{RS}$ for a fixed $(A,B)$-value and at most $2N^{1-\eps}$ $C$-values from $\hat{V}^{\H(\L\H)\F}$;
thus, the returned set of indices is of size at most $\min\{\frac{3}{2}N^\eps, 2N^{1-\eps}\}$.
This function runs in $\bigO{N^{\min\{\eps, 1-\eps\}}}$ time.
Per Lemma~\ref{lem:enumerate_pointer_new}, the enumeration delay of the hop-based iterator for $\triangle_2^{\H(\L\H)\F}$ is $\bigO{N^{\min\{\eps, 1-\eps\}}}$. 
A similar analysis for $\triangle_2^{\L\F(\H\L)}$ gives the same enumeration delay. 

The iterators over materialized views need constant space during enumeration.
The hop-based iterators over $\triangle_2^{\H(\L\H)\F}$ and $\triangle_2^{\L\F(\H\L)}$ need space linear in the total number of their $(A,B)$-values, per Lemma~\ref{lem:enumerate_hop_iterator}.
This number is upper bounded by the size of $V_{RS}$ for the former and by the size of $V_{TR}$ for the latter.
By Proposition~\ref{prop:space_complexity_binary}, both of these views take $\bigO{N^{1 + \min\{\eps, 1-\eps\}}}$ space.

Computing the total multiplicity $m$ of a pair $(\deltaA,\deltaB)$ 
requires computing the multiplicity of $(\deltaA,\deltaB)$ in the result of each skew-aware view.
For the materialized views with schema $(A,B)$, this operation takes constant time (Line~5).
For the non-materialized views $\triangle_2^{\H(\L\H)\F}$ and $\triangle_2^{\L\F(\H\L)}$,
computing the multiplicities of $(\deltaA,\deltaB)$ requires iterating over the matching $C$-values in $S^{\L\H}$ and respectively $T^{\H\L}$ (Lines~6-7). 
In both cases, the number of distinct $C$-values for a fixed $(\deltaA,\deltaB)$ is at most $\min\{\frac{3}{2}N^\eps, 2N^{1-\eps}\}$.
Thus, the multiplicity of the pair $(\deltaA,\deltaB)$ can be computed in $\bigO{N^{\min\{\eps, 1-\eps\}}}$ time. 

Overall, \textsc{EnumerateBinary} enumerates the result of $\triangle_2$ from $\astate$ with $\bigO{N^{\min\{\eps, 1-\eps\}}}$ delay and 
\\
$\bigO{N^{1 + \min\{\eps, 1-\eps\}}}$ additional space. 
The proposition follows from the invariant $N = \Theta(|\db|)$.

\end{proof}

\subsection{Summing Up}
The additional space used during the enumeration 
of the result of the binary triangle query is linear in the 
size of the maintained views. Hence, 
our main result in Theorem~\ref{theo:main_result_triangle} for the binary triangle query follows from Propositions~\ref{prop:preprocessing_step_binary}, \ref{prop:space_complexity_binary}, \ref{prop:single_step_time_binary}, and \ref{prop:delay_binary} shown in the previous subsections, complemented by 
Proposition~\ref{prop:amortized_update_time}, which shows that the amortized
rebalancing time is $\bigO{|\inst{D}|^{\max\{\eps, 1-\eps\}}}$.


\section{Maintaining the Unary Triangle Query}
\label{sec:unary}
We now focus on the maintenance and enumeration of the unary triangle query 
$$\triangle_1(a) = \sum_{b,c}  R(a,b)\ztimes S(b,c) \ztimes T(c,a)$$
under a single-tuple update. As with the binary triangle query, the results of the skew-aware views in the unary case  are not necessarily disjoint. 
To report only the distinct $A$-values in the union of skew-aware views, 
we again rely on the union algorithm, presented in Section~\ref{sec:union_algorithm}.

We discuss the enumeration of distinct $A$-values 
in the result of skew-aware views that are not materialized but represented as view trees.
As a starting point for our discussion, we consider the view trees created for the ternary triangle query, see Figure~\ref{fig:view_definitions_full}. 
The view trees for $\triangle_3^{\H\L\F}$ and $\triangle_3^{\F\H\L}$ contain $A$-values at the root, thus they can support the enumeration of $A$-values in constant time. 
The view tree $T$ for $\triangle_3^{\L\F\H}$, however, contains $(B,C)$-values at its root, meaning that we need to find the distinct $A$-values that occur under $(B,C)$-values. 
The number of distinct $(B,C)$-values paired with any given $A$-value can be linear, 
meaning that a hop-based iterator from Section~\ref{sec:skip_pointers}
would enumerate distinct $A$-values with at least linear delay.

To improve the enumeration delay for the skew-aware view 
$\triangle_3^{\L\F\H}$, we refine our partitioning strategy to get a tighter bound on the number of $(B,C)$-values paired with any given  
$A$-value.
We double partition relation $R$ on $(A,B)$ and relation $T$ on $(C,A)$ 
while keeping $S$ partitioned on $B$.
This refinement further divides $\triangle_3^{\L\F\H}$
into three 
skew-aware views. 
One skew-aware view involves $R^{\L\H}$ and $T^{\H\L}$ and ensures that the number of distinct 
$(B,C)$-values 
paired with any $A$-value 
is bounded by $\bigO{N^{2\min\{\eps,1-\eps\}}}$ since $A$ is light 
in both relation parts and each of the variables $B$ and $C$ is heavy in at least one of the relation parts. 
The other two skew-aware views either involve $R^{\L\L}$ or involve 
$R^{\L\H}$ and $T^{\H\H}$, which enables their materialization and enumeration with constant delay. 
Overall, our maintenance strategy for the unary triangle query with double partitioning for $R$
and $T$ achieves $\bigO{N^{2\min\{\eps,1-\eps\}}}$ enumeration delay, which is sublinear for 
$\eps \neq \frac{1}{2}$.

\begin{figure}[t!]
  \begin{minipage}[b]{\textwidth}
  \begin{center}
    \renewcommand{\arraystretch}{1.3}  
    \begin{tabular}{@{\hskip 0.0in}l@{\hskip 0.15in}l@{\hskip 0.0in}}
      \toprule
      Materialized View Definition & Space Complexity \\    
      \midrule 
      $\triangle_1^{\H\H\H}(a) = 
      \sum_{r,t\in\{\H,\L\}}\sum_{b,c} R^{\H{r}}(a,b) \cdot S^\H(b,c) \cdot T^{\H{t}}(c,a)$ & 
      $\bigO{|\inst{D}|^{1-\eps}}$ \\[2pt]

      $\triangle_1^{\L\L\L}(a) = 
      \sum_{r,t\in\{\H,\L\}}\sum_{b,c} R^{\L{r}}(a,b) \cdot S^\L(b,c) \cdot T^{\L{t}}(c,a)$  & 
      $\bigO{|\inst{D}|}$ \\[2pt]

      $\triangle_1^{(\L\L)\F\H}(a) = 
      \sum_{s,t\in\{\H,\L\}}\sum_{b,c} R^{\L\L}(a,b) \cdot S^{s}(b,c) \cdot T^{\H{t}}(c,a)$  & 
      $\bigO{|\inst{D}|}$ \\[3pt]
      
      $\triangle_1^{(\L\H)\F(\H\H)}(a) = 
      \sum_{s\in\{\H,\L\}}\sum_{b,c} R^{\L\H}(a,b) \cdot S^{s}(b,c) \cdot T^{\H\H}(c,a)$ & 
      $\bigO{|\inst{D}|^{1-\eps}}$ \\[3pt]
      
      View tree for $\triangle_1^{\H\L\F}(a) = 
      \sum_{r\in\{\H,\L\}}\sum_{t\in\{\H,\L\}^2} \sum_{b,c} R^{\H{r}}(a,b) \cdot S^\L(b,c) \cdot T^t(c,a)$\\[2pt]

      $\TAB\TAB V_{RS}(a,c) = \sum_{r\in\{\H,\L\}}\sum_{b} R^{\H{r}}(a,b) \cdot S^\L(b,c)$ & 
      $\bigO{|\inst{D}|^{1+\min{\{\,\eps, 1-\eps \,\}}}}$ \\[2pt]

      $\TAB\TAB V^{\H\L\F}(a) =
      \sum_{t\in\{\H,\L\}^2} \sum_{c}V_{RS}(a,c) \cdot T^t(c,a)$ & 
      $\bigO{|\inst{D}|^{1-\eps}}$\\[3pt]

      View tree for $\triangle_1^{\F\H\L}(a) =  
      \sum_{r\in\{\H,\L\}^2}\sum_{t\in\{\H,\L\}} \sum_{b,c} R^r(a,b) \cdot S^{\H}(b,c) \cdot T^{\L{t}}(c,a)$ & \\[2pt]

      $\TAB\TAB V_{ST}(b,a) = \sum_{t\in\{\H,\L\}} \sum_{c} S^\H(b,c) \cdot T^{\L{t}}(c,a)$ & 
      $\bigO{|\inst{D}|^{1+\min{\{\,\eps, 1-\eps \,\}}}}$ \\[2pt]

      $\TAB\TAB V^{\F\H\L}(a) =
      \sum_{r\in\{\H,\L\}^2}\sum_{b} R^r(a,b) \cdot V_{ST}(b,a)$ & 
      $\bigO{|\inst{D}|}$\\[3pt]

 View tree for $\triangle_1^{(\L\H)\F(\H\L)}(a) =  
       \sum_{s\in\{\H,\L\}} \sum_{b,c} R^{\L\H}(a,b) \cdot S^s(b,c) \cdot T^{\H\L}(c,a)$ & \\[2pt]

$\TAB\TAB V_{TR}(c,a,b) = T^{\H\L}(c,a) \cdot R^{\L\H}(a,b)$ &
 $\bigO{|\inst{D}|^{1+\min{\{\,\eps, 1-\eps \,\}}}}$ \\[2pt]

 $\TAB\TAB \hat{V}_{TR}(c,b) = \sum_a V_{TR}(c,a,b)$  & 
 $\bigO{|\inst{D}|^{1+\min{\{\,\eps, 1-2\eps \,\}}}}$ \\[2pt]

 $\TAB\TAB V^{(\L\H)\F(\H\L)}(b,c) =  
      \sum_{s\in\{\H,\L\}} S^s(b,c) \cdot \hat{V}_{TR}(c,b)$  & 
 $\bigO{|\inst{D}|^{\min\{1,2-2\eps\}}}$  \\[3pt]
      \bottomrule
    \end{tabular}
  \end{center}
  \end{minipage} 

    \vspace{0.1cm}
  \begin{minipage}{\textwidth}
  \begin{center}
    \begin{tikzpicture}
    \begin{scope}[xshift=-1cm]
      \node at (0,0.1) (G) {View tree for $\triangle_1^{\H\L\F}$};
      \node at (0, -0.5) (A) {$V^{\H\L\F}(a)$};
      \node at (1, -1.65) (B) {$\sum\limits_{t\in\{\H,\L\}^2}\!\!\!\!\!\!T^t(c,a)$} edge[-] (A);
      \node at (-1, -1.5) (D) {$V_{RS}(a,c)$} edge[-] (A);
      \node at (-2, -2.65) (E) {$\sum\limits_{r\in\{\H,\L\}}\!\!\!\!\!R^{\H{r}}(a,b)$} edge[-] (D);
      \node at (0, -2.5) (F) {$S^\L(b,c)$} edge[-] (D);
    \end{scope}

    \begin{scope}[xshift=3.5cm]
      \node at (0,0.1) (G) {View tree for $\triangle_1^{\F\H\L}$};
      \node at (0, -0.5) (A) {$V^{\F\H\L}(a)$};
      \node at (1, -1.65) (B) {$\sum\limits_{r\in\{\H,\L\}^2}\!\!\!\!\!R^r(a,b)$} edge[-] (A);
      \node at (-1, -1.5) (D) {$V_{ST}(b,a)$} edge[-] (A);
      \node at (-2, -2.5) (E) {$S^\H(b,c)$} edge[-] (D);
      \node at (0, -2.6) (F) {$\sum\limits_{t\in\{\H,\L\}}\!\!\!\!\!T^{\L{t}}(c,a)$} edge[-] (D);      
    \end{scope}

    \begin{scope}[xshift=8cm,yshift=0.05cm]
      \node at (0,0.1) (G) {View tree for $\triangle_1^{(\L\H)\F(\H\L)}$};
      \node at (0, -0.5) (A) {$V^{(\L\H)\F(\H\L)}(b,c)$};
      \node at (-1, -1.5) (C) {$\hat{V}_{TR}(c,b)$} edge[-] (A);
      \node at (1, -1.65) (B) {$\sum\limits_{s\in\{\H,\L\}}\!\!\!\!\!S^s(b,c)$} edge[-] (A);
      \node at (-1, -2.5) (D) {$V_{TR}(c,a,b)$} edge[-] (C);
      \node at (-2, -3.5) (E) {$T^{\H\L}(c,a)$} edge[-] (D);
      \node at (0, -3.5) (F) {$R^{\L\H}(a,b)$} edge[-] (D);     
    \end{scope}
    \end{tikzpicture}
      \end{center}
\end{minipage}

  \caption{
  (top) 
  The materialized views
  $\inst{V} = \{ 
  \triangle_1^{\H\H\H}, \triangle_1^{\L\L\L}, \triangle_1^{(\L\L)\F\H},
  \triangle_1^{(\L\H)\F(\H\H)}, 
  V_{RS}, V^{\H\L\F},
  V_{ST}, V^{\F\H\L},
  V_{TR}, \hat{V}_{TR},$ $V^{(\L\H)\F(\H\L)} \}$ 
  supporting the maintenance of the unary triangle query.
  The set $\inst{V}$ is part of an \ivme state of database $\inst{D}$. 
  (bottom) The view trees supporting the maintenance of 
  $\triangle_1^{\H\L\F}$, $\triangle_1^{\F\H\L}$, and $\triangle_1^{(\L\H)\F(\H\L)}$.}
  
  \label{fig:view_definitions_unary}
  \vspace{-6pt}
\end{figure}

Figure~\ref{fig:view_definitions_unary} shows the definition and space complexity of the views supporting the maintenance of the unary triangle query. 
The \ivme state supporting the maintenance of the unary triangle query has the partitions
$\dbeps = \{R^{\H\H}, R^{\H\L}, R^{\L\H}, R^{\L\L}, S^\H, S^\L, 
T^{\H\H}, T^{\H\L}, T^{\L\H}, T^{\L\L}\}$ of $R$ on $(A,B)$,
of $S$ on $B$, and of $T$ on $(C,A)$; 
$\inst{V} = \{ 
  \triangle_1^{\H\H\H}, \triangle_1^{\L\L\L}, \triangle_1^{(\L\L)\F\H},
  \triangle_1^{(\L\H)\F(\H\H)}, 
  V_{RS}, V^{\H\L\F},
  V_{ST}, V^{\F\H\L},
  V_{TR}, \hat{V}_{TR}, V^{(\L\H)\F(\H\L)} \}$. 

\subsection{Preprocessing Stage} 
The preprocessing stage builds the initial \ivme state $\astate = (\eps, \dbeps, \inst{V}, N)$ of database $\db$ supporting the maintenance of the unary triangle query. 

\begin{proposition}\label{prop:preprocessing_step_unary}
Given a database $\db$ and $\eps\in[0,1]$, constructing the initial \ivme state of $\inst{D}$ supporting the maintenance of the unary triangle query takes $\bigO{|\db|^{\frac{3}{2}}}$ time.
\end{proposition}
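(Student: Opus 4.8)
The plan is to mirror the preprocessing arguments for the nullary, ternary, and binary triangle queries (Propositions~\ref{prop:preprocessing_step}, \ref{prop:preprocessing_step_full}, and \ref{prop:preprocessing_step_binary}) and to split the total cost into three buckets: (i) partitioning the input relations, (ii) computing the four materialized skew-aware views $\triangle_1^{\H\H\H}$, $\triangle_1^{\L\L\L}$, $\triangle_1^{(\L\L)\F\H}$, and $\triangle_1^{(\L\H)\F(\H\H)}$, and (iii) computing the auxiliary views in the three view trees of Figure~\ref{fig:view_definitions_unary}. I would show that each bucket costs $\bigO{|\db|^{\frac{3}{2}}}$ and then invoke the size invariant $N = \Theta(|\db|)$ to conclude.

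For bucket (i), I would note that $R$ is double partitioned on $(A,B)$ and $T$ on $(C,A)$, while $S$ is single partitioned on $B$. As argued in Proposition~\ref{prop:preprocessing_step_space_double_partition}, strictly partitioning a relation on each of two variables and then intersecting the parts to form the double partition (Definition~\ref{def:loose_double_relation_partition}) takes linear time; single partitioning is also linear. Hence partitioning costs $\bigO{|\db|}$. For bucket (ii), each of the four materialized views is the projection onto $A$ (with summed multiplicities) of a triangle join over fixed relation parts. I would compute the underlying triangles with a worst-case optimal join algorithm such as Leapfrog TrieJoin or Recursive-Join in $\bigO{|\db|^{\frac{3}{2}}}$ time~\cite{NgoPRR18}, and then aggregate out $B$ and $C$ in time linear in the number of produced triangles, which is at most $|\db|^{\frac{3}{2}}$~\cite{LW:1949}. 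Thus bucket (ii) costs $\bigO{|\db|^{\frac{3}{2}}}$.

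For bucket (iii), I would reuse the strategies from the proof of Proposition~\ref{prop:preprocessing_step_full}, treating $R$ and $T$ as only singly partitioned when convenient. The two-relation join views $V_{RS}$, $V_{ST}$, and $V_{TR}$ are each computed in $\bigO{|\db|^{1+\min\{\eps,1-\eps\}}}$ time by choosing the cheaper of the two iteration orders, exactly as in Proposition~\ref{prop:preprocessing_step}; the aggregate view $\hat{V}_{TR}$ is obtained from $V_{TR}$ in time linear in $|V_{TR}|$; and the root views $V^{\H\L\F}$, $V^{\F\H\L}$, and $V^{(\L\H)\F(\H\L)}$ are obtained by intersecting an already-computed view with the remaining relation, in time linear in their inputs. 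Since $\max_{\eps\in[0,1]}\{1+\min\{\eps,1-\eps\}\} = \frac{3}{2}$, every view in this bucket is computed in $\bigO{|\db|^{\frac{3}{2}}}$ time. Summing the three buckets gives the claimed $\bigO{|\db|^{\frac{3}{2}}}$ preprocessing time.

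The argument is a routine adaptation, so I do not expect a genuine obstacle; the only points needing care are (a) confirming that the four triangle-shaped materialized views can indeed be produced by a single worst-case optimal join followed by a cheap projection, so that their smaller final sizes (e.g.\ $\bigO{|\inst{D}|^{1-\eps}}$ for $\triangle_1^{\H\H\H}$) do not mislead the running-time analysis, and (b) verifying that the extra double partitioning of $R$ and $T$ does not inflate the partitioning cost beyond linear. Both follow directly from results already established earlier in the paper.
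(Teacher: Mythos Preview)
Your proposal is correct and follows essentially the same approach as the paper: the paper's proof simply states that it is similar to the proof of Proposition~\ref{prop:preprocessing_step_binary}, which in turn partitions in linear time, computes the triangle-shaped materialized views via a worst-case optimal join algorithm in $\bigO{|\db|^{\frac{3}{2}}}$ time, and computes the remaining auxiliary views using the strategies from Proposition~\ref{prop:preprocessing_step_full}. Your three-bucket decomposition is a more explicit rendering of precisely this argument.
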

\begin{proof}
The proof is similar to the proof of Proposition~\ref{prop:preprocessing_step_binary}.
\end{proof}

\subsection{Space Complexity}\label{sec:space_unary}
We analyze the space complexity of the \ivme maintenance strategy for the unary triangle query.
  
\begin{proposition}
\label{prop:space_complexity_unary}
Given a database $\inst{D}$ and $\eps\in[0,1]$, an \ivme state of $\inst{D}$ supporting the maintenance of the unary triangle query takes $\bigO{|\db|^{1 +\min\{\eps,1-\eps\}}}$ space.
\end{proposition}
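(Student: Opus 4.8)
The plan is to follow the template of Proposition~\ref{prop:space_complexity_full} (and its binary counterpart Proposition~\ref{prop:space_complexity_binary}), decomposing the state $\astate=(\eps,N,\dbeps,\inst{V})$ into its constituents and bounding each separately. The components $\eps$ and $N$ occupy constant space, and since $|\dbeps|=|\inst{D}|$, the partitions of $R$, $S$, and $T$ together take $\bigO{|\inst{D}|}$ space. It then remains to bound the eleven materialized views in $\inst{V}$, whose individual space complexities are listed in Figure~\ref{fig:view_definitions_unary}; I would argue that their sum is dominated by the largest term, $\bigO{|\inst{D}|^{1+\min\{\eps,1-\eps\}}}$, and invoke the invariant $|\inst{D}|=\Theta(N)$ at the end.

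The three ``join-of-a-heavy-and-a-light-part'' auxiliary views $V_{RS}$, $V_{ST}$, and $V_{TR}$ are handled exactly as in the proof of Proposition~\ref{prop:space_complexity}: each is a product of two relation parts sharing a bound variable, one part being heavy and the other light on that variable, so evaluating the join from the cheaper side gives the bound $\min\{N\cdot\frac{3}{2}N^{\eps},\,N\cdot 2N^{1-\eps}\}=\bigO{N^{1+\min\{\eps,1-\eps\}}}$. The projected view $\hat{V}_{TR}(c,b)$ is no larger than $V_{TR}$, so it is also $\bigO{N^{1+\min\{\eps,1-\eps\}}}$ (Figure~\ref{fig:view_definitions_unary} records an even tighter bound that exploits the double partitioning of $R$ and $T$, but that refinement is not needed here). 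The skew-aware views that store only $A$-values, namely $\triangle_1^{\H\H\H}$, $\triangle_1^{(\L\H)\F(\H\H)}$, and $V^{\H\L\F}$, are bounded by the number of distinct $A$-values in a relation part in which $A$ is heavy; the heavy-part condition caps this at $2N^{1-\eps}$, giving $\bigO{N^{1-\eps}}$. The remaining views $\triangle_1^{\L\L\L}$, $\triangle_1^{(\L\L)\F\H}$, and $V^{\F\H\L}$ range over $A$-values in a light part and so are bounded only by $\bigO{N}$, while $V^{(\L\H)\F(\H\L)}(b,c)$ ranges over $(B,C)$-pairs and is bounded by $|S|$, hence $\bigO{N}$.

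Since $1+\min\{\eps,1-\eps\}\geq 1$ for all $\eps\in[0,1]$, every one of these estimates is $\bigO{N^{1+\min\{\eps,1-\eps\}}}$, and summing the constantly-many views together with the $\bigO{|\inst{D}|}$ partitions yields total space $\bigO{N^{1+\min\{\eps,1-\eps\}}}$; the claim then follows from $|\inst{D}|=\Theta(N)$. The only delicate point I anticipate is that a few entries in Figure~\ref{fig:view_definitions_unary}, in particular $\hat{V}_{TR}$ and $V^{(\L\H)\F(\H\L)}$, advertise bounds that genuinely rely on combining two heavy/light conditions at once (as in the double-partition analysis of Proposition~\ref{prop:preprocessing_step_space_double_partition}). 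For this proposition, however, those sharper bounds are unnecessary: it suffices to verify that each such view stays at or below $\bigO{|\inst{D}|^{1+\min\{\eps,1-\eps\}}}$, which the coarser single-partition estimates already guarantee. The main work is thus careful bookkeeping over the eleven views rather than any new estimate.
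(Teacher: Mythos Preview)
Your proposal is correct and follows essentially the same approach as the paper: decompose the state, invoke the $\bigO{N^{1+\min\{\eps,1-\eps\}}}$ bound for the join views $V_{RS}$, $V_{ST}$, $V_{TR}$ via Proposition~\ref{prop:space_complexity_full}, bound the remaining views by $\bigO{N}$ (unary schema or bounded by $|S|$) or $\bigO{N^{1-\eps}}$ (heavy-$A$ condition), and conclude via $|\inst{D}|=\Theta(N)$. The paper additionally derives the sharper bounds for $\hat{V}_{TR}$ and $V^{(\L\H)\F(\H\L)}$ that appear in Figure~\ref{fig:view_definitions_unary}, but as you correctly note, those refinements are not needed to establish the proposition's claim.
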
  
\begin{proof}
Figure~\ref{fig:view_definitions_unary} gives the definition and space complexity of the materialized views. The complexity results follow mainly from the proof of Proposition~\ref{prop:space_complexity_full}. The remaining views take either linear space because of their unary schema or sublinear space because of the heavy part condition on $A$ in one of the relation parts. 
Two notable cases are the views $\hat{V}_{TR}$ and $V^{(\L\H)\F(\H\L)}$.
The size of $\hat{V}_{TR}$ is upper bounded by the size of $V_{TR}$, which is $\bigO{N^{1+\min\{\eps,1-\eps\}}}$ as discussed in the proof of Proposition~\ref{prop:space_complexity_full}, but also by at most $4N^{2-2\eps}$ $(B,C)$-values created by pairing the distinct heavy $B$-values from $R^{\L\H}$ and the distinct heavy $C$-values from $T^{\H\L}$. Thus, the view $\hat{V}_{TR}$ takes $\bigO{N^{1 + \min\{\eps,1-2\eps\}}}$ space.
The view view $V^{(\L\H)\F(\H\L)}$ is further upper bounded by the size of $S$, which gives its $\bigO{N^{\min\{1,2-2\eps\}}}$ space. The proposition follows from the invariant $N = \bigO{|\inst{D}|}$.
\end{proof}

\subsection{Processing a Single-Tuple Update}\label{sec:single_update_unary}

We analyze the time complexity of maintaining an \ivme state for the unary triangle query under a single-tuple update. 

\begin{proposition}\label{prop:single_step_time_unary}
Given a database $\db$, $\eps\in[0,1]$, and an \ivme state $\astate$ of $\inst{D}$ supporting the maintenance of the unary triangle query, \ivme maintains $\astate$ under a single-tuple update to any input relation in $\bigO{|\db|^{\max\{\eps,1-\eps\}}}$ time.
\end{proposition}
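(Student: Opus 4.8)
The plan is to follow exactly the template of the proofs of Propositions~\ref{prop:single_step_time_full} and \ref{prop:single_step_time_binary}: I would go through each materialized view listed in Figure~\ref{fig:view_definitions_unary} and, for every input relation part it references, bound the time to propagate a single-tuple update into that view. The cost of each delta is dictated by the number of values iterated over when joining the updated part against the remaining parts and views, and I would bound this number using the two inequalities carried by Definitions~\ref{def:loose_relation_partition} and \ref{def:loose_double_relation_partition}. Concretely, if a variable $X$ is heavy in a part $K'$ then $|\pi_X K'| \leq 2N^{1-\eps}$ (heavy values have degree at least $\frac{1}{2}N^{\eps}$ in the full relation, so there are at most $2N^{1-\eps}$ of them, and this survives intersection in a double partition), whereas if $X$ is light in $K'$ then $|\sigma_{X=x}K'| < \frac{3}{2}N^{\eps}$ for every $x$. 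The crucial consequence I would use repeatedly is that, because $R,S,T$ are binary, heaviness of $X$ in a binary part $K'$ over $(X,Y)$ bounds the degree of \emph{any} co-value: $|\sigma_{Y=y}K'| = |\pi_X\sigma_{Y=y}K'| \leq |\pi_X K'| \leq 2N^{1-\eps}$. This is what lets an ``iterate over a heavy variable'' step cost only $\bigO{N^{1-\eps}}$ even when the other coordinate is pinned by the update.

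With these tools I would split the analysis into the directly materialized views $\triangle_1^{\H\H\H}, \triangle_1^{\L\L\L}, \triangle_1^{(\L\L)\F\H}, \triangle_1^{(\L\H)\F(\H\H)}$ and the three view trees for $\triangle_1^{\H\L\F}$, $\triangle_1^{\F\H\L}$, and $\triangle_1^{(\L\H)\F(\H\L)}$. For each view and each of the three update targets $R$, $S$, $T$, I would write the delta, identify the single free variable being summed, and pick the part that bounds its range. A representative computation: for $\triangle_1^{(\L\H)\F(\H\H)}$ under $\delta R^{\L\H}=\{(\deltaA,\deltaB)\mapsto \p\}$, the delta sums over $c$ paired with $\deltaA$ in $T^{\H\H}$, and since $C$ is heavy in $T^{\H\H}$ this is $\bigO{N^{1-\eps}}$; under $\delta T^{\H\H}=\{(\deltaC,\deltaA)\mapsto \p\}$ it sums over $b$ paired with $\deltaA$ in $R^{\L\H}$, and since $A$ is light in $R^{\L\H}$ this is $\bigO{N^{\eps}}$. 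The auxiliary views inside each tree inherit the count of the incoming delta (e.g.\ maintaining $\hat{V}_{RS}$, $V^{\H\L\F}$, or $\hat{V}_{TR}$, $V^{(\L\H)\F(\H\L)}$ amounts to constant-time lookups per delta tuple), so no tree exceeds the per-view bound. Every case lands in $\{\bigO{1},\bigO{N^{\eps}},\bigO{N^{1-\eps}}\}$, so the total update time is $\bigO{N^{\max\{\eps,1-\eps\}}}$, and the invariant $N=\Theta(|\db|)$ gives the claimed $\bigO{|\db|^{\max\{\eps,1-\eps\}}}$.

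The main obstacle I anticipate is that, unlike the ternary case, the partitioning here is asymmetric --- $S$ is only single-partitioned on $B$ while $R$ and $T$ are double-partitioned on $(A,B)$ and $(C,A)$ --- so the cyclic $R\!\to\!S\!\to\!T$ symmetry used to shorten the ternary argument is \emph{not} available, and each relation's updates must be checked on its own. The most delicate view is the tree for $\triangle_1^{(\L\H)\F(\H\L)}$, where both double partitions interact through $V_{TR}(c,a,b)=T^{\H\L}(c,a)\cdot R^{\L\H}(a,b)$ and its aggregate $\hat{V}_{TR}(c,b)$: I must verify that the $T^{\H\L}$-updates are absorbed by the lightness of $A$ in $R^{\L\H}$ (giving $\bigO{N^{\eps}}$) while the $R^{\L\H}$-updates are absorbed by the heaviness of $C$ in $T^{\H\L}$ (giving $\bigO{N^{1-\eps}}$), and that no update forces iteration over a variable that is light or unconstrained on both sides. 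Checking that the two refined views $\triangle_1^{(\L\L)\F\H}$ and $\triangle_1^{(\L\H)\F(\H\H)}$ introduced by double partitioning admit sublinear maintenance for all three update targets is the second place where I would be careful, since these have no counterpart in the ternary proof.
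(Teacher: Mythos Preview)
Your proposal is correct and follows essentially the same approach as the paper's proof: reduce most views to the analysis already carried out for the ternary query in Proposition~\ref{prop:single_step_time_full}, and then separately verify the two new refined skew-aware views $\triangle_1^{(\L\L)\F\H}$ and $\triangle_1^{(\L\H)\F(\H\H)}$ introduced by the double partitioning. Your representative computations for $\triangle_1^{(\L\H)\F(\H\H)}$ under $\delta R^{\L\H}$ and $\delta T^{\H\H}$ match the paper's, and your anticipated care with the $\triangle_1^{(\L\H)\F(\H\L)}$ tree is sound (the paper simply subsumes that tree under the reference to Proposition~\ref{prop:single_step_time_full}, treating $R$ and $T$ as if single-partitioned).
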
 
\begin{proof}
Almost all materialized views in 
 Figure~\ref{fig:view_definitions_unary} can be maintained 
 following the same strategies as in the proof of  
Proposition~\ref{prop:single_step_time_full} and by 
ignoring the double partitioning of $R$ and $T$.
The only notable cases are the refined skew-aware views 
$\triangle_1^{(\L\L)\F\H}$ and $\triangle_1^{(\L\H)\F(\H\H)}$, considered next.

We analyze the time to maintain $\triangle_1^{(\L\L)\F\H}$. 
For updates to $R^{\L\L}$,
we need to iterate over at most $2N^{1-\eps}$ $C$-values in $T^{\H}$ and do lookups in $S$. 
For updates to $S$, we need to iteration over less than $\frac{3}{2}N^{\eps}$ $A$-values in $R^{\L\L}$ 
for a fixed $B$-value from $\delta{S}$ and do lookups in $T^{\H}$.
For updates to $T^{H}$, we need to iterate 
over less than $\frac{3}{2}N^{\eps}$ $B$-values for a fixed 
$A$-value from $\delta{T^{H}}$ and do lookups in $S$. 
Thus, maintaining $\triangle_1^{(\L\L)\F\H}$ takes $\bigO{N^{\max\{\eps, 1-\eps\}}}$ time.

The maintenance strategies 
for $\triangle_1^{(\L\H)\F(\H\H)}$ 
differ from 
the strategies above only in case of updates to $S$. 
For an update $S$, we iterate over at most $2N^{1-\eps}$ $A$-values 
in $T^{\H\H}$ and do lookups 
in $R^{\L\H}$.
This implies that the maintenance time is $\bigO{N^{1-\eps}}$.

Hence, the overall maintenance time is
$\bigO{N^{\max\{\eps, 1-\eps\}}}$.  
The result follows from $N = \bigO{|\inst{D}|}$.
\end{proof}

\begin{figure}[t]
\begin{center}
\renewcommand{\arraystretch}{1.3}
\setcounter{magicrownumbers}{0}

\begin{tabular}{l}
\toprule
\textsc{EnumerateUnary}(state $\astate$) \\
\midrule
\linenumber \LET $\astate = (\,\eps, N, 
\{R^{\H\H}, R^{\H\L}, R^{\L\H}, R^{\L\L}, S^\H, S^\L, 
T^{\H\H}, T^{\H\L}, T^{\L\H}, T^{\L\L}\},$ \\
\TAB\TAB\TAB\TAB\TAB\TAB\TAB\TAB\TAB\TAB\TAB $\{\, \triangle_1^{\H\H\H},\, \triangle_1^{\L\L\L},\, \triangle_1^{(\L\L)\F\H},\, 
\triangle_1^{(\L\H)\F(\H\H)}, 
V^{\H\L\F},\, V^{\F\H\L} \,\} \cup \inst{V}\,)$ \\

\linenumber $\inst{I}_1 = \{\, \triangle_1^{\H\H\H}\!.\mathit{iter()},\, \triangle_1^{\L\L\L}\!.\mathit{iter()},\, \triangle_1^{(\L\L)\F\H}\!.\mathit{iter()},\, \triangle_1^{(\L\H)\F(\H\H)}\!.\mathit{iter()},\, V^{\H\L\F}\!.\mathit{iter()},\, V^{\F\H\L}\!.\mathit{iter()},  \,\}$ \\

\linenumber $\inst{I}_2 = \{\, \triangle_1^{(\L\H)\F(\H\L)}.iter\left(\hspace{0.3mm}\textsc{CandidateBuckets}^{(\L\H)\F(\H\L)}\hspace{0.3mm}\right)\,\}$ \\

\linenumber \WHILE $(\,(\deltaA = \textsc{UnionNext}(\, \inst{I}_1 \cup 
\inst{I}_2  \,)) \neq$ \EOF\,) \\

\linenumber \TAB $\p_1 = 
\triangle_1^{\H\H\H}(\deltaA) + 
\triangle_1^{\L\L\L}(\deltaA) +
\triangle_1^{(\L\L)\F\H}(\deltaA) +
\triangle_1^{(\L\H)\F(\H\H)}(\deltaA) +
V^{\H\L\F}(\deltaA) +
V^{\F\H\L}(\deltaA)$ \\

\linenumber \TAB $\p_2 = 
\sum_{s\in\{\H,\L\}} \sum_{b,c} R^{\L\H}(\deltaA,b) \cdot S^s(b,c) \cdot T^{\H\L}(c,\deltaA)$\\
\linenumber \TAB \OUTPUT $\deltaA \mapsto (\p_1+\p_2)$ \\
\bottomrule
\end{tabular}
\end{center}
\caption{
Enumerating the result of the unary triangle query given an \ivme state of database $\db$.
Line 2 creates six iterators over the results of materialized views with schema $A$.
Line 3 creates a hop-based iterator over the non-materialized skew-aware view 
$\triangle_1^{(\L\H)\F(\H\L)}$, 
parameterized by the $\textsc{CandidateBuckets}^{(\L\H)\F(\H\L)}$ function. 
Lines 5 and 6 compute the multiplicity of $\deltaA$ reported by the union algorithm.
}

\label{fig:enum_unary}
\end{figure}

\subsection{Enumeration Delay}
\label{sec:enumeration_unary}
The enumeration procedure for the unary triangle query is similar to that of the binary triangle query. 
The skew-aware views from Figure~\ref{fig:view_definitions_unary} are all materialized 
except $\triangle_1^{(\L\H)\F(\H\L)}$.
For each materialized view, we construct an iterator with constant lookup time and enumeration delay. 
For the non-materialized view $\triangle_1^{(\L\H)\F(\H\L)}$,
we first instantiate its view tree for the distinct $(B,C)$-values appearing at the root $V^{(\L\H)\F(\H\L)}$
and then construct a hop-based iterator for enumerating the distinct $A$-values in the union of these view trees. The hop-based iterator is parameterized by the $\textsc{CandidateBuckets}^{(\L\H)\F(\H\L)}$ function that restricts the set of instantiated view trees to be explored during enumeration for a fixed $A$-value. 
This function first computes the $(B,C)$-values that exist in both the materialized view $V_{TR}$ for the given $A$-value and the root $V^{(\L\H)\F(\H\L)}$, and then returns a set of indices that identify the view trees instantiated for those $(B,C)$-values.

\nop{

}

The procedure \textsc{EnumerateUnary} from Figure~\ref{fig:enum_unary} enumerates the result of the unary triangle query given an \ivme state $\astate$. 
The procedure first creates the iterators for all skew-aware views (Lines~2-3). 
The union algorithm (see Section~\ref{sec:union_algorithm}) takes these iterators as input and reports distinct $A$-values as output. 
For each reported $A$-value $\deltaA$, \textsc{EnumerateUnary} sums up the multiplicity of $\deltaA$ in each of the skew-aware views, 
which involves lookups in the materialized views with schema $A$ (Line~5) 
and an aggregation of $(B,C)$-values over the relation parts from 
$\triangle_1^{(\L\H)\F(\H\L)}$ (Line~6).

\begin{proposition}
\label{prop:delay_unary}
Given a database $\db$, $\eps\in[0,1]$, an \ivme state $\astate$ of \ $\db$ supporting the maintenance of the unary triangle query, \ivme enumerates the query result from $\astate$ with $\bigO{|\db|^{2\min\{\eps,1-\eps\}}}$ delay
and $\bigO{|\db|^{1+ \min\{\eps,1-\eps\}}}$ additional space.
\end{proposition}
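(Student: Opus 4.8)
The plan is to analyze the procedure \textsc{EnumerateUnary} in Figure~\ref{fig:enum_unary}, following the same template as the binary case (Proposition~\ref{prop:delay_binary}). First I would account for the iterators built in Lines~2--3: six iterators over the materialized views of schema $A$ (namely $\triangle_1^{\H\H\H}$, $\triangle_1^{\L\L\L}$, $\triangle_1^{(\L\L)\F\H}$, $\triangle_1^{(\L\H)\F(\H\H)}$, $V^{\H\L\F}$, and $V^{\F\H\L}$), each supporting constant-time lookups and constant-delay enumeration, and one hop-based iterator over the single non-materialized view $\triangle_1^{(\L\H)\F(\H\L)}$. The latter is obtained by instantiating the view tree of $\triangle_1^{(\L\H)\F(\H\L)}$ for each distinct $(B,C)$-value appearing at its root $V^{(\L\H)\F(\H\L)}$ and then enumerating the distinct $A$-values across these buckets. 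All seven iterators are passed to \textsc{UnionNext}; since their number is independent of the data, Lemma~\ref{lem:enum_delay} reduces the overall delay to the maximum delay among them, which will be that of the hop-based iterator.

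The crux is bounding the hop-based iterator via Lemma~\ref{lem:enumerate_pointer_new}, which in turn requires bounding the output size and running time of $\textsc{CandidateBuckets}^{(\L\H)\F(\H\L)}$. For a fixed $A$-value $\deltaA$, this function intersects the $(B,C)$-values paired with $\deltaA$ in $V_{TR}$ with the $(B,C)$-values at the root $V^{(\L\H)\F(\H\L)}$, so I would bound the number of $(B,C)$-values paired with $\deltaA$ in $\triangle_1^{(\L\H)\F(\H\L)}$ by two complementary estimates. Since $A$ is light in both $R^{\L\H}$ and $T^{\H\L}$, the value $\deltaA$ is paired with fewer than $\frac{3}{2}N^{\eps}$ values of $B$ in $R^{\L\H}$ and fewer than $\frac{3}{2}N^{\eps}$ values of $C$ in $T^{\H\L}$, giving at most $\bigO{N^{2\eps}}$ pairs; since $B$ is heavy in $R^{\L\H}$ and $C$ is heavy in $T^{\H\L}$, the total numbers of distinct such $B$- and $C$-values are each at most $2N^{1-\eps}$, giving at most $\bigO{N^{2(1-\eps)}}$ pairs overall. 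The minimum of the two is $\bigO{N^{2\min\{\eps,1-\eps\}}}$. Plugging $|B| = b = \bigO{N^{2\min\{\eps,1-\eps\}}}$ and constant per-bucket lookup and delay (that is, $l=d=\bigO{1}$, since each instantiated view tree enumerates $A$-values with constant delay) into Lemma~\ref{lem:enumerate_pointer_new} yields delay $\bigO{N^{2\min\{\eps,1-\eps\}}}$; the same bound governs the lookups ($\textsc{Contains}$) needed by \textsc{UnionNext}, so the union delay is $\bigO{N^{2\min\{\eps,1-\eps\}}}$.

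It remains to account for the additional space and the per-tuple multiplicity computation in Lines~5--6. For space, the materialized-view iterators need constant space, while the hop-based iterator needs space linear in the total number of $A$-values across all buckets, which is bounded by $|V_{TR}|$; by Proposition~\ref{prop:space_complexity_unary} this is $\bigO{N^{1+\min\{\eps,1-\eps\}}}$. For multiplicity, the lookups into the materialized views of schema $A$ take constant time, whereas computing the multiplicity of $\deltaA$ in $\triangle_1^{(\L\H)\F(\H\L)}$ requires aggregating over the matching $(B,C)$-values in $R^{\L\H}$, $S$, and $T^{\H\L}$, again at most $\bigO{N^{2\min\{\eps,1-\eps\}}}$ many. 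Combining these with the union-delay argument and the invariant $N = \Theta(|\db|)$ gives the claimed $\bigO{|\db|^{2\min\{\eps,1-\eps\}}}$ delay and $\bigO{|\db|^{1+\min\{\eps,1-\eps\}}}$ additional space.

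The main obstacle is the two-sided counting argument for the candidate $(B,C)$-values, since it is exactly the interplay of the light-$A$ conditions (yielding $N^{2\eps}$) and the heavy-$B$/heavy-$C$ conditions (yielding $N^{2(1-\eps)}$) that produces the $2\min\{\eps,1-\eps\}$ exponent rather than a naive $2\eps$. I would also need to verify that $\textsc{CandidateBuckets}^{(\L\H)\F(\H\L)}$ computes this intersection within $\bigO{N^{2\min\{\eps,1-\eps\}}}$ time, so that the term $b$ in Lemma~\ref{lem:enumerate_pointer_new} does not dominate, and that lookups on each instantiated bucket genuinely stay constant.
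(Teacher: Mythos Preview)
Your proposal is correct and follows essentially the same approach as the paper's proof: the same two-sided counting argument on the candidate $(B,C)$-values (light-$A$ giving $\bigO{N^{2\eps}}$, heavy-$B$/heavy-$C$ giving $\bigO{N^{2(1-\eps)}}$), the same application of Lemma~\ref{lem:enumerate_pointer_new}, the same space bound via $|V_{TR}|$, and the same multiplicity computation. The paper additionally notes that the number of buckets is also bounded by $|S|$ (hence by $N$), but this does not change the asymptotic conclusion.
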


\begin{proof}
Creating the iterators over materialized and the hop-based iterator over $\triangle_1^{(\L\H)\F(\H\L)}$ takes constant time (Line~2-3), 
The iterators over the materialized views with schema $A$ allow constant-time lookups and constant-delay enumeration of $A$-values. 
The hop-based iterator reports the distinct $A$-values from the union of at most $\min\{N, 4N^{2(1-\eps)}\}$ view trees instantiated for the distinct $(B,C)$-values in the root $V^{(\L\H)\F(\H\L)}$. Each such a view tree allows constant-time lookups and constant-delay enumeration of $A$-values. 

The $\textsc{CandidateBuckets}^{(\L\H)\F(\H\L)}$ function, which parameterizes the hop-based iterator, first intersects the $(B,C)$-values from $V_{TR}$ for a fixed $A$-value and from the root $V^{(\L\H)\F(\H\L)}$.
The number of $(B,C)$-values in $V_{TR}$ is at most $4N^{2-2\eps}$ due to the heavy part conditions on $B$ in $R^{\L\H}$ and on $C$ in $T^{\H\L}$, and less than $\frac{9}{4}N^{2\eps}$ for a fixed $A$-value due to the light part conditions on $A$ in $R^{\L\H}$ and on $A$ in $T^{\H\L}$.
The number of $(B,C)$-values in $V^{(\L\H)\F(\H\L)}$ is further upper bounded by the size of $S$.
Thus, computing the intersection and returning a set of indices that identify the matching view trees take $\bigO{N^{2\min\{\eps, 1-\eps\}}}$ time.
The returned set of indices is of size at most $\min\{N, 4N^{2-2\eps}, \frac{9}{4}N^{2\eps}\}$.
Per Lemma~\ref{lem:enumerate_pointer_new}, the enumeration delay for the view $\triangle_1^{(\L\H)\F(\H\L)}$ is $\bigO{N^{2\min\{\eps,1-\eps\}}}$.

The iterators over materialized views require constant space during enumeration.
The hop-based iterator over $\triangle_1^{(\L\H)\F(\H\L)}$ requires space linear in the total number of its $A$-value, per Lemma~\ref{lem:enumerate_hop_iterator}.
This number is upper bounded by the size of $V_{TR}$, which takes $\bigO{N^{1 + \min\{\eps, 1-\eps\}}}$ space by Proposition~\ref{prop:space_complexity_unary}.

Computing the total multiplicity of each reported $A$-value $\deltaA$ requires constant-time lookups in the materialized views with schema $A$ (Line~5) and iteration over the distinct $(B,C)$-values appearing in the join of $R^{\L\H}$, $S$, and $T^{\H\L}$ (Line~6);  
since $A$ is light in $R^{\L\H}$ and $T^{\H\L}$, and each of the variables 
$B$ and $C$ is heavy in one of these relation parts,  the number of such 
$(B,C)$-values 
is $\bigO{N^{2\min{\{\eps,1-\eps\}}}}$. Thus, the multiplicity of the output value $\deltaA$ can be computed in $\bigO{N^{2\min{\{\eps,1-\eps\}}}}$ time. 

Overall, \textsc{EnumerateUnary} enumerates the result of $\triangle_1$ from $\astate$ with 
$\bigO{N^{2\min{\{\eps,1-\eps\}}}}$ delay
and $\bigO{N^{1+\min{\{\eps,1-\eps\}}}}$ additional space.
The proposition follows from the invariant $|\db|=\Theta(N)$.

\end{proof}

\subsection{Summing Up}
The additional space used by the enumeration algorithm for the 
unary triangle query is linearly bounded by the overall space complexity 
of maintained views. We conclude that 
our main result in Theorem~\ref{theo:main_result_triangle} for the unary triangle query
follows from Propositions~\ref{prop:preprocessing_step_unary}, \ref{prop:space_complexity_unary}, \ref{prop:single_step_time_unary}, and \ref{prop:delay_unary} shown in the previous subsections, complemented by 
Proposition~\ref{prop:amortized_update_time}, which shows that the amortized
rebalancing time is $\bigO{|\inst{D}|^{\max\{\eps, 1-\eps\}}}$.

\section{Rebalancing Relation Partitions}
\label{sec:rebalancing}
The partition of a relation may change after updates.
For instance,  an insert 
$\delta R^{\L} = \{(\deltaA,\deltaB) \mapsto 1\}$  may violate the size invariant
 $\floor{\frac{1}{4}N} \leq |\db| < N$ in an \ivme state or may violate the light part condition $|\sigma_{A=\deltaA}R^{\L}| < \frac{3}{2}N^{\eps}$ 
 on data value $\deltaA$
 and require moving all tuples with $A$-value $\deltaA$ from $R^{\L}$ to $R^{\H}$. 
As the database evolves under updates, \ivme performs \emph{major} and \emph{minor} rebalancing steps to ensure that the size invariant and the 
heavy and light part conditions always hold. This rebalancing also ensures that the upper bounds on the number of data values, such as the number of $B$-values paired with $\deltaA$ in 
$R^{\L}$ and the number of distinct $A$-values in $R^{\H}$, are valid. The rebalancing cost is amortized over multiple updates.

The rebalancing procedures introduced in this section
operate on \ivme states supporting any triangle query discussed 
in the previous sections.  
 The maintenance procedure \textsc{ApplyUpdate} used 
 by major and minor rebalancing is  
 polymorphic in the sense that its definition depends 
 on the maintained triangle query and used partitioning scheme (single or double partitioning).
Sections~\ref{sec:single-update-nullary} and \ref{sec:single_update_full} show the procedures \textsc{ApplyUpdate} for the nullary triangle query under single partitioning and respectively the ternary triangle query.
Sections~\ref{sec:nullary_triangle_double_partitioning}, \ref{sec:single_update_binary}, and \ref{sec:single_update_unary} describe how to adapt these procedures for the nullary triangle query under double partitioning, the binary triangle query, and the unary triangle query, respectively.

\paragraph*{Major Rebalancing}
If an update causes the database size to fall below $\lfloor \frac{1}{4} N \rfloor$ or reach $N$, \ivme halves or, respectively, doubles the threshold base $N$, and calls the 
procedure \textsc{MajorRebalance} shown in Figure~\ref{fig:major_minor}. 
The procedure strictly repartitions the database relations  
with the new threshold $N^{\eps}$ (Line 2) and recomputes the materialized views using the new relation parts (Line 3).

\begin{figure}[t]
\begin{center}
\begin{tikzpicture}
\node at(-3.5,0)[anchor=north] {
\renewcommand{\arraystretch}{1}
\setcounter{magicrownumbers}{0}
\begin{tabular}{l@{\hskip 0.08in}l@{}}
\toprule
\multicolumn{2}{l}{\textsc{MajorRebalance}(state $\astate$)} \\
\midrule 
\rownumber & \LET $\astate = (\eps, N, \dbeps,\inst{V})$ \\
\rownumber & $\dbeps = \textsc{StrictPartition}(\dbeps,N^{\eps})$ \\
\rownumber & $\inst{V} = \textsc{Recompute}(\inst{V},\dbeps)$\\
\rownumber & \RETURN $\astate$ \\
\bottomrule
\end{tabular}
};

\node at(3.9,0)[anchor=north] {

\renewcommand{\arraystretch}{1}
\setcounter{magicrownumbers}{0}
\begin{tabular}{l@{}}
\toprule
{\textsc{MoveTuples}}($\text{variable }X, \text{value }x,
 K_{\mathit{src}}\! \shortrightarrow\! K_{\mathit{dst}}, \text{state }\astate$)\\
\midrule
\FOREACH $\inst{x} \in \sigma_{X=x} K_{\mathit{src}}$ \DO \\
 \TAB
$\astate$ = {\textsc{ApplyUpdate}}($\delta K_{\mathit{dst}} = \{\, \inst{x} \mapsto K_{\mathit{src}}(\inst{x}) \,\},\astate$)\\
\TAB $\astate$ = {\textsc{ApplyUpdate}}($\delta K_{\mathit{src}} = \{\, \inst{x} 
\mapsto -K_{\mathit{src}}(\inst{x}) \,\},\astate$)\\
  \RETURN $\astate$\\
\bottomrule
\end{tabular}
};

\node at(1,-11)[anchor=south] {
\renewcommand{\arraystretch}{1.1}
\setcounter{magicrownumbers}{0}
\begin{tabular}{@{\hskip 0.12in}l}
\toprule
\hspace{-0.04in}\textsc{MinorRebalance}($\hspace{0.1mm}\text{relation } K, \text{variable } X, \text{value }x, \text{variable }Y, \text{value }y, \text{state }\astate$) \\
\midrule
\linenumber \IF ($K$ is single partitioned) \\
\linenumber \TAB \IF (
$x \in \pi_X K^{\H}$ \AND $|\sigma_{X = x} K^{\H}| < \frac{1}{2} N^{\eps}$) \\
\linenumber \TAB\TAB $\astate = \textsc{MoveTuples}(X,x,
K^{\H}\! \shortrightarrow\! K^{\L},\astate)$\\

\linenumber \TAB \ELSE\IF ( 
$x \in \pi_X K^{\L}$ \AND $|\sigma_{X = x} K^{\L}| \geq \frac{3}{2} N^{\eps}$) \\
\linenumber \TAB\TAB $\astate = \textsc{MoveTuples}(X,x,
K^{\L}\! \shortrightarrow\! K^{\H},\astate)$\\

\linenumber \ELSE\IF ($K$ is double partitioned) \\

\linenumber \TAB \IF (
$x \in (\pi_X K^{\H\H} \cup \pi_X K^{\H\L})$ \AND $|\sigma_{X = x} K | < \frac{1}{2} N^{\eps}$) \\
\linenumber \TAB\TAB $\astate = \textsc{MoveTuples}(X,x,
K^{\H\H}\! \shortrightarrow\! K^{\L\H},\astate)$; \TAB
$\astate = \textsc{MoveTuples}(X,x,
K^{\H\L}\! \shortrightarrow\! K^{\L\L},\astate)$\\

\linenumber \TAB \ELSE\IF (
$x \in (\pi_X K^{\L\H} \cup \pi_X K^{\L\L})$ \AND 
$|\sigma_{X = x} K | \geq \frac{3}{2} N^{\eps}$) \\
\linenumber \TAB\TAB $\astate = \textsc{MoveTuples}(X,x,
K^{\L\H}\! \shortrightarrow\! K^{\H\H},\astate)$; 
\TAB $\astate = \textsc{MoveTuples}(X,x,
K^{\L\L}\! \shortrightarrow\! K^{\H\L},\astate)$\\

\linenumber \TAB \IF (
$y \in (\pi_Y K^{\H\H} \cup \pi_Y K^{\L\H})$ \AND 
$|\sigma_{Y = y} K | < \frac{1}{2} N^{\eps}$) \\
\linenumber \TAB\TAB $\astate = \textsc{MoveTuples}(Y,y,
K^{\H\H}\! \shortrightarrow\! K^{\H\L},\astate)$; \TAB
$\astate = \textsc{MoveTuples}(Y,y,
K^{\L\H}\! \shortrightarrow\! K^{\L\L},\astate)$\\

\linenumber \TAB \ELSE\IF (
$y \in (\pi_Y K^{\H\L} \cup \pi_Y K^{\L\L})$ \AND 
$|\sigma_{Y = y} K | \geq \frac{3}{2} N^{\eps}$) \\
\linenumber \TAB\TAB $\astate = \textsc{MoveTuples}(Y,y,
K^{\H\L}\! \shortrightarrow\! K^{\H\H},\astate)$; \TAB 
$\astate = \textsc{MoveTuples}(Y,y,
K^{\L\L}\! \shortrightarrow\! K^{\L\H},\astate)$\\

\linenumber \RETURN $\astate$ \\
\bottomrule
\end{tabular}
};
\end{tikzpicture}

\end{center}
\caption{
$\textsc{MajorRebalance}(\astate)$ performs major rebalancing on a state $\astate = (\eps,N, \dbeps,\inst{V})$ supporting the maintenance of a triangle query. 
$\textsc{StrictPartition}(\dbeps,N^{\eps})$ strictly repartitions the relations 
in $\dbeps$ with threshold $N^{\eps}$, and 
$\textsc{Recompute}(\inst{V},\dbeps)$  
recomputes the views in $\inst{V}$ using the partitions in $\dbeps$.
Given a relation $K$ with schema $(X,Y)$, 
an $X$-value $x$ and a $Y$-value $y$, 
$\textsc{MinorRebalance}(K,X,x,Y,y,\astate)$ moves tuples 
between relation parts to ensure that the heavy and light part conditions
on values $x$ and $y$ hold.
\textsc{MoveTuples}($X,x,K_{\mathit{src}}\! \shortrightarrow\! K_{\mathit{dst}},\astate$) 
uses \textsc{ApplyUpdate} to move 
all tuples with $X$-value $x$ from relation part $K_{\mathit{src}}$ to relation part $K_{\mathit{dst}}$.
\textsc{ApplyUpdate} depends on the maintained triangle query, 
see Sections~\ref{sec:single-update-nullary},
\ref{sec:nullary_triangle_double_partitioning},
\ref{sec:single_update_full},
\ref{sec:single_update_binary}, and
\ref{sec:single_update_unary}.
}

\label{fig:major_minor}
\end{figure}

\begin{proposition}\label{prop:major_cost}
Given a database $\inst{D}$,
 major rebalancing of an \ivme state of $\db$ supporting the maintenance 
 of any triangle query
takes $\bigO{|\inst{D}|^{\frac{3}{2}}}$ time.  
\end{proposition}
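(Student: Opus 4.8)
The plan is to observe that \textsc{MajorRebalance} (Figure~\ref{fig:major_minor}) performs exactly the two operations of the preprocessing stage: it strictly repartitions every input relation with the current threshold $N^{\eps}$ (Line~2) and recomputes the entire set $\inst{V}$ of materialized views from these fresh partitions (Line~3). The bound should therefore follow almost verbatim from the preprocessing analyses, provided I first re-establish that the database size and the threshold base stay polynomially tied at the moment the procedure runs.

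First I would verify that the size invariant is restored, so that $N = \Theta(|\db|)$ continues to hold. Major rebalancing fires only when an update drives $|\db|$ to reach $N$ (whereupon $N$ is doubled) or to fall below $\lfloor \frac{1}{4} N\rfloor$ (whereupon $N$ is halved). A short calculation shows that in both cases the new threshold base $N'$ satisfies $\lfloor \frac{1}{4} N'\rfloor \le |\db| < N'$, i.e.\@ $|\db| = \Theta(N')$. This is the essential bookkeeping step, since every complexity bound in the preprocessing propositions is first stated in terms of $N$ and then converted using $N = \Theta(|\db|)$; I must ensure this conversion remains valid precisely when \textsc{MajorRebalance} executes.

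Next I would bound the two lines of the procedure. For Line~2, strictly (re)partitioning each relation takes time linear in its size: in the single-partition case this is Definition~\ref{def:loose_relation_partition}, and in the double-partition case one partitions each relation on each of its two variables and intersects the parts, which is again linear (as in the proof of Proposition~\ref{prop:preprocessing_step_space_double_partition}). Summing over $R$, $S$, and $T$ gives $\bigO{|\db|}$. For Line~3, recomputing $\inst{V}$ is identical to building the views during preprocessing, so its cost is exactly the preprocessing cost of the query at hand, namely $\bigO{|\db|^{\frac{3}{2}}}$; this is Proposition~\ref{prop:preprocessing_step} (together with Proposition~\ref{prop:preprocessing_step_space_double_partition}) for the nullary query, Proposition~\ref{prop:preprocessing_step_full} for the ternary query, Proposition~\ref{prop:preprocessing_step_binary} for the binary query, and Proposition~\ref{prop:preprocessing_step_unary} for the unary query. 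Adding the two lines and using $N = \Theta(|\db|)$ yields the claimed $\bigO{|\db|^{\frac{3}{2}}}$, uniformly over all triangle queries.

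The statement is not technically deep: the real content already resides in the preprocessing propositions, and the only thing to check is that \textsc{MajorRebalance} does no more work than preprocessing. The mildest obstacle is the ``any triangle query'' qualifier, which forces me to note that \textsc{ApplyUpdate}, and hence the view set $\inst{V}$ and the (single or double) partitioning scheme, varies with the query; I would handle this by invoking the matching preprocessing proposition in each of the four cases rather than arguing uniformly. A secondary point worth stating explicitly is that this cost is charged \emph{per} major rebalancing event; the amortization over the updates occurring between consecutive events is deferred to Proposition~\ref{prop:amortized_update_time} and is not part of the present statement.
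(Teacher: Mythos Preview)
Your proposal is correct and follows essentially the same approach as the paper: bound Line~2 by $\bigO{|\db|}$ and Line~3 by $\bigO{|\db|^{\frac{3}{2}}}$ via the preprocessing propositions (\ref{prop:preprocessing_step}, \ref{prop:preprocessing_step_space_double_partition}, \ref{prop:preprocessing_step_full}, \ref{prop:preprocessing_step_binary}, \ref{prop:preprocessing_step_unary}), handling the ``any triangle query'' qualifier case by case. Your explicit check that $N = \Theta(|\db|)$ is restored after doubling/halving is a careful addition that the paper leaves implicit.
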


\begin{proof}
Let $\astate = (\eps, N, \dbeps,\inst{V})$ be an \ivme state
supporting the maintenance of any triangle query.  
Consider the procedure \textsc{MajorRebalance} 
from Figure~\ref{fig:major_minor}.
The procedure strictly repartitions the relations
in $\inst{P}$ using the threshold $N^{\eps}$
and recomputes the materialized views in $\inst{V}$
based on the new relation partitions.   
Strictly partitioning the input relations takes $\bigO{|\db|}$ time. 
Propositions \ref{prop:preprocessing_step},
\ref{prop:preprocessing_step_space_double_partition},
\ref{prop:preprocessing_step_full},
\ref{prop:preprocessing_step_binary}, and
\ref{prop:preprocessing_step_unary} state that 
the computation of the initial \ivme state supporting the maintenance
of any triangle query takes $\bigO{|\db|^{\frac{3}{2}}}$
time. From the proofs of these propositions follows that the views in 
$\inst{V}$ can be recomputed in $\bigO{|\db|^{\frac{3}{2}}}$ time.
\end{proof}

The superlinear time of major rebalancing is amortized over $\Omega{(N)}$ updates. After a major rebalancing step, it holds that $|\db| = \frac{1}{2}N$ (after doubling), or $|\db| = \frac{1}{2}N - \frac{1}{2}$ or $|\db| = \frac{1}{2}N - 1$ (after halving, i.e., setting $N$ to $\floor{\frac{1}{2}N}-1$; the two options are due to the floor functions in the size invariant and halving expression). To violate the size invariant $\floor{\frac{1}{4}N} \leq |\db| < N$ and trigger another major rebalancing, the number of required updates is at least $\frac{1}{4}N$.
Section~\ref{sec:amortization} proves the amortized $\bigO{|\db|^{\frac{1}{2}}}$ time of major rebalancing.

\paragraph*{Minor Rebalancing}
After each update $\delta R = \{(\deltaA,\deltaB) \mapsto \p\}$, \ivme checks whether the light and heavy part conditions still hold for $\deltaA$ and $\deltaB$.
If $R$ is partitioned on variable $A$, the relation partition consists of the heavy part 
$R^{\H}$ and the light part $R^{\L}$. 
By Definition 
\ref{def:loose_relation_partition}, 
the heavy and light part conditions on $\deltaA$ are 
$|\sigma_{A=\deltaA} R^{\H}| \geq \frac{1}{2}N^{\eps}$
and
$|\sigma_{A=\deltaA} R^{\L}| < \frac{3}{2}N^{\eps}$, respectively. 
If the first condition is violated, all tuples in $R^{\H}$ with the $A$-value $\deltaA$ are moved to $R^{\L}$ and the affected views are updated; similarly, if the second condition is violated, all tuples with the $A$-value $\deltaA$ are moved from $R^{\L}$ to $R^{\H}$, followed by updating the affected views. 

If $R$ is double partitioned on $(A,B)$, the relation 
partition consists of the parts $R^{\H\H}$, $R^{\H\L}$, 
$R^{\L\H}$, and $R^{\L\L}$. Then, the heavy and light part conditions 
must be checked not only for the $A$-value $\deltaA$ but also for 
the $B$-value $\deltaB$. 
From Definition~\ref{def:loose_double_relation_partition}, 
the heavy and light part conditions on $\deltaA$  are 
$|\sigma_{A = \alpha} R| \geq \frac{1}{2} N^{\eps}$
and respectively
$|\sigma_{A = \alpha} R| < \frac{3}{2} N^{\eps}$, 
where $R$ is obtained by taking the union of the parts of $R$.
If the update $\delta R$ violates the first condition, 
all tuples with $A$-value $\deltaA$ are moved from the relation parts in which $A$
is heavy to the relation parts in which $A$ is light, that is,  
from $R^{\H\H}$ and $R^{\H\L}$ 
to $R^{\L\H}$ and $R^{\L\L}$, respectively.
If the update violates the second condition, all tuples with $A$-value
$\deltaA$ are moved in the opposite direction, 
from $R^{\L\H}$ and $R^{\L\L}$ to $
R^{\H\H}$ and $R^{\H\L}$. 
In both cases, the affected views are updated.
The heavy and light part conditions on $B$-value $\beta$ are ensured in a similar 
way. As a result of an update, 
both values $\deltaA$ and $\deltaB$ might change from 
light to heavy or vice-versa, but it is impossible 
that one value changes from light to heavy and the other one from 
heavy to light. The minor rebalancing steps followed by
updates to the other relations $S$ and $T$ are analogous.

The procedure \textsc{MinorRebalance} in 
Figure~\ref{fig:major_minor} describes a minor rebalancing step 
on an \ivme state
following an update $\delta K = \{(x,y) \mapsto \p\}$ 
to a relation $K$ over schema $(X,Y)$. 
If $K$ is single partitioned, the heavy and light part conditions 
are checked for $X$-value $x$ only (Lines 1-5).  
If it is double partitioned, the conditions are checked 
for both $X$-value $x$ and $Y$-value $y$ (Lines 6-14). 
Tuples are moved between relation parts using the procedure 
\textsc{MoveTuples} in 
Figure~\ref{fig:major_minor}.
Given a variable $X$ in the schema 
of relation $K$, an  $X$-value $x$,
a source 
relation part $K_{\mathit{src}}$, and a target relation
part $K_{\mathit{dst}}$, the procedure  
\textsc{MoveTuples} moves all tuples with $X$-value $x$
from $K_{\mathit{src}}$ to 
part $K_{\mathit{dst}}$.
A tuple $\inst{x}$ is moved from  $K_{\mathit{src}}$ to  $K_{\mathit{dst}}$
by using the procedure \textsc{ApplyUpdate} 
that updates the multiplicities of $\inst{x}$ in  $K_{\mathit{dst}}$ and $K_{\mathit{src}}$ 
and maintains the materialized views in the \ivme state. 
Sections~\ref{sec:single-update-nullary}, 
\ref{sec:nullary_triangle_double_partitioning}, 
\ref{sec:single_update_full}, 
\ref{sec:single_update_binary}, and
\ref{sec:single_update_unary} 
give the definition of \textsc{ApplyUpdate} for each triangle query.
If $K$ is single partitioned,
\textsc{MoveTuples} is called at most once in \textsc{MinorRebalance}. 
If $K$ is double partitioned,
\textsc{MoveTuples} can be called up to four times, two times per $x$ and $y$, to meet the heavy and light part conditions.

\begin{proposition}
  \label{prop:minor_cost}
  Given a database $\inst{D}$ and $\eps \in [0,1]$ minor rebalancing 
  of an \ivme state of $\inst{D}$ supporting the maintenance of any triangle query
  takes $\bigO{|\inst{D}|^{\eps + \max\{\eps,1-\eps\}}}$ time.  
\end{proposition}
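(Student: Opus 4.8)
The plan is to bound the cost of one minor rebalancing step as the product of two quantities: the number of tuples that \textsc{MoveTuples} relocates between relation parts, and the cost of relocating a single tuple. Relocating one tuple amounts to two invocations of \textsc{ApplyUpdate} (one insertion into the destination part, one deletion from the source part), and by Propositions~\ref{prop:single_step_time}, \ref{prop:single_step_time_double_partition}, \ref{prop:single_step_time_full}, \ref{prop:single_step_time_binary}, and \ref{prop:single_step_time_unary}, each such call costs $\bigO{|\inst{D}|^{\max\{\eps,1-\eps\}}} = \bigO{N^{\max\{\eps,1-\eps\}}}$ regardless of which triangle query is maintained and which partitioning scheme is used. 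Iterating over the tuples to be moved is done with constant delay using the index structures of Section~\ref{sec:computational_model}, so it is dominated by these update costs.

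The crux is therefore to show that each \textsc{MoveTuples} call relocates only $\bigO{N^{\eps}}$ tuples. First I would use that a single-tuple update changes the degree of the affected value by at most one, together with the heavy and light part conditions of Definitions~\ref{def:loose_relation_partition} and \ref{def:loose_double_relation_partition}. Consider single partitioning of $K$ on $X$. If the heavy condition is violated for value $\deltaA$, then $|\sigma_{X=\deltaA}K^{\H}| < \frac{1}{2}N^{\eps}$, so fewer than $\frac{1}{2}N^{\eps}$ tuples move from $K^{\H}$ to $K^{\L}$. If instead the light condition is violated, the value was light immediately before the update, so $|\sigma_{X=\deltaA}K^{\L}| < \frac{3}{2}N^{\eps}$ held beforehand; after one insertion its degree is at most $\frac{3}{2}N^{\eps}$ (up to rounding), so $\bigO{N^{\eps}}$ tuples move from $K^{\L}$ to $K^{\H}$. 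For double partitioning the conditions are stated on the degree in the full relation $K$, and the tuples moved for value $\deltaA$ are exactly those residing in the two parts where $X$ is heavy (resp. light); their total is $|\sigma_{X=\deltaA}K^{\H\H}| + |\sigma_{X=\deltaA}K^{\H\L}| \le |\sigma_{X=\deltaA}K| = \bigO{N^{\eps}}$ (resp. the analogous sum over the light-on-$X$ parts), by the same one-step degree argument. The conditions on the $Y$-value $\deltaB$ are handled symmetrically, and \textsc{MinorRebalance} issues at most four \textsc{MoveTuples} calls in total (two per affected value), which is a constant.

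Combining the two bounds, one minor rebalancing step costs $\bigO{N^{\eps}} \cdot \bigO{N^{\max\{\eps,1-\eps\}}} = \bigO{N^{\eps + \max\{\eps,1-\eps\}}}$; the constant-time degree checks in \textsc{MinorRebalance} (operation~(6) of Section~\ref{sec:computational_model}) do not affect this, and there is no cascading since \textsc{MoveTuples} invokes \textsc{ApplyUpdate} directly rather than \textsc{MinorRebalance}. The claim then follows from the size invariant $|\inst{D}| = \Theta(N)$. The main obstacle is the second step: carefully arguing that every violated condition caps the number of relocated tuples at $\bigO{N^{\eps}}$ — in particular, noting that when the light condition fires the value must have been light immediately beforehand, so its pre-update degree was below $\frac{3}{2}N^{\eps}$, which is precisely what prevents a single move from touching arbitrarily many tuples.
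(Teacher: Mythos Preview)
Your proposal is correct and follows essentially the same approach as the paper: bound the number of tuples moved by $\bigO{N^{\eps}}$ via the heavy/light part thresholds, multiply by the $\bigO{N^{\max\{\eps,1-\eps\}}}$ cost of each \textsc{ApplyUpdate} call, note that at most a constant number of \textsc{MoveTuples} invocations occur, and finish with $|\inst{D}| = \Theta(N)$. Your treatment of the light-to-heavy case (observing the value was light just before the update) is slightly more explicit than the paper's, but the argument is the same.
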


\begin{proof}
  Consider an \ivme state $\astate=(\eps, N, \dbeps, \inst{V})$
  and an update $\delta R = \{(\deltaA, \deltaB) \mapsto \p\}$ to relation $R$.
  The analysis for updates to $S$ and $T$ is similar. 
  If $R$ is single partitioned, \textsc{MinorRebalance}
  calls \textsc{MoveTuples} at most once;
  if $R$ is double partitioned, \textsc{MinorRebalance}
  calls \textsc{MoveTuples} at most four times.  
  Consider the worst case when 
  $R$ is double partitioned and both values $\deltaA$ and $\deltaB$
  change from heavy to light or vice-versa. 
  If they change from heavy to light, the procedure 
moves fewer than $\frac{1}{2}N^{\eps}$ 
tuples with $A$-value $\deltaA$ 
and 
fewer than $\frac{1}{2}N^{\eps}$ 
tuples with $B$-value $\deltaB$. 
If the two values 
  change from light to heavy,
the procedure moves  
fewer than $\frac{3}{2}N^{\eps} + 1$ tuples
with $A$-value $\deltaA$ and
fewer than $\frac{3}{2}N^{\eps} + 1$ tuples
with $B$-value $\deltaB$.
  Each tuple move performs one delete and one insert 
  by executing \textsc{ApplyUpdate}.  
  From Propositions 
  \ref{prop:single_step_time},
  \ref{prop:single_step_time_double_partition},
  \ref{prop:single_step_time_full},
  \ref{prop:single_step_time_binary}, and
  \ref{prop:single_step_time_unary}
  follows that, regardless of the maintained triangle query,
\textsc{ApplyUpdate} runs in time 
  $\bigO{|\db|^{\max\{\eps,1-\eps\}}}$. 
    Since there are 
  $\bigO{N^{\eps}}$ such operations, the procedure \textsc{MinorRebalance} requires 
  $\bigO{|\db|^{\eps+\max\{\eps,1-\eps\}}}$ time. As 
  $|\db|=\Theta(N)$, minor rebalancing runs in time 
  $\bigO{|\db|^{\eps+\max\{\eps,1-\eps\}}}$.
\end{proof}

The (super)linear time of minor rebalancing is amortized over $\Omega(N^{\eps})$ updates. 
This lower bound on the number of updates comes from the relation partition conditions (see Definition~\ref{def:loose_relation_partition}), namely from the gap between the two thresholds in these conditions. 
Section~\ref{sec:amortization} proves the amortized $\bigO{|\db|^{\max\{\eps,1-\eps\}}}$ time of minor rebalancing.

\begin{figure}[t]
\begin{tikzpicture}
\node at(-0.5,0)[anchor=north west] {
\renewcommand{\arraystretch}{1.1}
\setcounter{magicrownumbers}{0}
\hspace{-3mm}
\begin{tabular}{@{\hspace{3mm}}l}
\toprule
\textsc{OnUpdate}(update $\delta R$, state $\astate$) \\
\midrule
\linenumber \LET $\delta R = \{(\deltaA,\deltaB) \mapsto \p\}$  \\
\linenumber \LET $\astate = (\eps, N, \inst{P}, \inst{V})$  \\

\linenumber \LET $R^{r} = \textsc{AffectedPart}(\delta R,\astate)$ \\

\linenumber $\textsc{ApplyUpdate}(\delta R^r = \{(\deltaA,\deltaB) \mapsto \p\},\astate)$\\

\linenumber \IF($|\inst{D}| = N$)\\
\linenumber \TAB $N = 2N$ \\
\linenumber \TAB $\astate = {\textsc{MajorRebalance}}(\astate)$ \\
\linenumber \ELSE \IF($|\inst{D}| < \floor{\frac{1}{4}N}$) \\
\linenumber \TAB $N = \floor{\frac{1}{2}N}-1$ \\
\linenumber \TAB $\astate = \textsc{MajorRebalance}(\astate)$ \\

\linenumber \ELSE \IF ($A$ is light in $R^r$ \AND 
$|\sigma_{A = \deltaA} R | \geq \frac{3}{2} N^{\eps}$ \OR \\
\linenumber \TAB\TAB\TAB $B$ is light in $R^r$ \AND 
$|\sigma_{B = \deltaB} R | \geq \frac{3}{2} N^{\eps}$ \OR \\
\linenumber \TAB\TAB\TAB $A$ is heavy in $R^r$ \AND 
$|\sigma_{A = \deltaA} R | < \frac{1}{2} N^{\eps}$ \OR \\
\linenumber \TAB\TAB\TAB $B$ is heavy in $R^r$ \AND 
$|\sigma_{B = \deltaB} R | < \frac{1}{2} N^{\eps}$) \\
\linenumber \TAB $\astate = \textsc{MinorRebalance}(R, A,\alpha, B, \beta,  \astate)$ \\ 
\linenumber \RETURN $\astate$\\
\bottomrule
\end{tabular}
};

\node at(15.6,0)[anchor=north east] {
\renewcommand{\arraystretch}{1.1}
\setcounter{magicrownumbers}{0}
\begin{tabular}{@{\hspace{3mm}}l}
\toprule
\textsc{AffectedPart}(update $\delta R$, state $\astate$) \\
\midrule 
\linenumber \LET $\delta R = \{(\deltaA,\deltaB) \mapsto \p\}$  \\
\linenumber \LET $\astate = (\eps, N, \inst{P}, \inst{V})$ \\
\linenumber \IF ($R$ is single partitioned) \\
\linenumber \TAB \IF ($\deltaA \in \pi_A R^{\H}$ \OR $\eps =0$) \\
\linenumber \TAB \TAB \RETURN $R^{\H}$\\

\linenumber \TAB \ELSE \\
\linenumber \TAB \TAB \RETURN $R^{\L}$\\

\linenumber \ELSE\IF ($R$ is double partitioned) \\
\linenumber \TAB \IF ($(\deltaA,\deltaB) \in R^{\H\H}$ \OR $\eps =0$) \\
\linenumber \TAB \TAB \RETURN $R^{\H\H}$\\

\linenumber \TAB \ELSE \IF ($(\deltaA,\deltaB) \in R^{\H\L}$) \\
\linenumber \TAB \TAB \RETURN $R^{\H\L}$\\

\linenumber \TAB \ELSE \IF ($(\deltaA,\deltaB) \in R^{\L\H}$) \\
\linenumber \TAB \TAB \RETURN $R^{\L\H}$\\

\linenumber \TAB \ELSE \\
\linenumber \TAB \TAB \RETURN $R^{\L\L}$\\

\bottomrule
\end{tabular}
};
\end{tikzpicture}

\caption{
Maintaining an \ivme state supporting the maintenance of any triangle 
query 
 under a single-tuple update and performing rebalancing. 
The procedure \textsc{OnUpdate} takes as input an update $\delta R$ and 
an \ivme state $\astate$ of database $\inst{D}$
and returns a new state that results from applying 
$\delta R$ to $\astate$ and, if necessary, rebalancing partitions.
The procedure \textsc{AffectedPart} determines the relation part in $\astate$ affected by the update. 
\textsc{ApplyUpdate} depends on the maintained triangle query, 
see Sections~\ref{sec:single-update-nullary},
\ref{sec:nullary_triangle_double_partitioning},
\ref{sec:single_update_full},
\ref{sec:single_update_binary}, and
\ref{sec:single_update_unary}.
\textsc{MajorRebalance} and \textsc{MinorRebalance}
are given in Figure \ref{fig:major_minor}.  
The \textsc{OnUpdate} procedures for updates to $S$ and $T$ are analogous. 
}
\label{fig:onUpdate}
\end{figure}

Figure~\ref{fig:onUpdate} gives the trigger procedure \textsc{OnUpdate} that maintains an \ivme state of a database $\inst{D}$ under a single-tuple update 
$\delta R = \{(\deltaA,\deltaB) \mapsto \p\}$
to relation $R$ and, if necessary, rebalances partitions; the procedures for updates to $S$ and $T$ are analogous.
The procedure first calls \textsc{AffectedPart} to determine in constant time 
which part $R^r$ of $R$ is affected by the update. We first consider the case 
when $R$ is single partitioned. 
The update targets $R^{\H}$ if 
this relation part  already contains a tuple with the same $A$-value $\deltaA$, 
or $\eps$ is set to $0$; otherwise, the update targets $R^{\L}$.
When $\eps = 0$, all tuples are in $R^{\H}$, while $R^{\L}$ remains empty.
Although this behavior is not required by \ivme (without the condition 
$\eps=0$, $R^{\L}$ would contain only tuples whose $A$-values have the degree of $1$, and $R^{\H}$ would contain all other tuples), it allows us to recover existing IVM approaches, such as classical IVM for the nullary and ternary triangle queries; 
by setting $\eps$ to $0$, \ivme ensures that all tuples are in $R^{\H}$.
The case when $R$ is double partitioned is analogous.
The update targets $R^{\H\H}$ if 
$R^{\H\H}$ contains the tuple $(\deltaA,\deltaB)$ or $\eps = 0$; 
the update targets $R^{\H\L}$ or $R^{\L\H}$ if they already contain $(\deltaA,\deltaB)$;
otherwise, the update targets $R^{\L\L}$. 
The procedure \textsc{OnUpdate} then invokes \textsc{ApplyUpdate}.
If the update causes a violation of the size invariant $\floor{\frac{1}{4}N} \leq |\db| < N$, the procedure invokes \textsc{MajorRebalance} from 
Figure~\ref{fig:major_minor}
to recompute the relation partitions and auxiliary views. 
Otherwise, if any heavy or light part condition is violated, it calls 
\textsc{MinorRebalance} from Figure~\ref{fig:major_minor} to
move tuples between the parts of relation $R$ and ensure that 
these conditions hold again.

\section{Amortizing Rebalancing Time}
\label{sec:amortization}
Sections~\ref{sec:count}-\ref{sec:unary} show that 
any \ivme state supporting the maintenance 
of a triangle query can be maintained in sublinear time
under a single-tuple update. 
The sublinear maintenance time requires that the size invariant 
and the heavy and light part conditions are preserved 
for the relation partitions in \ivme states.
To guarantee this, \ivme performs major and minor rebalancing
steps, which can take superlinear time
as stated 
in Propositions~\ref{prop:major_cost} and \ref{prop:minor_cost}.
We nevertheless show in this section that the amortized rebalancing costs and thus
the overall amortized maintenance time  over a sequence of updates 
remains sublinear. 
\begin{proposition}\label{prop:amortized_update_time}
Given a database $\db$, $\eps\in[0,1]$, and an \ivme state 
$\astate$ of $\inst{D}$ supporting the maintenance of
any triangle query, \ivme maintains $\astate$ under a single-tuple update to any input relation and performs rebalancing in $\bigO{|\db|^{\max\{\eps,1-\eps\}}}$ amortized time.
\end{proposition}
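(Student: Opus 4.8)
The plan is to split the cost that \textsc{OnUpdate} incurs on each update into three parts --- the call to \textsc{ApplyUpdate}, a possible major rebalancing, and a possible minor rebalancing --- and to bound each part separately, using a token (accounting) argument for the two rebalancing costs. By Propositions~\ref{prop:single_step_time}, \ref{prop:single_step_time_double_partition}, \ref{prop:single_step_time_full}, \ref{prop:single_step_time_binary}, and \ref{prop:single_step_time_unary}, the \textsc{ApplyUpdate} call runs in worst-case time $\bigO{|\db|^{\max\{\eps,1-\eps\}}}$ regardless of the maintained triangle query, so this part already meets the claimed bound and requires no amortization. It then remains to show that the rebalancing triggered by \textsc{OnUpdate} contributes only $\bigO{|\db|^{\max\{\eps,1-\eps\}}}$ per update in an amortized sense, after which the three bounds are summed.

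First I would partition any update sequence into \emph{epochs} delimited by the major rebalancing steps, so that within an epoch the threshold base $N$ (and hence the threshold $N^\eps$) is fixed and the size invariant $\floor{\frac14 N}\le|\db|<N$ guarantees $|\db|=\Theta(N)$. For major rebalancing I would invoke the observation recorded after Proposition~\ref{prop:major_cost}: immediately after a major rebalance $|\db|\in\{\tfrac12 N,\ \tfrac12 N-\tfrac12,\ \tfrac12 N-1\}$, and re-violating the size invariant requires at least $\tfrac14 N$ further updates. Since one major rebalance costs $\bigO{N^{3/2}}$ by Proposition~\ref{prop:major_cost}, charging this cost uniformly over the $\Omega(N)$ updates of the epoch gives an amortized cost of $\bigO{N^{3/2}/N}=\bigO{N^{1/2}}=\bigO{|\db|^{1/2}}$; because $\max\{\eps,1-\eps\}\ge\tfrac12$ for every $\eps\in[0,1]$, this lies within the target bound.

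For minor rebalancing, which is the crux, I would give each single-tuple update a token budget of $\bigO{|\db|^{\max\{\eps,1-\eps\}}}$ and use accumulated tokens to pay for the moves a minor rebalance performs. The key is the gap between the two thresholds in Definition~\ref{def:loose_relation_partition}: after a value $x$ is moved from light to heavy its degree is $\ge\tfrac32 N^\eps$, after it is moved from heavy to light its degree is $<\tfrac12 N^\eps$, and right after a major rebalance (which produces a \emph{strict} partition) a heavy value has degree $\ge N^\eps$ while a light value has degree $<N^\eps$. In all cases the degree of $x$ must change by at least $\tfrac12 N^\eps$, so at least $\tfrac12 N^\eps$ updates touching $x$ must occur before $x$ can trigger another minor rebalance. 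Since one minor rebalance on $x$ costs $\bigO{N^{\eps+\max\{\eps,1-\eps\}}}$ by Proposition~\ref{prop:minor_cost}, amortizing it over the $\Omega(N^\eps)$ intervening updates that touch $x$ yields $\bigO{N^{\max\{\eps,1-\eps\}}}=\bigO{|\db|^{\max\{\eps,1-\eps\}}}$ per update; summing the three amortized costs then establishes the proposition.

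The main obstacle I anticipate is the bookkeeping around the changing threshold base $N$, since the threshold $N^\eps$ defining ``heavy'' and ``light'' shifts at every major rebalance and thereby resets the degree windows that drive the minor-rebalancing argument. Confining the minor-rebalancing charge to within a single epoch resolves this, but care is needed to show that (i) each single-tuple update changes the degree of only $\bigO{1}$ values --- the updated $A$- and $B$-values when $R$ is double partitioned --- so a constant token budget per update suffices even though both values may independently trigger moves, and (ii) the strict partition produced by a major rebalance re-establishes the required degree windows, so the ``$\Omega(N^\eps)$ updates between consecutive triggers'' claim holds from the very start of each epoch.
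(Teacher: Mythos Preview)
Your proposal is correct and follows essentially the same approach as the paper: the same three-way cost decomposition, the same $\Omega(N)$-updates-between-major-rebalances argument yielding $\bigO{N^{1/2}}$ amortized major cost, and the same per-value accounting for minor rebalancing based on the $\tfrac12 N^{\eps}$ degree gap between the light/heavy thresholds (including the strict-partition reset at the start of each epoch). The paper formalizes the accounting via an explicit induction on the update sequence with per-value cost components $c_i^{R,a}$, $\hat c_i^{R,a}$ (exactly your point (i)), whereas you phrase it as a token argument, but the structure and the two obstacles you flag --- epoch boundaries and the $\bigO{1}$ values touched per update under double partitioning --- are precisely those the paper handles.
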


\begin{proof}
Let $\astate_0 = (\eps,N_{0},\inst{P}_{0},\inst{V}_{0})$ be the initial \ivme state of a database $\db_0$ and 
$\upd_0, \upd_1, \ldots ,\upd_{n-1}$ a sequence of arbitrary single-tuple updates. 
The application of this update sequence to $\astate_0$ yields a sequence 
$\astate_0 \overset{\upd_0}{\longrightarrow} \astate_1 \overset{\upd_1}{\longrightarrow} \ldots \overset{\upd_{n-1}}{\longrightarrow} \astate_{n}$
of \ivme states, 
where $\astate_{i+1}$ is the result of executing the procedure $\textsc{OnUpdate}(\upd_{i},\astate_{i})$ from Figure~\ref{fig:onUpdate}, for $0\leq i < n$.
Let $c_i$ denote the actual execution cost of $\textsc{OnUpdate}(\upd_i,\astate_{i})$. 
For some $\Gamma>0$, we can decompose each $c_i$ as:
\begin{align*}
c_i = c_i^{\update} + c_i^{\major} + c_i^{\minor} + \Gamma, \text{\qquad for } 0 \leq i < n,
\end{align*} 
where $c_i^{\update}$, $c_i^{\major}$, and $c_i^{\minor}$ are the actual costs of the subprocedures \textsc{ApplyUpdate}, \textsc{MajorRebalance}, and
\textsc{MinorRebalance}, respectively, in \textsc{OnUpdate}.
If update $\upd_i$ causes no major rebalancing, then $c_i^{\major} = 0$; 
similarly, if $\upd_i$ causes no minor rebalancing, then $c_i^{\minor} = 0$. 
These actual costs admit the following worst-case upper bounds: 
\\[6pt]
\begin{tabular}{r@{\hskip 0.05in}l@{\hskip 0.1in}l}
$c^{\update}_i$ & $\leq \gamma N_{i}^{\max\{\eps,1-\eps\}}$ & (by 
Propositions~\ref{prop:single_step_time},
  \ref{prop:single_step_time_double_partition},
  \ref{prop:single_step_time_full},
  \ref{prop:single_step_time_binary},
  \ref{prop:single_step_time_unary}), \\[4pt]
$c^{\major}_i$ & $\leq \gamma N_{i}^{\frac{3}{2}}$& {(by Proposition~\ref{prop:major_cost})}, and \\[4pt]
$c^{\minor}_i$ & $\leq \gamma N_{i}^{\eps+\max\{\eps,1-\eps\}}$ & {(by Proposition~\ref{prop:minor_cost})}, 
\end{tabular}\\[6pt]
where $\gamma$ is a constant derived from their asymptotic bounds, and $N_i$ is the threshold base of $\astate_i$.
The costs of major and minor rebalancing can be superlinear in the database size. 

The crux of this proof is to show that assigning a {\em sublinear amortized cost} $\hat{c}_{i}$ to each update $\upd_{i}$ accumulates enough budget to pay for  expensive but less frequent rebalancing procedures.
For any sequence of $n$ updates, our goal is to show that the accumulated amortized cost is no smaller than the accumulated actual cost:
\begin{align}\label{eq:inequality_proof}
\sum_{i=0}^{n-1} \hat{c}_i \geq \sum_{i=0}^{n-1}c_i.
\end{align}  
The amortized cost assigned to an update $\upd_i$ is $\hat{c}_i = \hat{c}^{\update}_i + \hat{c}^{\major}_i + \hat{c}^{\minor}_i + \Gamma$, where
\begin{align*}
\hat{c}^{\update}_i = \gamma N_{i}^{\max\{\eps,1-\eps\}}, \quad
\hat{c}^{\major}_i = 4\gamma N_{i}^{\frac{1}{2}}, \quad
\hat{c}^{\minor}_i = 4\gamma N_{i}^{\max\{\eps,1-\eps\}}, \quad \text{ and }
\end{align*}
$\Gamma$ and $\gamma$ are the constants used to upper bound the actual cost of \textsc{OnUpdate}. As it will be explained in more detail, the number of updates 
between a major rebalancing step caused by update $\upd_i$ and the previous 
major rebalancing step can be as less as 
$\frac{1}{4} N_i$. In order to accumulate enough budget to 
pay for the major rebalancing cost triggered by update $\upd_i$,
the amortized cost $\hat{c}^{\major}_i$ is 
defined as $\gamma N_i^{\frac{3}{2}} / \frac{1}{4}N_i = 4\gamma N_{i}^{\frac{1}{2}}$.
Given that $\upd_i$ is of the form $\delta R = \{(\deltaA, \deltaB) \mapsto \p\}$
and invokes minor rebalancing for $\deltaA$, the 
number of updates since the 
previous minor rebalancing step for $\deltaA$ can be 
as less as $\frac{1}{2}N^{\eps}$. Hence, 
to pay for
the minor rebalancing step for $\deltaA$ invoked by $\upd_i$, our 
budget must be at least 
$\gamma N_i^{\eps + \max\{\eps, 1-\eps\}} / 
\frac{1}{2}N^{\eps} = 2\gamma N_{i}^{\max\{\eps,1-\eps\}}$.
Since we also need to take the rebalancing costs 
for $\deltaB$ into account, we 
define the amortized minor rebalancing cost $\hat{c}^{\minor}_i$ as
$4\gamma N_{i}^{\max\{\eps,1-\eps\}}$.
In contrast to the actual costs $c^{\major}_i$ and $c^{\minor}_i$, the amortized costs $\hat{c}^{\major}_i$ and $\hat{c}^{\minor}_i$ are always nonzero. 

We prove that such amortized costs satisfy Inequality~\eqref{eq:inequality_proof}.
Since $\hat{c}^{\update}_i \geq c^{\update}_i $ for $0 \leq  i < n$, it suffices to show that the following inequalities hold:

\begin{align}
\label{ineq:major}
\text{\em (amortizing major rebalancing) \TAB} & \sum_{i=0}^{n-1} \hat{c}_i^{\major}  \geq \sum_{i=0}^{n-1}c_i^{\major} \qquad\text{ and}\\
\label{ineq:minor}
\text{\em (amortizing minor rebalancing) \TAB} & \sum_{i=0}^{n-1} \hat{c}_i^{\minor}  \geq \sum_{i=0}^{n-1}c_i^{\minor}. 
\end{align}
We prove Inequalities~\eqref{ineq:major} and \eqref{ineq:minor} by induction on the length $n$ of the update sequence.

\paragraph{Major rebalancing.}
\begin{itemize}
\item {\em Base case}:
We show that Inequality~\eqref{ineq:major} holds for $n = 1$. 
The preprocessing stage sets $N_{0} = 2\ztimes|\inst{D}_0| + 1$. 
If the initial database $\inst{D}_0$ is empty, then $N_{0}=1$ and $\upd_0$ triggers major rebalancing (and no minor rebalancing).  
The amortized cost $\hat{c}_0^{\major} = 4 \gamma N_0^{\frac{1}{2}} = 4\gamma$ suffices to cover the actual cost $c_0^{\major}\leq \gamma N_0^{1 + \frac{1}{2}} = \gamma$.
If the initial database is nonempty, $\upd_0$ cannot trigger major rebalancing (i.e., violate the size invariant) because 
$\floor{\frac{1}{4}N_{0}} = \floor{\frac{1}{2}|\inst{D}_0|} \leq |\inst{D}_0| -1$ (lower threshold) and 
$|\inst{D}_0| + 1 < N_{0} = 2\ztimes|\inst{D}_0| + 1$ (upper threshold); 
then, $\hat{c}_0^{\major} \geq c_0^{\major} = 0$. Thus, Inequality~\eqref{ineq:major} holds for $n = 1$. 

\item {\em Inductive step}:
Assumed that Inequality~\eqref{ineq:major} holds for all update sequences of length up to $n-1$, we show it holds for update sequences of length $n$.
If update $\upd_{n-1}$ causes no major rebalancing, then 
$\hat{c}_{n-1}^{\major} = 4 \gamma N_{n-1}^{\frac{1}{2}} \geq 0$ and $c_{n-1}^{\major} = 0$, thus Inequality~\eqref{ineq:major} holds for $n$. 
Otherwise, if applying $\upd_{n-1}$ violates the size invariant, 
the database size $|\inst{D}_{n}|$ is either 
$\floor{\frac{1}{4}N_{n-1}}-1$ or $N_{n-1}$. 
Let $\astate_j$ be the state created after the previous major rebalancing or, 
if there is no such step, the initial state. 
For the former ($j>0$), the major rebalancing step ensures
$|\inst{D}_j| = \frac{1}{2}N_j$ after doubling and 
$|\inst{D}_j| = \frac{1}{2}N_j - \frac{1}{2}$ or $|\inst{D}_j| = \frac{1}{2}N_j - 1$ after halving the threshold base $N_j$;
for the latter ($j=0$), the preprocessing stage ensures $|\inst{D}_j| = \frac{1}{2}N_j - \frac{1}{2}$.
The threshold base $N_j$ changes only with major rebalancing, thus $N_j = N_{j+1}=\ldots=N_{n-1}$.
The number of updates needed to change the database size from 
$|\inst{D}_j|$ to $|\inst{D}_{n}|$ (i.e., between two major rebalancing) is at least $\frac{1}{4}N_{n-1}$ since
$\min\{ \frac{1}{2}N_j - 1 - (\floor{\frac{1}{4}N_{n-1}} - 1), N_{n-1} - \frac{1}{2}N_j \} \geq \frac{1}{4}N_{n-1}$. 
Then,

\begin{align*}
  \sum_{i=0}^{n-1} \hat{c}_i^{\major} &\geq \sum_{i=0}^{j-1} c_i^{\major} + \sum_{i=j}^{n-1} \hat{c}_i^{\major} \qquad\qquad\quad && \textit{(by induction hypothesis)}\\
  & = \sum_{i=0}^{j-1} c_i^{\major} + \sum_{i=j}^{n-1} 4\gamma 
  N_{n-1}^{\frac{1}{2}} && \textit{($N_j = \ldots = N_{n-1}$)}\\
  & \geq \sum_{i=0}^{j-1} c_i^{\major} + \frac{1}{4}N_{n-1} \,4 \gamma
  N_{n-1}^{\frac{1}{2}} && \textit{(at least $\frac{1}{4}N_{n-1}$ updates)}\\
  & = \sum_{i=0}^{j-1} c_i^{\major} + \gamma N_{n-1}^{\frac{3}{2}} \\
  & \geq \sum_{i=0}^{j-1} c_i^{\major} + c_{n-1}^{\major} = \sum_{i=0}^{n-1} c_i^{\major} && \textit{($c_{j}^{\major} = \ldots = c_{n-2}^{\major} = 0$)}.
\end{align*}
\end{itemize}

\noindent Thus, Inequality~\eqref{ineq:major} holds for update sequences of length $n$.

\paragraph{Minor rebalancing.}
When the degree of a value in a partition changes such that the heavy or light part condition no longer holds, minor rebalancing moves the affected tuples between the 
relation parts.
To prove Inequality~\eqref{ineq:minor}, we decompose the cost of minor rebalancing per relation and data value over a variable in the schema of the relation. 
\begin{align*}
c_i^{\minor} & =  \sum_{a \in \Dom(A)} (c_i^{R,a} + c_i^{T,a}) + \sum_{b \in \Dom(B)} (c_i^{R,b} + c_i^{S,b}) +\sum_{c \in \Dom(C)} (c_i^{T,c} + c_i^{R,c}) \\[3pt]
\hat{c}_i^{\minor} & =  \sum_{a \in \Dom(A)} (\hat{c}_i^{R,a} + \hat{c}_i^{T,a}) + \sum_{b \in \Dom(B)} (\hat{c}_i^{R,b} + \hat{c}_i^{S,b}) +\sum_{c \in \Dom(C)} (\hat{c}_i^{T,c} + \hat{c}_i^{R,c})
\end{align*}

We write $c_i^{R,\deltaA}$ and $\hat{c}_i^{R,\deltaA}$ to denote the actual and respectively amortized costs of minor rebalancing caused by update $\upd_i$, for relation $R$ and an $A$-value $\deltaA$.
Recall that if update $\upd_i$ is of the form $\delta R = \{(\deltaA,\deltaB) \mapsto \p\}$ and
$R$ is single partitioned, the update can cause 
 minor rebalancing for $A$-value $\deltaA$. If $R$ is double partitioned,
the update can cause minor rebalancing for $A$-value $\deltaA$, or $B$-value 
$\deltaB$, or for both. Hence, if $\upd_i$ is of the form 
$\delta R = \{(\deltaA,\deltaB) \mapsto \p\}$ 
and causes any rebalancing, we have 
$c_i^{R,\deltaA} + c_i^{R,\deltaB} = c_i^\minor \leq \gamma N_i^{\eps + \max\{\eps, 1-\eps\}}$;
otherwise, $c_i^{R,\deltaA} = c_i^{R,\deltaB} = 0$. 
If $\upd_i$ is of the form $\delta R = \{(\deltaA,\deltaB) \mapsto \p\}$,  
we set 
$\hat{c}_i^{R,\deltaA} = \hat{c}_i^{R,\deltaB} =\frac{1}{2} \hat{c}_i^\minor = 
2\gamma N_i^{\max\{\eps, 1-\eps\}}$ regardless of whether $\upd_i$ causes minor rebalancing or not; otherwise, $\hat{c}_i^{R,\deltaA} = \hat{c}_i^{R,\deltaB} = 0$.  
The actual costs $c_i^{S,b}$, $c_i^{S,c}$, $c_i^{T,c}$, and $c_i^{T,a}$ and the amortized costs $\hat{c}_i^{S,b}$, $\hat{c}_i^{S,c}$, $\hat{c}_i^{T,c}$, and $\hat{c}_i^{T,a}$ are defined similarly.

We prove that for $R$ and any $a \in \Dom(A)$, the following inequality holds:
\begin{align}
\label{ineq:minor_decomposed}
\sum_{i=0}^{n-1} \hat{c}_i^{R,a}  \geq \sum_{i=0}^{n-1}c_i^{R,a}.
\end{align}
The proof of the inequality $\sum_{i=0}^{n-1} \hat{c}_i^{R,b}  \geq \sum_{i=0}^{n-1}c_i^{R,b}$  for any $b \in \Dom(B)$ and the inequalities for the other two relations $S$
and $T$ are analogous.
Inequality~\eqref{ineq:minor} follows directly from these inequalities.

We prove Inequality~\eqref{ineq:minor_decomposed} for an arbitrary $a \in \Dom(A)$ by induction on the length $n$ of the update sequence.
\begin{itemize}
\item {\em Base case}:
We show that Inequality~\eqref{ineq:minor_decomposed} holds for $n = 1$. 
Assume that update $\upd_0$ is of the form $\delta R = \{(\deltaA,\deltaB) \mapsto \p\}$; 
otherwise, $\hat{c}_0^{R,\deltaA} = c_0^{R,\deltaA} = 0$, and Inequality~\eqref{ineq:minor_decomposed} follows trivially for $n=1$.  
If the initial database is empty, $\upd_0$ triggers major rebalancing but no minor rebalancing, thus $\hat{c}_0^{R,\deltaA} = 2\gamma N_{0}^{\max\{\eps,1-\eps\}} \geq  c_0^{R,\deltaA} = 0$.
If the initial database is nonempty, each relation is partitioned using the threshold $N_0^{\eps}$. 
For update $\upd_0$ to trigger minor rebalancing for $A$-value  $\deltaA$, the degree of 
$\deltaA$ in $R$ 
has to either decrease from $\ceil{N_0^{\eps}}$ to $\ceil{\frac{1}{2}N_0^{\eps}}-1$ (heavy to light) or increase from $\ceil{N_0^{\eps}}-1$ to $\ceil{\frac{3}{2}N_0^{\eps}}$ (light to heavy). The former happens only if $\ceil{N_0^{\eps}}=1$ and update $\upd_0$ removes the last tuple with the $A$-value $\deltaA$ from $R$, thus no minor rebalancing is needed; 
the latter cannot happen since update $\upd_0$ can increase $|\sigma_{A=\deltaA}R|$ to at most $\ceil{N_0^{\eps}}$, and $\ceil{N_0^{\eps}} < \ceil{\frac{3}{2}N_0^{\eps}}$. 
In any case, $\hat{c}_0^{R,\deltaA} \geq c_0^{R,\deltaA}$, which implies that Inequality~\eqref{ineq:minor_decomposed} holds for $n = 1$. 

\item {\em Inductive step}:
Assumed that Inequality~\eqref{ineq:minor_decomposed} holds for all update sequences of length up to $n-1$, we show that  it holds for update sequences of length $n$.
Consider that update $u_{n-1}$ is of the form $\delta R = \{(\deltaA,\deltaB) \mapsto \p\}$ and causes minor rebalancing for $\deltaA$; otherwise, $\hat{c}_{n-1}^{R,\deltaA} \geq 0$ and $c_{n-1}^{R,\deltaA} = 0$, and Inequality~\eqref{ineq:minor_decomposed} follows trivially for $n$.  
Let $\astate_j$ be the state created after the previous major rebalancing or, if there is no such step, the initial state. The threshold changes only with major rebalancing, thus $N_j=N_{j+1}=\ldots=N_{n-1}$. 
Depending on whether there exist minor rebalancing steps since state $\astate_j$, we distinguish two cases:

  \begin{itemize}
    \item[Case 1:] There is no minor rebalancing caused by an update of the form
    $\delta R = \{(\deltaA,\deltaB') \mapsto \p'\}$ since state $\astate_j$; thus, 
    we have $c_{j}^{R,\deltaA} = \ldots = c_{n-2}^{R,\deltaA} = 0$.
    From state $\astate_j$ to state $\astate_{n}$, the number of tuples with the
    $A$-value $\deltaA$ either decreases from at least $\ceil{N_{j}^{\eps}}$ to $\ceil{\frac{1}{2}N_{n-1}^{\eps}}-1$ (heavy to light) or increases from at most $\ceil{N_{j}^{\eps}}-1$ to $\ceil{\frac{3}{2}N_{n-1}^{\eps}}$ (light to heavy). 
    For this change to happen, the number of updates needs to be greater than $\frac{1}{2}N_{n-1}^{\eps}$ since 
    $N_j=N_{n-1}$ and
    $\min\{ \ceil{N_{j}^{\eps}} - (\ceil{\frac{1}{2}N_{n-1}^{\eps}} - 1),
      \ceil{\frac{3}{2}N_{n-1}^{\eps}} - (\ceil{N_{j}^{\eps}} -1)
     \} > \frac{1}{2}N_{n-1}^{\eps}$.

    \item[Case 2:] There is at least one minor rebalancing step for $\deltaA$ caused by an update of the form $\delta R = \{(\deltaA,\deltaB') \mapsto \p'\}$ since state $\astate_j$. 
    Let $\astate_\ell$ denote the state created after the previous minor rebalancing 
    for $\alpha$ caused by an update of this form; thus, $c_{\ell}^{R,\deltaA} = \ldots = c_{n-2}^{R,\deltaA} = 0$.
    The minor rebalancing steps creating $\astate_\ell$ and $\astate_{n}$ move tuples with the $A$-value $a$ between the relation parts of $R$ in opposite directions
    with respect to heavy and light.
    From state $\astate_\ell$ to state $\astate_n$, the number of such tuples either decreases from $\ceil{\frac{3}{2}N_{l}^{\eps}}$ to $\ceil{\frac{1}{2}N_{n-1}^{\eps}}-1$ (heavy to light) or increases from $\ceil{\frac{1}{2}N_{l}^{\eps}}-1$ to $\ceil{\frac{3}{2}N_{n-1}^{\eps}}$ (light to heavy). 
    For this change to happen, the number of updates needs to be greater than $N_{n-1}^{\eps}$ since
    $N_l = N_{n-1}$ and
    $\min\{\ceil{\frac{3}{2}N_{l}^{\eps}} - (\ceil{\frac{1}{2}N_{n-1}^{\eps}}-1), 
      \ceil{\frac{3}{2}N_{n-1}^{\eps}} - (\ceil{\frac{1}{2}N_{l}^{\eps}} - 1)
    \} > N_{n-1}^{\eps}$. 
  \end{itemize}

 Let $k = j$ if  Case 1 holds and $k = \ell$ if Case 2 holds. By the above analysis, 
  there must be more than $\frac{1}{2}N_{n-1}^{\eps}$
 updates between $\astate_{k}$ and $\astate_{n}$. Hence, 
  
\begin{align*}
  \hspace{-0.2cm}
  \sum_{i=0}^{n-1} \hat{c}_i^{R,\deltaA} &\geq \sum_{i=0}^{k-1} c_i^{R,\deltaA} + \sum_{i=k}^{n-1} \hat{c}_i^{R,\deltaA} \qquad\qquad\qquad && \textit{(by induction hypothesis)}\\
  & = \sum_{i=0}^{k-1} c_i^{R,\deltaA} + \sum_{i=k}^{n-1} 2 \gamma N_{n-1}^{\max\{\eps,1-\eps\}} && \textit{($N_k = \ldots = N_{n-1}$)} \\
  & > \sum_{i=0}^{k-1} c_i^{R,\deltaA} + \frac{1}{2} N_{n-1}^{\eps} 2 \gamma N_{n-1}^{\max\{\eps,1-\eps\}} && \textit{(more than $\frac{1}{2} N_{n-1}^{\eps}$ updates)} \\
  & \geq \sum_{i=0}^{k-1} c_i^{R,\deltaA} + c_{n-1}^{R,\deltaA} = \sum_{i=0}^{n-1} c_i^{R,\deltaA} && \textit{($c_{k}^{R,\deltaA} = \ldots = c_{n-2}^{R,\deltaA} = 0$)}.
\end{align*}

This implies that Inequality~\eqref{ineq:minor_decomposed} holds for update sequences of length $n$.
\end{itemize}

The inductive analysis shows that Inequality~\eqref{eq:inequality_proof} holds when the amortized cost of  $\textsc{OnUpdate}(\upd_i, \astate_i)$ is
$$\hat{c}_{i} = 
\gamma N_{i}^{\max\{\eps,1-\eps\}} + 
4\gamma N_{i}^{\frac{1}{2}} +
4\gamma N_{i}^{\max\{\eps,1-\eps\}} + 
\Gamma, \text{\quad  for } 0\leq i < n,$$
where $\Gamma$ and $\gamma$ are constants. 
The amortized cost $\hat{c}^{\major}_i$ 
of major rebalancing is $4\gamma N_{i}^{\frac{1}{2}}$, 
and the amortized cost $\hat{c}^{\minor}_i$ 
 of minor rebalancing is $4\gamma N_{i}^{\max\{\eps,1-\eps\}}$. 
From the size invariant $\floor{\frac{1}{4}N_{i}} \leq |\db_{i}| < N_{i}$ follows that $|\db_i| < N_i < 4(|\db_i|+1)$ for $0\leq i < n$, where $|\db_i|$ is the database size before update $\upd_i$. 
This implies that for any database $\inst{D}$,
the amortized major rebalancing time is $\bigO{|\db|^{\frac{1}{2}}}$,
the amortized minor rebalancing time is $\bigO{|\db|^{\max\{\eps, 1-\eps\}}}$, and
the overall amortized update time of \ivme is $\bigO{|\inst{D}|^{\max\{\eps, 1-\eps\}}}$.
\end{proof}

\section{A Lower Bound on the Maintenance of Triangle Queries}
\label{sec:lowerbound}
In this section we prove Proposition~\ref{prop:lower_bound_triangle}, 
which states a lower bound on the trade-off between amortized 
update time and enumeration delay for the maintenance 
of triangle queries, conditioned on the \OMv conjecture~\cite{Henzinger:OMv:2015}.

\smallskip
Proposition~\ref{prop:lower_bound_triangle}.
\textit{
For any $\gamma > 0$ and database $\db$,
there is no algorithm that incrementally maintains the result of
any triangle query under single-tuple updates to $\db$ with arbitrary preprocessing 
time, $\bigO{|\db|^{\frac{1}{2} - \gamma}}$ amortized update time, and $\bigO{|\db|^{1 - \gamma}}$ enumeration delay, unless the \OMv conjecture fails.
} 

\smallskip

\nop{
Together with the upper bounds in Theorem~\ref{theo:main_result_triangle}, 
this lower bound implies the optimality result in Corollary \ref{cor:ivme_optimal_triangle}.
}

The proof relies on the Online Vector-Matrix-Vector Multiplication (\OuMv) conjecture, which is implied by the \OMv conjecture (Conjecture \ref{conj:omv}).
First, we give the definition of the \OuMv problem and state 
the corresponding conjecture.

\begin{definition}[Online Vector-Matrix-Vector Multiplication (\OuMv)~\cite{Henzinger:OMv:2015}]\label{def:OuMv}
We are given an $n \times n$ Boolean matrix $\vecnormal{M}$ and  receive $n$ pairs of Boolean column-vectors of size $n$, denoted by $(\vecnormal{u}_1,\vecnormal{v}_1), \ldots, (\vecnormal{u}_n,\vecnormal{v}_n)$; after seeing each pair $(\vecnormal{u}_i,\vecnormal{v}_i)$, we output the product 
$\vecnormal{u}_i^{\text{T}} \vecnormal{M} \vecnormal{v}_i$ before we see the next pair.
\end{definition}

\begin{conjecture}[\OuMv Conjecture, Theorem 2.7 in~\cite{Henzinger:OMv:2015}]\label{conj:OuMv}
For any $\gamma > 0$, there is no algorithm that solves \OuMv in time 
$\bigO{n^{3-\gamma}}$.
\end{conjecture}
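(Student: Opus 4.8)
The plan is to derive the \OuMv conjecture (Conjecture~\ref{conj:OuMv}) from the \OMv conjecture (Conjecture~\ref{conj:omv}), since the surrounding text asserts that the former is implied by the latter~\cite{Henzinger:OMv:2015}. I would argue by contraposition: assuming some algorithm solves \OuMv on an $n \times n$ Boolean matrix with $n$ online vector pairs in total time $\bigO{n^{3-\gamma}}$ for a fixed $\gamma > 0$, I would construct an algorithm solving \OMv (in the sense of Definition~\ref{def:OMv}) in time $\bigO{n^{3-\gamma'}}$ for some $\gamma' > 0$, contradicting Conjecture~\ref{conj:omv}. The bridge between the two problems is the identity $(\vecnormal{M}\vecnormal{v})_j = \vecnormal{e}_j^{\text{T}} \vecnormal{M} \vecnormal{v}$, where $\vecnormal{e}_j$ is the $j$-th standard basis vector: an \OuMv query with left vector $\vecnormal{e}_j$ returns the $j$-th coordinate of the \OMv answer $\vecnormal{M}\vecnormal{v}$, so the family of all coordinate queries reconstructs each \OMv output (equivalently, the online Boolean product of $\vecnormal{M}$ with the column matrix formed by the incoming vectors).

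The first concrete step is to make this reduction from \OMv to \OuMv precise and, crucially, to control its cost. A direct coordinate-wise simulation is the natural starting point, but it is too lossy: recovering all $n$ coordinates of a single \OMv answer costs $n$ scalar \OuMv queries, so the $n$ online \OMv vectors induce $\Theta(n^2)$ \OuMv pairs. Replaying these through the assumed \OuMv algorithm in blocks of $n$ pairs on the same matrix costs $\bigO{n \cdot n^{3-\gamma}} = \bigO{n^{4-\gamma}}$, which fails to beat the cubic \OMv barrier for $\gamma \le 1$. Overcoming this factor-$n$ blow-up is the heart of the proof. I would follow the Henzinger et al.\ route and pass through the rectangular \emph{$\gamma$-\OMv} variant, in which the matrix is $n^{\gamma} \times n$ and the output has only $n^{\gamma}$ coordinates: first show that Conjecture~\ref{conj:omv} propagates to every such variant, and then reduce $\gamma$-\OMv to \OuMv, where the reduced output dimension limits the number of induced scalar queries and hence the simulation overhead to a subpolynomial loss.

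The hardest part will be exactly this amortized, dimension-sensitive accounting. One must choose the aspect ratio of the rectangular intermediate and the batching of induced \OuMv pairs so that the total simulation time remains strictly subcubic, while still respecting the online constraint: every \OMv (or $\gamma$-\OMv) answer must be emitted before the next vector is revealed, which forbids deferring or reordering the coordinate queries into later rounds. I would check that the permitted preprocessing is spent only on the fixed matrix $\vecnormal{M}$ (shared across all rounds), that the number of \OuMv pairs generated per round matches the output dimension of the rectangular instance, and that summing the per-round costs yields an $\bigO{n^{3-\gamma'}}$ bound with a concrete $\gamma'>0$ expressible in terms of $\gamma$. Assembling these two reductions contradicts the \OMv conjecture, thereby establishing the \OuMv conjecture as stated in Conjecture~\ref{conj:OuMv}.
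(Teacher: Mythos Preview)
The paper does not prove this statement at all: Conjecture~\ref{conj:OuMv} is simply stated as a conjecture and attributed to Theorem~2.7 of~\cite{Henzinger:OMv:2015}, where the implication from the \OMv conjecture to the \OuMv conjecture is established. The paper uses \OuMv only as a black box in the proof of Proposition~\ref{prop:lower_bound_triangle}.

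Your proposal is therefore not comparable to anything in the paper; you are reconstructing the argument from the cited reference. That said, your outline is broadly faithful to the Henzinger et al.\ route: you correctly observe that the naive coordinate-wise reduction loses a factor of $n$, and you correctly identify the fix as passing through a rectangular $\gamma$-\OMv intermediate so that the output dimension (and hence the number of induced scalar queries per round) is sublinear. One point to be careful about is that the reduction in~\cite{Henzinger:OMv:2015} actually goes in two steps---first showing that \OMv-hardness implies hardness of the unbalanced variant for every exponent, and then reducing the unbalanced variant to \OuMv---and the parameter bookkeeping (choosing the aspect ratio so that the batched \OuMv cost and the number of batches multiply out to something strictly subcubic) is where the actual work lies. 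Your sketch acknowledges this but does not carry it out; if you intend to include a self-contained proof, you will need to make the choice of $\gamma'$ in terms of $\gamma$ explicit and verify the online constraint is respected throughout.
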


The following proof of Proposition 
\ref{prop:lower_bound_triangle}
reduces the \OuMv problem to  
the problem of incrementally maintaining a triangle query.
This reduction implies that 
if there is an algorithm that incrementally maintains a triangle query 
under single-tuple updates 
with arbitrary preprocessing time, $\bigO{|\inst{D}|^{\frac{1}{2}-\gamma}}$
amortized update time, and  $\bigO{|\inst{D}|^{1-\gamma}}$ enumeration delay
for some $\gamma > 0$ and database $\inst{D}$, then  
the \OuMv problem can be solved in subcubic time. 
This contradicts the \OuMv conjecture and, consequently, the \OMv conjecture.

\begin{proof}[Proof of Proposition~\ref{prop:lower_bound_triangle}]
The proof is inspired by the lower bound proof for maintaining
non-hierarchical  
Boolean conjunctive queries~\cite{BerkholzKS17}.
 Let $\triangle$ be a triangle query of arbitrary 
arity.  
For the sake of contradiction, assume that there is an incremental maintenance algorithm $\mathcal{A}$ that maintains 
$\triangle$ under single-tuple updates 
with arbitrary preprocessing time, $\bigO{|\inst{D}|^{\frac{1}{2}-\gamma}}$ amortized update time, 
and $\bigO{|\inst{D}|^{1-\gamma}}$ enumeration delay, for some
$\gamma >0$. We show 
that this algorithm can be used to design an algorithm $\mathcal{B}$ that solves the $\OuMv$ problem in subcubic time, which contradicts the \OuMv conjecture.

\begin{figure}[t]
\begin{center}
\renewcommand{\arraystretch}{1.0}
\setcounter{magicrownumbers}{0}
\begin{tabular}{l}
\toprule 
{\textsc{SolveOuMv}}(matrix $\vecnormal{M}$, vectors $\vecnormal{u}_1,\vecnormal{v}_1, \ldots, \vecnormal{u}_n,\vecnormal{v}_n$)\\
\midrule
\linenumber \LET $\astate = \text{initial \ivme state of the empty database }$ \\
\linenumber $\FOREACH\STAB (i,j) \in \vecnormal{M} \STAB\DO$ \\
\linenumber $\TAB\STAB\delta S = \{\, (i,j) \mapsto \vecnormal{M}(i,j) \,\}$ \\
\linenumber $\TAB\STAB\astate = \textsc{OnUpdate}(\delta S, \astate)$\\
\linenumber $\FOREACH\STAB r = 1, \ldots ,n \STAB\DO$ \\
\linenumber $\TAB\STAB\FOREACH\STAB i = 1, \ldots ,n \STAB\DO$ \\
\linenumber $\TAB\STAB\TAB\STAB\delta R = \{\, (a,i) \mapsto (\vecnormal{u}_r(i) - R(a,i)) \,\}$ \\
\linenumber $\TAB\STAB\TAB\STAB\astate = \textsc{OnUpdate}(\delta R, \astate)$ \\
\linenumber $\TAB\STAB\TAB\STAB\delta T = \{\, (i,a) \mapsto (\vecnormal{v}_r(i) - T(i,a)) \,\}$ \\
\linenumber $\TAB\STAB\TAB\STAB\astate = \textsc{OnUpdate}(\delta T, \astate)$ \\
\linenumber $\TAB\STAB \textbf{output}\STAB(\triangle \neq \emptyset)$ \\
\bottomrule
\end{tabular}

\end{center}
\caption{The procedure \textsc{SolveOuMv} solves the \OuMv problem using an 
incremental algorithm that maintains a triangle query $\triangle$
of arbitrary arity under single-tuple updates.
The state $\astate$ is the initial \ivme state of a database  
with empty relations $R$, $S$ and $T$.
The procedure \textsc{OnUpdate} is given in Figure~\ref{fig:onUpdate} and maintains the triangle query under single-tuple updates.}

\label{fig:lower_bound_reduction}
\end{figure}

 \paragraph{The reduction}
Figure~\ref{fig:lower_bound_reduction} gives the pseudocode of the algorithm  
$\mathcal{B}$, which processes an $\OuMv$ input $(\vecnormal{M}, (\vecnormal{u}_1,\vecnormal{v}_1), \ldots ,(\vecnormal{u}_n,\vecnormal{v}_n))$.
We denote the entry of $\vecnormal{M}$ in row $i$ and column $j$ by $\vecnormal{M}(i,j)$ and the $i$-th component of $\vecnormal{v}$ by $\vecnormal{v}(i)$.
The algorithm first constructs the initial \ivme state $\astate$ from a
database $\db = \{R,S,T\}$ with empty relations $R$, $S$, and $T$.
Then, it executes at most $n^2$ updates to the relation $S$ such that 
$S = \{\, (i,j) \mapsto \vecnormal{M}(i,j) \,\mid\, i,j \in [n] \,\}$. 
In each round $r \in [n]$, the algorithm executes at most $2n$ updates to the relations $R$ and $T$ 
such that $R = \{\, (a, i) \mapsto \vecnormal{u}_r(i) \,\mid\, i \in [n] \,\}$
and $T = \{\, (i, a) \mapsto \vecnormal{v}_r(i) \,\mid\, i \in [n] \,\}$, where $a$
is some constant.
By construction, $\vecnormal{u}_r^{\text{T}}\vecnormal{M}\vecnormal{v}_r = 1$ if and only if there exist 
$i,j\in [n]$ such that $\vecnormal{u}_r(i) = 1$, $\vecnormal{M}(i,j)=1$, and $\vecnormal{v}_r(j) = 1$, which is equivalent to 
$R(a,i) \ztimes S(i,j) \ztimes T(j,a) = 1$ at the end of round $r$. 
Thus, the algorithm outputs 
$1$ at the end of round $r$ if and only if the result of the triangle query
is nonempty. Nonemptiness of the query result can be checked 
by triggering enumeration and checking whether at least 
one output tuple is reported.

\paragraph{Time analysis}
Constructing the initial state from a database with empty relations 
takes constant time. 
The construction of relation $S$ from $\vecnormal{M}$ requires at most $n^2$ updates. 
Given that the amortized time for each update is $\bigO{|\db|^{\frac{1}{2}-\gamma}}$
and the database size $|\db|$ stays $\bigO{n^2}$, the overall time 
for constructing relation $S$ is 
$\bigO{n^2 \ztimes n^{2 \cdot (\frac{1}{2}-\gamma)}} = \bigO{n^{3-2\gamma}}$. 
In each round, the algorithm performs at most $2n$ updates and needs 
$\bigO{|\inst{D}|^{1-\gamma}}$ time to report the first result tuple or to 
signalize that  the result is empty. 
Hence, the time to execute the updates in a single round is 
$\bigO{2n \cdot n^{2 \cdot (\frac{1}{2}-\gamma)}} = \bigO{n^{2-2\gamma}}$.
The time to report the first result tuple or signalize emptiness 
is $\bigO{n^{2 \cdot (1-\gamma)}} = \bigO{n^{2-2\gamma}}$.
Thus, the overall execution time is 
$\bigO{n^{2-2\gamma}}$ per round and $\bigO{n^{3-2 \gamma}}$ for $n$ rounds. Hence, algorithm $\mathcal{B}$ needs $\bigO{n^{3-2 \gamma}}$ time to solve the \OuMv problem, which contradicts the \OuMv conjecture and, consequently, the \OMv conjecture. 

\nop{
 \paragraph{Reducing the \OuMv problem to arbitrary  triangle queries}
The above reduction from the \OuMv problem to the incremental maintenance of 
the triangle count can easily be turned into 
a reduction from the same problem to the incremental maintenance of 
a triangle query $Q(\inst{F})$ with $1 \leq |\inst{F}| \leq 3$.
We use the same encoding
for the matrix
$\vecnormal{M}$ and the vectors $\inst{u}_r$ and $\inst{v}_r$
for $r \in \{1, \ldots, n\}$. At the end of each round $r$ 
we ask whether the query result 
contains at least one tuple. This is the case if and only if 
$\vecnormal{u}_r^{\text{T}}\vecnormal{M}\vecnormal{v}_r = 1$.
As in the above reduction, the time 
to construct relation $S$ is $\bigO{n^{3-2\gamma}}$. 
Likewise, the algorithm performs 
at most $2n$ updates in each round $r \in \{1, \ldots, n\}$. 
Since the enumeration delay is
$\bigO{|\inst{D}|^{1-\gamma}}$, non-emptiness
of the query result at the end of round $r$ can be decided in  
$\bigO{|\inst{D}|^{1-\gamma}}$ time. 
Hence, the time needed in a single round is 
$\bigO{n^{2-2\gamma}}$ and the time for $n$ rounds is
 $\bigO{n^{3-2 \gamma}}$. 
This reduction gives the 
same conditional lower bound for the maintenance of 
$Q(\inst{F})$ as for the maintenance of the triangle count.
}
\end{proof}

\nop{
Theorem~\ref{theo:main_result} and Proposition~\ref{prop:lower_bound_triangle_count} imply that for $\eps = \frac{1}{2}$, \ivme incrementally maintains the count of triangles under single-tuple updates to a database $\db$ with optimal amortized update time $\bigO{|\db|^{\frac{1}{2}}}$ and constant answer time, unless the \OMv conjecture fails (Corollary~\ref{cor:ivme_optimal}).
}


\section{Recovering Existing Dynamic and Static Approaches}
\label{sec:recovery}

We next show how \ivme recovers the classical first-order IVM~\cite{Chirkova:Views:2012:FTD} on triangle queries (Section \ref{sec:classicalIVM}) and the worst-case optimal time of non-incremental algorithms for computing the result of the ternary triangle query (Section \ref{sec:staticFull}).

\nop{We assume that $\inst{D}= \{R,S,T\}$ is the input database.}

\subsection{Classical First-Order IVM} 
\label{sec:classicalIVM}
We start with a brief description of classical first-order IVM on the ternary
triangle query $\triangle_3$. The other triangle queries
are treated analogously. 
Classical first-order IVM materializes the query result.
Given a single-tuple update $\delta R = \{(\deltaA,\deltaB) \mapsto \p\}$
to relation $R$, 
it maintains query $\triangle_3$ under the update by computing 
the delta query 
$$\delta \triangle_3(\deltaA,\deltaB,c) = 
\delta R(\deltaA,\deltaB) \ztimes 
S(\deltaB,c) \ztimes T(c,\deltaA)$$ 
and updating the query result by setting 
 $\triangle_3(\deltaA,\deltaB,c) := \triangle_3(\deltaA,\deltaB,c) + 
 \delta \triangle_3(\deltaA,\deltaB,c)$ for each $C$-value 
 $c$ in  $\delta \triangle_3$.
The maintenance of the query requires
the iteration over possibly linearly many $C$-values paired with 
$\deltaB$
in relation $S$ and with $\deltaA$ in relation $T$.
Hence, the update time is $\bigO{|\inst{D}|}$.   
The evaluation of updates to the relations 
$S$ and $T$ is analogous. 
The preprocessing phase uses a worst-case optimal 
join algorithm to compute the initial query result in 
$\bigO{|\inst{D}|^{\frac{3}{2}}}$ time~\cite{NgoPRR18}.
Since the query result is materialized, the enumeration delay is  
constant.
The space complexity is dominated by the
size $\bigO{|\inst{D}|^{\frac{3}{2}}}$ of the query result~\cite{LW:1949}.   

\ivme becomes the classical first-order IVM algorithm by setting 
$\eps$ to $0$ or  $1$.
 
We first consider the case $\eps = 1$ and explain it for the ternary triangle query;
the other triangle queries are treated analogously.
If $\eps = 1$, then all tuples are in the light parts of the relations and 
the results of all materialized views in Figure~\ref{fig:view_definitions_full} become empty except for the skew-aware view
$$\triangle_3^{LLL}(a,b,c) = R^L(a,b) \ztimes S^L(b,c) \ztimes T^L(c,a),$$  
whose result becomes exactly that of $\triangle_3$.

We next explain in more detail. 
The preprocessing stage 
sets the threshold base 
$N$ of the initial \ivme state  
to $2 \cdot |\inst{D}| +1$
 and strictly 
 partitions 
each relation with threshold $N^{\eps} = N$.
Since for each relation $K \in \{R,S,T\}$, variable $X$ in the schema
of $K$, and value $x$ in the domain of $X$, 
it holds $|\sigma_{X=x}K| < N$, all tuples in $K$ end up in 
the light part of $K$.
Consequently, all materialized views 
in Figure~\ref{fig:view_definitions_full}
besides  $\triangle_3^{LLL}$
stay empty, since each of them
refers to at least one heavy relation part.
The only 
materialized view that is possibly non-empty is $\triangle_3^{LLL}$.
This also means that  the result of query $\triangle_3$ and 
$\triangle_3^{LLL}$ are equal.
 Given an update, the procedure  
 \textsc{OnUpdate} in Figure~\ref{fig:onUpdate}
  never performs 
 minor rebalancing, since  
  the degrees of data values can never reach   $\frac{3}{2} N$, due to
 the size invariant $\floor{\frac{1}{4}N} \leq |\db| < N$. 
The procedure 
\textsc{MajorRebalancing}, which might be 
invoked by \textsc{OnUpdate},
does not move 
tuples to the heavy relation parts, 
since the threshold for strict partitioning 
is always greater 
than  the database size.   
This implies that 
the  views in Figure~\ref{fig:view_definitions_full}
besides $\triangle_3^{LLL}$
stay empty after any update.

The case of $\eps = 0$ is symmetric and \ivme becomes the classical first-order IVM algorithm. 
In the preprocessing stage, the input relations are strictly partitioned with threshold 
$N^\eps = 1$, which means that all light relation parts and
 materialized views referring to these parts become empty. 
Only one skew-aware view is constructed and its result is equal to 
that of the triangle query under consideration. 
\ivme materializes this view and allows for 
constant-delay enumeration from it. 

We next discuss in more detail the ternary triangle query.
The result of the skew-aware view
$\triangle_3^{HHH}(a,b,c) = R^H(a,b) \ztimes S^H(b,c) \ztimes T^H(c,a)$ 
is equal to the result of  $\triangle_3$. 
The condition $\eps = 0$ in the third line of the procedure
\textsc{AffectedPart} in Figure~\ref{fig:onUpdate} avoids that any update affects 
the light relation parts. Since the degrees of 
data values in the heavy relation parts can never fall 
below $\frac{1}{2}N^{\eps} = \frac{1}{2}$, minor rebalancing 
is never invoked.  
Based on the threshold for strict relation partitioning, 
major rebalancing does not move tuples to the 
light relation parts. 

\nop{
The case of the nullary triangle 
query is analogous to the ternary triangle query. 
For the unary and binary triangle queries, Theorem~\ref{theo:main_result_triangle} gives 
linear-time delay for $\eps=0$. This upper bound is much looser than the constant-time delay that is actually achieved by \ivme.
}

\subsection{Computing the Ternary Triangle Query in a Static Database}
\label{sec:staticFull}
The worst-case optimal time to compute the result of the ternary 
triangle query over the database $\inst{D}$ is $\bigO{|\db|^{\frac{3}{2}}}$~\cite{NgoPRR18}.
\ivme recovers this computation time 
in the static case by using its update mechanism as follows.
We fix $\eps = \frac{1}{2}$ and insert all tuples from $\db$, one at a time, into a
database $\inst{D}'$ that is initially empty.  
For each insert, we call the procedure
 \textsc{OnUpdate} from Figure~\ref{fig:onUpdate}.
   The preprocessing time is constant. 
By Theorem~\ref{theo:main_result_triangle}, \ivme guarantees  
$\bigO{M^{\frac{1}{2}}}$ amortized update time, where $M$
is the size of $\inst{D}'$ at update time.
Thus, the total time to insert all tuples into $\inst{D}'$ 
is
$$\bigO{\sum_{M=0}^{|\db|-1} M^{\frac{1}{2}}} = \bigO{|\db|\ztimes |\db|^{\frac{1}{2}}} = \bigO{|\db|^{\frac{3}{2}}}.$$
Finally, we enumerate the query result with constant
delay. Since the number of tuples in the  result is 
bounded by $|\db|^{\frac{3}{2}}$~\cite{LW:1949}, the overall enumeration 
takes $\bigO{|\db|^{\frac{3}{2}}}$ time.  
Overall, we compute the result of the ternary triangle query  
in $\bigO{|\db|^{\frac{3}{2}}}$ time.

To avoid rebalancing while inserting the tuples into the empty database, 
we can preprocess the input relations in $\inst{D}$ 
to decide for each tuple its final relation part.
For instance, if for an $A$-value $a$, it holds 
$|\sigma_{A=a} R| \geq |\db|^{\frac{1}{2}}$, the tuple is 
inserted to the heavy part of $R$, otherwise to the light part.
Since we do not perform any rebalancing,
the worst-case (and not only amortized) time of each insert is $\bigO{|\db|^{\frac{1}{2}}}$.


\section{Related Work}
\label{sec:related}
\paragraph{Triangle queries in the static setting}
The problems of finding, counting, and 
listing of given-length cycles in graphs have been 
extensively investigated 
since the  70s~\cite{ItaiR78,ChibaN85,YusterZ97}.
One important result that falls into the scope 
of this work is that,   given a graph with $n$ vertices and $m$ edges, 
finding a triangle if one exists and counting all triangles 
can be done in time $\bigO{n^{\omega}}$
where 
$\omega < 2.373$ is the exponent of matrix multiplication~\cite{ItaiR78}.
The same problem can be solved 
in time 
$\bigO{m^{\frac{2\omega}{\omega+1}}} \leq \bigO{m^{1.41}}$, 
which is better than the former time bound on sparse 
graphs~\cite{AYZ:Counting:1997}. 
The problem of computing for each edge the number 
of triangles using this edge can be solved in 
time $\bigO{m^{1.41}}$~\cite{DurajK0W20}. 
This problem corresponds to computing the result of the binary triangle 
query over the ring of integers. 
Given a number $k$,  a flavor 
of the 
triangle listing problem asks for the listing of $k$ triangles 
if the graph has at least $k$ triangles  and all triangles otherwise. 
This problem can be solved in time 
$\softO{n^{2.373} + n^{1.568}t^{0.478}}$ 
on dense graphs and in time  
$\softO{m^{1.408} + m^{1.222}t^{0.186}}$ on sparse graphs,
where $\widetilde{\mathcal{O}}$ suppresses multiplicative 
factors of size $n^{\smallO{1}}$~\cite{BjorklundPWZ14}.
All time bounds mentioned above 
rely on algebraic fast matrix multiplication.   
\ivme's preprocessing phase relies on an algorithm
like 
Leapfrog TrieJoin or Recursive-Join that does not use 
matrix multiplication and 
runs in time $\bigO{|\inst{D}|^{\frac{3}{2}}}$~\cite{NgoPRR18}
 to compute the initial query result 
 on a database $\inst{D}$.
Further works approximate
the triangle count in large graphs~\cite{Tsourakakis08,BecchettiBCG10,KolountzakisMPT12}
 and assess the practicability of triangle counting and listing algorithms 
in massive networks~\cite{ChuC12,SchankW05}. 

\paragraph{Complexity gap between single-tuple and bulk updates}
Our main result states that for $\eps = \frac{1}{2}$, 
\ivme maintains the triangle count (unary triangle query)   
 under single-tuple 
updates to a database $\inst{D}$ with $\bigO{|\inst{D}|^{\frac{1}{2}}}$
amortized update time and $\bigO{1}$ enumeration delay
(Theorem~\ref{theo:main_result_triangle}), 
which is worst-case optimal under the 
\OMv conjecture (Proposition~\ref{prop:lower_bound_triangle}). 
We also know 
 that triangle counting on a graph with $m$ edges 
can be solved in $\bigO{m^{1.41}}$ time~\cite{AYZ:Counting:1997}.
Corroborating these two results, 
we conclude that 
there is a gap in the worst-case complexity of counting triangles between the static and the dynamic case (or equivalently between bulk updates and single-tuple updates).  
If the tuples in $\inst{D}$
come as a stream of inserts and we do 
one insert at a time, the overall time 
to compute the triangle count on $\inst{D}$
is $\bigO{|\inst{D}| \cdot |\inst{D}|^{\frac{1}{2}}} = 
\bigO{|\inst{D}|^{\frac{3}{2}}}$. 
This is worse than $\bigO{|\inst{D}|^{1.41}}$,
which is achieved by processing 
all tuples in $\inst{D}$ in bulk. For the ternary triangle query, 
however, \ivme recovers the worst-case optimal time to list all triangles
in the static setting, cf.\@ 
Section~\ref{sec:staticFull}. 

\paragraph{Dynamic set intersection}
A prior result~\cite{KopelowitzPP15} on the dynamic 
evaluation of a class of Boolean queries
is closely related to  the maintenance of
the nullary triangle query. 
Assume that $\calF$ is a family of sets 
that are subject to inserts and deletes
and $N$ is the overall size of these sets. 
Given two sets from $\calF$, the emptiness 
query answers whether their intersection
is empty.  
There is a dynamic algorithm that uses 
$\bigO{N}$ space, executes updates to the 
sets in $\calF$ in $\bigO{N^{\frac{1}{2}}}$
expected time, and answers emptiness 
queries in $\bigO{N^{\frac{1}{2}}}$
expected time.
The proof of this result reveals that 
the algorithm categorizes the sets in $\calF$
into {\em small} and {\em large} sets using some threshold
and maintains the intersection size for any two 
large sets in a 
lookup table. 
The emptiness query for two sets, where one 
of the sets is small, is answered by iterating over the
elements in the small set and checking for each element
its containment in the other set. 
For two large sets, the emptiness query 
is answered by using the intersection-size 
table. Although not stated in that work, 
the intersection-size 
table can be constructed 
in $\bigO{N^{\frac{3}{2}}}$ 
expected time in the preprocessing phase.
The algorithm can be adapted 
to allow for an unbounded number of sets 
in $\calF$ and to return the intersection size 
for any two sets from $\calF$.
This prior work can be used to 
recover a restricted instance
of \ivme for the nullary query. 
Given a database 
$\inst{D} = \{R(A,B), S(B,C), T(C,A)\}$, 
an $A$-value $a \in \pi_{A}R$,
and a $B$-value $b \in \pi_{B}R$, 
we denote by $\calR_{a}^B$ and $\calR_{b}^A$
the set of $B$-values paired with 
$a$ in $R$ and respectively the set 
of $A$-values paired with $b$ in $R$. 
The sets $\calS_{b}^C$, $\calS_{c}^B$,
$\calT_{c}^A$, and $\calT_{a}^C$ 
are defined analogously. 
Let $\calF$ consist of these sets 
for all data values in the database. 
 Assuming that the current triangle count on $\inst{D}$
is materialized,
we can obtain the new triangle count upon an insert 
of a tuple $(a,b)$ to relation $R$ as follows.
If $\calR_{a}^B$ is already contained 
in $\calF$, we extend this set by $b$, otherwise 
we create a new set $\calR_{a}^B =\{b\}$.
The set $\calR_{b}^A$ is updated or created 
analogously. 
Then, we ask for
the intersection size 
$\calS_{b}^C \cap \calT_{a}^C$.
The new triangle count is the sum of the 
previous count and the size of this intersection. 
Updating the sets in $\calF$ and computing the 
intersection size require  
$\bigO{|\inst{D}|^{\frac{1}{2}}}$ expected time~\cite{KopelowitzPP15}. 
Deletes to $R$ and updates to the  other relations 
are handled analogously. 
Since the triangle count is materialized, 
it allows constant-time access. 
Hence, we obtain  a maintenance strategy 
 for the nullary triangle query with 
 $\bigO{|\inst{D}|^{\frac{3}{2}}}$ expected preprocessing time,
 $\bigO{|\inst{D}|}$ space,   
$\bigO{|\inst{D}|^{\frac{1}{2}}}$ expected update time, 
and 
$\bigO{1}$ enumeration delay.
While meeting the complexity bounds 
of Proposition~\ref{prop:tighter-upper-bound-space-nullary} (for 
$\eps = \frac{1}{2}$), 
this alternative approach does not  support tuple multiplicities
or arbitrary rings beyond the ring of integers.

\paragraph{Fine-grained lower bounds}
Investigations  on fine-grained complexity have led to 
important 
conjectures and hypotheses
on finding and listing triangles in graphs that have served 
as conditional lower bounds for many other problems  \cite{Patrascu10,AbboudW14}. 
The strong triangle conjecture states that in the word-RAM 
model with words of length $\bigO{\log n}$, there is no
algorithm that decides whether a graph with $n$ nodes 
and $m$ edges contains a triangle in 
$\bigO{\min\{n^{\omega - \gamma}, m^{\frac{2\omega}{\omega+1} - \gamma}\}}$
expected time for any $\gamma >0$, where $\omega$
is the exponent of matrix multiplication. 
Moreover, there is no combinatorial 
algorithm that solves this problem in $\bigO{m^{\frac{3}{2}-\gamma}}$
time, for any $\gamma >0$. According to this conjecture, 
 the best known algorithms for this problem, the combinatorial ones as
 well as those based on fast matrix multiplication,   
 are optimal.
The \OMv conjecture (Conjecture~\ref{def:OMv}) \cite{Henzinger:OMv:2015}  
was used to derive conditional lower bounds on the maintenance 
 of conjunctive queries~\cite{BerkholzKS17}.
 It states that 
for any $\gamma >0$, there is no algorithm that solves
the \OMv problem (Definition~\ref{def:OuMv}) in $\bigO{n^{3-\gamma}}$ time.
The best known algorithm solving the \OMv problem  runs in
$\bigO{\frac{n^3}{\log^2 n}}$ time~\cite{Williams07}. 
Let $Q$ be a Boolean conjunctive query whose homomorphic core is 
not q-hierarchical~\cite{BerkholzKS17}. Then, 
for  any $\gamma >0$ and database of domain size $n$, there is no algorithm that incrementally maintains the result of $Q$ under single-tuple updates with arbitrary preprocessing time, $\bigO{n^{1-\gamma}}$ update time, and $\bigO{n^{2-\gamma}}$ answer time, unless the \OMv conjecture fails~\cite{BerkholzKS17}. 
Triangle queries are not q-hierarchical and their homomorphic cores
are the queries themselves in case they do not have repeating relation symbols.
Hence, the above lower bound holds for all triangle queries without repeating relation symbols. The proof of this lower bound is similar to that for the query $\varphi = \exists x \exists y (S(x) \wedge E(x,y) \wedge T(y))$, which is 
 the simplest Boolean
conjunctive query that is not q-hierarchical~\cite{BerkholzKS17}. 
Our lower bound proof in Section~\ref{sec:lowerbound} 
 adapts the proof for $\varphi$ to triangle queries,
 strengthens it to allow for amortized update time, and 
expresses complexities in terms of the database size.

\paragraph{Enumeration with skip pointers}
Skip pointers have been previously used for 
constant-delay enumeration of  
distinct elements in the union of a fixed number of 
sets~\cite{Berkholz:ICDT:2018}.
Section~\ref{sec:skip_pointers} introduces this approach using the abstraction of hop iterators. 
Our approach extends the original method~\cite{Berkholz:ICDT:2018} with second-level skip pointers 
and parameterizes it by a search function to enable tighter bounds on enumeration delay.
We use iterators with skip pointers 
in the enumeration procedures
for the binary and unary triangle queries.

\paragraph{Approximation schemes in the dynamic setting}
A distinct line of work investigates randomized approximation schemes with an arbitrary relative error for counting triangles in a graph given as a stream of 
edges~\cite{Bar:reductions:SODA:2002,Jowhari:new:COCOON:2005,Buriol:counting:PODS:2006,Mcgregor:better:PODS:2016,Cormode:secondlook:TCS:2017}. Each edge in the data stream corresponds to 
a tuple insert, and tuple deletes are not considered. The emphasis of these approaches is on space efficiency, and they express the space utilization as a function of the number of nodes and edges in the input graph and of the number of triangles. The space utilization is generally sublinear but may become superlinear if, for instance, the number of edges is greater than the square root of the number of triangles. The update time is polylogarithmic in the number of nodes in the graph. There is also work estimating 
the number of triangles in graph streams with both edge inserts
and deletes~\cite{BulteauFKP16}.

\paragraph{Dynamic descriptive complexity}
Further away from our line of work is the development of dynamic descriptive complexity, starting with the DynFO complexity class and the much-acclaimed result on FO expressibility of the maintenance for graph reachability under edge inserts and deletes, see a recent survey~\cite{Schwentick:DynamicComplexity:2016}. 
The $k$-clique query can be maintained under edge inserts by a quantifier-free update program of arity $k-1$ but not of arity $k-2$~\cite{Zeume:Clique:2017}.

\section{Extensions}
\label{sec:extensions}

\paragraph{Relations over task-specific rings}
Different rings can be used as the domain of tuple multiplicities (or payloads). We used here the ring $(\mathbb{Z},+,\ztimes,0,1)$ of integers to support counting. Previous work shows how the data-intensive 
computation of many applications can be captured by application-specific 
rings, which define sum and product operations over data values~\cite{Nikolic:SIGMOD:18}. 
The relational data ring supports payloads with listing and factorized representations of relations, and the degree-$m$ matrix ring supports 
payloads that can be used for maintaining gradients of square loss functions for linear regression models~\cite{Nikolic:SIGMOD:18}.

\paragraph{\ivme variants}
\ivme can be used to maintain 
triangle queries with repeating relation symbols, 
the counting versions of any query built using three relations and the 4-path query~\cite{KaraNNOZ_triangle_arxiv}
in worst-case optimal update time.
The same conditional lower bound on the update time shown for the triangle count (nullary triangle query) applies for most of the mentioned queries, too.
This leads to the striking realization that, while in the static setting the counting versions of the cyclic query computing triangles and the acyclic query computing paths of length $3$ have different complexities and pose distinct computational challenges, they share the same complexity and can use a very similar approach in the dynamic setting.

\paragraph{Loomis Whitney queries}
The \ivme maintenance strategies also naturally extend from triangle to 
Loomis Whitney (LW) queries. LW queries 
generalize triangle queries from cliques of degree three to cliques of degree $n\geq 3$; they encode the Loomis Whitney inequality~\cite{LW:1949}.
Let $A_1,\ldots,A_n$ be the query variables and $R_1,\ldots,R_n$ relations over schemas ${\bf X}_1,\ldots,{\bf X}_n$, 
where $\forall i\in[n]: {\bf X}_i = (A_{((i+j)\mod n) + 1})_{-1\leq j\leq n-3}$. 
That is, the schema of $R_1$ is $(A_1,\ldots,A_{n-1})$, whereas the schema of $R_n$ is $(A_n,A_1,\ldots,A_{n-2})$.
The $n$-ary LW query of degree $n$  has the form 
$$\Diamond_n(\inst{x}) =   
R_1({\bf x}_1)\cdots R_n({\bf x}_n),$$ 
where 
$\inst{x} = (a_j)_{j \in [n]}$ and
for all $i\in[n]$, $\inst{x}_i = (a_{((i+j)\mod n) + 1})_{-1\leq j\leq n-3}$ is a value from the domain of the 
tuple ${\bf X}_i$ of variables.
As for triangle queries, a LW query of degree $n$ and arity 
$0\leq k\leq n-1$ has the same body as for arity $n$ but only keeps the first $k$ values in the result. For instance, for $n=4$ the binary LW query is 
$$\Diamond_2(a_1,a_2)=\sum\limits_{a_3,a_4} R_1(a_1,a_2,a_3)\cdot R_2(a_2,a_3,a_4)\cdot R_3(a_3,a_4,a_1)\cdot R_4(a_4,a_1,a_2).$$ 
In case $n=3$, each LW query $\Diamond_k$ becomes the triangle query $\triangle_k$, for $0\leq k\leq 3$.

\ivme achieves the following complexities 
for LW queries of degree $n$ (stated without proof):
\begin{itemize}
\item The preprocessing and amortized update time 
are the same as for triangle queries: 
$\bigO{|\inst{D}|^{\frac{3}{2}}}$ preprocessing time and
$\bigO{|\inst{D}|^{\max\{\eps, 1-\eps\}}}$ amortized update time.
  
\item In case all variables are free, the space complexity 
is the same as for the ternary triangle query, namely, 
$\bigO{|\inst{D}|^{\frac{3}{2}}}$; otherwise,
the space complexity is $\bigO{|\inst{D}|^{1 + \min\{\eps, 1-\eps\}}}$. 

\item For the nullary and $n$-ary LW queries,
the enumeration delay is constant; for $k$-ary LW queries
where $0<k<n$, the enumeration delay is  
$\bigO{|\inst{D}|^{\min\{1, (n - k) \cdot (1-\eps)\}}}$. The delay hence improves
with increasing arity. For $n = 3$, we get exactly the same 
enumeration delay as for the triangle queries. 

\item The lower bound on the update-delay trade-off for triangle queries stated 
in Proposition~\ref{prop:lower_bound_triangle} carry over 
to LW queries. This means that at $\eps = \frac{1}{2}$, \ivme is 
strongly Pareto
worst-case 
 optimal for the nullary and $n$-ary LW queries
and weakly Pareto worst-case 
optimal for all other LW queries. 

\end{itemize}

The result of the n-ary LW query $\Diamond_n$ of degree $n$ has size $\bigO{|\inst{D}|^{\frac{n}{n-1}}}$~\cite{LW:1949}. It can also be computed in the static setting in the same time, which is thus worst-case optimal~\cite{NgoPRR18}. \ivme cannot be used to recover the optimality in the static case, since it takes $\bigO{|\inst{D}|^{\frac{1}{2}}}$ amortized time per each single-tuple update and there are $|\inst{D}|$ tuples to insert. Since the combination of $\bigO{|\inst{D}|^{\frac{1}{2}}}$ amortized time and $\bigO{1}$ delay is 
 strongly Pareto
worst-case 
 optimal, it means that no dynamic algorithm can achieve a lower amortized single-tuple update time for the n-ary LW query. 
This shows the limitation of single-tuple updates. 
To achieve the overall $\bigO{|\inst{D}|^{\frac{n}{n-1}}}$ time for $|\inst{D}|$ tuple inserts, one would need to process several inserts at the same time, that is, in bulk, such that the amortized time per insert should be $\bigO{|\inst{D}|^{\frac{1}{n-1}}}$. 
A characterization of the difference between bulk updates and single-tuple updates remains an interesting open problem.

\section{Conclusion and Future Work}
\label{sec:conclusion}
This article introduces \ivme, an incremental maintenance approach for 
triangle queries under updates that exhibits a trade-off between the update time on one hand and 
the space and enumeration delay on the other hand.
\ivme captures classical first-order IVM as a special case that has suboptimal linear update time.

There are worst-case optimal algorithms for {\em join} queries in the {\em static} setting~\cite{NgoPRR18}. In contrast, \ivme is worst-case optimal for the nullary and ternary triangle join queries in the {\em dynamic} setting. The dynamic setting case poses challenges beyond the static setting.
First, the optimality argument for static join algorithms follows from their runtime being linear(ithmic) in their output size; this argument does not apply to our nullary triangle query, since its output is a scalar and hence of constant size. 
Second, optimality in the dynamic setting requires a more fine-grained argument that exploits the skew in the data for different evaluation strategies, view materialization, and delta computation; 
in contrast, there are static worst-case optimal join algorithms 
that do not need to exploit skew, materialize views, nor delta computation. 

We conclude with a discussion on possible directions for future work.

\paragraph{Worst-case optimal dynamic query evaluation}
This article opens up a line of work on  
worst-case optimal dynamic query evaluation algorithms. The goal is a complete characterization of the complexity of incremental maintenance for arbitrary functional aggregate queries~\cite{FAQ:PODS:2016}. 
We  would first like to find a syntactical characterization of all queries that admit incremental maintenance in (amortized) sublinear time. Using known (first-order, fully recursive, or factorized) incremental maintenance techniques, cyclic and even acyclic joins require at least linear update time. Our intuition is that this characterization is given by a notion of diameter of the query hypergraph. This class strictly contains the q-hierarchical queries, which admit constant-time updates~\cite{BerkholzKS17}. A first step towards this goal is a characterization 
of the update-delay trade-off for hierarchical queries with arbitrary free variables~\cite{KNOZ19_arxiv}.

\paragraph{Space-delay trade-off}
\ivme does not admit any trade-off 
between 
the space complexity and the enumeration delay: 
for all queries, 
there is either no or positive correlation between the two measures
(cf.\@ Figure~\ref{fig:complexity_plots_triangle}). 
Prior work investigates the trade-off between space and delay 
for the evaluation of conjunctive queries in the static 
setting~\cite{DeepK18}.
An interesting future direction is to design 
a maintenance approach with focus on the space-delay trade-off.

\paragraph{Implementation of \ivme}
We would like to implement \ivme and benchmark against
 existing IVM systems. The implementation 
  of \ivme may pose some challenges. 
For instance, maintaining the exact heavy-light partitions of relations 
is computationally expensive. 
One way to handle this problem is to loosen the partition thresholds
so that relation partitions are rebalanced less frequently while 
accepting temporarily  suboptimal maintenance strategies.  
A further challenge is the maintenance of the index structures 
 of \ivme. 
For each materialized view $V$ with some schema $\inst{X}$ 
and sub-schema  $\inst{Y} \subseteq \inst{X}$,
 \ivme assumes the existence
 of an index that allows to check containment of any tuple $\inst{y}$ over 
 $\inst{Y}$ in $\pi_{\inst{Y}}V$ in constant time and to enumerate all tuples in 
 $V$ matching $\inst{y}$ with constant delay
We need to address the trade-off between the cost of maintaining this indices and the cost of access times without them.

\bibliographystyle{abbrv}
\bibliography{bibliography}
 





\end{document}